\documentclass[11pt,fullpage,letterpaper]{article} 
  \usepackage[margin=0.9in]{geometry}
  \usepackage{listings} 
\usepackage{mathrsfs} 
\usepackage{kpfonts,comment}
\usepackage[T1]{fontenc}

\usepackage{kz_style}

\title{
\LARGE  \bf  
Principled Learning-to-Communicate with \\
Quasi-Classical Information Structures}  
\vspace{38pt}
\author{ 
\vspace{18pt}
Xiangyu Liu$^\dag$ \and \qquad\qquad~~~  Haoyi You$^\dag$ \and \qquad\qquad  Kaiqing Zhang\thanks{The authors are ordered alphabetically, and are affiliated with the University of Maryland, College Park, MD, USA, 20742. Emails: 
        {\tt\small \{xyliu999,~yuriiyou,~kaiqing\}@umd.edu}.}
        }
\date{}

\usepackage{setspace}
\renewcommand{\thesection}{\Roman{section}}

\usepackage{titlesec}

\titleformat{\section}
  {\centering\normalfont\Large\bfseries}
  {\thesection.}{0.5em}{}

\begin{document} 
\maketitle
\begin{abstract}
Learning-to-communicate (LTC) in partially observable environments has received increasing attention in deep multi-agent reinforcement learning, where the control and communication strategies are {jointly}  learned. Meanwhile, the impact of communication on decision-making has been extensively studied in control theory. In this paper, we seek to formalize and better understand LTC by bridging these two lines of work, through the lens of \emph{information structures} (ISs). To this end, we formalize LTC in decentralized partially observable Markov decision processes  (Dec-POMDPs) under the common-information-based framework from decentralized stochastic control, and classify LTC problems based on the ISs before (additional)  information sharing. We first show that non-classical LTCs are computationally intractable in general, and thus focus on quasi-classical (QC) LTCs. We then propose a series of conditions for QC LTCs, 
under which LTC preserves  the QC IS after information  sharing, whereas violating them can 
cause computational hardness in general. 
Further, we develop provable planning and learning algorithms for QC LTCs, and establish quasi-polynomial time and sample complexities for several QC LTC examples that satisfy the above conditions. 
Along the way, we also establish new results on a  relationship between (strictly) QC IS and the condition of having strategy-independent common-information-based beliefs (SI-CIBs), as well as on solving Dec-POMDPs without computationally
intractable oracles but beyond those with SI-CIBs, 
which may be of independent interest. 
\end{abstract}

\section{Introduction}

The learning-to-communicate (LTC) problem has emerged and gained traction in the area of (deep) multi-agent reinforcement learning (MARL) \cite{2016learningtocommunicate,learningpropogation,jiang2018learning}. Unlike classical MARL, which aims to learn \emph{control} strategies that maximize the expected accumulated rewards, LTC seeks to \emph{jointly} optimize over both the \emph{control} and the \emph{communication} strategies of all the agents, as a way to mitigate the challenges due to the agents' {partial observability} of the environment.
Despite the promising empirical successes, theoretical understanding of LTC 
remains largely underexplored. 

On the other hand, in control theory, a rich literature has investigated the role of \emph{communication} in decentralized/networked control \cite{tatikonda2004control,nair2007feedback,xiao2005joint,yuksel2013jointly}, inspiring us to rigorously examine LTC from such a principled perspective. 
Most of these studies, however,  focused on linear systems, and did not explore the computational or sample complexity guarantees when the system model is not 
(fully) known. More recently, a few studies   \cite{ashutoshcommunicate1,ashutoshcommunicate2} explored the settings with general discrete (non-linear)  spaces, under special communication protocols and state transition dynamics (see \S\ref{sec:related_work} for more detailed discussions).

More broadly, the design of communication strategies dictates the \emph{information structure} (IS) of the control system, which characterizes \emph{who knows what and when} \cite{witsenhausen1971separation}. IS and its impact on the \emph{optimization tractability}, especially for linear systems, have been extensively studied in (decentralized) stochastic control, see  \cite{aditya2012information,yuksel2023stochastic} for comprehensive overviews. In this work, we seek a more principled understanding of LTC through the lens of information structures, with a focus on the computational and sample complexities of the problem. 

Specifically, we formalize LTC in the general framework of decentralized partially observable Markov decision processes ({Dec-POMDPs}) \cite{DecPOMDP-NEXP}, 
as in the empirical studies  \cite{2016learningtocommunicate,learningpropogation,jiang2018learning}.  To achieve finite-time and sample complexity  guarantees, we resort to the recent development in \cite{liu2023tractable} on partially observable MARL, based on the common-information-based (CIB) framework \cite{ashutosh2013team,ashutosh2013game} from decentralized stochastic control, to model the communication and information sharing protocols among agents. We detail  our contributions as follows.

\paragraph{Contributions.} (\romannumeral 1) We formalize learning-to-communicate in  Dec-POMDPs under the common-information-based framework \cite{ashutosh2013team,ashutosh2013game,liu2023tractable}, allowing the sharing of \emph{historical} information,  and the modeling of communication costs.  (\romannumeral 2) We classify LTCs through the lens of \emph{information structures},  according to the ISs before (additional) information sharing, i.e., under the \emph{baseline}  sharing. 
We then show that LTCs with \emph{non-classical} \cite{aditya2012information} IS of the baseline sharing can be computationally intractable in general. (\romannumeral 3) Given the hardness, we thus focus on \emph{quasi-classical} (QC) LTCs, and propose a series of conditions under which LTC preserves the QC IS after (additional)  information sharing, whereas violating them can cause computational hardness in general. (\romannumeral 4) We propose both planning and learning algorithms for QC LTCs, by converting them to Dec-POMDPs with \emph{strategy-independent common-information-based beliefs} (SI-CIBs) \cite{ashutosh2013game}, a condition shown to be critical for tractable computation and learning \cite{liu2023tractable}. (\romannumeral 5) Quasi-polynomial time and sample complexities of the algorithms are established for several QC LTC examples that satisfy the conditions in (\romannumeral 3). Along the way, we also establish new results on 
a relationship between \emph{(strictly) quasi-classical} ((s)QC) ISs and the SI-CIB  condition in the framework of \cite{ashutosh2013game}, 
as well as  solving general Dec-POMDPs without computationally intractable oracles but  beyond those with SI-CIBs, which thus advances the results in \cite{liu2023tractable}. 
These results may be of independent interest for studying general Dec-POMDPs. We conclude with some experimental results to validate the implementability and effectiveness of our algorithms.  

\subsection{Related Work}\label{sec:related_work}

\paragraph{Communication-control joint optimization.} The joint design of control and communication strategies has been well-studied in the control theory  literature \cite{xiao2005joint,zhang2006communication,matni2015communication,peng2013event,yuksel2013jointly,ashutoshcommunicate1,ashutoshcommunicate2}. However, even with model knowledge, the computational complexity (and associated necessary conditions) of solving many of these models remains elusive, let alone the sample complexity when it comes to learning. Moreover, these models mostly have special structures, e.g., with linear systems \cite{bansal1989simultaneous,yuksel2013jointly} and sometimes specific, fixed communication strategies (e.g., the event-triggered ones) \cite{xiao2005joint,zhang2006communication,peng2013event,yuksel2013jointly}, or sharing only instantaneous observations 
\cite{ashutoshcommunicate1,ashutoshcommunicate2}. 
 
\paragraph{Information sharing and information structures.} 
Information structure has been extensively studied to characterize \emph{who knows what and when} in (decentralized) stochastic  control  \cite{aditya2012information,yuksel2023stochastic}. Our paper aims to formally understand LTC through the lens of information structures.    
The common-information-based approach to formalize {information sharing} in \cite{ashutosh2013team,ashutosh2013game} serves as the basis for our work. In comparison, these results did not \emph{optimize} or \emph{learn} to share the information, and 
focused on the \emph{structural results}, without concrete computational (and sample) complexity analysis.

\paragraph{Partially observable MARL theory.} 
Planning and learning in partially observable MARL 
are known to be hard  in general \cite{papadimitriou1987complexity,lusena2001nonapproximability,jin2020sample,DecPOMDP-NEXP}. Recently, \cite{liu2022sample,altabaarole} developed polynomial-sample complexity algorithms for partially observable stochastic games, but with computationally intractable oracles; \cite{liu2023tractable} developed quasi-polynomial-time and sample algorithms for such models, leveraging information sharing among the agents. In contrast, our paper focuses on \emph{optimizing/learning to share}, together with control strategy optimization/learning.

\section{Preliminaries}
\label{problem formulation}

\noindent\textbf{Notation.} We use $\mathbb{N},\mathbb{Q},\mathbb{R}$ to denote the sets of all the natural, rational, and real numbers, respectively.  For an integer $m>0$, we denote $[m]:=\{1,2,\cdots,m\}$. 
For a finite set $\cX$, we use $|\cX|$ to denote the cardinality of $\cX$, and use $\Delta(\cX)$ to denote the probability simplex over $\cX$. 
By a slight abuse of notation, we use small letters to denote both the random variables and their realizations, when it is clear from the context. In particular, this simplifies the notation as 
terms like $\PP(s_h,p_h\given c_h)$ may be either viewed as the numerical values with realizations $s_h,p_h,c_h$, or as a distribution-valued random variable $\PP(s_h=\cdot,p_h=\cdot\given c_h)$. 
For a random variable $x$, we use $\sigma(x)$ to denote the sigma-algebra generated by $x$. For $\sigma$-algebras $\mathscr{F}_1$ on the space $\cX_1$ and $\mathscr{F}_2$ on the space $\cX_2$, we denote by $\mathscr{F}_1\otimes \mathscr{F}_2$ the product $\sigma$-algebra on the space $\cX_1\times \cX_2$. We use $\mathds{1}[]$ to denote the indicator function. Unless otherwise noted, the set $\{\}$ considered is ordered, such that the elements in the set are indexed.  

\subsection{Learning-to-Communicate Formulation}
\label{sec:ltc_formulation} 

For  $n>1$  agents, a (cooperative) \textit{learning-to-communicate} problem   
can be described by the components in a tuple 
$
\cL=\la H,\cS,\{\cA_{i,h}\}_{i\in[n],h\in[H]},\{\cO_{i,h}\}_{i\in[n],h\in[H]},\{\cM_{i,h}\}_{i\in[n],h\in[H]},\TT,\OO,\mu_1,\{\cR_h\}_{h\in[H]},\{\cK_h\}_{h\in[H]}\ra$, where  $H$ denotes the length of each episode, and other components are introduced as follows.

\subsubsection{Decision-making components} 
We use  
$\cS$ to denote the state space, and $\cA_{i,h}$ to denote the {control action} space of agent $i$ at timestep  $h\in[H]$.  We denote by $s_h\in \cS$ the state and by $a_{i,h}$ the control action of agent $i$ at timestep $h$. We use $a_h:=(a_{1,h},\cdots,a_{n,h})\in \cA_h:=\prod_{i\in[n]}\cA_{i,h}$ 
to denote the joint control action of all the $n$ agents at timestep $h$. 
We denote by $\TT=\{\TT_h\}_{h\in[H]}$ the collection of state transition kernels, where 
$s_{h+1}\sim\TT_h(\cdot\given s_h,a_h)\in \Delta(\cS)$ at timestep $h$. 
We use $\mu_1\in\Delta(\cS)$ to denote the initial state distribution.  We denote by $\cO_{i,h}$ the observation space and by $o_{i,h}\in \cO_{i,h}$ the observation of agent $i$ at timestep  $h$. We use
$o_h:=(o_{1,h},o_{2,h},\cdots,o_{n,h})\in\cO_h:=\prod_{i\in[n]}\cO_{i,h}$ to denote the joint observation of all the $n$ agents at timestep $h$. 
We use $\OO=\{\OO_h\}_{h\in[H]}$ to denote the collection of emission functions, where $o_h\sim\OO_h(\cdot\given s_h)\in \Delta(\cO_h)$ at timestep $h$ and state $s_h\in\cS$. 
Also,  for each $s_h\in \cS$, we denote by $\OO_{i,h}(\cdot\given s_h)$ the emission for agent $i$, the marginal distribution of $o_{i,h}$ given $\OO_h(\cdot\given s_h)$. 
At each timestep $h$, agents will receive a common reward $r_h=\cR_h(s_h,a_h)$, where 
$\cR_h:\cS\times \cA_h\rightarrow [0,1]$ denotes the reward function shared by the agents.

\subsubsection{Communication components} In addition to reward-driven decision-making, agents also need to decide and learn \emph{(what) to communicate with others}.  
At timestep $h$, agents share part of their information $z_h\in\cZ_h$ with other agents, 
where $\cZ_h$ denotes the collection of all possible shared information at timestep $h$. 
Here we consider a general setting where the shared information $z_h$ may contain two parts, a  \emph{baseline-sharing} part $z^b_h$ that comes from some existing sharing protocol among agents, and an  \emph{additional-sharing} part $z_{i,h}^a$ for each agent $i$ that comes from explicit communication \emph{to be decided/learned},  
with the joint additional-sharing information $z^a_h:=\cup_{i=1}^n z_{i,h}^a$. This general setting covers those  considered in most empirical studies on LTC \cite{2016learningtocommunicate,learningpropogation,jiang2018learning}, with no baseline sharing. We kept the baseline sharing since our focus is on the \emph{finite-time} and \emph{sample}  tractability of LTC, for which a certain amount of information sharing is known to be 
necessary 
\cite{liu2023tractable}. Note that $z_h=z^b_h\cup z^a_h$  and $z^b_h\cap z^a_h=\emptyset$. The shared information is part of the historical observations and (both \emph{control} and \emph{communication}) actions. 
We denote by $\cZ^b_h, \cZ^a_h$, and $\cZ_{i,h}^a$ the collections of all possible $z_h^b$, $z_h^a$, and $z^a_{i,h}$ at each timestep $h$, respectively.

At timestep $h$,  the \emph{common information} among all the agents is thus defined as the union of all the \emph{shared information} so far:   $c_{h^-}=\cup_{t=1}^{h-1} z_t\cup z_h^b$, and $c_{h^+}=\cup_{t=1}^hz_t$, where $c_{h^-}$ and $c_{h^+}$ denote the (accumulated) common information \emph{before} and \emph{after}  additional sharing, respectively.  
The \emph{private information} of agent $i$ at timestep $h$ \emph{before} and \emph{after}  additional sharing are denoted by $p_{i,h^-}$ and $p_{i,h^+}$, respectively, where $p_{i,h^-}\subseteq\{o_{i,1},a_{i,1},\cdots,a_{i,h-1},o_{i,h}\}\backslash c_{h^-}, p_{i,h^+}\subseteq\{o_{i,1},a_{i,1}, \cdots,a_{i,h-1},o_{i,h}\}\backslash c_{h^+}$. We denote by $p_{h^-}:=(p_{1,h^-},\cdots,p_{n,h^-})$ and $p_{h^+}:=(p_{1,h^+},\cdots,p_{n,h^+})$ the joint private information {before} and {after} additional sharing at timestep $h$, respectively. 
{We then denote by $\tau_{i,h^-}:=p_{i,h^-}\cup c_{h^-}$ and $\tau_{i,h^+}:=p_{i,h^+}\cup c_{h^+}$ the \emph{information available} to agent $i$ at timestep $h$, before and after additional sharing, respectively, with $\tau_{h^-}:=p_{h^-}\cup c_{h^-}, \tau_{h^+}:=p_{h^+}\cup c_{h^+}$ denoting the associated joint information. We use $\cC_{h^-},\cC_{h^+},\cP_{i,h^-},\cP_{i,h^+},\cP_{h^-},\cP_{h^+},\cT_{i,h^-},\cT_{i,h^+},\cT_{h^-},\cT_{h^+}$ to denote, respectively,  the corresponding collections of all possible $c_{h^-},c_{h^+},p_{i,h^-},p_{i,h^+},p_{h^-},p_{h^+},\tau_{i,h^-},\tau_{i,h^+},\tau_{h^-},\tau_{h^+}$.} 

We use $m_{i,h}$ 
to denote the \emph{communication action} of agent $i$ at timestep $h$, and it will determine what information $z_{i,h}^a$ she will share, through the way to be specified later. We denote by $\cM_{i,h}$ the space of $m_{i,h}$, and by $m_h:=(m_{1,h},\cdots,m_{n,h})\in \cM_h:=\prod_{i=1}^n\cM_{i,h}$  the {joint communication action} of all the agents. 
We denote by $\cK_h:\cZ^a_h\rightarrow [0,1]$ the \emph{communication  cost} function and  by $\kappa_{h}=\cK_{h}(z^a_{h})$ the incurred communication cost at timestep $h$,  due to additional sharing.

\subsubsection{System evolution} 
The system evolves by 
alternating between the communication and the control steps as follows. 

\paragraph{\emph{\textbf {Communication step:}}} At each timestep $h$, each agent $i$ observes $o_{i,h}$ and may share part of her private information via baseline sharing, 
receives the baseline sharing of information from others, and forms $p_{i,h^-}$ and $c_{h^-}$. Then, each agent $i$ chooses her communication action, which determines the additional sharing of information, receives the additional-sharing of information from others, forms $p_{i,h^+}$ and $c_{h^+}$,  and incurs some communication cost $\kappa_h$.  Formally, the evolution of information is depicted as follows, which, unless otherwise noted, will be assumed throughout the paper. We follow the convention that any quantity at $h=0$ is empty/null.

\begin{assumption}[\emph{Information evolution}]
\label{assumption:information evolution}
For each $h\in [H]$, 
\begin{enumerate}[(a)]
     \item {({Baseline sharing}).} 
     $z_{h}^b=\chi_{h}(p_{(h-1)^+},a_{h-1},o_{h})$ for some fixed  transformation $\chi_{h}$;
     \item {({Additional sharing}).} For each agent  $i\in[n], z_{i,h}^a=\phi_{i,h}(p_{i,h^-},m_{i,h})$ for some function $\phi_{i,h}$, given communication action  $m_{i,h}$, and $m_{i,h}\in z_{i,h}^a$; 
    and the joint sharing  $z_h^a$ is thus generated by $z^a_h=\phi_h(p_{h^-},m_h)$, for some function $\phi_h$; 
     \item {({Private information before  sharing}).} For each agent $i\in[n]$, 
     $p_{i,h^-}=\xi_{i,h}(p_{i,(h-1)^+},a_{i,h-1},o_{i,h})$ for some fixed transformation  $\xi_{i,h}$, and the joint private information thus evolves as   $p_{h^-}=\xi_{h}(p_{(h-1)^+},a_{h-1},o_{h})$ for some fixed transformation $\xi_{h}$;
    \item {({Private information after   sharing}).} For each agent $i\in[n]$, 
    $p_{i,h^+}=p_{i,h^-}\backslash z_{i,h}^a$;
    \item {($(\tau_{i,h^-},\tau_{i,h^+})$-inclusion).} For each agent $i\in[n]$, $\tau_{i,h^-}\subseteq \tau_{i,h^+}\subseteq \tau_{i,(h+1)^-}$, and $o_{i,h}\in \tau_{i,h^-}$.
\end{enumerate}
\end{assumption}
 Note that \emph{fixed transformations} (e.g., $\chi_h$ and $\xi_{i,h}$ above) are not affected by the \emph{realized values} of the random variables, but dictate some \emph{pre-defined} transformation of the input random variables.  See \cite{ashutosh2013team,ashutosh2013game}, and \cite{liu2023tractable}  for common examples of baseline sharing that admit such fixed transformations when there is no additional sharing, and examples in \S \ref{sec: examples of QC}  on how they can be  extended to the LTC setting. It should not be confused with some general \emph{function} (e.g., $\phi_{i,h}$ above), which may depend on the \emph{realized values} of the input random variables. (a) and (c) on baseline sharing follow from those in \cite{ashutosh2013game,liu2023tractable}; (b) and (d) on additional sharing dictate how the communication action affects the sharing based on private information.  For example, a common choice of $(\cM_{i,h},\phi_{i,h})$ is that  $\cM_{i,h}= \{0,1\}^{\max\limits_{p_{i,h^-}\in \cP_{i,h^-}}|p_{i,h^-}|}$, and 
for any $p_{i,h^-}\in \cP_{i,h^-}$ and $m_{i,h}\in \cM_{i,h}$, $\phi_{i,h}(p_{i,h^-},m_{i,h})$ consists of the $k$-th element (with $k\in [|p_{i,h^-}|]$) of $p_{i,h^-}$  if and only if the $k$-th element of $m_{i,h}$ is 1, while other elements are $0$. 
As $m_{i,h}$ (dictating what to share) will be known given $z_{i,h}^a$ (what has been shared), $m_{i,h}$ is thus also modeled as being shared, i.e., $m_{i,h}\in z_{i,h}^a$. This is also consistent with the models in \cite{ashutoshcommunicate1,ashutoshcommunicate2} on control/communication joint optimization. (e) means that the agent has full memory of the information she had in the past and at present. We emphasize that this is closely related, but different from the common notion of \emph{perfect recall} \cite{kuhn1953extensive},  
where the agent has to recall all her own \emph{past actions}. Condition (e), in contrast, relaxes the memorization of the actions, but includes the instantaneous observation $o_{i,h}$. This condition is satisfied by all the models and examples  in \cite{aditya2012information,ashutosh2013team,ashutosh2013game,liu2023tractable}.  See also \S \ref{sec: examples of QC} for more examples  that satisfy this assumption.  Note that $o_{i,h}\in \tau_{i,h^-}$ has been noted necessary in order to have \emph{closed-loop} ISs in the literature \cite{yuksel2023stochastic}, which are the focus of the present paper.

Meanwhile, for both the baseline and the additional sharing protocols, we follow the model in the series of studies on partial history/information  sharing \cite{ashutosh2013team,ashutosh2013game,liu2023tractable,ashutoshcommunicate1,ashutoshcommunicate2} that, if an agent shares, she will share the information with \emph{all other}  agents {as \emph{common information}}. Additionally, we follow the convention from the literature on information structures \cite{aditya2012information,yuksel2023stochastic}, by incorporating the $\sigma$-algebra of the random variables. These conventions lead to the following regularity assumption on information sharing.
\begin{assumption}
\label{assumption:sigma_include}
$\forall i_1,i_2\in[n],h_1,h_2\in[H],i_1\neq i_2,h_1<h_2$, 
if $\sigma(o_{i_1,h_1})\subseteq \sigma(\tau_{i_2,h_2^-})$, then $\sigma(o_{i_1,h_1})\subseteq \sigma(c_{h_2^-})$, and if $\sigma(a_{i_1,h_1})\subseteq \sigma(\tau_{i_2,h_2^-})$, then $\sigma(a_{i_1,h_1})\subseteq \sigma(c_{h_2^-})$;  
if $\sigma(o_{i_1,h_1})\subseteq \sigma(\tau_{i_2,h_2^+})$, then $\sigma(o_{i_1,h_1})\subseteq \sigma(c_{h_2^+})$, and if $\sigma(a_{i_1,h_1})\subseteq \sigma(\tau_{i_2,h_2^+})$, then $\sigma(a_{i_1,h_1})\subseteq \sigma(c_{h_2^+})$.
\end{assumption}
Assumptions \ref{assumption:information evolution}-\ref{assumption:sigma_include} will be made throughout the paper.

\paragraph{\emph{\textbf {Decision-making   step:}}} After the communication, each agent $i$ chooses her control action $a_{i,h}$, receives a reward $r_h$, and the joint action $a_h$ drives the 
state  to $s_{h+1}\sim \TT_h(\cdot\given s_h,a_h)$.

\subsubsection{Strategies and solution concept}
\label{sec:strategy_solution}

At timestep $h$, each agent $i$ has two strategies, a \emph{control} strategy and a \emph{communication} strategy. We define a control strategy as $
g^a_{i,h}:\cT_{i,h^+}\rightarrow \cA_{i,h}$ and a communication strategy as $g^m_{i,h}:\cT_{i,h^-}\rightarrow \cM_{i,h}$. See \Cref{lemma:no_lose_optimality} for a formal argument on the use of such \emph{deterministic}  strategies without loss of optimality. 
We denote by $g^a_h=(g^a_{1,h},\cdots,g^a_{n,h})$ the joint control strategy and by $g^m_h=(g^m_{1,h},\cdots,g^m_{n,h})$ the joint communication strategy.
 We denote by $\cG^a_{i,h}, \cG^m_{i,h},\cG^a_h,\cG^m_h$ the corresponding spaces of $g^a_{i,h}, g^m_{i,h},g^a_h,g^m_h$.
 
 The objective of the agents in  LTC is to maximize the expected accumulated sum of the reward and the negative communication cost from timestep $h=1$ to $H$:
\begin{equation*}
     J_{\cL}(g^a_{1:H},g^m_{1:H}):=\EE_{\cL}\left[\sum_{h=1}^H (r_h-\kappa_h)\bigggiven g^a_{1:H},g^m_{1:H}\right],
\end{equation*} 
where the expectation $\EE_{\cL}$ is taken over all the randomness in the system evolution, given the strategies $(g^a_{1:H},g^m_{1:H})$.
 With this objective,  for any $\epsilon\geq 0$, we can define the solution concept of an  \emph{$\epsilon$-team optimum} for $\cL$ as follows.
\begin{definition}[$\epsilon$-team optimum] 
    We call a joint 
     strategy $(g^a_{1:H},g^m_{1:H})$ an $\epsilon$-team-optimal strategy of the LTC problem $\cL$ if 
     \begin{align*}
         \max_{\tilde{g}_{1:H}^a\in \cG_{1:H}^a, \tilde{g}_{1:H}^m\in\cG_{1:H}^m}J_\cL(\tilde{g}_{1:H}^a, \tilde{g}_{1:H}^m)-J_\cL(g_{1:H}^a, g_{1:H}^m)\le \epsilon.
     \end{align*}
\end{definition}
 \subsection{Information Structures of LTC} 
In decentralized stochastic control, the notion of information structure \cite{witsenhausen1975intrinsic,aditya2012information} captures \emph{who knows what and when} as the system evolves. In LTC, as the {additional sharing} via communication will also affect the IS and is \emph{not}  determined \emph{beforehand}, when we discuss the \emph{IS of an LTC problem}, we will refer to that of the problem \emph{with only baseline sharing}. 
In particular, an LTC $\cL$ without additional sharing is essentially a  Dec-POMDP (with potential baseline information sharing), as defined in \S \ref{sec:Dec-POMDP definition} for completeness. We formally define such a Dec-POMDP \emph{induced} by $\cL$  as follows. 

\begin{definition}[Dec-POMDP (with information sharing) induced by LTC]\label{def:LTC_induced_Dec-POMDP}
For an LTC problem $\cL=\la H,\cS,\{\cA_{i,h}\}_{i\in[n],h\in[H]},\{\cO_{i,h}\}_{i\in[n],h\in[H]},\{\cM_{i,h}\}_{i\in[n],h\in[H]},\TT,\OO,\mu_1,\{\cR_h\}_{h\in[H]},\{\cK_h\}_
{h\in[H]}\ra$, we call a Dec-POMDP (with information sharing)  $\overline{\cD}_\cL$  \emph{the Dec-POMDP (with information sharing)  
 induced by $\cL$}, if the agents share information only following the baseline sharing protocol of $\cL$, i.e., without additional sharing, which can be characterized by the tuple  $\overline{\cD}_\cL:=\la H,\cS,\{\cA_{i,h}\}_{i\in[n],h\in[H]},\{\cO_{i,h}\}_{i\in[n],h\in[H]},\TT,\OO,\mu_1,\{\cR_{h}\}_{h\in[H]}\ra$, together with the  baseline sharing protocol in Assumption  \ref{assumption:information evolution}. We may refer to it as the \emph{Dec-POMDP induced by LTC} or the \emph{induced Dec-POMDP}  for short. 
\end{definition}

In \S\ref{sec:ltc_formulation}, 
we introduced LTC in the \textit{state-space model}. In contrast, information structure is oftentimes more conveniently discussed under the equivalent framework of \textit{intrinsic models} \cite{witsenhausen1975intrinsic} 
(see the instantiation for Dec-POMDPs in \S \ref{sec:Dec-POMDP definition} for completeness). In an intrinsic model, each agent only \emph{acts once} throughout the system evolution, and the same agent in the state-space model at different timesteps is now treated as \emph{different agents}. There are thus $n\times H$ agents in total. 
Formally, for completeness, we extend the intrinsic-model-based reformulation to LTCs in \S \ref{sec:Dec-POMDP definition}. 
 
(Strictly) quasi-classical 
ISs are important subclasses of ISs, which were first introduced for decentralized stochastic control  \cite{witsenhausen1975intrinsic,mahajan2010measure,yuksel2023stochastic} (see  the instantiation for Dec-POMDPs in  \S \ref{sec:Dec-POMDP definition}). 
An IS that is not QC is called \emph{non-classical} \cite{aditya2012information,yuksel2023stochastic}. We extend such a categorization to LTC problems with different ISs as follows.
 \begin{definition}[(Strictly) quasi-classical LTC]\label{def:LTC_QC}
    We call an LTC $\cL$ \emph{(strictly) quasi-classical} if the Dec-POMDP induced by $\cL$ (see Definition \ref{def:LTC_induced_Dec-POMDP}) is \emph{(strictly) quasi-classical.} Namely,
    each agent in the intrinsic model of $\overline{\cD}_\cL$ knows the information (and the actions) of the agents who influence her, either directly or indirectly. 
\end{definition}

Similarly, an LTC $\cL$ that is not QC is called \emph{non-classical}.  See \S \ref{sec: examples of QC} for examples of QC and sQC LTCs.  Note that the categorization above  is defined based on the ISs \emph{before} additional sharing, as an inherent property of the LTC problem, since additional sharing is the solution \emph{to be} decided/learned. 
We focus on finding such a solution next.

\section{Hardness and Structural Assumptions}\label{section 3}

It is known that computing an  (approximate) team-optimal strategy in Dec-POMDPs, which are LTCs \emph{without} information-sharing, is \texttt{NEXP-hard} \cite{DecPOMDP-NEXP} in general. 
The hardness cannot be fully circumvented even when agents are allowed to share information:  even if agents share all the information, the LTC problem becomes a Partially Observable Markov Decision Process   (POMDP), which is known to be \texttt{PSPACE-hard}  \cite{papadimitriou1987complexity,lusena2001nonapproximability}. Hence, additional assumptions are necessary to make  LTCs computationally more tractable. We introduce several such  assumptions and their justifications below, whose  proofs  can be found in \S \ref{sec: proof details sec 3}.  

Recently, \cite{noah2022gamma} showed that \textit{observable} POMDPs \cite{even2007value}, a class of POMDPs with relatively \emph{informative}  observations, 
admit  \emph{quasi-polynomial time} algorithms to solve.  Such a condition was then extended to Dec-POMDPs with information sharing in \cite{liu2023tractable}, which also developed 
quasi-polynomial time and sample complexity algorithms. 

As solving LTCs is at least as hard as solving the Dec-POMDPs 
considered in \cite{liu2023tractable}, 
we first also make such an observability assumption on the \emph{joint} emission function as in \cite{liu2023tractable}, 
to potentially avoid computationally intractable oracles. 

\begin{assumption}[$\gamma$-observability \cite{even2007value,noah2022gamma,liu2023tractable}]
\label{gamma observability}
There exists a  $\gamma>0$ such that $\forall h\in[H]$, the emission $\OO_h$ satisfies 
that 
$\forall b_1,b_2\in \Delta(\cS)$, $
   \big\|{\OO_h^\top b_1-\OO_h^\top b_2}\big\|_1\ge \gamma\big\|{b_1-b_2}\big\|_1.$ 
\end{assumption} 

However, we show next that Assumption \ref{gamma observability} is not enough when it comes to LTC, if the baseline sharing IS is not favorable, and in particular, \emph{non-classical} \cite{aditya2012information}. The hardness persists even under a few additional assumptions to be introduced later 
that will make LTC  more tractable.

\begin{lemma}[Non-classical LTCs are hard]
\label{lemma: nonqc_hardness}
For non-classical LTCs under Assumptions \ref{gamma observability},  \ref{limited communication strategy}, \ref{useless action}, and \ref{weak gamma observability},  finding
     an $\frac{\epsilon}{H}$-team-optimal strategy is \texttt{PSPACE-hard}. 
\end{lemma}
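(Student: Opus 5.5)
We plan to prove the lemma by a polynomial-time reduction from the problem of finding an approximately optimal strategy in a (non-observable) POMDP. That problem is known to be \texttt{PSPACE-hard} \cite{papadimitriou1987complexity,lusena2001nonapproximability}, and the hardness persists even when the gap is only a constant, so it survives rescaling by $1/H$. The guiding idea is that non-classicality lets one agent's action influence another agent who does not know that action or its information. An instance can then contain a ``hidden'' POMDP that one agent must solve blind, while the observability-type assumptions hold only vacuously, because the informative observations are delivered to an agent who cannot pass them on usefully.

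\textbf{Construction.} Given a POMDP $\cP$ with horizon $H'$, states $\cS'$, actions $\cA'$ and possibly uninformative observations, we build a two-agent LTC $\cL$ with $H=\cO(H')$.
\begin{itemize}
\item \textbf{State.} The state of $\cL$ is the state of $\cP$, possibly augmented with a copy of the last action.
\item \textbf{Agent 1 (controller).} Agent 1 chooses the $\cP$-actions. Her observations are the (uninformative) $\cP$ observations. She shares nothing, and baseline sharing gives her no information about agent 2.
\item \textbf{Agent 2 (observer).} Agent 2 observes the state perfectly, or through an injective emission. This makes the joint emission $\OO_h$ satisfy Assumption \ref{gamma observability} with $\gamma=1$, and we would also check Assumption \ref{weak gamma observability} this way. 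Agent 2's control actions are dummies that do not affect transitions or rewards, so Assumption \ref{useless action} holds.
\item \textbf{Non-classicality.} Agent 1's actions at step $h$ influence agent 2's later observations, yet agent 2 does not know agent 1's actions or information. Alternatively, we let agent 1's actions affect the state observed by agent 2 while agent 2 does not know them. Either way, the intrinsic-model IS of the induced Dec-POMDP is non-classical.
\item \textbf{Communication.} We restrict communication so that agent 2 cannot usefully relay information to agent 1. One option is an additional-sharing space limited as in Assumption \ref{limited communication strategy}, for instance sharing only agent 2's own dummy information, or permitted only at the final step. The other is to set the communication cost $\cK_h$ to $1$ for any nontrivial sharing, which makes sharing strictly suboptimal by more than the reward gain.
\end{itemize}

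\textbf{Correctness.} We show that every strategy of $\cL$ induces a strategy of $\cP$ with the same value, and conversely. Agent 1's information consists exactly of the $\cP$ history. Agent 2 and the communication contribute nothing to the reward, or only negatively through the cost $\kappa_h$. Hence an $\frac{\epsilon}{H}$-team-optimal strategy of $\cL$, restricted to agent 1, yields an $\cO(\epsilon)$-optimal policy for $\cP$. Rewards are normalized by $H'$ so they lie in $[0,1]$. Choosing $\epsilon$ as the hardness threshold of \cite{lusena2001nonapproximability} then gives \texttt{PSPACE-hardness}. The whole construction is polynomial in the size of $\cP$.

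\textbf{Main obstacle.} The delicate step is satisfying all of Assumptions \ref{limited communication strategy}, \ref{useless action}, \ref{weak gamma observability} at once, whose exact forms appear later, while keeping the IS genuinely non-classical. Those assumptions are designed to make QC LTCs tractable, so they may forbid exactly the communication-blocking devices above. My first attempt will exploit the communication cost: any additional sharing that could reveal the state to agent 1 costs at least $1$ per step, which exceeds the total normalized reward. This makes optimal strategies non-communicating without touching the structural assumptions. If the weak observability assumption concerns individual agents' emissions, I would instead give agent 1 noisy but $\gamma$-observable copies, reducing from the hard-to-approximate observable-but-large-horizon or nonobservable-through-influence variant.
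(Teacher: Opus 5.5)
Your overall plan matches the paper: reduce from additive-approximate POMDP planning, make every nontrivial share cost $1$ so that near-optimal strategies never share, and scale rewards by $1/H$. However, your two-agent construction violates Assumption~\ref{weak gamma observability}, and the fallback you sketch does not repair it.

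That assumption requires $\OO_{-i,h}$ to be injective on $\Delta(\cS)$ for \emph{every} $i\in[n]$. With $n=2$, both $\OO_{-1,h}=\OO_{2,h}$ and $\OO_{-2,h}=\OO_{1,h}$ must be injective. Your perfect observer only covers $i=1$. For $i=2$, the controller's own emission would have to be injective, and the uninformative $\cP$ observations, which are the whole source of hardness, are exactly not.

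Giving agent~1 $\gamma$-observable copies of the state instead hands agent~1 a $\gamma$-observable POMDP. For constant $\gamma$ this admits a quasi-polynomial-time algorithm \cite{noah2022gamma}, so it cannot carry \texttt{PSPACE}-hardness barring an unlikely collapse. The other ``variants'' you mention are not known hard problems.

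The missing idea is a \emph{second} observer. The paper takes $n=3$ and gives agents~2 and~3 each an $\alpha$-noisy, full-support copy of the whole state. Every $\OO_{-i,h}$ then contains an individually $\alpha$-observable agent, which yields Assumptions~\ref{gamma observability} and~\ref{weak gamma observability} at once, while agent~1 keeps the untouched $\cP$ emission. Non-classicality comes from agent~2's action setting an extra state bit that agent~3 observes without knowing agent~2's information.

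A two-agent repair also exists, though it is not what you proposed. Let agent~1 see the tagged true state with a polynomially small probability $\eta$. This makes $\OO_{1,h}$ injective and changes every policy's value by at most $O(H^2\eta)$.

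Three smaller points need fixing.

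\begin{enumerate}
\item Assumption~\ref{useless action} is not automatic for dummy actions. They must be deleted from every later $\tau_{h'^-}$ and $\tau_{h'^+}$, including their owner's private information, so in particular they may not be shared. Also, Assumption~\ref{limited communication strategy} restricts the \emph{input} of $g^m_{i,h}$ to $c_{h^-}$; it does not limit what is shared, which is governed by $\phi_{i,h}$.

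\item The argument ``sharing costs more than it gains'' only excludes sharing for exact optima. For the $\frac{\epsilon}{H}$ claim, use Assumption~\ref{limited communication strategy} with null baseline sharing. Before the first additional sharing the common information is deterministic, so the first nontrivial share happens with probability one. It then costs $1$ against a total reward of at most $\frac12$. Hence no $\frac{\epsilon}{H}$-team-optimal strategy with $\epsilon<\frac14$ shares, and agent~1's control strategy is then exactly an $\epsilon$-optimal policy of $\cP$.

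\item A \emph{perfect} observer can conflict with Assumption~\ref{assumption:sigma_include}. If the $\cP$ observations are a deterministic function of the state, agent~2 would know $o_{1,h}$ without it being common information. Full-support noise, as in the paper, avoids this.
\end{enumerate}
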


 Note that the hardness comes from the intuition that, when communication costs are high, the additional sharing from LTC will be limited, preventing the upgrade of the IS from a non-classical one to a (quasi-)classical one, which is hard with only the \emph{joint}  observability of the emission (see Assumption \ref{gamma observability}), even along with  several other assumptions.

By Lemma \ref{lemma: nonqc_hardness}, 
we will hence focus on \emph{quasi-classical} LTCs hereafter. 
Indeed, QC is also known to be critical for efficiently solving \emph{continuous-space} and \emph{linear} decentralized control \cite{ho1972team,lamperski2015optimal}.
{However, quasi-classicality may not be sufficient for LTC problems, since 
the additional sharing may \emph{break} the QC IS,  and introduce computational hardness, as argued below. 

Firstly, the breaking of QC IS may result from the \emph{communication strategies}. Specifically, the  communication strategy space in 
\S\ref{sec:strategy_solution} allows the dependence on agents' \emph{private information}, {which} introduces incentives for \emph{signaling} \cite{aditya2012information} and  can also  cause computational hardness, as shown next.
\begin{lemma}[QC LTCs with full-history-dependent communication strategies are hard]\label{lemma: limited communication}
For QC LTCs under Assumption   \ref{gamma observability}, together with Assumptions \ref{useless action},  and \ref{weak gamma observability},
computing a team-optimal strategy in the general space  of 
$(\cG_{1:H}^a,\cG_{1:H}^m)$ with $\cG_{i,h}^m:=\{g_{i,h}^m: \cT_{i,h^-}\rightarrow \cM_{i,h}\}$
is \texttt{NP-hard}.
\end{lemma}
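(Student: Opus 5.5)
We prove \texttt{NP-hard}ness by a polynomial-time reduction from a known \texttt{NP-hard} decentralized decision problem. The natural source is the team decision problem of \cite{papadimitriou1987complexity}: two agents, each observing a private signal drawn from a known joint distribution, each choosing an action, and sharing a common reward. Equivalently, one can reduce from a one-shot two-agent Dec-POMDP with horizon $H=2$, which is \texttt{NP-hard} even though its IS is trivially static. The key observation is that a communication strategy $g^m_{i,h}:\cT_{i,h^-}\to\cM_{i,h}$ that depends on private information is itself a decentralized decision rule. If we design the instance so that the control stage is ``easy'' (classical) but the reward depends on the communication actions only through their joint effect on private-information-dependent signals, then optimizing the communication strategies embeds the hard static team problem.

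\textbf{Construction.} Take $n=2$ and a short horizon, e.g.\ $H=2$.
\begin{itemize}
\item[(i)] At $h=1$, nature draws $s_1=(x_1,x_2)$ from the given joint distribution. Agent $i$ observes $o_{i,1}=x_i$, and there is no baseline sharing, so $c_{1^-}=\emptyset$.
\item[(ii)] The communication space $\cM_{i,1}$ is a copy of the action set of the team problem. The additional sharing $\phi_{i,1}$ reveals only $m_{i,1}$ (allowed, since $m_{i,h}\in z^a_{i,h}$) and never $x_i$ itself. The communication cost $\cK_1\equiv 0$.
\item[(iii)] The control actions at $h=1$ are useless, so the transition simply copies $(x_1,x_2,m_{1,1},m_{2,1})$ into the state. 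To make the communication actions enter the dynamics, we let the state record them via the control actions: for instance, agents can only repeat the shared $m_1$ at the next step. We must choose the mechanism so that it fits Assumption \ref{useless action}.
\item[(iv)] At $h=2$, the reward equals the team-problem cost evaluated at $(x_1,x_2,m_{1,1},m_{2,1})$.
\end{itemize}

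The baseline IS must be QC. With no influence among agents other than through the state, and with each agent's later information containing its earlier information, the induced Dec-POMDP is static or QC. We need to verify this against Definition \ref{def:LTC_QC}, possibly by adding baseline sharing of past actions so that influencers' information is known. Assumptions \ref{gamma observability} and \ref{weak gamma observability} are enforced by adding a fully informative joint observation of $s$ at $h=2$, or by mixing in extra observation components. This keeps the joint emission $\gamma$-observable while preserving the hardness, since the hardness lives at $h=1$, where each agent sees only its own $x_i$.

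\textbf{Equivalence.} Any team-optimal LTC strategy restricted to $g^m_{i,1}(x_i)$ is exactly a team decision rule, and its value equals the team objective. Conversely, the control strategies cannot improve on this value, because the reward depends only on $(x,m_1)$. Hence an optimal LTC solution yields an optimal team decision, and the reduction is polynomial in the input size.

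\textbf{Main obstacle.} The hardest step is reconciling the construction with Assumptions \ref{useless action} and \ref{weak gamma observability}. The reward must depend on communication actions, yet Assumption \ref{useless action} presumably restricts how actions (possibly including $m$) affect the state. If $m$ cannot enter the dynamics, I would instead encode the hard problem into the control step. Communication would reveal private bits at zero cost. The structure of $\cM_{i,1}$ and $\phi_{i,1}$ would then make the choice of \emph{which} private bits to reveal, as a function of $x_i$, determine what the partner knows when choosing $a_{2}$. This recovers the team problem through the subsequent best-response control. I would first try the simpler direct encoding and fall back to this variant if needed.
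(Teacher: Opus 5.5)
Your guiding idea is the right one: a private-information-dependent rule $g^m_{i,h}$ is itself a decentralized decision rule, so the team decision problem (TDP) can be embedded in it. Note that TDP hardness is \cite{teamdecisionhardness}; \cite{papadimitriou1987complexity} is the POMDP result. However, the construction fails at the step you flag, and also at a step you treat as routine.

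First, in this model $m_h$ enters neither $\TT_h$ nor $\cR_h$. It affects the objective only through $\cK_h(z^a_h)$ and through what the other agents learn. You set $\cK_1\equiv 0$ and reveal only $m_{i,1}$, so the TDP cost would have to pass through the state. But ``recording $m_1$ via the control actions'' cannot be enforced. By Assumption~\ref{assumption:information evolution}(e), $x_i=o_{i,1}\in\tau_{i,1^+}$, so $g^a_{i,1}$ may depend on $x_i$ directly. The TDP then sits in the control step, and the communication strategies play no role.

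Second, Assumptions~\ref{gamma observability} and~\ref{weak gamma observability} must hold at every $h$, including $h=1$. With $s_1=(x_1,x_2)$ and $o_{i,1}=x_i$, $\OO_{2,1}$ cannot separate beliefs that differ only in the conditional law of $x_1$ given $x_2$, so Assumption~\ref{weak gamma observability} fails. Any repair that makes each agent's observation informative about $s_1$ leaks information about $x_{-i}$ and changes the TDP. So ``the hardness lives at $h=1$, where each agent sees only its own $x_i$'' is incompatible with the hypotheses. Your fallback faces the same difficulty: $\cR_2(s_2,a_2)$ can depend on the TDP signals only if they are in the state, and there Assumption~\ref{weak gamma observability} forces each agent to observe information about the other's signal.

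The missing idea is to keep the TDP signals out of the state and put the TDP cost into the communication cost, the one primitive that can read shared private observations. The paper's construction is as follows.
\begin{itemize}
\item It takes $H=3$ with a static state $s\in[2]^4$. Control actions at $h=1,2$ are trivial and removed from all information, so Assumption~\ref{useless action} holds.
\item There is no baseline sharing, so there is no cross-agent influence and the LTC is QC.
\item At $h=3$ the emission is $\OO_3(o_3\given s_3)=\tilde p(y_1,y_2)\prod_{i}\OO^2_{i,3}(o^2_{i,3}\given s_3)$. The signal $y_i$ is state-independent emission noise.
\item A separate noisy copy of $s$ at every step gives each agent individual $\alpha_2$-observability, hence Assumptions~\ref{gamma observability} and~\ref{weak gamma observability}, without leaking $y_{-i}$.
\item The action $m_{i,3}\in\{1,2\}$ forces sharing $o_{i,3}$ (hence $y_i$) together with either $o_{i,1}$ or $o_{i,2}$.
\item The cost is $\cK_3(z^a_3)=\tilde c(y_1,y_2,u_1,u_2)/\alpha_1$, with $u_i$ read off from which of $o_{i,1},o_{i,2}$ was shared and $\alpha_1=2\max\tilde c$, so $\cK_3\le 1/2$.
\item Any additional sharing at $h=1,2$ costs $1$, which makes earlier sharing strictly suboptimal.
\item The reward at $h=3$ (guessing a bit of the partner's earlier observations) is always attainable from what is shared at $h=3$.
\end{itemize}
Hence the $g^m_{i,3}$ of an optimal strategy minimizes $\EE[\tilde c]$, can be taken to depend on $y_i$ alone, and is an optimal TDP solution.
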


The hardness in Lemma \ref{lemma: limited communication} originates from the fact that when depending on the private/local  information, determining the communication action can be cast as a \emph{Team Decision problem} (TDP) \cite{teamdecisionhardness}, which is known to be hard. This will be the case even when the instantaneous observations are relatively observable (see Assumptions \ref{gamma observability} and \ref{weak gamma observability}). 

To avoid this hardness, we thus focus on communication strategies that only condition on the \emph{common information}.
Intuitively, this assumption is not unreasonable, as it means that \emph{which historical information to share} is determined by \emph{what has been shared} (in the common information). Note that this does not lose generality in the sense that the private information $p_{i,h^-}$ \emph{can still be shared}. It only means that the communication action is not {determined} based on $p_{i,h^-}$, and the additional sharing is still dictated by $z_{i,h}^a=\phi_{i,h}(p_{i,h^-},m_{i,h})$  (see Assumption \ref{assumption:information evolution}), depending on $p_{i,h^-}$.

\begin{assumption}[Common-information-based communication strategy]
    \label{limited communication strategy}
    The communication strategies  take \emph{common information} as input, with the following form: 
\begin{equation}
    \forall i\in[n],h\in[H],\quad  g^m_{i,h}:\cC_{h^-}\rightarrow \cM_{i,h}.
    \label{equ: limited communication strategy}
\end{equation}
\normalsize
\end{assumption}}

Secondly, the breaking of QC IS may result from the \emph{control strategies}:
if some agent did \emph{not} influence others in the baseline sharing (and thus these other agents did \emph{not} have to access the agent's information, while still satisfying QC), while she starts to influence others by \emph{sharing} her (\emph{useless}) \emph{control} actions, this will make her \emph{control strategies} relevant. 
We make the following two assumptions to avoid the related  pessimistic cases, each followed by a computational hardness result when (only) the condition is missing.  

Specifically, sometimes 
the action of some agents may not influence the \emph{state transition}.  
However, if they were deemed \emph{non-influential},  
but shared via additional sharing, then the QC IS may break for LTC. We thus make the following assumption. 

\begin{assumption}[Control-useless  action is not used]\label{useless action}
    $\forall i\in[n],h\in[H]$, suppose agent $i$'s action $a_{i,h}$ does not influence  the state $s_{h+1}$, namely,  $\forall s_h\in\cS, a_h\in\cA_h, a'_{i,h}\in \cA_{i,h},a'_{i,h}\neq a_{i,h}, \TT_h(\cdot\given s_h,a_h)=\TT_h(\cdot\given s_h,(a'_{i,h},a_{-i,h}))$. Then,  $\forall h'> h$, the random variable $a_{i,h}\notin \tau_{h'^-}$ and $a_{i,h}\notin \tau_{h'^+}$.
\end{assumption}

\begin{lemma}[QC LTCs  without Assumption \ref{useless action} are hard]\label{lemma: useless action}For QC LTCs under Assumptions \ref{gamma observability}, \ref{limited communication strategy}, and \ref{weak gamma observability}, finding a team-optimal strategy  is still \texttt{NP-hard}.
\end{lemma}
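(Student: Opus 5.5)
The plan is to reduce from the Team Decision Problem (TDP) of \cite{teamdecisionhardness}, which is \texttt{NP-hard}, in the same spirit as Lemma \ref{lemma: limited communication}. The twist is that the hard "signaling'' now travels through a control-useless action, not through the communication action. Recall the TDP instance: two agents with finite observation sets $Y_1,Y_2$, a joint distribution $\pi$ over $Y_1\times Y_2$, finite action sets $U_1,U_2$, and a cost $c(y_1,y_2,u_1,u_2)$. The task is to choose deterministic maps $\gamma_i:Y_i\to U_i$ that optimize the expected cost, and deciding whether a threshold value is attainable is \texttt{NP-hard}.

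We build an LTC with $n=2$ and horizon $H=2$ or $3$.
\begin{itemize}
\item At $h=1$, the state is $s_1=(y_1,y_2)\sim\pi$ and agent $i$ privately observes $y_i$.
\item Agent $i$ takes a control action $a_{i,1}\in U_i$ that has no effect on the transition, i.e., $\TT_1$ ignores $a_1$. Thus, in the sense of Assumption \ref{useless action}, $a_{i,1}$ is control-useless.
\item The state is copied forward, $s_2=s_1$.
\item The baseline sharing at $h=2$ shares the actions $a_{1,1},a_{2,1}$ but not the private observations $y_i$. This is exactly where Assumption \ref{useless action} is violated.
\item The reward depends on $s_1$ and $a_1$, with $\cR_1(s_1,a_1)=1-c(y_1,y_2,a_{1,1},a_{2,1})/\max c$, rescaled into $[0,1]$. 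Later rewards are zero.
\end{itemize}

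We then verify the side conditions. Communication costs are set to $1$ for any nonempty additional sharing, and the reward gaps are scaled so that additional sharing is never worth it. Then Assumption \ref{limited communication strategy} is harmless: the optimal communication action is the null message, whatever common information it depends on.

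For $\gamma$-observability (Assumption \ref{gamma observability}) and Assumption \ref{weak gamma observability}, we make the joint emission reveal the state. Let $o_{i,h}$ contain $y_i$ together with an independent noisy copy of the other coordinate, with the noise level tuned so that individual observations are $\gamma$-observable. If this interferes with the TDP structure, we instead encode the TDP's correlation via an unobserved "type'' component and use observation padding. Because $a_{i,1}$ does not affect the state, the agents that it "influences'' in the intrinsic model are exactly those that receive it via sharing. Everything at $h=2$ is common, so the baseline IS is QC: the second-step agents know what influenced them.

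The correctness step says the team optimum of the LTC equals the optimal TDP value, up to a constant transformation. Since only the step-$1$ reward matters and communication is never used, any strategy reduces to maps $g^a_{i,1}:y_i\mapsto a_{i,1}$, which is precisely a TDP strategy. Conversely, every TDP strategy is realizable, and the construction is polynomial-size. This gives \texttt{NP-hardness} of finding a team-optimal strategy.

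The main obstacle I expect is reconciling the observability assumptions with the TDP's arbitrary correlated private information. Under $\gamma$-observability the joint observation is informative about the state, so it must not trivialize the decentralization. I would try to keep the hard part in the private observations, so that the joint observation of $(y_1,y_2)$ is fully revealing; joint observability then holds trivially while each agent still sees only $y_i$. For the weak (individual) version, I would check its precise form in Assumption \ref{weak gamma observability} and, if necessary, add a product-form auxiliary state component, per agent, that is individually observable but irrelevant to reward. This would satisfy the assumption without altering the TDP reduction.
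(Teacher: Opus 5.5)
Your reduction fails on two hypotheses you need to verify.

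\textbf{QC fails.} You note yourself that the agents influenced by $a_{1,1}$ are exactly those that receive it. Your baseline puts $a_{1,1}$ into $c_{2^-}$, so agent $(1,1)$ influences agent $(2,2)$ in the intrinsic model of the induced Dec-POMDP. QC then requires $\sigma(\tau_{1,1})\subseteq\sigma(\tau_{2,2})$, i.e., $o_{1,1}$ (in particular $y_1$) must be recoverable from agent $2$'s step-$2$ information. You explicitly withhold $y_1$, and noisy copies do not make $o_{1,1}$ a function of $\tau_{2,2}$. Sharing a private-information-dependent action without the underlying information is the canonical non-classical (signaling) structure. ``Everything at $h=2$ is common'' does not rescue it, because QC concerns the \emph{influencer's} information.

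\textbf{Observability fails.} Since $\cR_h$ sees only $(s_h,a_h)$, the TDP data must sit in the state, as in your $s_1=(y_1,y_2)$. Assumption~\ref{weak gamma observability} requires $\OO_{-i,h}$ to be injective on all of $\Delta(\cS)$. Hence agent $-i$'s observation must separate state distributions that differ only in the conditional law of $y_i$ given $y_{-i}$. So agent $-i$ necessarily learns something about $y_i$, the team problem is no longer the given TDP, and you give no argument that the leak is harmless. Your per-agent auxiliary components do not help, because injectivity is required over the whole state, including the $(y_1,y_2)$ part.

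\textbf{What the paper does instead.} Its construction avoids both problems by design.
\begin{itemize}
\item The baseline sharing is null, so the induced Dec-POMDP is trivially QC. Assumption~\ref{useless action} is violated only through \emph{additional} sharing: both communication actions at $h=3$ force $a_{i,2}$ and $o_{i,2}$ into $z^a_3$.
\item The TDP signals are not part of the state. They are state-independent correlated emission noise, $\OO_2(o_2\mid s_2)=\tilde p(o^1_{1,2},o^1_{2,2})\prod_j\OO^2_{j,2}(o^2_{j,2}\mid s_2)$. Observability comes from separate noisy copies of a four-element state.
\item The TDP cost is charged through the communication cost $\cK_3(z^a_3)$, which, unlike the reward, may depend on shared observations and on the shared control-useless action. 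The zero slot of $a_{i,2}$ encodes $U_i$, and its nonzero entry carries agent $i$'s observed state bit, so the step-$3$ reward is always $1$.
\end{itemize}

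Note also that even a repaired version of your construction puts the hardness in the immediate step-$1$ reward. It would survive deleting the action sharing, i.e., with Assumption~\ref{useless action} in force. What you would be proving is then the generic one-step TDP hardness that the paper isolates in Assumption~\ref{assu: one_step_tract}, not hardness caused by the violation. That is what the lemma is meant to justify (cf.\ the remark after the hardness proofs). In the paper, $a_{i,2}$ earns no reward and does not affect the state; it matters only because it gets shared.
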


{Note that other than the justification above based on computational hardness, Assumption \ref{useless action} has been \emph{implicitly} made in the  IS examples in the literature when there are \emph{uncontrolled} state dynamics, see e.g., \cite{ashutosh2013game,liu2023tractable}. Moreover, we emphasize that for common non-degenerate cases where actions \emph{do}  affect the state transition, this assumption becomes unnecessary.}

{Other than {not influencing} state transition, an action may also be \emph{non-influential}  if the emission functions of other agents are \emph{degenerate}: they cannot \emph{sense} the influence from previous agents'  actions. We thus make the following assumption on the  emissions, followed by a justification result.} 

\begin{assumption}[Other agents' emissions are non-degenerate]\label{weak gamma observability} 
    $\forall h\in[H], i\in[n] $, $\OO_{-i,h}$ satisfies that $
        \forall b_1,b_2\in \Delta(\cS)$ such that $b_1\neq b_2,~\OO_{-i,h}^\top b_1\neq \OO_{-i,h}^\top b_2$, where $\OO_{-i,h}$ denotes the joint emission except agent $i$ at timestep $h$. 
\end{assumption} 
\begin{lemma}[QC LTCs without Assumption \ref{weak gamma observability} are hard]\label{lemma: weak gamma observability}
For QC LTCs under Assumptions \ref{gamma observability}, \ref{limited communication strategy}, and \ref{useless action}, finding an $\frac{\epsilon}{H}$-team-optimal strategy is still \texttt{PSPACE-hard}.
\end{lemma}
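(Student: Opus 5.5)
We prove this by reducing from planning in a general (non-observable) finite-horizon POMDP. Finding an $\frac{\epsilon}{H}$-optimal policy there is \texttt{PSPACE-hard} \cite{papadimitriou1987complexity,lusena2001nonapproximability}. The idea is to build a QC LTC that satisfies Assumptions \ref{gamma observability}, \ref{limited communication strategy}, and \ref{useless action}, but violates Assumption \ref{weak gamma observability}. The $\gamma$-observability of Assumption \ref{gamma observability} is only a \emph{joint} condition. We therefore let the whole observation that makes the joint emission observable go to an agent whose information is useless to the controlling agent, unless it is additionally shared. Communication costs are set high enough that sharing it is never worthwhile.

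\textbf{Construction.} Take $n=2$ and start from a hard POMDP instance $\cP=(H,\cS,\cA,\cO,\TT,\OO,\mu_1,\cR)$.
\begin{enumerate}[(i)]
\item Agent $1$ plays the POMDP. Her action space is $\cA$, her observation is the POMDP observation $o_{1,h}\sim\OO_h(\cdot\given s_h)$, and the baseline sharing gives her her full own history, so $p_{1,h^-}=\emptyset$ by having agent $1$ share everything as baseline.
\item Agent $2$ has a trivial action set ($|\cA_{2,h}|=1$), so Assumption \ref{useless action} holds vacuously. She receives a \emph{perfectly informative} observation $o_{2,h}=s_h$, so the joint emission satisfies Assumption \ref{gamma observability} with $\gamma=1$. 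Meanwhile $\OO_{-2,h}=\OO_{1,h}$ is the degenerate original emission, which breaks Assumption \ref{weak gamma observability}.
\item The baseline sharing of agent $2$ is empty and she has no influence on anyone. Agent $1$ influences only herself and agent $2$ trivially, with her history common. Hence the intrinsic model is QC.
\item Communication actions are based on common information, as in Assumption \ref{limited communication strategy}. The cost $\cK_h(z^a_h)$ is $1$ whenever agent $2$ shares anything nonempty.
\end{enumerate}

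\textbf{Key step: sharing is never beneficial enough.} Fix any strategy pair. Sharing $s_h$ at some step pays cost $1$ up front. The rewards are rescaled by a factor of $1/H$ so that the per-step reward lies in $[0,1/H]$ and the total reward in $[0,1]$. Any gain from full-state information is therefore at most $1$, and sharing is weakly dominated. To make it strictly dominated, we take the rescale factor slightly smaller, e.g.\ $1/(2H)$.

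It follows that any team-optimal, or $\epsilon'$-optimal, LTC strategy can be converted into one with no additional sharing and nearly the same value. Its control part is then an agent-1 policy on her own history, i.e.\ a POMDP policy. The value loss translates through the rescaling. So an $\frac{\epsilon}{H}$-team-optimal LTC strategy yields an $O(\epsilon)$-optimal policy for $\cP$, of the accuracy needed by the hardness result in \cite{lusena2001nonapproximability} after adjusting constants. The construction is clearly polynomial-size, which completes the reduction.

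\textbf{Main obstacle.} The delicate part is the accuracy bookkeeping: an $\frac{\epsilon}{H}$-team-optimal LTC strategy might share occasionally and still be near-optimal. I would handle this by choosing the cost larger than the maximal possible reward gain plus the slack. Then any strategy that shares with positive probability $q$ loses at least $q\cdot(\text{cost}-\text{gain})$. Removing the sharing and replacing agent $1$'s control by any fixed policy on the "shared" branches costs at most a controlled amount, so the strategy's restriction to the no-sharing branches remains near-optimal for the POMDP.

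A secondary check is that Assumptions \ref{assumption:information evolution}–\ref{assumption:sigma_include} hold. Because the only cross-agent information is through common information, Assumption \ref{assumption:sigma_include} is immediate.
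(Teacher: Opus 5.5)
Your proposal follows essentially the paper's reduction: two agents, agent~1 plays the POMDP with all her information made common, agent~2 sees $s_h$ exactly so the joint emission is $1$-observable, and any sharing costs $1$ against a total reward of at most $1/2$. The accuracy bookkeeping you flag is immediate pathwise: switching to the all-zero communication strategy leaves no-sharing trajectories unchanged and turns every sharing trajectory's negative return into a nonnegative one, so no value is lost.

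One correction: $|\cA_{2,h}|=1$ does not make Assumption~\ref{useless action} vacuous. Its premise then holds trivially, so you must explicitly keep $a_{2,h}$ out of every later $\tau_{h'^-},\tau_{h'^+}$, including agent~2's sharing options. The paper instead adds a state bit $s^2_{h+1}=a_{2,h}$, so agent~2's action genuinely influences the state. That bit also makes $\OO_{-2,h}$ non-injective, so Assumption~\ref{weak gamma observability} fails regardless of $\OO_h^{\cP}$. In your version, that failure depends on $\OO_h^{\cP}$, which is harmless for the statement as written.
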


We have justified the above assumptions by showing that missing one of them may cause computational intractability of LTCs in general. Hence, Assumptions \ref{limited communication strategy}, \ref{useless action},  and \ref{weak gamma observability} will be made hereafter, unless otherwise noted. More  importantly, as we will show later,   as another justification, 
LTCs under Assumptions  \ref{limited communication strategy}, \ref{useless action},  and \ref{weak gamma observability} can indeed \emph{preserve} the QC/sQC information structure \emph{after}  additional sharing, making it possible for the overall LTC problem to be computationally more tractable.  
More examples that satisfy these assumptions can also be found in \S \ref{sec: examples of QC}.

\section{Solving QC LTC Problems Provably}\label{sec:positive_results}
 
We now study how to solve QC LTC  provably,  via either \emph{planning} (with model knowledge) or \emph{learning} (without model knowledge). The pipeline of our solution is shown in \Cref{fig: LTC process}, and proofs of the results can be found in \S \ref{proof details sec 4}. 
\begin{figure}[!htp]
    \centering
    \includegraphics[width=1\textwidth] {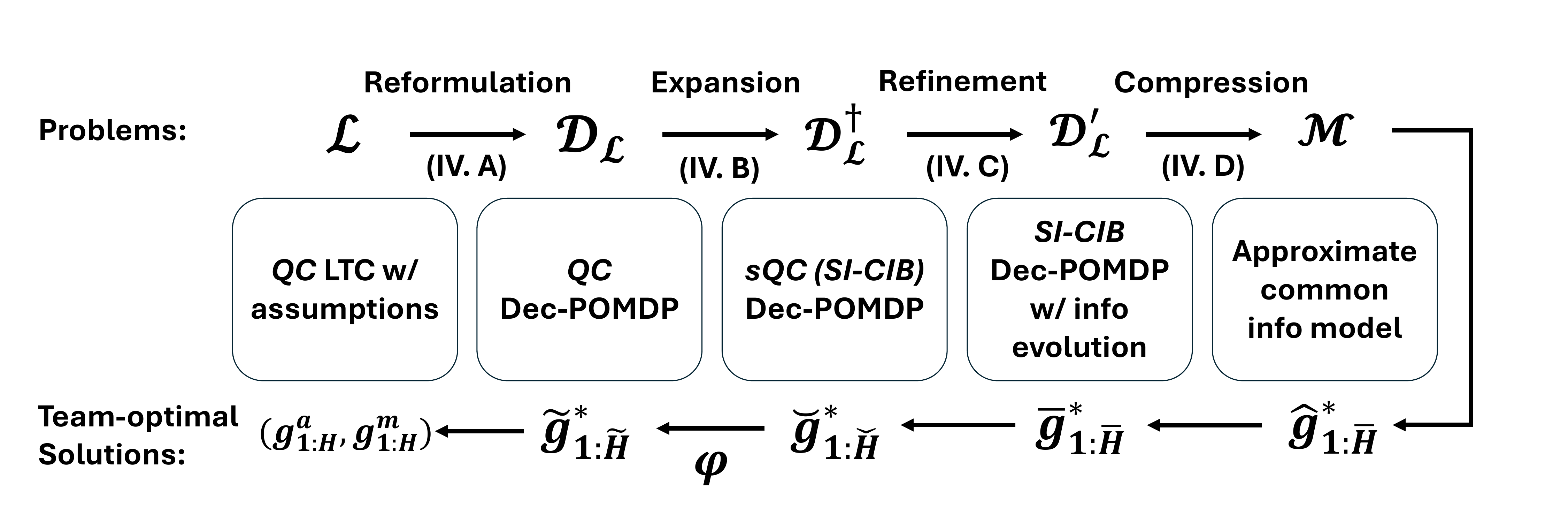}
     \captionsetup{font=small}
     \caption{Illustrating the subroutines 
    for solving the LTC problems.}
    \label{fig: LTC process}
\end{figure}

\subsection{An Equivalent Dec-POMDP}
\label{formulation from LTC to Dec-POMDP}

Given any LTC $\cL$, we can define a Dec-POMDP $\cD_\cL$ characterized  by  $\la\tilde{H},\tilde{\cS},\{\tilde{\cA}_{i,h}\}_{i\in[n],h\in[\tilde{H}]},\{\tilde{\cO}_{i,h}\}_{i\in[n],h\in[\tilde{H}]},\{\tilde{\TT}_h\}_{h\in[\tilde{H}]},\{\tilde{\OO}_h\}_{h\in[\tilde{H}]},\tilde{\mu}_1,
\{\tilde{\cR}_h\}_{h\in[\tilde H]}\ra$,  such that these two are equivalent (under the assumptions in \S\ref{section 3}): 
\begin{flalign}
&\tilde{H}=2H,~~
\tilde{\cS}=\cS,~~
\tilde{s}_{2h-1}=\tilde{s}_{2h}=s_h,~~
\tilde{\cA}_{i,2h-1}=\cM_{i,h},~~
\tilde{\cA}_{i,2h}=\cA_{i,h},~~
\tilde{\cO}_{i,2h-1}=\cO_{i,h},~~
\tilde{\cO}_{i,2h}=\{\emptyset\},~~
\tilde{\mu}_1=\mu_1,\notag\\
&\tilde{\OO}_{2h-1}=\OO_h,~~
\tilde{\TT}_{2h-1}(\tilde{s}_{2h}\given \tilde{s}_{2h-1},\tilde{a}_{2h-1})=\mathds{1}[\tilde{s}_{2h}=\tilde{s}_{2h-1}],~~
\tilde{\TT}_{2h}(\tilde s_{2h+1}\given\tilde s_{2h}, \tilde a_{2h})=\TT_h(\tilde s_{2h+1}\given\tilde s_{2h}, \tilde a_{2h}),\label{LTC to Dec-POMDP}\\
&\tilde{\cR}_{2h-1}=-\cK_{h},~~
\tilde{\cR}_{2h}=\cR_{h},~~
\tilde{p}_{i,2h-1}=\emptyset,~~
\tilde{p}_{i,2h}=p_{i,h^+},~~
\tilde{c}_{2h-1}=c_{h^-},~~
\tilde{c}_{2h}=c_{h^+},~~
\tilde{z}_{2h-1}=z^b_{h},~~
\tilde{z}_{2h}=z^a_{h},\notag 
\end{flalign}
{for all} $(i,h)\in[n]\times[H]$,   $s_h\in \cS,a_{i,h}\in\cA_{i,h},m_{i,h}\in \cM_{i,h},p_{i,h^-}\in\cP_{i,h^-},p_{i,h^+}\in\cP_{i,h^+},c_{h^-}\in\cC_{h^-},c_{h^+}\in\cC_{h^+},\tau_{i,h^-}\in\cT_{i,h^-},\tau_{i,h^+}\in\cT_{i,h^+}$. Note that we follow the convention of $\tilde{\tau}_{i,h}:=\tilde{p}_{i,h}\cup\tilde{c}_h$ for any $h\in[\tilde H]$, and at the odd timestep $2t-1$ for any $t\in[H]$, 
we have $\tilde{p}_{i,2t-1}=\emptyset$ under  Assumption \ref{limited communication strategy}, i.e., in $\cD_\cL$, 
each agent only uses the common information so far for decision-making at timestep $2h-1$. 
Correspondingly, for any $h\in[\tilde{H}],i\in[n]$, we denote by $\tilde{g}_{i,h}, \tilde{g}_{h}$ the agent $i$'s strategy and the joint strategy,  respectively, and denote by $\tilde{\cG}_{i,h},\tilde{\cG}_{h}$ their associated spaces.
Moreover, to unify the presentation, we define that $\forall h\in[H]$,    $\tilde{r}_{2h-1}=\tilde \cR_{2h-1}(\tilde {s}_{2h-1},\tilde{a}_{2h-1},{p}_{2h-1}):=-\cK_h(\phi_h({p}_{2h-1},\tilde{a}_{2h-1}))$, with a slight abuse of notation, where ${p}_{2h-1}:=p_{h^-}$ can be viewed as part of the underlying state. 
Similarly, we define $\tilde{r}_{2h}=\tilde \cR_{2h}(\tilde {s}_{2h},\tilde{a}_{2h},{p}_{2h}):=\cR_h(\tilde {s}_{2h},\tilde{a}_{2h})$, where ${p}_{2h}:=p_{h^+}$. 
Hence, the objective of $\cD_\cL$ is defined as $J_{\cD_\cL}(\tilde{g}_{1:\tilde{H}})=\EE_{\cD_\cL}[\sum_{h=1}^{\tilde{H}}\tilde{r}_h\given \tilde{g}_{1:\tilde{H}}]$.

Essentially, this reformulation splits the $H$-step  control and communication decision-making procedure into a $2H$-step one. 
A similar splitting of the timesteps was also used in \cite{ashutoshcommunicate1,ashutoshcommunicate2}. In comparison, we consider a more general setting, where the state is not decoupled,  and agents are allowed to share the observations and actions at the \emph{previous} timesteps, due to the generality of our  LTC formulation. The equivalence between $\cL$ and $\cD_{\cL}$ is more formally stated as follows.

\begin{proposition}[Equivalence between $\cL$ and $\cD_\cL$]
    \label{prop: equivalence of LTC and Dec-POMDP}
    Let $\cD_\cL$ be the reformulated Dec-POMDP from $\cL$ {satisfying  Assumption \ref{limited communication strategy}}, then the solutions of the two problems are equivalent, in the sense that $\forall g^m_{1:H}\in\cG^m_{1:H}, g^a_{1:H}\in\cG^a_{1:H},i\in[n]$, let $\tilde{g}_{1:\tilde{H}}=(g^m_1,g^a_1,\cdots,g^m_H,g^a_H)$, then $J_{\cD_\cL}(\tilde{g}_{1:\tilde{H}})=J_\cL(g^a_{1:H},g^m_{1:H})$. Also, $\forall \tilde{g}_{1:\tilde{H}}\in \tilde{\cG}_{1:\tilde{H}}$, let $g^m_{1:H}=(\tilde{g}_1, \tilde{g}_3, \cdots,\tilde{g}_{\tilde{H}-1}), g^a_{1:H}=(\tilde{g}_2,\tilde{g}_4,\cdots,\tilde{g}_{\tilde{H}})$, then $J_\cL(g^a_{1:H},g^m_{1:H})=J_{\cD_\cL}(\tilde{g}_{1:\tilde{H}})$.
\end{proposition}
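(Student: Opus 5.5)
The plan is to prove both directions with a single coupling argument. I will show that for any pair of strategy profiles related as in the statement, the two systems induce the same joint law over the relevant random variables. The first step is to check that the map $(g^m_{1:H},g^a_{1:H})\mapsto \tilde g_{1:\tilde H}=(g^m_1,g^a_1,\dots,g^m_H,g^a_H)$ is a well-defined bijection between $\cG^m_{1:H}\times\cG^a_{1:H}$ and $\tilde\cG_{1:\tilde H}$. Under Assumption \ref{limited communication strategy}, $g^m_{i,h}$ has domain $\cC_{h^-}=\tilde\cC_{2h-1}$. Since $\tilde p_{i,2h-1}=\emptyset$, this equals $\tilde\cT_{i,2h-1}$, the domain of $\tilde g_{i,2h-1}$, with range $\cM_{i,h}=\tilde\cA_{i,2h-1}$. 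Similarly, $g^a_{i,h}$ has domain $\cT_{i,h^+}=\cP_{i,h^+}\cup\cC_{h^+}=\tilde\cT_{i,2h}$, with range $\cA_{i,h}=\tilde\cA_{i,2h}$. The map is therefore a bijection, and its inverse is exactly the one given in the second half of the statement. It then suffices to prove $J_{\cD_\cL}(\tilde g_{1:\tilde H})=J_\cL(g^a_{1:H},g^m_{1:H})$ for corresponding pairs.

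Next, I will fix corresponding strategies and construct both processes on a common probability space. I draw $s_1\sim\mu_1$, $o_h\sim\OO_h(\cdot\given s_h)$, and $s_{h+1}\sim\TT_h(\cdot\given s_h,a_h)$ using the same exogenous randomness, and identify $\tilde s_{2h-1}=\tilde s_{2h}=s_h$ and $\tilde o_{2h-1}=o_h$. By induction on $h$, I will show that the trajectory variables agree almost surely:
\[
\bigl(s_h,o_h,c_{h^-},p_{h^-},m_h,z^a_h,c_{h^+},p_{h^+},a_h\bigr)
\]
in $\cL$ equals the corresponding tuple in $\cD_\cL$, namely $\tilde s_{2h-1}$, $\tilde o_{2h-1}$, $\tilde c_{2h-1}$, $p_{2h-1}$, $\tilde a_{2h-1}$, $\tilde z_{2h}$, $\tilde c_{2h}$, $\tilde p_{2h}$, $\tilde a_{2h}$.

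The inductive step follows the information evolution in Assumption \ref{assumption:information evolution}. Items (a) and (c) produce $z^b_h=\tilde z_{2h-1}$ and $p_{h^-}$ from identical inputs. Then $m_h=g^m_h(c_{h^-})=\tilde g_{2h-1}(\tilde c_{2h-1})=\tilde a_{2h-1}$. Item (b) gives $z^a_h=\phi_h(p_{h^-},m_h)=\tilde z_{2h}$, and (d) gives $p_{h^+}=\tilde p_{2h}$. Next, $a_h=g^a_h(\tau_{h^+})=\tilde g_{2h}(\tilde\tau_{2h})=\tilde a_{2h}$. The identity transition $\tilde\TT_{2h-1}$ keeps the state fixed across the communication substep, and $\tilde\TT_{2h}=\TT_h$ then couples $s_{h+1}$ with $\tilde s_{2h+1}$.

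With the trajectories coupled, the rewards agree substep by substep: $\tilde r_{2h-1}=-\cK_h(\phi_h(p_{h^-},\tilde a_{2h-1}))=-\kappa_h$ and $\tilde r_{2h}=\cR_h(s_h,a_h)=r_h$. Summing over $h$ and taking expectations yields $J_{\cD_\cL}=J_\cL$, which proves both claims of the proposition.

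The main obstacle is not the computation but verifying that $\cD_\cL$ is a legitimate Dec-POMDP with information sharing whose information evolution reproduces $\cL$'s. This requires checking two points. First, in $\cD_\cL$ the shared variable $\tilde z_{2h}=z^a_h$ is a function of the previous private information $p_{h^-}$ and the action $\tilde a_{2h-1}$, where $p_{h^-}$ is treated as part of the underlying state, exactly as was done for the reward. Second, $\tilde p_{i,2h-1}=\emptyset$ is consistent, because agents act at odd steps only on common information. I would handle this by stating the induction directly on the generative description of $\cD_\cL$ given in \eqref{LTC to Dec-POMDP}, rather than on an abstract sharing protocol. Assumption \ref{limited communication strategy} is invoked exactly where the odd-step strategy domain must be $\tilde\cC_{2h-1}$ alone.
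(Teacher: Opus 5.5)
Your proposal is correct and follows the same route as the paper. The paper's proof observes that running $\tilde g_{1:\tilde H}$ in $\cD_\cL$ versus $(g^a_{1:H},g^m_{1:H})$ in $\cL$ gives $\tilde r_{2h-1}=-\kappa_h$ and $\tilde r_{2h}=r_h$ along every trajectory, then sums over $h$ and does the converse ``in the same way''; your explicit strategy-space bijection and inductive trajectory coupling simply make rigorous the step the paper calls ``easy to verify.''
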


Moreover, the Dec-POMDP $\cD_\cL$ preserves the QC information structure of $\cL$, as shown next.

\begin{theorem}[Preserving  (s)QC]
    \label{reformulate QC}
    If $\cL$ is (s)QC, then the reformulated Dec-POMDP $\cD_\cL$  is also (s)QC.
\end{theorem}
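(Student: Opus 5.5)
Throughout, I work in the intrinsic model. The (s)QC condition for a Dec-POMDP with information sharing says the following. Whenever agent $(j,t')$ \emph{influences} agent $(i,t)$, i.e. the action $a_{j,t'}$ can affect the information $\tau_{i,t}$ either directly or through a chain of influencing agents, we need $\sigma(\tau_{j,t'})\subseteq\sigma(\tau_{i,t})$. For the strict version we also need $\sigma(a_{j,t'})\subseteq\sigma(\tau_{i,t})$.

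The plan is to compare two influence relations. The first is $\overline{\cD}_\cL$, whose agents are $(i,h)$ acting on $\tau_{i,h^-}$ under baseline sharing only. The second is $\cD_\cL$, whose agents are $(i,2h-1)$, acting on $\tilde c_{2h-1}=c_{h^-}$ by choosing $m_{i,h}$, and $(i,2h)$, acting on $\tilde\tau_{i,2h}=\tau_{i,h^+}$ by choosing $a_{i,h}$. The key steps are the following.

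\textbf{Step 1: communication agents.} Any agent $(i,2h-1)$ in $\cD_\cL$ has information $c_{h^-}$. This is common to every later agent, since $c_{h^-}\subseteq c_{h^+}\subseteq c_{(h+1)^-}\subseteq\cdots$. Moreover $m_{i,h}\in z^a_{i,h}\subseteq c_{h^+}$ by Assumption \ref{assumption:information evolution}(b). So whenever a communication agent influences a later agent, both its information and its action are contained in that later agent's $\sigma$-algebra. Communication agents therefore satisfy the (s)QC inclusions automatically, and this is exactly where Assumption \ref{limited communication strategy} is used.

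\textbf{Step 2: control agents whose influence passes only through the state.} Suppose the control agent $(j,2t)$, with action $a_{j,t}$, influences $(i,2h)$ or $(i,2h+1)$ in $\cD_\cL$. I would show this forces $(j,t)$ to influence $(i,h)$ or $(i,h+1)$ in $\overline{\cD}_\cL$. If the influence propagates via $s_{t+1}$ and subsequent observations, it is also present in $\overline{\cD}_\cL$. Here Assumption \ref{weak gamma observability} guarantees that a genuine effect on the state is sensed by some other agent, so the influence relation is not accidentally empty in $\overline{\cD}_\cL$.

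\textbf{Step 3: control agents whose influence passes through additional sharing.} If the influence propagates only because $a_{j,t}$ or data depending on it was shared in some $z^a$, there are two cases.
\begin{itemize}
\item If $a_{j,t}$ affects the transition, it already influenced others in $\overline{\cD}_\cL$ by Step 2.
\item If $a_{j,t}$ does not affect the transition, Assumption \ref{useless action} says it never appears in any $\tau$. So it cannot propagate through sharing either.
\end{itemize}
Once influence is transferred to $\overline{\cD}_\cL$, the (s)QC property of $\cL$ gives $\sigma(\tau_{j,t^-})\subseteq\sigma(\tau_{i,h^-})$ (and the analogous action inclusion). Assumption \ref{assumption:sigma_include} upgrades this to inclusion in $\sigma(c_{h^-})$ whenever $i\neq j$.

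\textbf{Step 4: pass from before-sharing to after-sharing information.} I then need the inclusion for the richer sets, $\sigma(\tau_{j,t^+})\subseteq\sigma(\tilde\tau)$ of the influenced agent. The extra content of $\tau_{j,t^+}$ over $\tau_{j,t^-}$ is only $z^a_t$, which is common information. It therefore lies in $c_{t^+}\subseteq c_{h^-}$ for $h>t$, and in $c_{t^+}$ itself for the same-time control agent. The monotonicity in Assumption \ref{assumption:information evolution}(e) handles same-agent pairs.

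\textbf{Main obstacle.} The hardest part will be Steps 2–3, namely rigorously showing that additional sharing creates no new influence edges among control agents that were absent in $\overline{\cD}_\cL$. Additional sharing can make a previously irrelevant action or observation enter another agent's information. I would first try an induction on $h$, proving jointly that
\begin{itemize}
\item every influence edge in $\cD_\cL$ between control agents corresponds to one in $\overline{\cD}_\cL$, and
\item every shared variable is either common or generated by influence-relevant agents,
\end{itemize}
using Assumptions \ref{useless action} and \ref{weak gamma observability} at each step.
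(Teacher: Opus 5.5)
Your Steps 2--3, and the claim you single out as the main obstacle, rely on something false. Additional sharing \emph{can} create influence edges between control agents that have no counterpart in $\overline{\cD}_\cL$, even allowing the shift to $(i,h+1)$.

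\textbf{Counterexample.} Take $n=4$ and $s_h=(x_h,y_h)$ with the following structure:
\begin{itemize}
\item $x$ is autonomous and $y_{h+1}=a_{1,h}$; all other actions are useless and never recorded.
\item Agent $1$ observes nothing, agent $2$ observes a noisy copy of $x_h$, and agents $3,4$ observe independent noisy copies of $s_h$.
\item Baseline sharing is null.
\end{itemize}
Assumptions~\ref{assumption:information evolution}, \ref{assumption:sigma_include}, \ref{gamma observability}, \ref{useless action} and \ref{weak gamma observability} all hold. Moreover $\cL$ is QC, since the only influencing agents are the $(1,t)$, and their information is trivial. In $\overline{\cD}_\cL$, the agent $(1,t)$ never influences any instance of agent $2$. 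In $\cD_\cL$, however, once agent $3$ shares $o_{3,t+1}$ at step $t+1$, $(1,2t)$ influences $(2,2t+2)$. So the induction you sketch cannot go through.

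\textbf{A related hole.} Consider a same-index communication target $(j,2h+1)$, whose information is only $c_{(h+1)^-}$. Assumption~\ref{assumption:information evolution}(e) gives you nothing there. Your use of Assumption~\ref{assumption:sigma_include} ``at the target whenever $i\neq j$'' does not apply either.

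\textbf{The fix.} You already have all the ingredients; the point is never to transfer the particular edge. This is the paper's route, via Lemma~\ref{lemma: action influence} and its case analysis.
\begin{enumerate}
\item If $(j,2t)$ influences anyone in $\cD_\cL$, then $a_{j,t}$ must affect $s_{t+1}$. Otherwise, as in your second bullet, Assumption~\ref{useless action} makes it influence nothing.
\item By Assumption~\ref{weak gamma observability}, $a_{j,t}$ then affects $o_{k,t+1}$ for some $k\neq j$. Since $o_{k,t+1}\in\tau_{k,(t+1)^-}$ already under baseline sharing, $(j,t)$ influences $(k,t+1)$ in $\overline{\cD}_\cL$.
\item Use QC of $\overline{\cD}_\cL$, together with the fact that $\tau_{j,t^-}$ exceeds its baseline version only by $z^a_{1:t-1}\subseteq c_{t^-}$. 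This gives $\sigma(\tau_{j,t^-})\subseteq\sigma(\tau_{k,(t+1)^-})$.
\item Now apply Assumption~\ref{assumption:sigma_include} at the auxiliary agent $k\neq j$, not at the target. This yields $\sigma(\tau_{j,t^-})\subseteq\sigma(c_{(t+1)^-})$.
\item Since also $z^a_t\subseteq c_{t^+}\subseteq c_{(t+1)^-}$, all of $\tilde\tau_{j,2t}=\tau_{j,t^+}$ is common information by step $2t+1$. Common information is monotone, so $\tilde\tau_{j,2t}$ is contained in the information of \emph{every} later agent of $\cD_\cL$. This holds for control and communication agents, same index or not, and whether or not the edge existed in $\overline{\cD}_\cL$.
\end{enumerate}
This is exactly what rescues the counterexample: $\tilde\tau_{1,2t}$ is common, even though the edge to agent $2$ is new.

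The sQC part is identical with $\sigma(a_{j,t})$ in place of $\sigma(\tau_{j,t^-})$. Use sQC of $\overline{\cD}_\cL$ and the action clause of Assumption~\ref{assumption:sigma_include}. Your Step 1 and the useless-action bullet are fine as they stand.
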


Proofs of \Cref{prop: equivalence of LTC and Dec-POMDP} and \Cref{reformulate QC} can be found in \S\ref{sec:equivalence_appendix} and \S\ref{sec:reformulate_QC_appendix}, respectively. 
By Proposition \ref{prop: equivalence of LTC and Dec-POMDP}, it suffices to solve the reformulated $\cD_{\cL}$ that is QC/sQC, which is our focus next.

\subsection{Strict Expansion of $\cD_\cL$}
\label{formulate QC to SQC}

However, being QC does not necessarily imply $\cD_\cL$ can be solved  \emph{without}  computationally intractable oracles. Note that this is different from the continuous-space, linear quadratic case, where QC problems can be reformulated and solved efficiently \cite{ho1972team,lamperski2015optimal}. With nonlinear,  discrete spaces, the recent results in  \cite{liu2023tractable} established a concrete \emph{quasi-polynomial}-time complexity for planning, 
under the \emph{strategy independence}  assumption \cite{ashutosh2013game} on the common-information-based beliefs \cite{ashutosh2013team,ashutosh2013game}. This SI-CIB assumption was shown critical for \emph{computational tractability} \cite{liu2023tractable}: it eliminates the need to  \emph{enumerate} the past strategies 
in dynamic programming, which would otherwise be prohibitively large. Thus, we may need to connect QC IS to the SI-CIB condition for better computational tractability. 

Interestingly, under certain conditions, one can connect these two conditions for the reformulated Dec-POMDP $\cD_\cL$. 
As the first step, we will \emph{expand} the QC $\cD_{\cL}$ by adding the \emph{actions} of the agents who {influence} the later agents in the intrinsic model of $\cD_{\cL}$ to the shared information. We denote the strictly expanded Dec-POMDP as $\cD_\cL^\dag$. 
We replace the $~\tilde{\phantom{x}}~$ notation in $\cD_\cL$ by the 
$~\Breve{\phantom{x}}~$  notation in $\cD_\cL^\dag$. All the elements  remain the same, except the set of common information $\Breve{c}_h$: 
for any $h\in[\tilde{H}]$
\begin{equation}
\begin{aligned}
&\Breve{c}_h=\tilde{c}_h\cup\left\{\tilde{a}_{j,t}\biggiven j\in[n],t<h,\sigma(\tilde{\tau}_{j,t})\subseteq \sigma(\tilde{c}_h), \tilde{a}_{j,t}
\text{~influences }\tilde{s}_{t+1}\right\}
\end{aligned}
    \label{construction:QC to sQC}
\end{equation}
and we follow the convention to define $\Breve{\tau}_{i,h}:=\Breve{p}_{i,h}\cup  \Breve{c}_h$ and $\Breve{z}_h=\Breve{c}_h\backslash \Breve{c}_{h-1}$. 
It is not hard to verify the following. 

\begin{lemma}
    \label{lemma:QC to sQC} 
       If {$\cD_{\cL}$}  is QC, then $\cD_\cL^\dag$ is sQC.
\end{lemma}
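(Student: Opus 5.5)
We will argue directly from the intrinsic-model definitions of QC and sQC recalled in \S\ref{sec:Dec-POMDP definition}. QC says: whenever agent $(j,t)$ influences agent $(i,h)$ (directly or through a chain), we have $\sigma(\tilde{\tau}_{j,t})\subseteq\sigma(\tilde{\tau}_{i,h})$. sQC additionally requires that the influencing agent's \emph{action} be known, i.e., $\sigma(\tilde{a}_{j,t})\subseteq\sigma(\Breve{\tau}_{i,h})$. The plan has three steps.

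\textbf{Step 1: the expansion does not change the influence relation.} In $\cD_\cL^\dag$, all primitives are unchanged: the state process, the emissions, the transitions, and the private information. The only change is that the enlarged common information $\Breve{c}_h$ in \eqref{construction:QC to sQC} now also contains actions $\tilde{a}_{j,t}$ with $t<h$. Such an action can affect the information of a later agent only if it already influences her through $\tilde{s}_{t+1}$, hence through later observations. This is exactly why the construction requires that ``$\tilde{a}_{j,t}$ influences $\tilde{s}_{t+1}$''. Actions that do not influence the state are excluded from $\Breve{c}_h$, in line with Assumption \ref{useless action}. We also need the added action to be one that is already ``known'' to agent $(i,h)$ in the sense that $\sigma(\tilde{\tau}_{j,t})\subseteq\sigma(\tilde{c}_h)$. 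Then agent $(j,t)$ was already a predecessor of $(i,h)$, and adding $\tilde{a}_{j,t}$ creates no new influence edges.

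\textbf{Step 2: $\cD_\cL^\dag$ is QC.} Since $\sigma(\tilde{\tau}_{i,h})\subseteq\sigma(\Breve{\tau}_{i,h})$, the containments granted by QC of $\cD_\cL$ transfer directly. We must still handle an agent $(j,t)$ whose information has grown. Its expanded information is $\Breve{\tau}_{j,t}=\tilde{\tau}_{j,t}\cup\{\tilde{a}_{k,s}: \sigma(\tilde{\tau}_{k,s})\subseteq\sigma(\tilde{c}_t),\dots\}$. Every added action $\tilde{a}_{k,s}$ has $\sigma(\tilde{\tau}_{k,s})\subseteq\sigma(\tilde{c}_t)\subseteq\sigma(\tilde{c}_h)$, using monotonicity of common information (Assumption \ref{assumption:information evolution}(e) together with the definition $c_{h}=\cup z$). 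It also influences $\tilde{s}_{s+1}$. So it appears in $\Breve{c}_h$ as well, which gives $\sigma(\Breve{\tau}_{j,t})\subseteq\sigma(\Breve{\tau}_{i,h})$.

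\textbf{Step 3: the action part, needed for sQC.} Take $(j,t)$ influencing $(i,h)$. By QC, $\sigma(\tilde{\tau}_{j,t})\subseteq\sigma(\tilde{\tau}_{i,h})$. Using Assumption \ref{assumption:sigma_include}, applied componentwise to the observations and actions in $\tilde{\tau}_{j,t}$ (and via the $(2h-1,2h)$ splitting, where $\tilde{p}_{i,2h-1}=\emptyset$), we upgrade this to $\sigma(\tilde{\tau}_{j,t})\subseteq\sigma(\tilde{c}_h)$ when $i\neq j$. When $i=j$, the claim follows from the recall property (e). Because $(j,t)$ influences a later agent, and Assumptions \ref{useless action} and \ref{weak gamma observability} say influence of an action must pass through the state, $\tilde{a}_{j,t}$ influences $\tilde{s}_{t+1}$. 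Hence $\tilde{a}_{j,t}\in\Breve{c}_h$.

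\textbf{Main obstacle.} The hardest part is Step 3: justifying rigorously that ``influence'' in the intrinsic model implies the state-level condition in \eqref{construction:QC to sQC}, and that containment in $\tilde{\tau}_{i,h}$ upgrades to containment in the common information. I would first attempt both by induction over $h$, using the odd/even structure of $\cD_\cL$. At an odd step, the agent's private information is empty, so she acts on common information only. At an even step, the previous odd-step (communication) actions are already in $z^a_h$, since $m_{i,h}\in z^a_{i,h}$.
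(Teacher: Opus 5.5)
\textbf{Step 1 is false, and Steps 2--3 inherit the error.} The expansion does create new influence edges; that is its purpose.

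Take the baseline of \textbf{Example 5}. There $\tilde c_{2t+1}=c_{t^+}\cup\{o_t\}$ is fixed before $a_t$ is chosen, so $(j,2t)$ does not influence $(i,2t+1)$ in $\cD_\cL$. Yet $\sigma(\tilde\tau_{j,2t})\subseteq\sigma(\tilde c_{2t+1})$. So if $a_{j,t}$ affects $s_{t+1}$, then $\tilde a_{j,2t}\in\Breve c_{2t+1}$, and $(j,2t)$ directly influences $(i,2t+1)$ in $\cD_\cL^\dag$.

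Your heuristic that the action ``already influences her through later observations'' fails for two reasons. Odd-step agents see only common information. And a given $(i,h)$ need not see the affected observations.

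Steps 2--3 start from ``by QC, $\sigma(\tilde\tau_{j,t})\subseteq\sigma(\tilde\tau_{i,h})$''. They therefore never treat pairs that influence only in $\cD_\cL^\dag$, or through chains containing a new edge. The repair is short and follows the paper:
\begin{itemize}
\item Verify $\sigma(\Breve\tau_{j,t})\cup\sigma(\tilde a_{j,t})\subseteq\sigma(\Breve\tau_{i,h})$ for direct edges, then chain these containments.
\item For a direct edge absent from $\cD_\cL$, the only new channel is $\tilde a_{j,t}\in\Breve c_h$. Its membership condition in \eqref{construction:QC to sQC} already gives $\sigma(\tilde\tau_{j,t})\subseteq\sigma(\tilde c_h)$.
\item Moreover $\Breve\tau_{j,t}\setminus\tilde\tau_{j,t}\subseteq\Breve c_t\subseteq\Breve c_h$.
\end{itemize}
But the case has to be treated explicitly.

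\textbf{The same-agent case of Step 3 is unjustified.} Assumption \ref{assumption:information evolution}(e) is deliberately weaker than perfect recall: it does not retain one's own past actions. Under \textbf{Example 5}, an agent never observes $a_{j,t}$. In $\cD_\cL$, private information is even discarded at odd steps. So for $i=j$ and even $h$ you need $\tilde a_{j,t}\in\Breve c_h$, i.e.\ $\sigma(\tilde\tau_{j,t})\subseteq\sigma(\tilde c_h)$. Neither (e) nor Assumption \ref{assumption:sigma_include} gives this, since the latter requires two distinct agents.

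The missing idea is the actual role of Assumption \ref{weak gamma observability}. (``Influence must pass through the state'' is the job of Assumption \ref{useless action}.) With $t=2t'$:
\begin{itemize}
\item $a_{j,t'}$ moves $s_{t'+1}$, hence moves $o_{-j,t'+1}$, hence some $o_{i_3,t'+1}$ with $i_3\neq j$.
\item By (e), $o_{i_3,t'+1}\in\tilde\tau_{i_3,t+2}$, so $(j,t)$ influences $(i_3,t+2)$.
\item QC then gives $\sigma(\tilde\tau_{j,t})\subseteq\sigma(\tilde\tau_{i_3,t+2})$.
\item Assumption \ref{assumption:sigma_include}, with $i_3\neq j$, upgrades this to $\sigma(\tilde\tau_{j,t})\subseteq\sigma(\tilde c_{t+2})\subseteq\sigma(\tilde c_h)$.
\end{itemize}
This is how the paper handles $h_1,h_2$ both even, uniformly in $i_2$.

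The rest of your proposal is fine. Your direct use of Assumption \ref{assumption:sigma_include} for $i\neq j$ is valid, and slightly more direct than the paper's detour through $i_3$. Odd $h$ is covered by QC, since $\tilde\tau_{i,h}=\tilde c_h$. Odd $t$ is covered by $m_{j,t'}\in z^a_{j,t'}$.
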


In contrast to the reformulation in \S\ref{formulation from LTC to Dec-POMDP}, the expansion here cannot guarantee the \emph{equivalence} between $\cD_\cL$ and $\cD_\cL^\dag$: the strategy spaces of $\cD_\cL^\dag$ are larger than those of $\cD_\cL$, as each agent can now access more information, i.e.,  $\tilde{\tau}_{i,h}\subseteq\Breve{\tau}_{i,h}$.   Fortunately, the team-optimal value and strategy of both Dec-POMDPs are related, 
as shown in the following theorem. 

\begin{theorem}
    \label{theorem: sQC to QC}
    Let $\cD_\cL$ be the QC Dec-POMDP reformulated from a QC LTC $\cL$,  and $\cD_\cL^\dag$ be the sQC expansion of $\cD_\cL$. Then, for any $\epsilon$-team-optimal strategy $\Breve{g}^\ast_{1:\Breve{H}}$ of $\cD_\cL^\dag$, there exists a function $\varphi$ such that $\tilde{g}_{1:\tilde{H}}^\ast=\varphi(\Breve{g}^\ast_{1:\Breve{H}},\cD_\cL)$ 
    is an $\epsilon$-team-optimal strategy of $\cD_\cL$, with $J_{\cD_\cL}(\tilde{g}^\ast_{1:\tilde{H}})=J_{\cD_\cL^\dag}(\Breve{g}^\ast_{1:\Breve{H}})$.
\end{theorem}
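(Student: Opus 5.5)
The plan is the classical QC argument: in a QC problem, giving an agent the \emph{actions} of agents whose information she already knows adds nothing. Those actions are deterministic functions of information she already has, once the earlier strategies are fixed. I would therefore build $\varphi$ by recursive substitution and then compare optimal values.

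\textbf{Step 1 (construction of $\varphi$).} Fix any $\Breve{g}_{1:\Breve{H}}$ of $\cD_\cL^\dag$ (deterministic, as justified by \Cref{lemma:no_lose_optimality}). I would define $\tilde{g}_{j,t}$ by induction on $t$. For each action $\tilde a_{j',t'}$ that appears in $\Breve{c}_h\setminus\tilde{c}_h$, \eqref{construction:QC to sQC} gives $\sigma(\tilde\tau_{j',t'})\subseteq\sigma(\tilde c_h)$. Hence there is a fixed map $\psi$, determined by the sharing protocol of $\cD_\cL$ and available through its $\sigma$-algebra structure, such that $\tilde\tau_{j',t'}=\psi(\tilde c_h)$. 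I then set $\tilde a_{j',t'}:=\tilde g_{j',t'}(\psi(\tilde c_h))$, where $\tilde g_{j',t'}$ was already constructed because $t'<h$. This gives a map $\Lambda_h$ with $\Breve{c}_h=\Lambda_h(\tilde c_h;\tilde g_{1:h-1})$ on the event that the earlier agents play $\tilde g_{1:h-1}$. Finally I define
\[
\tilde g_{i,h}(\tilde\tau_{i,h}):=\Breve g_{i,h}\bigl(\tilde p_{i,h}\cup\Lambda_h(\tilde c_h;\tilde g_{1:h-1})\bigr),
\]
using $\Breve p_{i,h}=\tilde p_{i,h}$. This $\tilde g_{i,h}$ is a legitimate strategy of $\cD_\cL$ and depends only on $\Breve g$ and $\cD_\cL$, which gives $\varphi$.

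\textbf{Step 2 (pathwise equality).} Couple both systems on the same primitive randomness: $\mu_1$, the transition noise and the emission noise. By induction on $h$ I would show that the states, observations, actions and $\tilde\tau_{i,h}$ coincide path by path, and that $\Breve c_h=\Lambda_h(\tilde c_h;\tilde g_{1:h-1})$. The induction step uses that the added actions were generated by $\tilde g_{j',t'}=\Breve g_{j',t'}\circ(\cdot)$ from the same information. Rewards depend only on $(s,a,p)$, so $J_{\cD_\cL}(\tilde g)=J_{\cD_\cL^\dag}(\Breve g)$ for every $\Breve g$. In particular this holds for $\Breve g^\ast$.

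\textbf{Step 3 (optimality transfer).} Every $\tilde g$ of $\cD_\cL$ is also a strategy of $\cD_\cL^\dag$, because $\tilde\tau_{i,h}\subseteq\Breve\tau_{i,h}$, and the two have the same value since the dynamics are identical. Therefore $\max J_{\cD_\cL^\dag}\ge\max J_{\cD_\cL}$. Step 2 shows that the image of $\varphi$ attains every value achievable in $\cD_\cL^\dag$, so the reverse inequality also holds and the optima are equal. Combining,
\[
\max J_{\cD_\cL}-J_{\cD_\cL}(\tilde g^\ast)=\max J_{\cD_\cL^\dag}-J_{\cD_\cL^\dag}(\Breve g^\ast)\le\epsilon .
\]

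\textbf{Main obstacle.} The delicate point is Step 1. One must argue rigorously that $\sigma(\tilde\tau_{j',t'})\subseteq\sigma(\tilde c_h)$ yields a \emph{strategy-independent} measurable map $\psi$ recovering $\tilde\tau_{j',t'}$ from $\tilde c_h$. This is a Doob–Dynkin argument on finite spaces. One must also check that the recursion is well founded: the information used to reconstruct an action at time $t'$ involves only actions at times $<t'$, which are themselves reconstructable because QC, together with Assumption \ref{assumption:sigma_include} and perfect-memory inclusion (Assumption \ref{assumption:information evolution}(e)), propagates the inclusion backwards. Restricting to actions that influence $\tilde s_{t+1}$ is harmless, since other actions never enter $\Breve c_h$.
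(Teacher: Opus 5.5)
Your proposal is correct and follows the paper's proof. Your $\varphi$ is exactly the recursive substitution of \Cref{algorithm varphi}, and your pathwise coupling makes explicit the value identity $J_{\cD_\cL}(\varphi(\Breve{g}_{1:\Breve{H}},\cD_\cL))=J_{\cD_\cL^\dag}(\Breve{g}_{1:\Breve{H}})$ that the paper only sketches; $\epsilon$-optimality then transfers because $\cD_\cL^\dag$ has the larger strategy space. One minor correction: the recursion is well founded by induction on time alone, since each $\tilde{g}_{j',t'}$ with $t'<h$ is already a well-defined map on $\tilde{\cT}_{j',t'}$, so QC, \Cref{assumption:sigma_include}, and Assumption~\ref{assumption:information evolution}(e) play no role in this step.
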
 

Theorem \ref{theorem: sQC to QC} shows that one can solve the QC $\cD_{\cL}$ by first solving the sQC expansion $\cD_{\cL}^\dag$, and then using an oracle $\varphi$, as tabulated in \Cref{algorithm varphi}, to translate the solution back as a solution in the strategy spaces of $\cD_{\cL}$, without loss of optimality. {Importantly, we also show  in Algorithm \ref{algorithm Implement varphi}
how to implement such a $\varphi$ function efficiently}.

As shown below, a benefit of obtaining an \emph{sQC} $\cD_{\cL}^\dag$ is that it also has \emph{SI-CIBs}, making it possible to be solved without computationally intractable oracles as in \cite{liu2023tractable}.

\begin{theorem}
    \label{theorem: SI=QC}
    Let $\cD_\cL^\dag$ be an sQC Dec-POMDP generated from $\cL$  after reformulation and strict expansion, then $\cD_\cL^\dag$ has \emph{strategy-independent common-information-based beliefs}   \cite{ashutosh2013game,liu2023tractable}. More formally, for any $h\in[\Breve{H}]$, any two different joint strategies $\Breve{g}_{1:h-1}$ and $\Breve{g}_{1:h-1}'$, and any common information $\Breve{c}_h$ that  can be reached under both $\Breve{g}_{1:h-1}$ and $\Breve{g}_{1:h-1}'$, for any joint private information $\Breve{p}_h\in\Breve{\cP}_h$ and state $\Breve{s}_h\in\Breve{\cS}$, we have
    \begin{equation}
        \PP_h^{\cD_\cL^\dag}(\Breve{s}_h,\Breve{p}_h\given \Breve{c}_h,\Breve{g}_{1:h-1})=\PP_h^{\cD_\cL^\dag}(\Breve{s}_h,\Breve{p}_h\given \Breve{c}_h,\Breve{g}'_{1:h-1}).
    \end{equation}
\end{theorem}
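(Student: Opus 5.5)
The plan is to write the common-information-based belief as a ratio of joint probabilities of the trajectory, and to show that every factor that depends on the strategies either cancels or is determined by $\Breve{c}_h$ itself.

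Fix $h$ and a realization $\Breve{c}_h$ reachable under both $\Breve{g}_{1:h-1}$ and $\Breve{g}'_{1:h-1}$. By Bayes' rule,
\begin{equation*}
\PP_h^{\cD_\cL^\dag}(\Breve{s}_h,\Breve{p}_h\given \Breve{c}_h,\Breve{g}_{1:h-1})=\frac{\sum \PP(\Breve{s}_{1:h},\Breve{o}_{1:h},\Breve{a}_{1:h-1}\given \Breve{g}_{1:h-1})}{\sum \PP(\Breve{s}_{1:h},\Breve{o}_{1:h},\Breve{a}_{1:h-1}\given \Breve{g}_{1:h-1})},
\end{equation*}
where the numerator sums over trajectories consistent with $(\Breve{s}_h,\Breve{p}_h,\Breve{c}_h)$ and the denominator over those consistent with $\Breve{c}_h$. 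Each trajectory probability factorizes as
\[
\tilde\mu_1\prod_t \tilde\OO_t\,\tilde\TT_t\prod_{t<h}\prod_j \mathds{1}\bigl[\Breve{a}_{j,t}=\Breve{g}_{j,t}(\Breve{\tau}_{j,t})\bigr].
\]
Only the indicator factors depend on the strategy. It therefore suffices to show that, on every trajectory consistent with $\Breve{c}_h$, each indicator either depends on the trajectory only through $\Breve{c}_h$, in which case it is a common factor that cancels in the ratio, or can be removed without changing the conditional law.

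To do this, I would partition the pairs $(j,t)$ with $t<h$ into two groups.

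First, the actions $\Breve{a}_{j,t}$ that influence $\Breve{s}_{t+1}$. Here I would use the strict expansion \eqref{construction:QC to sQC} together with the sQC property from Lemma~\ref{lemma:QC to sQC}, which should give two facts:
\begin{itemize}
\item[(i)] $\sigma(\Breve{\tau}_{j,t})\subseteq\sigma(\Breve{c}_h)$ whenever agent $(j,t)$ influences a later agent whose information enters $\Breve{\tau}_{\cdot,h}$;
\item[(ii)] $\Breve{a}_{j,t}\in\Breve{c}_h$.
\end{itemize}
Given these, the indicator $\mathds{1}[\Breve{a}_{j,t}=\Breve{g}_{j,t}(\Breve{\tau}_{j,t})]$ is a function of $\Breve{c}_h$ alone. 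It is a constant that cancels between numerator and denominator, and it equals one under both strategies because $\Breve{c}_h$ is reachable under both.

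Second, the actions that do not influence the state. By Assumption~\ref{useless action}, these never enter any $\Breve{\tau}$ or $\Breve{c}$. They also do not affect transitions, so they can be marginalized out: summing their indicator over their values gives one. The communication actions at odd steps $2t-1$ need separate handling. They leave the state unchanged, but they determine $\Breve{z}^a$, and by Assumption~\ref{limited communication strategy} they depend only on $\Breve{c}_{2t-1}\subseteq \Breve{c}_h$. Moreover $m_{i,t}\in z^a_{i,t}$ by Assumption~\ref{assumption:information evolution}(b), so these indicators are also functions of $\Breve{c}_h$ and cancel. After both groups are handled, the ratio involves only $\tilde\mu_1$, $\tilde\TT$, $\tilde\OO$ and fixed transformations, and is therefore strategy-independent.

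The main obstacle is fact (i) for influential actions. An action can matter for $(\Breve{s}_h,\Breve{p}_h)$ even when its agent does not directly influence anyone at timestep $h$: it can shape the state through a chain of transitions, or shape later private information. My approach would be an induction on $t$ along the influence graph of the intrinsic model. Any $(j,t)$ whose action influences $\Breve{s}_{t+1}$ influences all later agents, because by Assumption~\ref{weak gamma observability} the other agents' observations detect state changes. Strict quasi-classicality (Lemma~\ref{lemma:QC to sQC}) then places $\sigma(\Breve{\tau}_{j,t})$ inside the information of every later agent. Assumption~\ref{assumption:sigma_include} upgrades this to inclusion in the common information $\Breve{c}_h$, and \eqref{construction:QC to sQC} adds $\Breve{a}_{j,t}$ to $\Breve{c}_h$.

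Some care is needed to confirm that the expansion does not create new influence relations that break sQC. I would rely on Lemma~\ref{lemma:QC to sQC} for this.
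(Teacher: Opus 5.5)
Your route differs from the paper's. The paper reduces to a unilateral change of $\Breve{g}_{i_1,h_1}$ and splits into two cases. Either $(\Breve{\tau}_{i_1,h_1},\Breve{a}_{i_1,h_1})$ is already $\sigma(\Breve{c}_h)$-measurable, and conditioning freezes the input and output of the changed strategy. Or it is not, and sQC together with Assumptions~\ref{assumption:information evolution}(e), \ref{useless action}, \ref{weak gamma observability} shows that the agent affects neither $\Breve{s}_h$ nor any $\Breve{\tau}_{i,h}$. Your global Bayes-ratio factorization sorts every indicator by a strategy-free property: whether $a_{j,t}$ enters $\TT_t$. That is a sound framework. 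It even avoids a tacit step in the paper: the reduction to unilateral deviations needs $\Breve{c}_h$ to remain reachable at every intermediate profile between $\Breve{g}_{1:h-1}$ and $\Breve{g}'_{1:h-1}$. However, your argument rests on one claim: every state-influencing control action $\Breve{a}_{j,t}$, together with $\Breve{\tau}_{j,t}$, is $\sigma(\Breve{c}_{t+1})$-measurable. The mechanism you sketch does not establish this.

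The failure is at $t=2k$, $h=2k+1$.
\begin{itemize}
\item In $\cD_\cL$ and $\cD_\cL^\dag$, agents at the communication step $2k+1$ have no private information, so $\Breve{\tau}_{i,2k+1}=\Breve{c}_{2k+1}$. The observation $\Breve{o}_{i_3,2k+1}=o_{i_3,k+1}$, through which Assumption~\ref{weak gamma observability} detects $\Breve{a}_{j,2k}$, enters agent $i_3$'s information only at step $2k+2$. Your chain (Assumption~\ref{weak gamma observability}, influence, sQC, Assumption~\ref{assumption:sigma_include}, \eqref{construction:QC to sQC}) therefore only places $\Breve{\tau}_{j,2k}$ and $\Breve{a}_{j,2k}$ in $\sigma(\Breve{c}_{2k+2})$.
\item Invoking sQC of $\cD_\cL^\dag$ directly at step $2k+1$ is circular. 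In general $(j,2k)$ influences $(i,2k+1)$ in $\cD_\cL^\dag$ only \emph{because} the expansion has already put $\Breve{a}_{j,2k}$ into $\Breve{c}_{2k+1}$. In \textbf{Example 5}, for instance, $c_{(k+1)^-}=\{o_{1:k}\}\cup z^a_{1:k}$ does not depend on $a_{j,k}$ at all.
\item So at odd $h=2k+1$ this indicator is neither shown to be $\Breve{c}_h$-measurable nor summable, since it multiplies $\tilde{\TT}_{2k}(\Breve{s}_{2k+1}\given \Breve{s}_{2k},\Breve{a}_{2k})$. The ratio is then not shown to be strategy-free.
\end{itemize}

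The missing idea is to descend to the quasi-classicality of $\cL$ itself, where agent $(i_3,k+1)$'s pre-sharing information does contain $o_{i_3,k+1}$.
\begin{itemize}
\item Lemma~\ref{lemma: action influence}, applied to $(j,2k)$ and $(i_3,2k+2)$, gives $\sigma(\tau_{j,k^-})\subseteq\sigma(\tau_{i_3,(k+1)^-})$.
\item Assumption~\ref{assumption:sigma_include} with $i_3\neq j$ upgrades this to $\sigma(c_{(k+1)^-})=\sigma(\tilde{c}_{2k+1})$.
\item Since $z^a_k\subseteq c_{(k+1)^-}$, you get $\sigma(\tilde{\tau}_{j,2k})\subseteq\sigma(\tilde{c}_{2k+1})$, so \eqref{construction:QC to sQC} places $\Breve{a}_{j,2k}$ in $\Breve{c}_{2k+1}$.
\end{itemize}
This is exactly how the paper handles its odd-$h$ subcase. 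With this one-step fact and the monotonicity of $\Breve{c}_h$, no induction along influence chains is needed.

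Two smaller corrections. Weaken ``influences all later agents'' to ``influences some $(i_3,\cdot)$ with $i_3\neq j$''; that is all Assumption~\ref{weak gamma observability} gives, and all you use. Also state explicitly that these $\sigma$-inclusions are structural, holding on every trajectory independently of the strategy. That is what makes each indicator constant over all trajectories consistent with $\Breve{c}_h$ under both profiles.
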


\subsection{Refinement of $\cD_\cL^\dag$}
\label{sec: refinement}

Despite having SI-CIBs, $\cD_\cL^\dag$ is still not eligible for applying the results in \cite{liu2023tractable}: the information evolution rules of $\cD_\cL^\dag$ break those in \cite{ashutosh2013game,liu2023tractable}. 
Specifically, due to Assumption \ref{limited communication strategy}, we set $\tilde{\tau}_{i,2t-1}=\tilde{c}_{2t-1}, \tilde{p}_{i,2t-1}=\emptyset, \forall t\in[H],i\in[n]$ in $\cD_\cL$, which violates Assumption 1 in \cite{ashutosh2013game,liu2023tractable}. 
 To address this issue, 
we propose to further \emph{refine} $\cD_\cL^\dag$ to obtain a Dec-POMDP $\cD_\cL'$, which satisfies the information evolution rules.  We replace the $~\Breve{\phantom{x}}~$ notation in $\cD_\cL^\dag$ by the $~\overline{\phantom{x}}~$ notation in $\cD_\cL'$. The elements in $\cD_\cL'$ remain the same as those in $\cD_\cL^\dag$, except that the private information at odd steps is now refined as: for any $t\in[H],i\in[n], \overline{p}_{i,2t-1}:=p_{i,t^-}$, 
and we define $\overline{\tau}_{i,2t-1}:=\overline{p}_{i,2t-1}\cup \overline{c}_{2t-1}$ for any $t\in[H]$.
Moreover, we define the reward functions as  $\overline{r}_{2t-1}=\overline \cR_{2t-1}(\overline {s}_{2t-1},\overline{a}_{2t-1},\overline{p}_{2t-1}):=-\cK_t(\phi_t(\overline{p}_{2t-1},\overline{a}_{2t-1}))$, and $\overline{r}_{2t}=\overline \cR_{2t}(\overline {s}_{2t},\overline{a}_{2t},\overline{p}_{2t}):=\cR_t(\overline {s}_{2t},\overline{a}_{2t})$,  for any  $t\in[H]$. 
The new Dec-POMDP $\cD_\cL'$ is not equivalent to $\cD_\cL^\dag$ in general, since it enlarges the strategy space at odd timesteps. However, if we define new strategy spaces in $\cD_{\cL}'$ as $\overline{\cG}_{i, 2t-1}:\overline{\cC}_{2t-1}\rightarrow \overline{\cA}_{i,2t-1}, \overline{\cG}_{i,2t}: \overline{\cT}_{i,2t}\rightarrow \overline{\cA}_{i,2t}$ for each $t\in[H],i\in[n]$, and thus define $\overline{\cG}_{1:\overline{H}}$ to be the associated joint strategy space, then solving $\cD_\cL^\dag$ is equivalent to finding a \emph{best-in-class} team-optimal strategy of $\cD_\cL'$ within  $\overline{\cG}_{1:\overline{H}}$, as shown below. 

\begin{theorem} Let $\cD_\cL^\dag$ be an sQC Dec-POMDP generated from the strict expansion of a QC $\cD_\cL$, which is generated from the reformulation of a QC $\cL$, 
    and let $\cD_\cL'$ be the refinement of $\cD_\cL^\dag$ as introduced above.  
Then, finding an  optimal strategy in $\cD_\cL^\dag$ is equivalent to finding an optimal strategy of $\cD_\cL'$ in the space $\overline{\cG}_{1:\overline{H}}$, and $\cD_\cL'$ satisfies the following information evolution rules: for each $h\in[\overline{H}]$: 
    \begin{align*}
        \overline{c}_{h}=\overline{c}_{h-1}\cup \overline{z}_{h},~~ \overline{z}_{h}=\overline{\chi}_{h}(\overline{p}_{h-1},\overline{a}_{h-1},\overline{o}_{h})\qquad 
        \text{for each }i\in[n],~~ \overline{p}_{i,h}=\overline{\xi}_{i,h}(\overline{p}_{i,h-1},\overline{a}_{i,h-1},\overline{o}_{i,h}),
    \end{align*}
    with some functions $\{\overline{\chi}_{h}\}_{h\in[\overline{H}]}, \{\overline{\xi}_{i,h}\}_{i\in[n],h\in[\overline{H}]}$. Furthermore, $\cD'_\cL$ has SI-CIBs with respect to the strategy space  $\overline{\cG}_{1:\overline{H}}$, i.e., for any $h\in[\overline{H}], \overline{s}_h\in\overline{\cS}, \overline{p}_h\in\overline{\cP}_h, \overline{c}_h\in \overline{\cC}_h, \overline{g}_{1:h-1},\overline{g}_{1:h-1}'\in \overline{\cG}_{1:h-1}$ such that $\overline{c}_h$ is reachable under both $\overline{g}_{1:h-1}$ and $\overline{g}_{1:h-1}'$, it holds that 
    \begin{equation}
        \PP_h^{\cD_\cL'}(\overline{s}_h,\overline{p}_h\given \overline{c}_h,\overline{g}_{1:h-1})=\PP_h^{\cD_\cL'}(\overline{s}_h,\overline{p}_h\given \overline{c}_h,\overline{g}'_{1:h-1}).
    \end{equation}
    \label{theorem: refinement}
\end{theorem}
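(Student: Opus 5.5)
The proof has three parts: the equivalence of optimization problems, the information evolution rules, and SI-CIBs within $\overline{\cG}_{1:\overline{H}}$.

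\textbf{Equivalence.} The only differences between $\cD_\cL'$ and $\cD_\cL^\dag$ are two. First, at odd steps the private information is $\overline{p}_{i,2t-1}=p_{i,t^-}$ instead of $\emptyset$. Second, the reward at odd steps is written as a function of $\overline{p}_{2t-1}$. The strategies in $\overline{\cG}_{i,2t-1}$ take only $\overline{c}_{2t-1}=\Breve{c}_{2t-1}$ as input, so they coincide with the strategies of $\cD_\cL^\dag$ at odd steps. At even steps, $\overline{\tau}_{i,2t}=\Breve{\tau}_{i,2t}$, so the strategy spaces are identical there as well. The state, observations, transitions and shared information are also unchanged. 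Hence for every $\overline{g}\in\overline{\cG}_{1:\overline{H}}$ the joint law of all trajectory variables matches that under the same $\Breve{g}$ in $\cD_\cL^\dag$. The rewards match pointwise, since $\tilde r_{2t-1}=-\cK_t(\phi_t(p_{t^-},\tilde a_{2t-1}))$ already depended on $p_{t^-}$ as part of the underlying state. Therefore $J_{\cD_\cL'}(\overline{g})=J_{\cD_\cL^\dag}(\Breve g)$, and the optimization problems coincide.

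\textbf{Evolution rules.} We read these off Assumption~\ref{assumption:information evolution} together with the expansion \eqref{construction:QC to sQC}.
\begin{itemize}
\item At odd steps $h=2t-1$: $\overline{p}_{i,2t-1}=p_{i,t^-}=\xi_{i,t}(p_{i,(t-1)^+},a_{i,t-1},o_{i,t})$ with $p_{i,(t-1)^+}=\overline{p}_{i,2t-2}$, $a_{i,t-1}=\overline{a}_{i,2t-2}$ and $o_{i,t}=\overline{o}_{i,2t-1}$. The shared increment $\overline{z}_{2t-1}$ consists of $z_t^b=\chi_t(\cdot)$ plus the actions added by the expansion.
\item At even steps $h=2t$: $\overline{p}_{i,2t}=p_{i,t^-}\setminus\phi_{i,t}(p_{i,t^-},m_{i,t})$, which is a function of $(\overline{p}_{i,2t-1},\overline{a}_{i,2t-1})$ with $\overline{o}_{i,2t}=\emptyset$. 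Likewise $z_t^a=\phi_t(\overline{p}_{2t-1},\overline{a}_{2t-1})$.
\end{itemize}
The extra expansion actions $\tilde a_{j,t'}$ are added to $\Breve c_h$ only once $\sigma(\tilde\tau_{j,t'})\subseteq\sigma(\tilde c_h)$. I would need to show that each newly added action is $\overline{a}_{j,h-1}$, or an earlier action whose condition first becomes true at step $h$. In the latter case the action is recoverable as a deterministic function of $\overline{c}_{h-1}$, $\overline{z}_h$ and the actions already present. Such actions can be absorbed either into $\overline{\chi}_h$, allowing dependence on the past common information, which is implicit since $\overline c_{h-1}$ is determined by the history, or into $\overline{\chi}_h$ through $\overline{p}_{h-1}$ and $\overline{a}_{h-1}$. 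This bookkeeping is the main obstacle. I would handle it by showing inductively that any action $\tilde a_{j,t'}$ whose generator $\tilde\tau_{j,t'}$ becomes common is a function of common quantities, because $\tilde a_{j,t'}=\tilde g_{j,t'}(\tilde\tau_{j,t'})$. For earlier steps, I would argue that it can be carried in $\overline{p}_{h-1}$ via a fixed selection map, using the perfect-recall-type inclusion (e) and Assumption~\ref{assumption:sigma_include}.

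\textbf{SI-CIB.} The plan is to reduce to Theorem~\ref{theorem: SI=QC}. The joint $(\overline{s}_h,\overline{p}_h)$ at even steps equals $(\Breve s_h,\Breve p_h)$, so the claim follows directly from that theorem via the law equivalence above. At odd steps $\overline{p}_{2t-1}=p_{t^-}$, and $\Breve p_{2t-1}=\emptyset$. We therefore need strategy independence of $\PP(s_t,p_{t^-}\mid \Breve c_{2t-1},\overline g_{1:2t-2})$. I would write
\begin{align*}
&\PP(s_t,p_{t^-}\mid \Breve c_{2t-1})\\
&\quad=\sum \PP(s_{t-1},p_{(t-1)^+}\mid \Breve c_{2t-2})\,\mathds{1}[a_{t-1}\text{ consistent}]\,\TT_{t-1}\,\OO_t\,\mathds{1}[\xi_t,\chi_t]\big/\text{normalizer}.
\end{align*}
The even-step belief is strategy independent by Theorem~\ref{theorem: SI=QC}. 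The action $a_{t-1}$ is either in $\Breve c_{2t-1}$ (if it is influential, by the expansion and sQC), or does not influence $s_t$. By Assumption~\ref{useless action}, such a non-influential action does not enter $\tau$ and hence not $p_{t^-}$, so it marginalizes out without using $\overline g_{2t-2}$. The remaining factors are model quantities, so the odd-step belief is strategy independent, which completes the argument.
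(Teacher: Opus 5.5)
Your equivalence argument is fine and matches the paper's: the strategy spaces are identical, the laws are identical, and the odd-step rewards already depended on $p_{t^-}$.

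\textbf{Gap in the evolution rules.} The genuine gap sits at the step you flag as the main obstacle, and none of your proposed fixes for ``late'' expansion additions works. Letting $\overline{\chi}_h$ depend on $\overline{c}_{h-1}$ changes the required form $\overline{\chi}_h(\overline{p}_{h-1},\overline{a}_{h-1},\overline{o}_h)$. Recovering $\tilde a_{j,t'}$ as $\tilde g_{j,t'}(\tilde\tau_{j,t'})$ makes the increment map depend on the strategy being optimized, so it is no longer a model-level rule of the kind needed to invoke \cite{liu2023tractable}. Assumption~\ref{assumption:information evolution}(e) retains information sets, not one's own past actions (perfect recall is not assumed), and $\overline{p}_{h-1}$ is dictated by $\cL$, so an old action cannot be ``carried'' there.

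The missing idea is that late additions never occur. Suppose $\tilde a_{i,2t_1}$ influences $\tilde s_{2t_1+1}$.
\begin{enumerate}
\item Assumption~\ref{weak gamma observability} yields some $i_3\neq i$ whose observation $o_{i_3,t_1+1}$ is affected.
\item By (e), $o_{i_3,t_1+1}\in\tau_{i_3,(t_1+1)^-}$ even under baseline sharing.
\item QC of $\cL$ (Lemma~\ref{lemma: influence no-to-any}) gives $\sigma(\tau_{i,t_1^-})\subseteq\sigma(\tau_{i_3,(t_1+1)^-})$, and Assumption~\ref{assumption:sigma_include} upgrades this to $\sigma(c_{(t_1+1)^-})$.
\item Since $z^a_{t_1}\subseteq c_{(t_1+1)^-}$, you get $\sigma(\tilde\tau_{i,2t_1})\subseteq\sigma(\tilde c_{2t_1+1})$.
\end{enumerate}
Hence the expansion adds each state-influencing control action exactly at the next odd step, where it is a coordinate of $\overline{a}_{h-1}$. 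Communication actions and non-influential control actions are never added, and nothing new is added at even steps. This is what lets the paper write $\overline{\chi}_{2t-1}=\chi_t\cup\varrho^1\setminus\varrho^2$. Here $\varrho^1$ is a fixed, model-determined selection of coordinates of $\overline{a}_{2t-2}$. $\varrho^2$ removes baseline-shared actions the expansion had already made common; your ``$z^b_t$ plus added actions'' misses this correction, given $\overline{z}_h=\overline{c}_h\setminus\overline{c}_{h-1}$. At even steps $\overline{\chi}_{2t}=\phi_t$, and $\overline{\xi}$ is read off from $\xi_{i,t}$ and $\phi_{i,t}$ as you did.

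\textbf{Odd-step SI-CIB.} Your argument takes a different route from the paper. The paper fixes a unilateral deviation of agent $(i_1,h_1)$. It either conditions on that agent's input and output when both are $\overline{c}_h$-measurable, or shows via sQC and Assumption~\ref{useless action} that the agent influences neither $\overline{s}_h$ nor any $\overline{\tau}_{i,h}$, including the new private part $p_{i,t^-}$. A one-step filter from the even-step belief is viable and arguably more direct, but your justification is incomplete. Knowing $a_{i,t-1}\in\Breve{c}_{2t-1}$ is not enough, because the factor $\mathds{1}[\gamma_{i,2t-2}(p_{i,(t-1)^+})=a_{i,t-1}]$ still varies with the summed-over private information (signaling). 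You also need $\sigma(p_{i,(t-1)^+})\subseteq\sigma(\Breve{c}_{2t-1})$.

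sQC of $\cD_\cL^\dag$ does not by itself give this at step $2t-1$. Odd-step agents of $\cD_\cL^\dag$ see only $\Breve{c}_{2t-1}$, and the affected observation $o_{i_3,t}$ first enters $\Breve{\tau}_{i_3,2t}$. So sQC with Assumption~\ref{assumption:sigma_include} only yields inclusion in $\Breve{c}_{2t}$. The lemma above supplies exactly the missing inclusion at $2t-1$, which makes the indicator constant on the support so it cancels in the normalization. Your marginalization of non-influential actions via Assumption~\ref{useless action} then completes the argument.
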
 
\vspace{-10mm}

\subsection{Planning in QC LTC with Finite-Time Complexity}
\label{Solution for sQC Dec-POMDP}

Now we focus on solving the Dec-POMDP $\cD_\cL'$ that has SI-CIBs {without computationally intractable oracles},  
building upon our results in  \cite{liu2023tractable}.   
Given a Dec-POMDP $\cD_\cL'$ with SI-CIBs, \cite{liu2023tractable} proposed to construct an $(\epsilon_r, \epsilon_z)$-\emph{expected} approximate common-information model  $\cM$  (as defined in  
\S \ref{proof details sec 4})  through \emph{finite memory} truncation of the \emph{common information},  when the joint emission of $\cD_\cL'$ is $\gamma$-observable. Here, $\epsilon_r$ and $\epsilon_z$ denote the approximation errors for rewards and incremental common information,  respectively, for which we defer a detailed introduction to \S \ref{proof details sec 4}. The effectiveness of finite-memory truncation of  history has also been established for (single-agent) POMDPs in \cite{noah2022gamma,noahlearning,kara2023convergence,kara2022near}.

However, the Dec-POMDP $\cD_\cL'$ obtained from LTC  has two key differences from the general ones 
considered in \cite{liu2023tractable}. 
First, $\cD_\cL'$ does not satisfy the $\gamma$-observability assumption \emph{throughout} the whole $\overline{H}=2H$ timesteps. Fortunately, since the emissions at odd steps are still $\gamma$-observable, while those at even steps are unimportant as the states remain \emph{unchanged} from the previous  step, 
similar results of \emph{belief contraction} and the near-optimality of finite-memory truncation of common information as in \cite{liu2023tractable} can still be obtained (i.e.,  Lemma \ref{lemma: AIS belief close}).  
the reward functions in $\cD_\cL'$ can now depend on the \emph{private information}  $\overline{p}_h$, in addition to  the action $\overline{a}_h$ and the state $\overline{s}_h$. 
Thanks to the existence of some \emph{consistent} approximate common-information-based beliefs  $\{\PP^{\cM,c}_h(\overline{s}_h,\overline{p}_h\given \hat{c}_h)\}_{h\in[\overline{H}]}$ (see Definition \ref{def:consistency}),    
which provide the \emph{joint} probability of $\overline{s}_h$ and $\overline{p}_h$ given the approximate common information $\hat{c}_h$ compressed from $\overline{c}_h$,  we can still properly evaluate the rewards  in the algorithms of \cite{liu2023tractable}.    
Hence, we can leverage 
the approaches  
in \cite{liu2023tractable} to develop a planning algorithm for QC LTC, which 
approximates the optimal strategy $\overline{g}_{1:\overline{H}}^\ast$ by backward induction over the space of $\hat{c}_{h}$. 
See Algorithm \ref{main algorithm} for a detailed introduction to the planning algorithm.  

Note that in each step of the backward induction (Line 6 of \Cref{algorithm under AIS}), a \emph{Team Decision problem}  \cite{teamdecisionhardness} needs to be solved for each $\hat{c}_h$, which is known to be \texttt{NP-hard} in general \cite{teamdecisionhardness}:
\begin{equation}
\left(\hat{g}_{1,h}^\ast(\cdot\given \hat{c}_h,\cdot),\cdots,\hat{g}_{n,h}^\ast(\cdot\given \hat{c}_h,\cdot)\right)\leftarrow \argmax_{\gamma_{h}}Q^{\ast,\cM}_{h}(\hat{c}_h,\gamma_{h}),
\label{equ:one_step_opt1}
\end{equation}
\normalsize 
where the  $Q$-value function and the prescription  $\gamma_h$ are 
defined in \S \ref{proof details sec 4}. Hence, to obtain the overall computational tractability, we make the following \emph{one-step} tractability assumption, as in \cite{liu2023tractable}. 

\begin{assumption}[One-step tractability of $\cM$] 
\label{assu: one_step_tract}
   The one-step Team Decision problems induced by $\cM$  (i.e., 
     Line \ref{line:one_step_opt} of \Cref{algorithm under AIS})  can be solved in polynomial time\footnote{By  \emph{polynomial time}, we here mean that the time-complexity depends polynomially on the LTC parameters $|\cS|,|\cO_h|,|\cA_h|,|\cM_h|,H$.} for all $h=2t$ with $\in[H]$.
\end{assumption}

Several remarks are in order regarding this assumption. First, it can be viewed as a  \emph{minimal} assumption when it comes to computational tractability: even with $H=1$ and no LTC, one-step TDP requires additional structures in order to be solved efficiently. Second, since the Dec-POMDP here is reformulated from an LTC problem  under Assumption \ref{limited communication strategy}, it suffices to only assume one-step tractability for the \emph{control} (i.e., even) steps, 
since at odd steps, the strategies do not use private information, and the one-step TDP can thus be solved in polynomial time by searching the maximizer given each $\hat{c}_h$. 
Third, even without Assumption \ref{assu: one_step_tract}, the SI-CIB property of $\cD_\cL'$ and thus the derivation of \emph{fixed, tractable size} dynamic programs to solve $\cL$ efficiently still hold. Without such efforts, intractably many TDPs may need to be solved, leaving it less hopeful for computational tractability (even under Assumption \ref{assu: one_step_tract}). Finally, such an assumption is satisfied for several classes of Dec-POMDPs with information sharing, see \S\ref{sec:one_step_examples}  for more examples. With this assumption, we can obtain a concrete finite-time complexity guarantee for planning in LTCs as follows. Proof of the theorem can be found in \S\ref{sec:planning_QC_LTC}. 

\begin{theorem}
    \label{theorem: planning}
     Given any QC LTC problem $\cL$ satisfying Assumptions \ref{gamma observability}, \ref{limited communication strategy}, \ref{useless action}, and \ref{weak gamma observability}, we can construct a Dec-POMDP problem $\cD_\cL'$ with SI-CIBs such that for any $\epsilon>0$, any $\epsilon$-team-optimal strategy for $\cD_\cL'$ can be converted into an $\epsilon$-team-optimal strategy for $\cL$, and the following holds. Fix $\epsilon_r, \epsilon_z>0$, and given any $(\epsilon_r, \epsilon_z)$-expected approximate common-information model $\cM$ (see Definition \ref{definition: AIS}) for $\cD_\cL'$ that satisfies Assumption \ref{assu: one_step_tract}, there exists an algorithm  that can compute a  $(2\overline{H}\epsilon_r + \overline{H}^2\epsilon_z)$-team-optimal strategy for the original LTC problem $\cL$ with time complexity $\max_{h\in[\overline{H}]}|\hat{\cC}_h|\cdot \texttt{poly}(|\overline{\cS}|, \max_{h\in[\overline{H}]}|\overline{\cA}_h|, \max_{h\in[\overline{H}]}|\overline{\cP}_h|, \overline{H})$. In particular, for  any fixed $\epsilon>0$, if $\cL$ has a baseline sharing protocol as one of the examples in  \S\ref{sec: examples of QC}, then 
     Algorithm \ref{main algorithm}  can find an $\epsilon$-team-optimal strategy for $\cL$ in quasi-polynomial time. 
\end{theorem}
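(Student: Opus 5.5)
The proof chains the reductions of \S\ref{formulation from LTC to Dec-POMDP}--\S\ref{sec: refinement} and then adapts the approximate-common-information planning analysis of \cite{liu2023tractable}.

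\textbf{Step 1: the chain of reductions.} Starting from $\cL$:
\begin{itemize}
\item By Proposition \ref{prop: equivalence of LTC and Dec-POMDP}, $J_\cL$ equals $J_{\cD_\cL}$ under the interleaving map $(g^m_{1:H},g^a_{1:H})\leftrightarrow\tilde g_{1:\tilde H}$. Hence $\epsilon$-team-optimality transfers in both directions.
\item By Theorem \ref{reformulate QC}, $\cD_\cL$ is QC.
\item By Lemma \ref{lemma:QC to sQC}, its strict expansion $\cD_\cL^\dag$ is sQC.
\item By Theorem \ref{theorem: sQC to QC}, any $\epsilon$-team-optimal $\Breve g^\ast$ of $\cD_\cL^\dag$ is mapped by $\varphi$ to an $\epsilon$-team-optimal strategy of $\cD_\cL$ with the same value. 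This map is implementable via Algorithm \ref{algorithm Implement varphi}.
\item By Theorem \ref{theorem: refinement}, optimizing $\cD_\cL^\dag$ is equivalent to optimizing $\cD_\cL'$ over $\overline\cG_{1:\overline H}$. Moreover, $\cD_\cL'$ satisfies the standard information-evolution rules and has SI-CIBs w.r.t.\ $\overline\cG$.
\end{itemize}
Composing these maps, whose values are preserved or whose optimality gaps do not grow, gives the first claim. Here ``$\epsilon$-team-optimal for $\cD_\cL'$'' is read within the class $\overline\cG$.

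\textbf{Step 2: planning on $\cM$.} Run the backward induction of Algorithm \ref{main algorithm} over $\hat c_h$. At each step, form
\[
Q^{\ast,\cM}_h(\hat c_h,\gamma_h)
\]
from the consistent approximate beliefs $\PP^{\cM,c}_h(\overline s_h,\overline p_h\given\hat c_h)$. Because these beliefs are joint over state and private information, the private-information-dependent rewards $\overline\cR_{2t-1}(\overline s,\overline a,\overline p)$ can be evaluated directly. At odd steps, prescriptions ignore private information, so the maximization is a finite search over $\overline\cA_{2t-1}$ per $\hat c_h$. At even steps, Assumption \ref{assu: one_step_tract} gives polynomial time. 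Summing over $h$ and $\hat c_h$ yields the stated complexity $\max_h|\hat\cC_h|\cdot\texttt{poly}(\cdot)$.

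\textbf{Step 3: error analysis.} Reproduce the performance-difference argument of \cite{liu2023tractable}. Compare the true value of the computed strategy with the $\cM$-value, and the $\cM$-value of the true optimum with its true value. Each comparison is telescoped over $h$:
\begin{itemize}
\item Reward errors contribute $\epsilon_r$ per step, giving $2\overline H\epsilon_r$ over the two comparisons.
\item Incremental common-information errors $\epsilon_z$ propagate through values bounded by $\overline H$, giving $\overline H^2\epsilon_z$.
\end{itemize}
This argument only needs SI-CIBs, so that beliefs and hence $Q$ are strategy-independent, together with the evolution rules. Both are supplied by Theorem \ref{theorem: refinement}.

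\textbf{Step 4 (main obstacle): quasi-polynomial size of $\hat\cC_h$.} For the baseline-sharing examples of \S\ref{sec: examples of QC}, we must exhibit finite-memory truncations with $|\hat\cC_h|$ quasi-polynomial and $\epsilon_r,\epsilon_z\le\epsilon/(4\overline H^2)$. The difficulty is that $\gamma$-observability holds only at odd steps of $\cD_\cL'$. The plan is to invoke Lemma \ref{lemma: AIS belief close}, whose belief contraction uses only the odd-step emissions: even steps have $\tilde\cO=\{\emptyset\}$ and identity transitions, so they leave the belief unchanged and do not spoil contraction. This yields a memory length $L=O(\gamma^{-4}\log(|\cS|H/\epsilon))$. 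Counting the truncated common-information realizations then gives $(|\cO||\cA||\cM|)^{O(L)}$, which is quasi-polynomial.

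Finally, choose $\epsilon_r,\epsilon_z$ so that $2\overline H\epsilon_r+\overline H^2\epsilon_z\le\epsilon$, and map the resulting strategy back to $\cL$ via Step 1. The check I expect to take most care is verifying, example by example, that the expansion $\Breve c_h$ keeps common-information increments compressible under truncation.
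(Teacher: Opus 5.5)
Your Steps 1--3 coincide with the paper's Parts I--II. These are the chain of reductions, the per-strategy bound $\EE|V^{\cD_\cL'}_h-V^{\cM}_h|\le(\overline H-h+1)\epsilon_r+\tfrac{(\overline H-h)(\overline H-h+1)}{2}\epsilon_z$ applied twice, and the optimality of backward induction within $\cM$. Your use of Lemma \ref{lemma: AIS belief close} in Step 4 matches Part III.

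The gap is in the ``in particular'' clause. There, Assumption \ref{assu: one_step_tract} is not a hypothesis: it has to be \emph{proved} for the truncated model $\cM$ you construct. Your Step 4 only addresses $|\hat\cC_h|$ and the size of $\epsilon_r,\epsilon_z$. This does not come for free. The even-step maximization (Line \ref{line:one_step_opt} of Algorithm \ref{algorithm under AIS}) is a team decision problem, which is \texttt{NP-hard} in general. Brute-force search over prescriptions costs $\prod_i|\overline\cA_{i,2t}|^{|\overline\cP_{i,2t}|}$, already of order $\prod_i|\cA_{i,t}|^{|\cO_{i,t}|}$ for one-step delayed sharing. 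That is exponential in the LTC parameters, no matter how small $\hat\cC_h$ is.

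What is missing is the argument the paper gives for each Type, which has three parts.
\begin{enumerate}
\item Use the additional structural conditions built into the examples' definitions in \S\ref{sec: examples of QC}. For Examples 2, 4, 7, 8, and for the first condition of Examples 1, 5, 6, this is a single controller with additive rewards. For the second condition of Examples 1, 5, 6, it is factorized transitions, emissions, costs and rewards. For the third condition of Examples 1, 5, and for Example 3, it is nested observations.
\item Check that your truncated common information and approximate beliefs inherit the corresponding Part (2) requirements of \S\ref{sec:one_step_examples}. In the factorized case this means a product-form $\PP^{\cM,c}_h$ over per-agent compressed common information. In the nested case it means the nesting relation holds almost surely under $\PP^{\cM,c}_h$.
\item Invoke Lemma \ref{lemma: one_step_tract}, together with the fact that $|\overline\cP_h|$ is polynomial in the LTC parameters for these examples when the delay $d$ is constant.
\end{enumerate}
Only with this verification does the $\max_h|\hat\cC_h|\cdot\texttt{poly}(\cdot)$ bound from the first part of the theorem become quasi-polynomial for the examples.
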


As sufficient conditions to ensure the construction of such an $\cM$ that satisfies Assumption \ref{assu: one_step_tract}, 
as part of the definition, some examples in \S\ref{sec: examples of QC} may need additional structural assumptions on the transition dynamics, emission, and reward/cost functions, while some do not. See \S\ref{sec: examples of QC} and the
proof of \Cref{thm: full planning} for more detailed discussions.

\subsection{LTC with Finite-Time and Sample Complexities}\label{sec:LTC_learning_results}

{Based on the planning result,  we are now ready to solve the \emph{learning} problem 
with both time and sample complexity guarantees.} 
In particular, we can treat the samples from $\cL$ as the samples from  $\cD_\cL'$: 
the \emph{reformulation} step (\S\ref{formulation from LTC to Dec-POMDP}) does not change the system dynamics, but only maps the information to different random variables; the \emph{expansion} step (\S\ref{formulate QC to SQC}) only requires agents to share more actions with each other, without changing the input and output of the environment; the \emph{refinement} step (\S\ref{sec: refinement}) only recovers the private information the agents had in the original $\cL$. 
This way, we can utilize similar algorithmic ideas in \cite{liu2023tractable} to develop a learning algorithm for LTC problems. See \S \ref{proof details sec 4} for more details of the provable LTC algorithms  adapted from \cite{liu2023tractable}. The algorithm has the following finite-time and sample complexity guarantees. Proof of the theorem can be found in \S\ref{main_results_append_learning}. 

\begin{theorem}\label{thm:learning}
Given any QC LTC problem $\cL$ satisfying Assumptions  \ref{gamma observability}, \ref{limited communication strategy}, \ref{useless action},  and \ref{weak gamma observability}, we can construct a Dec-POMDP problem $\cD_\cL'$ with SI-CIBs. 
Moreover, given any compression functions of common information, there exists an LTC algorithm (\Cref{main learning algorithm}) that learns in $\cD_\cL'$, such that if the learned expected approximate common-information models in the algorithm (Line \ref{line: hat M} in \Cref{main learning algorithm}) satisfy Assumption \ref{assu: one_step_tract}, then an approximate team-optimal strategy for $\cL$ can be learned with high probability, with time and sample complexities polynomial in the parameters of the models.   Specifically, if $\cL$ has a baseline sharing protocol as one of the examples in \S \ref{sec: examples of QC}, then 
 an $\epsilon$-team-optimal strategy for $\cL$ can be learned with high probability,  
with both quasi-polynomial time and sample complexities.
    \label{theorem: learning}
\end{theorem}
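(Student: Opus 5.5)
The proof assembles the pipeline of \Cref{fig: LTC process} with a learning version of the planning result, \Cref{theorem: planning}. The structural part is already in hand. By \Cref{prop: equivalence of LTC and Dec-POMDP} and \Cref{reformulate QC}, $\cL$ is equivalent to the QC Dec-POMDP $\cD_\cL$. By \Cref{lemma:QC to sQC} and \Cref{theorem: sQC to QC}, an $\epsilon$-team-optimal strategy of the sQC expansion $\cD_\cL^\dag$ maps through $\varphi$ to one of $\cD_\cL$ with the same value. By \Cref{theorem: refinement}, optimizing $\cD_\cL^\dag$ equals best-in-class optimization of $\cD_\cL'$ over $\overline{\cG}_{1:\overline{H}}$, and $\cD_\cL'$ has SI-CIBs together with the standard information-evolution rules. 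So it suffices to learn an approximately best-in-class strategy for $\cD_\cL'$ and push it back through $\varphi$ and the reformulation map.

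\textbf{Step 1: sampling access.} Trajectories of $\cL$, with agents additionally broadcasting the influential actions prescribed by \eqref{construction:QC to sQC}, are distributed exactly as trajectories of $\cD_\cL'$ under the corresponding strategies. I will state this as a lemma that the three transformations change neither the transition kernel, the emissions, nor the rewards/costs, but only relabel random variables. Consequently the sample complexity of $\cD_\cL'$ equals that of $\cL$ with a factor of $2$ in horizon.

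\textbf{Step 2: learning-to-planning reduction.} I will adapt the scheme of \cite{liu2023tractable} to learn an expected approximate common-information model $\hat\cM$. For each $h$, I fix exploration strategies that are uniform over prescriptions on the truncated common information, collect samples, and form empirical estimates of $\PP_h^{\cM}(\overline{s}_h,\overline{p}_h\given\hat{c}_h)$-induced rewards and of the transitions of $\overline{z}_{h+1}$. Concentration (Hoeffding plus a union bound over $\hat\cC_h$, prescriptions and $h$) then gives $(\epsilon_r,\epsilon_z)$ errors that are small under the exploration distribution.

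\textbf{Step 3: the main obstacle.} The hardest part is controlling the distribution shift: the errors must hold under the distribution of the \emph{output} strategy, not only under the exploration policy. As in \cite{liu2023tractable}, I will iterate model construction and policy computation: run Algorithm \ref{main algorithm} on $\hat\cM$, which is valid by Assumption \ref{assu: one_step_tract}, then re-collect data under mixtures of the computed strategy and exploration at each step. I will then select the best candidate by policy evaluation on fresh samples. Two features specific to LTC need checking here. First, the even steps of $\cD_\cL'$ are not $\gamma$-observable. I will rely on Lemma \ref{lemma: AIS belief close}, since the state is frozen across $2t-1\to 2t$, so belief contraction only needs the odd steps. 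Second, rewards at odd steps depend on $\overline{p}_h$. These are handled through the consistent joint belief on $(\overline{s}_h,\overline{p}_h)$, which the estimation must also cover.

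\textbf{Step 4: complexity bounds.} Combining the planning guarantee $2\overline{H}\epsilon_r+\overline{H}^2\epsilon_z$ with the estimation errors and the selection step gives an overall guarantee with time and samples polynomial in $\max_h|\hat\cC_h|$, $|\overline{\cS}|$, $|\overline{\cA}_h|$, $|\overline{\cP}_h|$, $\overline{H}$, $1/\epsilon$ and $\log(1/\delta)$. For the baseline-sharing examples of \S\ref{sec: examples of QC}, finite-memory truncation of length $\tilde O(\gamma^{-4}\log(|\cS|H/\epsilon))$ makes $|\hat\cC_h|$ and $|\overline{\cP}_h|$ quasi-polynomial. This yields the claimed quasi-polynomial time and sample complexities for $\cL$.
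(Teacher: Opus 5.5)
Your structural reduction, the sampling-equivalence remark, the final selection step, and the push-back through $\varphi$ all match the paper. Step 3, however, which you correctly identify as the crux, does not work as proposed. Definition~\ref{definition: AIS} requires $\epsilon_r,\epsilon_z$ to be small in expectation under \emph{every} $\overline{g}_{1:h-1}$. Re-planning on $\hat\cM$ and re-collecting data under mixtures of the computed strategy with exploration only certifies accuracy on the states those particular strategies visit. States needed by other strategies (e.g., the optimal one) may be reachable only through specific action sequences. Uniform mixing hits such sequences with probability exponentially small in $h-\hat L$, so this scheme yields neither the uniform guarantee nor the quasi-polynomial bound. For the same reason you cannot use Lemma~\ref{lemma: AIS belief close} as stated in the learning setting: its prior $\overline{s}_{h-L}\sim\text{Unif}(\overline{\cS})$ is not something you can sample from.

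The missing ingredient is a \emph{uniform policy cover}. The paper runs BaSeCAMP \cite{noahlearning} on $\cD_\cL'$ to obtain $K=2\overline{H}|\overline{\cS}|$ barycentric-spanner exploration strategies. With probability $1-\delta_2$, some $j^\ast$ satisfies $d_{\cS,h-\hat L}^{\overline{g},\cD_\cL'}(\cU_{\phi,h-\hat L}^{\cD_\cL'}(g^{h,j^\ast}))<\epsilon_1$ simultaneously for all $\overline{g}$.

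For each $j$, the paper targets the strategy-dependent model $\tilde\cM(g^{1:\overline{H},j})$. Its belief is the $\hat L$-step filter started from the prior $d_{\cS,h-\hat L}^{g^{h,j},\cD_\cL'}$, and LEE estimates this model from rollouts of $g^{h,j}$. The adapted Theorem 8 of \cite{liu2023tractable} then bounds $|V^{\overline{g},\cD_\cL'}-V^{\overline{g},\hat\cM}|$ for all strategies at once, via $\epsilon_r(\tilde\cM),\epsilon_z(\tilde\cM)$ plus a term $\epsilon^j_{apx}$ that contains the under-covered mass.

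Lemma~\ref{lemma:belief_contract_learning} gives belief contraction from a covering prior, using $\overline{\bb}_{h+1}=\overline{\bb}_h$ across communication steps. This yields $\epsilon_r,\epsilon_z\le 7\epsilon_1$ for $j^\ast$, and Pos-Dec then selects among the $K$ planned strategies.

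Separately, for the ``specifically'' clause, a quasi-polynomial $|\hat\cC_h|$ is not enough for time complexity. The even-step prescription space $|\Gamma_h|=\prod_i|\overline{\cA}_{i,h}|^{|\overline{\cP}_{i,h}|}$ is exponential. You must therefore show that the \emph{learned} models satisfy Assumption~\ref{assu: one_step_tract}. The paper does this by modifying LEE's estimation updates so that $\hat\cM$ inherits the turn-based, factorized, or nested structure assumed in the examples, and then applying Lemma~\ref{lemma: one_step_tract}.
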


Again, as sufficient conditions to ensure the learned models above to satisfy Assumption \ref{assu: one_step_tract}, some examples in \S \ref{sec: examples of QC} were defined with additional structural assumptions on the transition,  emission, and reward/cost functions, while some need not.  See \S \ref{sec: examples of QC} and the proof of \Cref{theorem: full learning} for more detailed discussions.

\section{Solving General QC Dec-POMDPs}
\label{general Dec-POMDP}

In \S\ref{sec:positive_results}, we developed a pipeline for solving a special class of QC Dec-POMDPs generated by LTCs, by transforming them into those \emph{with SI-CIBs}. 
In fact, the pipeline can also be extended to solving general QC Dec-POMDPs,  which thus advances the results in \cite{liu2023tractable} that can only address Dec-POMDPs {with SI-CIBs}, a result of independent interest.  
Without much confusion given the context, we will adopt the notation for LTCs to studying  general Dec-POMDPs: we set $h^+=h^-=h$, remove the additional sharing protocol, and add $\overline{\phantom{x}}$ for all the notation in the Dec-POMDP, following that in  $\cD_{\cL}'$ in \S\ref{sec:positive_results}. 
{We extend the results in \S\ref{sec:positive_results} to general Dec-POMDPs as follows.

\begin{theorem}
    \label{theorem: SI=sQC general}
    Consider a Dec-POMDP $\cD$ satisfying Assumption \ref{assumption:information evolution} (e). If $\cD$ is sQC and satisfies Assumptions  \ref{assumption:sigma_include}, \ref{useless action}, and \ref{weak gamma observability}, then it has SI-CIBs. Meanwhile, if $\cD$ has SI-CIBs and perfect recall, then it is sQC (up to null sets).  
\end{theorem}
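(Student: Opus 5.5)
The proof has two directions. For the forward direction (sQC $\Rightarrow$ SI-CIBs) I will mirror the argument behind Theorem \ref{theorem: SI=QC}, stated for a general Dec-POMDP $\cD$. I will argue by induction on $h$ that the joint conditional law $\PP_h^{\cD}(\overline{s}_h,\overline{p}_h\given \overline{c}_h,\overline{g}_{1:h-1})$ can be computed from $\overline{c}_h$ and the model alone, without reference to $\overline{g}_{1:h-1}$.

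Write the law of the trajectory $(\overline{s}_{1:h},\overline{o}_{1:h},\overline{a}_{1:h-1})$ as a product of model kernels ($\mu_1$, $\TT_t$, $\OO_t$) and strategy indicators $\mathds{1}[\overline{a}_{j,t}=\overline{g}_{j,t}(\overline{\tau}_{j,t})]$. I will split the actions into two classes.

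\begin{itemize}
\item \emph{Actions that influence the state.} Take $\overline{a}_{j,t}$ with $t<h$ that affects $\overline{s}_{t+1}$ or later state or observation variables. By Assumption \ref{weak gamma observability}, such an influence on the state is detectable through the other agents' emissions. Hence in the intrinsic model this agent influences later agents, and sQC forces $\sigma(\overline{\tau}_{j,t})\subseteq\sigma(\overline{\tau}_{i,h})$ together with $\overline{a}_{j,t}\in\overline{\tau}_{i,h}$ for every $i$. Assumption \ref{assumption:sigma_include} then upgrades this containment to the common information $\overline{c}_h$.
\item \emph{Actions that do not influence the state.} By Assumption \ref{useless action}, these do not appear in any $\overline{\tau}_{h}$. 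So they enter neither $\overline{p}_h$ nor $\overline{c}_h$, and they can be summed out.
\end{itemize}

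Consequently, every strategy indicator that survives in the numerator $\PP(\overline{s}_h,\overline{p}_h,\overline{c}_h)$ has both its input and its output measurable with respect to $\overline{c}_h$. Each such indicator is therefore a function of $\overline{c}_h$ alone. It factors out of the numerator and out of the normalizer $\sum_{\overline{s}_h,\overline{p}_h}$ identically, and cancels in the ratio. This holds on every $\overline{c}_h$ reachable under both strategies, which is exactly SI-CIB.

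\textbf{Main obstacle (forward direction).} The delicate point is justifying that "influence'' in the intrinsic-model sense coincides with "affects the state and is observed''. Agent $(j,t)$ could influence agent $(i,h)$ only indirectly, through a chain of intermediate agents. I would handle this by using transitivity of the sQC inclusions along the chain. I would also use Assumption \ref{assumption:information evolution}(e), since it ensures the private information is nested over time, so that $\overline{p}_h$ is a fixed function of observations and of actions already in $\overline{c}_h$.

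For the converse (SI-CIBs with perfect recall $\Rightarrow$ sQC up to null sets), I will argue by contraposition. Suppose agent $(j,t)$ influences agent $(i,h)$ but $\overline{a}_{j,t}$ (or $\overline{\tau}_{j,t}$) is not $\sigma(\overline{\tau}_{i,h})$-measurable outside a null set.

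\begin{itemize}
\item \emph{Reducing to the common information.} If the missing quantity is not in $\overline{c}_h$, then it is either private information of someone else or absent from everyone's information. Perfect recall lets me handle the case where $j$'s own later information carries it. The remaining case is where the quantity is absent from $\overline{c}_h$ but the state or observations at $h$ depend on it.
\item \emph{Breaking SI-CIB.} I will construct two strategies $\overline{g}_{j,t}$ and $\overline{g}'_{j,t}$ that differ on a positive-probability set of $\overline{\tau}_{j,t}$ not resolved by $\overline{c}_h$. Because $\overline{a}_{j,t}$ influences $(\overline{s}_h,\overline{p}_h)$, the two strategies induce different conditional beliefs $\PP(\overline{s}_h,\overline{p}_h\given\overline{c}_h,\cdot)$ on some jointly reachable $\overline{c}_h$. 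This contradicts SI-CIB.
\end{itemize}

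The "up to null sets'' qualifier covers realizations on which the two strategies agree almost surely. The step needing the most care in this direction is ensuring a common reachable $\overline{c}_h$ exists. I would try first to make the two strategies agree on all actions recorded in $\overline{c}_h$ and differ only in the unrecorded component.
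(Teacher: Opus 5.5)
\textbf{Forward direction.} This part is correct, and it takes a somewhat different route from the paper. The paper fixes one agent $(i_1,h_1)$, changes only $\overline{g}_{i_1,h_1}$, and splits into two cases:
\begin{itemize}
\item either $\sigma(\overline{\tau}_{i_1,h_1})\cup\sigma(\overline{a}_{i_1,h_1})\subseteq\sigma(\overline{c}_h)$, so conditioning on $\overline{c}_h$ conditions on the input and output of the changed strategy;
\item or the agent influences no $(i_2,h)$ with $i_2\neq i_1$. Then by Assumptions \ref{weak gamma observability} and \ref{assumption:information evolution}(e) it does not move $\overline{s}_{h_1+1}$, and by Assumption \ref{useless action} and recursion it moves nothing at time $h$.
\end{itemize}
Your global factorization of $\PP(\overline{s}_h,\overline{p}_h,\overline{c}_h)$ applies the same dichotomy to all agents at once. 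This compares two arbitrary profiles directly, rather than chaining unilateral deviations, where $\overline{c}_h$ would have to stay reachable at every intermediate profile. The induction on $h$ is unnecessary. The indirect-influence ``obstacle'' is moot: the one-step chain already puts $\overline{\tau}_{j,t}$ and $\overline{a}_{j,t}$ in $\overline{c}_{t+1}\subseteq\overline{c}_h$. That chain is: $\overline{a}_{j,t}$ moves some $\overline{o}_{i_3,t+1}$ with $i_3\neq j$; Assumption \ref{assumption:information evolution}(e) turns this into influence on agent $(i_3,t+1)$; apply sQC there; then apply Assumption \ref{assumption:sigma_include}. Note that sQC alone gives containment only in $\overline{\tau}_{i_3,\cdot}$; the ``for every $i$'' comes only after Assumption \ref{assumption:sigma_include}.

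\textbf{Converse: there is a genuine gap.} Your load-bearing sentence, ``because $\overline{a}_{j,t}$ influences $(\overline{s}_h,\overline{p}_h)$, the two strategies induce different conditional beliefs,'' does not follow. A deviation can change the law of $(\overline{s}_h,\overline{p}_h)$ without changing its conditional law given $\overline{c}_h$; this is exactly what happens in your forward direction. Influence is not what produces the contradiction; the paper's argument for the action part never uses it. Your case split is also off. Under perfect recall, $\overline{\tau}_{j,t}\cup\{\overline{a}_{j,t}\}\subseteq\overline{\tau}_{j,h}\subseteq\overline{p}_h\cup\overline{c}_h$. So there is no ``remaining case'' where the quantity is missing from everyone's information. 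The contradiction must come from the belief over $\overline{p}_h$ itself, which literally records $\overline{a}_{j,t}$ and $\overline{\tau}_{j,t}$, not from the state depending on the action.

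What is missing are two explicit deviations.
\begin{itemize}
\item \emph{(a) The action is not $\sigma(\overline{\tau}_{i,h})$-measurable.} Then it is not $\sigma(\overline{c}_h)$-measurable, so some reachable $\overline{c}_h$ gives positive conditional probability to two values $a\neq a'$. By perfect recall, some $(\overline{s}_h,\overline{p}_h)$ with positive belief has $\overline{p}_{j,h}$ recording $a$. Replace $\overline{g}_{j,t}$ by the constant $a'$. The event $\{\overline{a}_{j,t}=a'\}\cap\{\overline{c}_h\}$ can only gain probability, so $\overline{c}_h$ stays reachable, yet that $(\overline{s}_h,\overline{p}_h)$ now has belief $0$.
\item \emph{(b) The action is $\sigma(\overline{\tau}_{i,h})$-measurable but $\overline{\tau}_{j,t}$ is not.} Pick $\tau\neq\tau'$ both compatible with a realization of $\overline{\tau}_{i,h}$ that pins the action to $a$, and swap outputs: $\overline{g}'_{j,t}(\tau)=a'$ and $\overline{g}'_{j,t}(\tau')=a$. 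The path through $(\tau',a)$ keeps $\overline{c}_h$ reachable. Meanwhile, every $\overline{\tau}_h$ containing both $\tau$ and $a$ becomes impossible. Such a $\overline{\tau}_h$ had positive probability before and, by perfect recall, lies in $\overline{p}_h\cup\overline{c}_h$.
\end{itemize}
Your idea of agreeing on recorded actions points toward (b). Without the swap, though, it is unclear how reachability is kept, and without the perfect-recall embedding it is unclear why the belief changes.

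Also, ``up to null sets'' means the $\sigma$-algebra inclusions hold modulo the null sets of the strategy-induced measure. It does not refer to strategies that agree almost surely.
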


Perfect recall \cite{kuhn1953extensive} here  means that the agents will never forget their own past information and actions (as formally defined in \S \ref{proof details sec 5} for completeness).} 
Note that Assumption  \ref{assumption:information evolution} (e) is similar to,  but different from,  perfect recall: it is implied by the latter with $\overline{o}_{i,h}\in \overline{\tau}_{i,h}$. Also, Assumptions \ref{assumption:sigma_include},  \ref{useless action}, and  \ref{weak gamma observability}  were originally made for LTCs, and here we meant to impose them for Dec-POMDPs with $h^+=h^-=h$. 
 Finally, by \emph{sQC up to null sets}, we meant that if agent $(i_1,h_1)$ influences agent $(i_2,h_2)$ in the intrinsic model of the Dec-POMDP $\cD$, then under any strategy $\overline{g}_{1:\overline{H}}$, $\sigma(\overline{\tau}_{i_1,h_1})\subseteq\sigma(\overline{\tau}_{i_2,h_2})$ and $\sigma(\overline{a}_{i_1,h_1})\subseteq\sigma(\overline{\tau}_{i_2,h_2})$  except the null sets generated by $\overline{g}_{1:\overline{H}}$.  
Given Theorem \ref{theorem: SI=sQC general} and the results in \S\ref{sec:positive_results}, we can illustrate the relationship between LTCs and Dec-POMDPs with  different assumptions and information structures in \Cref{fig:venn_diagram}, which may be of independent interest. 

\begin{figure}[!t]
    \vspace{7pt}
    \centering
    \begin{subfigure}{0.39\linewidth}
        \centering
        \includegraphics[width=0.8\linewidth]{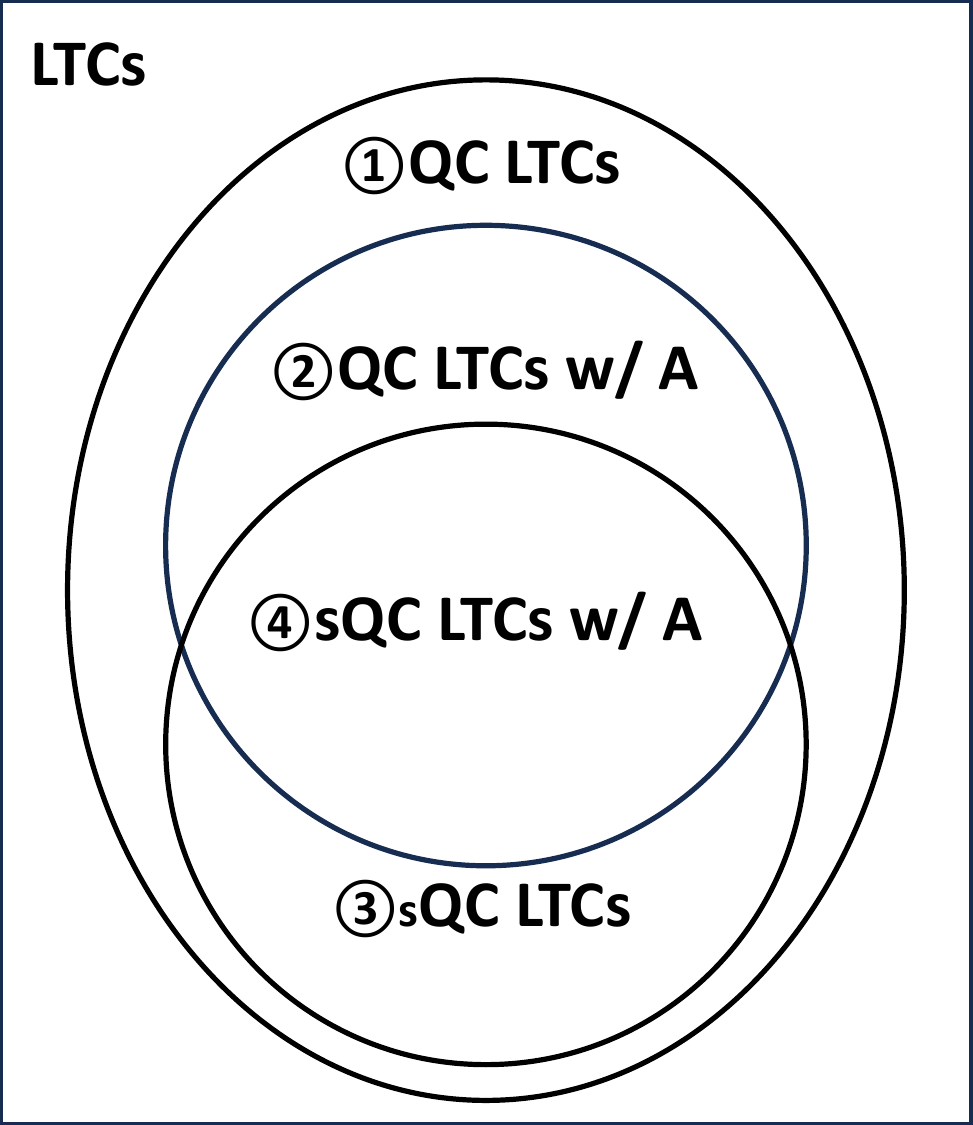}
        \caption{}
        \label{fig:Venn_LTC}
    \end{subfigure}
    \hspace{16pt}
    \begin{subfigure}{0.39\linewidth}
        \centering
        \includegraphics[width=0.8\linewidth]{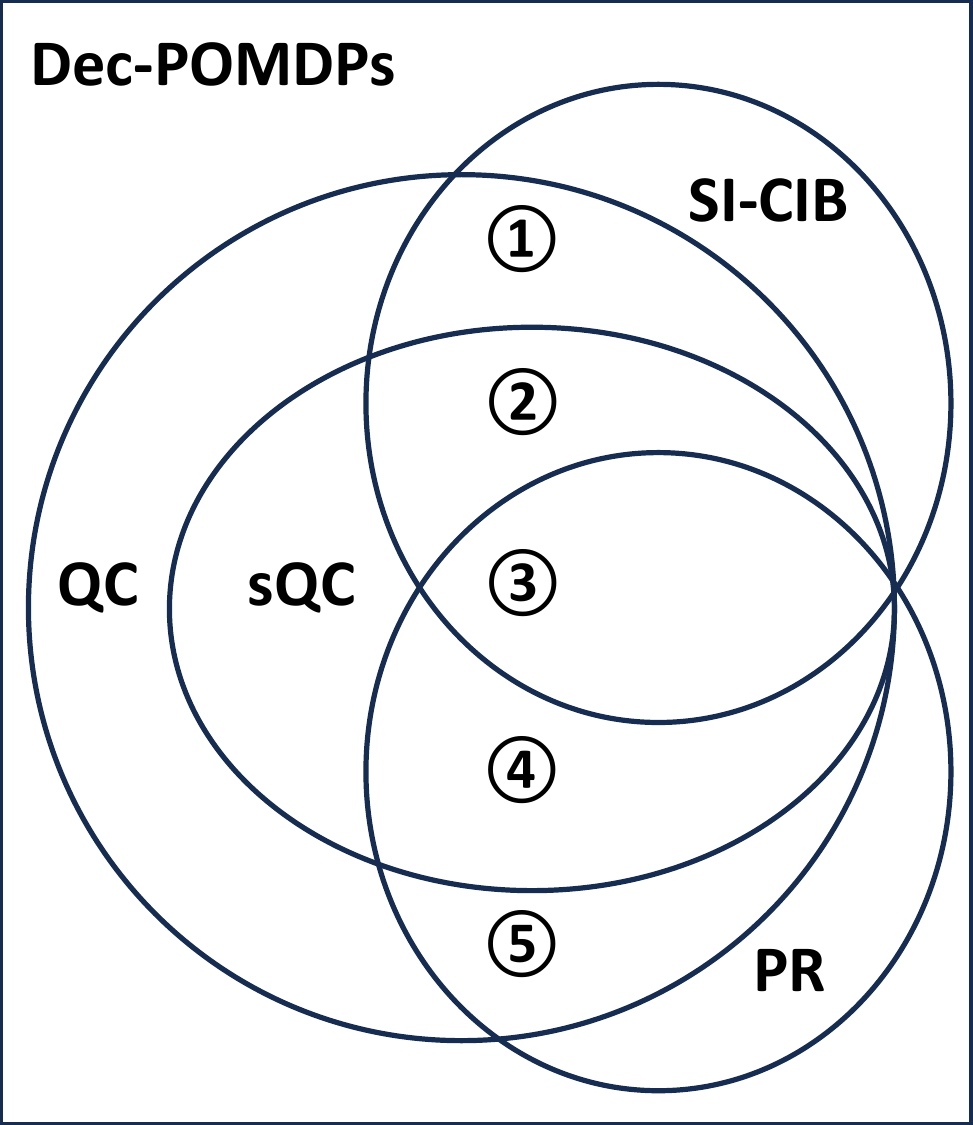}
        \caption{}
        \label{fig:Venn_DecPOMDP}
    \end{subfigure}
\caption{
    (a) Venn diagram of  LTCs  with different ISs:  \ding{172} QC LTCs. \ding{173} QC LTCs satisfying Assumptions \ref{limited communication strategy}, \ref{useless action}, and \ref{weak gamma observability}. \ding{174} sQC LTCs. \ding{175} sQC LTCs satisfying Assumptions \ref{limited communication strategy}, \ref{useless action}, and \ref{weak gamma observability}, 
    which can be converted to Dec-POMDPs with SI-CIBs; (b) Venn diagram of general Dec-POMDPs with different ISs. \emph{PR} denotes \emph{perfect recall}. We construct examples for each area of \ding{172}-\ding{176} in \S \ref{sec:examples_venn_diag}.} 
    \label{fig:venn_diagram}
\end{figure}

By \Cref{theorem: SI=sQC general}, one may start with a QC Dec-POMDP (with information sharing) $\cD$ that does \emph{not} necessarily have SI-CIBs, and then expand it (as in \S\ref{formulate QC to SQC}) to obtain an sQC Dec-POMDP $\cD'$. If $\cD$ satisfies Assumptions \ref{assumption:sigma_include}, \ref{useless action}, and \ref{weak gamma observability}, then $\cD'$ has SI-CIBs and can then be solved with finite-time and sample complexity guarantees as in  \cite{liu2023tractable}. 
For instance, if $\cD$ has an information-sharing protocol as either  \textbf{Example 7} or \textbf{Example 8} in \S \ref{sec: examples of QC},   
one can verify that it does not have  SI-CIBs. However,  
with the above assumptions,  
an approximate team-optimal strategy of $\cD$ can still be learned with quasi-polynomial sample complexity, and without computationally intractable oracles.

\section{Experimental  Results}
\label{experiments}

In this section, we  demonstrate both the practical implementability and performance of our LTC  algorithms via numerical experiments, and conduct ablation studies for LTC problems with different communication costs and horizons. 
\paragraph{Environment setup.} We conduct our experiments on two popular and modest-scale partially observable benchmarks, Dectiger \cite{dectiger} and Grid3x3 \cite{grid3x3}. We train the agents in each LTC problem in the two environments with $20$ different random seeds and different communication cost functions, and execute them in problems with horizons  $\{4,6,8,10\}$.  To fit the LTC setting considered in our paper, we regularize the rewards to $[0,1]$, and set the baseline sharing protocol as the one-step delayed information sharing \cite{ashutosh2013game,liu2023tractable}. As for the communication costs, we set $\cK_{h}(z_h^a)=\alpha\cdot  |z_h^a|$, and set $\alpha\in\{0.01,0.05,0.1\}$ for  ablation studies. Also, we consider 2 baseline cases under the same environment with the information structure of one-step delayed information sharing and fully sharing, respectively. The former can be viewed as an LTC problem with extremely high communication cost, thus no additional sharing; the latter  corresponds to an LTC problem with no communication cost.

\begin{table}[!t]
\centering
\resizebox{0.8\textwidth}{!}{
    \begin{tabular}{|l|l|l|l|l|l|}
    \toprule
    Horizon/Cost & No Sharing & Cost=0.1 & Cost=0.05 & Cost=0.01 & Fully Sharing \\
    \midrule
    H=4 w/ cost & 1.32$\pm$0.025 & 1.33$\pm$0.044 & 1.44$\pm$0.034 & 1.54$\pm$0.013 & 1.57$\pm$0.004 \\
    \midrule
    H=4 w/o cost & -     & 1.36$\pm$0.032 & 1.48$\pm$0.034 & 1.59$\pm$0.002 & - \\
    \midrule
    H=6 w/ cost & 1.95$\pm$0.009 & 1.97$\pm$0.07 & 2.08$\pm$0.068 & 2.26$\pm$0.012 & 2.29$\pm$0.002 \\
    \midrule
    H=6 w/o cost & -     & 2.01$\pm$0.047 & 2.14$\pm$0.072 & 2.27$\pm$0.011 & - \\
    \midrule
    H=8 w/ cost & 2.56$\pm$0.041 & 2.64$\pm$0.078 & 2.74$\pm$0.118 & 2.96$\pm$0.021 & 3.0$\pm$0.002 \\
    \midrule
    H=8 w/o cost & -     & 2.7$\pm$0.044 & 2.83$\pm$0.117 & 2.98$\pm$0.02 & - \\
    \midrule
    H=10 w/ cost & 3.31$\pm$0.024 & 3.37$\pm$0.135 & 3.51$\pm$0.153 & 3.69$\pm$0.029 & 3.87$\pm$0.007 \\
    \midrule
    H=10 w/o cost & -     & 3.46$\pm$0.069 & 3.63$\pm$0.152 & 3.71$\pm$0.026 & - \\
    \bottomrule
    \end{tabular}%
    }
  \caption{Experimental results for Dectiger with  different horizon lengths and cost functions.}
  \label{tab:dectiger}%
\end{table}
\begin{table}[!t]
\centering
\resizebox{0.8\textwidth}{!}{
\begin{tabular}{|l|l|l|l|l|l|}
    \toprule
    Horizon/Cost & No Sharing & Cost=0.1 & Cost=0.05 & Cost=0.01 & Fully Sharing \\
    \midrule
    H=4 w/ cost & 0.14$\pm$0.003 & 0.14$\pm$0.019 & 0.15$\pm$0.002 & 0.26$\pm$0.028 & 0.48$\pm$0.023 \\
    \midrule
    H=4 w/o cost & -     & 0.14$\pm$0.019 & 0.21$\pm$0.007 & 0.33$\pm$0.023 & - \\
    \midrule
    H=6 w/ cost & 0.33$\pm$0.02 & 0.32$\pm$0.025 & 0.4$\pm$0.009 & 0.48$\pm$0.059 & 0.38$\pm$0.075 \\
    \midrule
    H=6 w/o cost & -     & 0.32$\pm$0.025 & 0.54$\pm$0.02 & 0.62$\pm$0.075 & - \\
    \midrule
    H=8 w/ cost & 0.52$\pm$0.084 & 0.52$\pm$0.051 & 0.58$\pm$0.072 & 0.67$\pm$0.031 & 0.4$\pm$0.022 \\
    \midrule
    H=8 w/o cost & -     & 0.52$\pm$0.051 & 0.72$\pm$0.035 & 0.82$\pm$0.074 & - \\
    \midrule
    H=10 w/ cost & 0.73$\pm$0.02 & 0.73$\pm$0.037 & 0.9$\pm$0.169 & 1.03$\pm$0.019 & 0.15$\pm$0.188 \\
    \midrule
    H=10 w/o cost & -     & 0.73$\pm$0.037 & 1.08$\pm$0.14 & 1.25$\pm$0.062 & - \\
    \bottomrule
    \end{tabular}%
  }
  \caption{Experimental results for Grid3x3  with  different horizon lengths and cost functions.}
  \label{tab:grid3x3}%
\end{table}
\paragraph{Results and analysis.}
The results of different horizons and communication  costs over 20 random seeds are shown in \Cref{tab:dectiger} and \Cref{tab:grid3x3}. 
Additionally, the attained values are presented in \Cref{fig:bar}, and the learning curves are shown in \Cref{fig:line}. The results show that  communication is beneficial for agents to obtain higher values with better sample efficiency. Also, lower communication costs can encourage agents to share more information,  and thereby achieve a better joint strategy in terms of the values attained by the team.

\begin{figure}[!t]
    \centering
        \begin{subfigure}{0.4\linewidth}
            \centering
            \includegraphics[width=\linewidth]{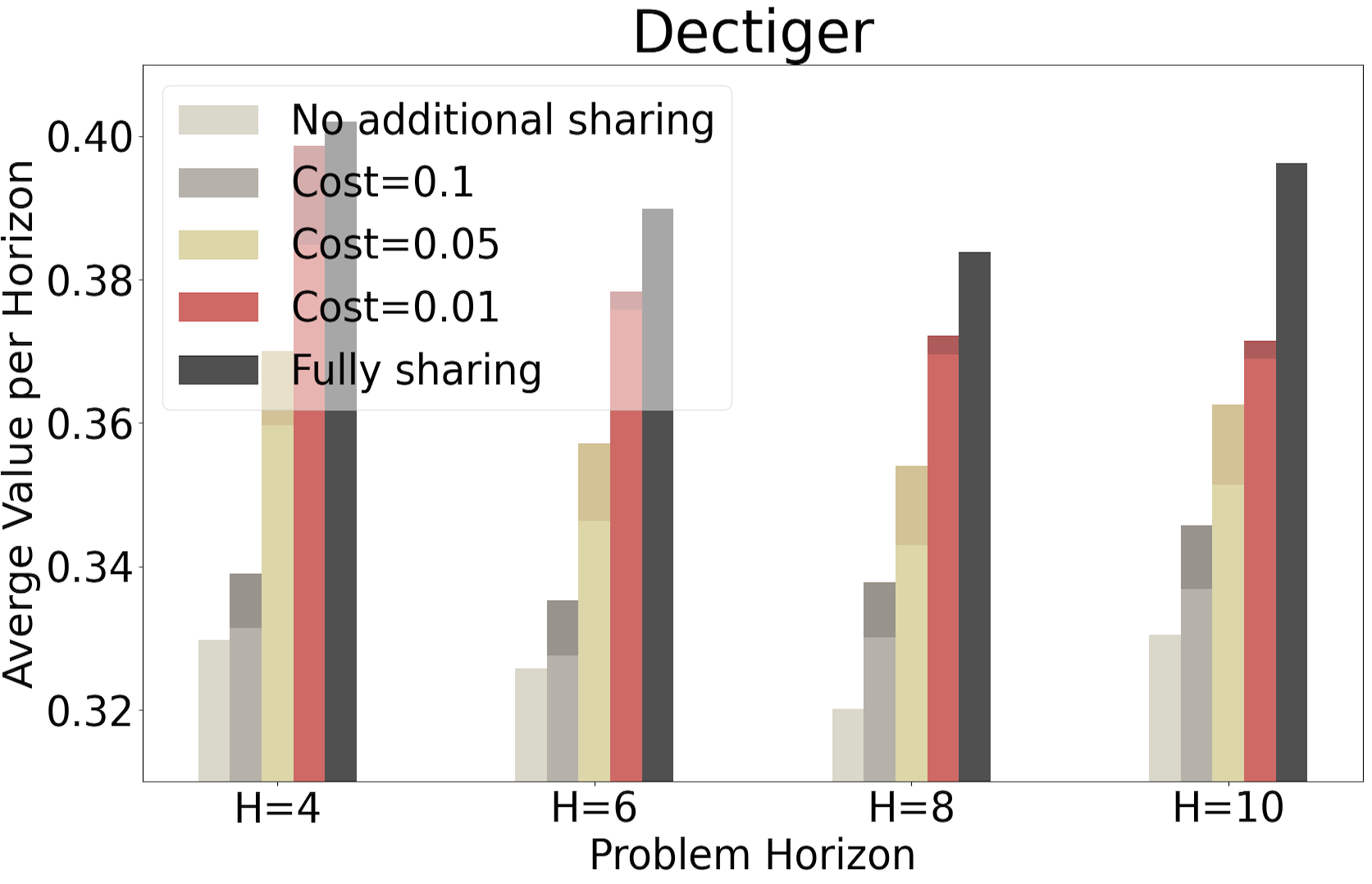}
        \end{subfigure}
        \hspace{10mm}
        \begin{subfigure}{0.4\linewidth}
            \centering
            \includegraphics[width=\linewidth]{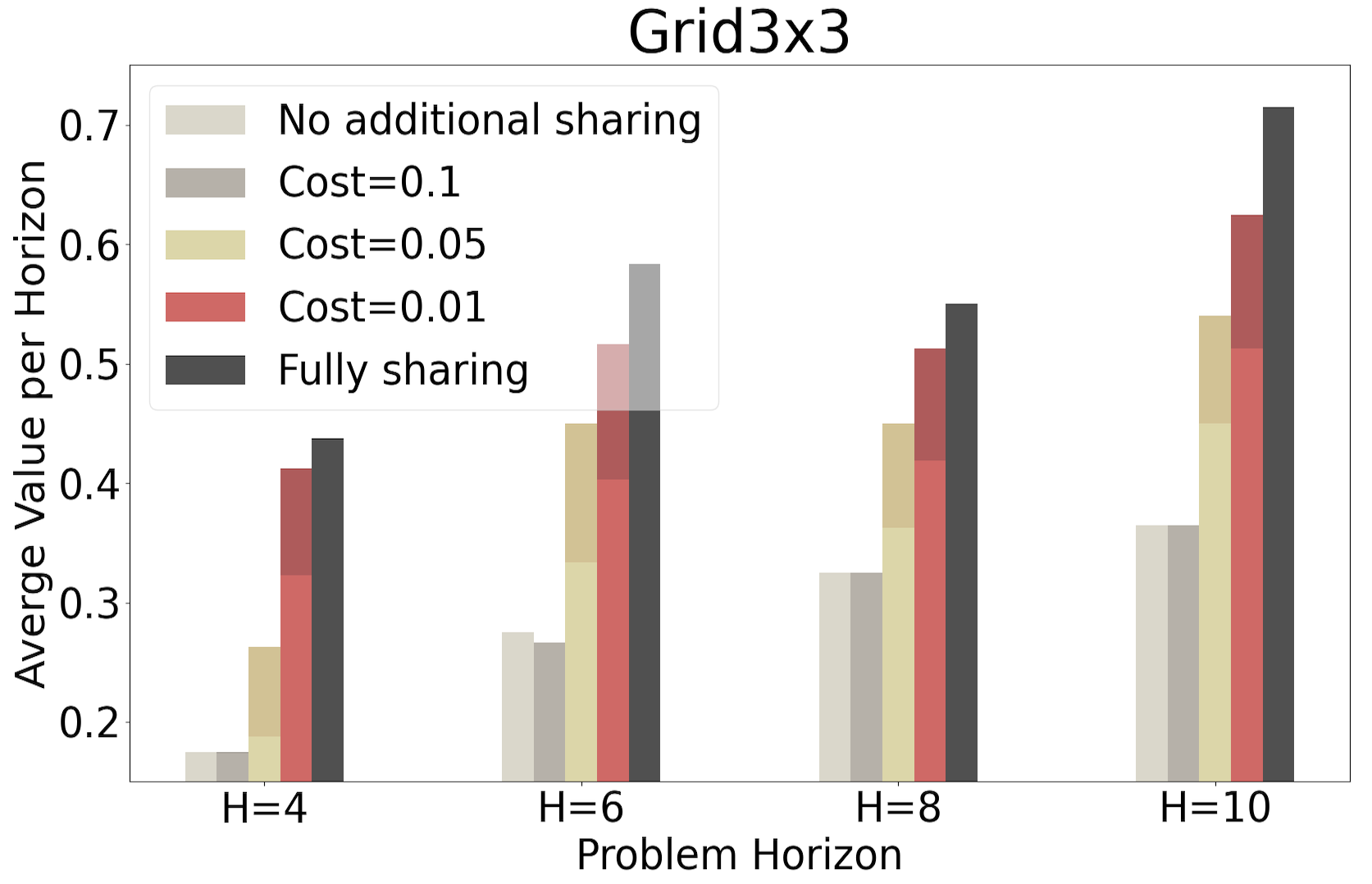}
        \end{subfigure}
    \caption{The time-average values achieved under different communication costs and horizons. 
    For each bar, the dark portion and the light portion correspond to the values associated with the communication cost  and the overall objective (reward minus cost) of the agents, respectively; the full bar corresponds to the values associated with the reward.}
    \label{fig:bar}
\end{figure}
\begin{figure}[!t]
    \centering
        \begin{subfigure}{0.4\linewidth}
            \centering
            \includegraphics[width=\linewidth]{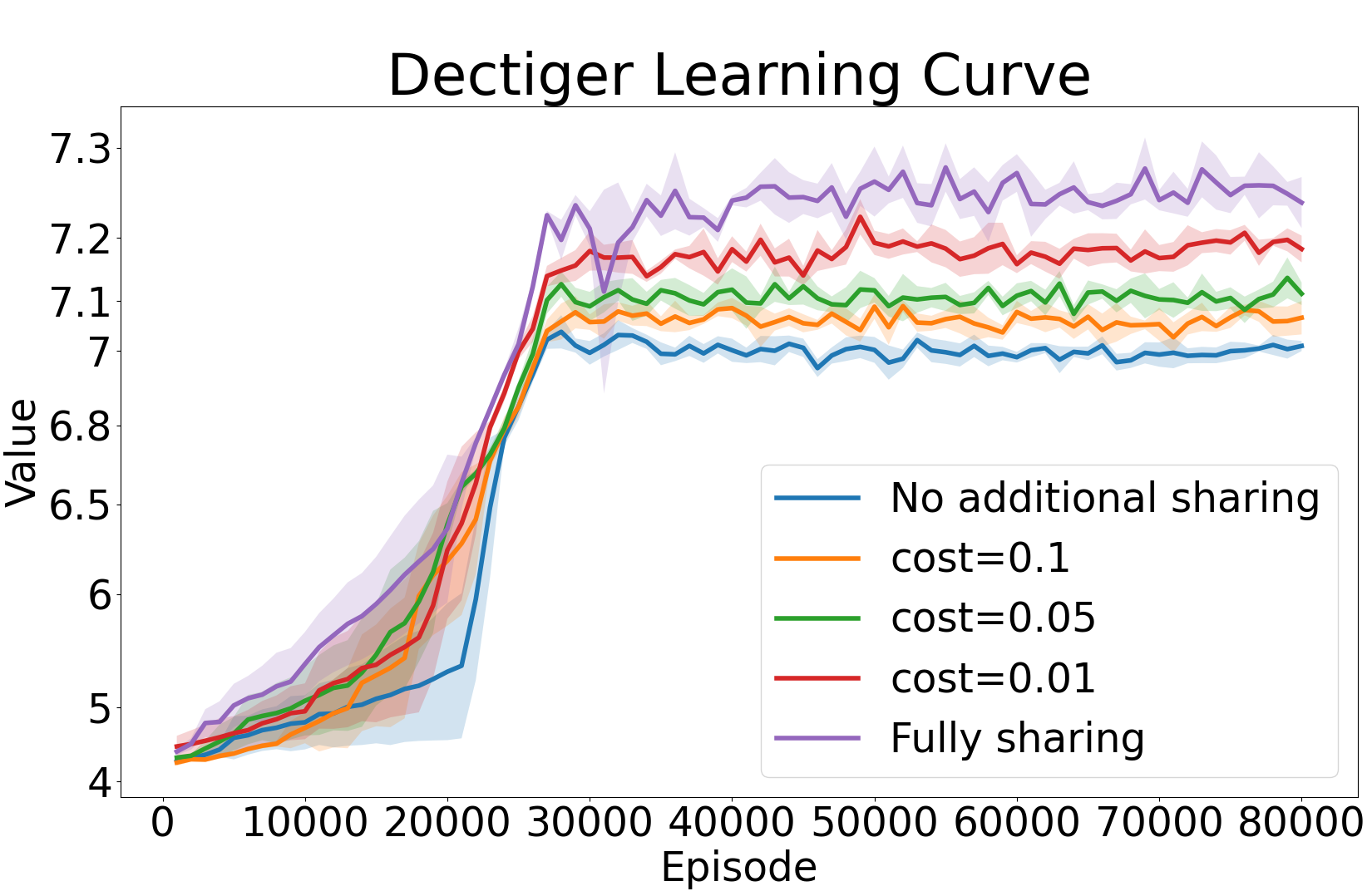}
        \end{subfigure}
        \hspace{10mm}
        \begin{subfigure}{0.4\linewidth}
            \centering
            \includegraphics[width=\linewidth]{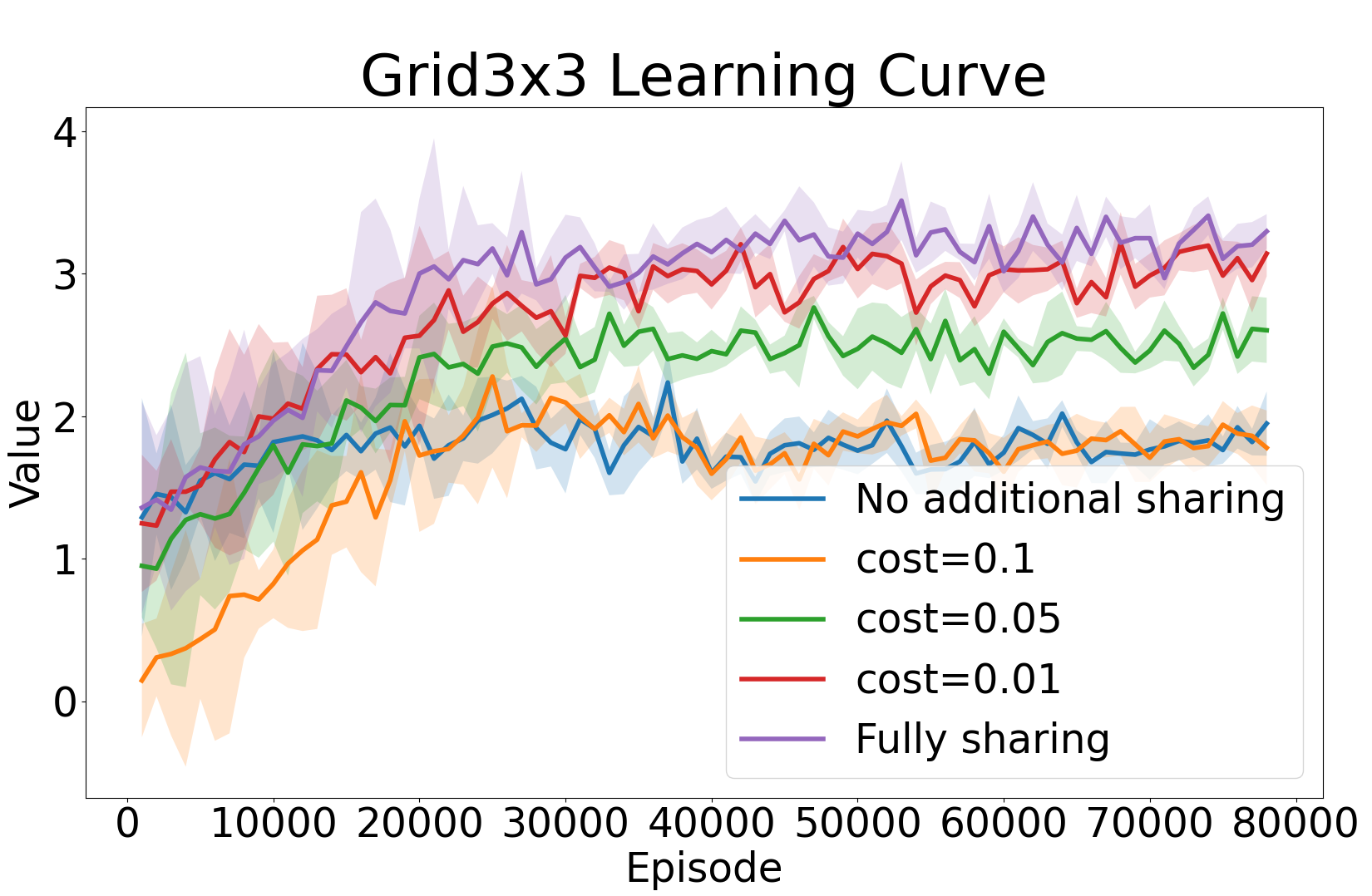}
        \end{subfigure}
    \caption{%
    Learning curves, i.e., the values associated with the overall objective (reward minus cost) achieved during learning, under different communication costs.
    }
    \label{fig:line}
\end{figure}

\section{Concluding Remarks}

We formalized the learning-to-communicate problem under the common-information-based Dec-POMDP framework, and proposed a few structural assumptions for LTCs with quasi-classical information structures, under which LTCs preserve the QC IS after information sharing, whereas violating them can 
cause computational hardness in general. We then developed provable planning and learning algorithms for QC LTCs.  
Along the way, we also established a relationship between 
the strictly quasi-classical information structure and the condition of having strategy-independent common-information-based beliefs, as well as solving general Dec-POMDPs without computationally intractable oracles, beyond those with SI-CIBs.   
Our work has opened up many future directions, including the formulation, together with the development of provable planning and learning algorithms, of LTC in \emph{non-cooperative} (game-theoretic) settings, and the relaxation of (some of) the structural assumptions when it comes to equilibrium computation.

\section*{Acknowledgment}
The authors thank the anonymous reviewers of IEEE CDC 2025 for the  feedback, and Aditya Mahajan for the valuable discussion. The authors also acknowledge  
the support from the Army Research Office grant   W911NF-24-1-0085 and the NSF CAREER Award 2443704. K.Z. additionally  acknowledges the support from the AFOSR YIP Award FA9550-25-1-0258, an AI
Safety Research Award from Coefficient Giving, and a JP Morgan Faculty Research Award.  

\bibliographystyle{IEEEtran}
\bibliography{IEEEabrv,reference}

\clearpage
\onecolumn
\appendix

~\\
\centerline{{\fontsize{14}{14}\selectfont \textbf{Appendices}}}

\section{Examples of QC LTC Problems}
\label{sec: examples of QC}
 
In this section, we introduce $8$ examples of QC LTC problems, with 4 of them being extended from the information structures of the baseline sharing protocols  considered in the literature \cite{ashutosh2013game,liu2023tractable}. It can be shown that LTC with any of these 8 examples as baseline sharing is QC.
\begin{itemize}
    \item \textbf{Example 1: One-step delayed information sharing.} At timestep $h\in[H]$, agents will share all the action-observation history in the private information until timestep $h-1$. Namely, for any $h\in[H],i\in[n], c_{h^-}=c_{(h-1)^+}\cup\{o_{h-1},a_{h-1}\}$ and $p_{i,h^-}=\{o_{i,h}\}$.
    \item \textbf{Example 2: State controlled by one controller with asymmetric delayed information sharing.}  The state transition dynamics are controlled by only one agent (without loss of generality, agent 1), i.e., $\forall h\in[H], \TT_h(\cdot\given s_h,a_{1,h},a_{-1,h})=\TT_h(\cdot\given s_h,a_{1,h},a_{-1,h}')$ for all $s_h\in\cS,a_{1,h}\in\cA_{1,h},a_{-1,h}, a_{-1,h}'\in\cA_{-1,h}$. Agent 1 will share all of her information immediately, while others will share their information with a delay of $d\ge 1$ timesteps\footnote{Throughout this paper, we view the delay $d$ as a \emph{constant}, although our final bounds in \S\ref{Solution for sQC Dec-POMDP} and \S\ref{sec:LTC_learning_results} also apply for $d=\text{poly}\log H$. See the proofs in \S\ref{proof details sec 4}  for more discussions.} 
    in the baseline sharing. Namely, for any $h\in[H],i\neq 1$, $c_{h^-}=c_{(h-1)^+}\cup\{a_{1,h-1},o_{1,h},o_{-1,h-d}\},p_{1,h^-}=\emptyset, p_{i,h^-}=(p_{i,(h-1)^+}\cup\{o_{i,h}\})\backslash\{o_{i,h-d}\}$.  
    \item \textbf{Example 3: Information sharing with one-directional-one-step-delay.} For convenience, we assume there are 2 agents, and this example can be readily generalized to the multi-agent case. In this case, agent 1 will share the information  immediately, while agent 2 will share information with one-step delay. Namely,  $c_{1^-}=\{o_{1,1}\}, p_{1,1^-}=\emptyset, p_{2,1^-}=\{o_{2,1}\}$; for any $h\ge 2, c_{h^-}=c_{(h-1)^+}\cup\{o_{1,h},o_{2,h-1},a_{h-1}\}, p_{1,h^-}=\emptyset,p_{2,h^-}=\{o_{2,h}\}$.
    \item \textbf{Example 4: Uncontrolled state process.} The state transition dynamics do not depend on the actions of agents, i.e., $\forall h\in[H], \TT_h(\cdot\given s_h,a_h)=\TT_h(\cdot\given s_h,a_h')$ for any $s_h\in\cS,a_h, a_h'\in\cA_h$. All agents will share their information with a delay of $d\ge 1$ timesteps. For any $h\in[H],i\in [n]$, $c_{h^-}=c_{(h-1)^+}\cup\{o_{h-d}\}, p_{i,h^-}=(p_{i,(h-1)^+}\cup\{o_{i,h}\})\backslash\{o_{i,h-d}\}$.
    \item \textbf{Example 5: One-step delayed observation sharing.} At timestep $h\in[H]$, each agent has access to the observations of all agents until timestep $h-1$ and her present observation. Namely, for any $h\in[H], i\in[n], c_{h^-}=c_{(h-1)^+}\cup\{o_{h-1}\}$ and $p_{i,h^-}=\{o_{i,h}\}$.
    \item \textbf{Example 6: One-step delayed observation and two-step delayed action sharing.} At timestep $h\in[H]$, each agent will share the observation history until timestep $h-1$ and action  history until timestep $h-2$ from the private information. Namely, for any $h\in[H], i\in[n], c_{h^-}=c_{(h-1)^+}\cup\{o_{h-1},a_{h-2}\}, p_{i,h^-}=\{o_{i,h},a_{i,h-1}\}$. 
    \item  \textbf{Example 7: State controlled by one controller with asymmetric delayed observation sharing.} The state transition dynamics are controlled by only one agent (i.e., the same as {\bf Example 2}). Agent 1 will share all of her observations immediately, while others will share their observations with a delay of $d\ge 1$ timesteps. Namely, for any $h\in[H], i\neq 1, c_{h^-}=c_{(h-1)^+}\cup \{o_{1,h},o_{-1,h-d}\}, p_{1,h^-}=\emptyset, p_{i,h^-}=(p_{i,(h-1)^+}\cup \{o_{i,h}\})\backslash\{o_{i,h-d}\}$.
    \item \textbf{Example 8:  State controlled by one controller with asymmetric delayed observation and two-step delayed action sharing.}  The state transition dynamics are controlled by only one agent (i.e.,   the same as {\bf Example 2}). At timestep $h\in[H]$, agent 1 will share all of her observations immediately and her action history until timestep $h-2$, while others will share their observations with a delay of $d\ge 1$ timesteps. Namely, for any $h\in[H], i\neq 1, c_{h^-}=c_{(h-1)^+}\cup\{o_{1,h}, a_{1,h-2},o_{-1,h-d}\}, p_{1,h^-}=\{a_{1,h-1}\}, p_{i,h^-}=(p_{i,(h-1)^+}\cup\{o_{i,h}\})\backslash\{o_{i,h-d}\}$. 
    \item For computational consideration  (i.e., when it comes to \S\ref{sec:positive_results}),
    we may need additional conditions as part of the examples' definitions: 
    \begin{itemize}
        \item {\bf Examples 2, 4, 7, 8:} for any $h\in[H]$, the reward function $\cR_h$ has an additive form, i.e., $\cR_h(s_h,a_h)=\sum_{i=1}^n\cR_{i,h}(s_h,a_{i,h})$.
        \item {\bf Examples 1, 5, 6:} either one of the following conditions holds:  
        \begin{enumerate}
            \item For any $h\in[H]$, the state $s_h$ can be  controlled by one agent $ct(h)$, i.e., $\TT_h:\cS\times\cA_{ct(h),h}\rightarrow \Delta(\cS)$. Furthermore, the reward function has an additive form, i.e., $\cR_h(s_h,a_h)=\sum_{i=1}^n\cR_{i,h}(s_h,a_{i,h})$, and the increment of the common information satisfies $z_{h+1}^b=\chi_{h+1}(p_{h^+}, a_{ct(h),h}, o_{h+1})$; \label{cond:tract_1}
            \item For any $h\in[H]$, the state $s_h$ can be partitioned into $n$ local states as $s_h=(s_{1,h},s_{2,h},\cdots, s_{n,h})$, and  the transition kernel and observation emission have the factorized  forms of $\TT_h(s_{h+1}\given s_h, a_h)=\prod_{i=1}^n \TT_{i,h}(s_{i,h+1}\given s_{i,h}, a_{i,h}), \OO_h(o_h\given s_h)=\prod_{i=1}^n \OO_{i,h}(o_{i,h}\given s_{i,h})$. Furthermore, the communication cost and reward functions can be  decoupled as $\cK_h(z_h^a)=\sum_{i=1}^n \cK_{i,h}(z_{i,h}^a), \cR_h(s_h,a_h)=\sum_{i=1}^n\cR_{i,h}(s_{i,h},a_{i,h})$; \label{cond:tract_2}
            \item This condition is only applicable to {\bf Examples 1} and  {\bf 5}. For any $h\in[H]$, the emission $\OO_{h}$ satisfies that: $\forall i<j\le n, o_{j,h}\in \cO_{j,h}, \exists o_{i,h}\in\cO_{i,h}$ such that $\forall s_h\in\cS, \OO_{\{i,j\},h}(o_{i,h}, o_{j,h}\given s_h)=\OO_{j,h}(o_{j,h}\given s_h)$, where  $\OO_{\{i,j\},h}$ is the marginalized distribution of $\OO_h$ with respect to agents $i$ and $j$. \label{cond:tract_3}
        \end{enumerate}
        \item {\bf Example 3:} we do not need any additional condition. 
    \end{itemize}
\end{itemize}

\begin{remark}
    The additional conditions  on $\TT_h,\OO_h,\cK_h,\cR_h$ are \emph{sufficient conditions} to ensure the one-step tractability of the backward induction when solving the LTC problem (see Assumption \ref{assu: one_step_tract}). Note that the (s)QC  information structures of these examples are unchanged even without these additional structural conditions.  See the proofs in \S\ref{proof details sec 4} for more details. 
\end{remark}
In fact, the first 4 examples are all sQC LTC problems, while the other 4 examples are QC but may not be sQC problems, as shown in the following lemma.    

\begin{lemma}
    Given an LTC problem $\cL$. If the baseline sharing of $\cL$ is one of the first 4 examples above, then $\cL$ is sQC. If the baseline sharing of $\cL$ is one of the last 4 examples above, then $\cL$ is QC but may not be sQC.
\end{lemma}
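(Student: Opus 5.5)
The plan is to work directly in the intrinsic model of the induced Dec-POMDP $\overline{\cD}_\cL$ (Definition \ref{def:LTC_induced_Dec-POMDP}), where agent $(i,h)$ has information $\tau_{i,h}$ (baseline sharing only, so $h^-=h^+$). First I characterize the ``influences'' relation. Agent $(i_1,h_1)$ with $h_1<h_2$ influences $(i_2,h_2)$ only through two channels. The first is the state: $a_{i_1,h_1}$ affects $s_{h_1+1}$, hence the observations $o_{j,t}$ for $t>h_1$, and through them $\tau_{i_2,h_2}$. The second is direct inclusion of $a_{i_1,h_1}$ in $\tau_{i_2,h_2}$. I will also need a converse: under Assumption \ref{weak gamma observability}, whenever $a_{i_1,h_1}$ affects the transition, it genuinely changes the law of $\tau_{i_2,h_2}$ for $h_2>h_1$, since $\tau_{i_2,h_2}$ contains $o_{i_2,h_2}$ or shared observations. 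Agents at the same timestep never influence one another, because actions are taken simultaneously after the observations.

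\textbf{Step 1: sQC for Examples 1--4.} For each example I check that whenever $(i_1,h_1)$ influences $(i_2,h_2)$, we have $\sigma(\tau_{i_1,h_1})\subseteq\sigma(\tau_{i_2,h_2})$ and $a_{i_1,h_1}\in\tau_{i_2,h_2}$.
\begin{itemize}
\item Example 1: $\tau_{i_2,h_2}\supseteq c_{h_2}\ni o_{1:h_2-1},a_{1:h_2-1}$, which contains every $\tau_{i_1,h_1}$ with $h_1<h_2$ together with all earlier actions.
\item Example 3: $c_h$ contains $o_{1,1:h}$, $o_{2,1:h-1}$ and $a_{1:h-1}$, so the same inclusion holds.
\item Example 2: only agent $1$ influences the state. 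Agent $1$'s information is exactly $c_{h_1}$ (since $p_{1,h}=\emptyset$), and $a_{1,h_1}\in c_{h_1+1}$. The agents $i\neq1$ influence nobody, except possibly through actions appearing in others' information. Here the delayed sharing is of private information, which consists of observations only, so nothing else needs to be checked.
\item Example 4: the state is uncontrolled and actions are not shared (consistent with Assumption \ref{useless action}), so nobody influences anybody and sQC holds vacuously.
\end{itemize}
Since $c_h$ is nondecreasing, every inclusion propagates forward in time.

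\textbf{Step 2: QC but not sQC for Examples 5--8.} Here QC requires only $\sigma(\tau_{i_1,h_1})\subseteq\sigma(\tau_{i_2,h_2})$ up to the strategies in use. The key fact is that the actions can be recomputed.
\begin{itemize}
\item Example 5: given $c_{h_2}\ni o_{1:h_2-1}$ and the (fixed) strategies of earlier agents, each $a_{j,t}=g_{j,t}(\tau_{j,t})$ with $t<h_2$ is a function of $o_{1:t}$, by induction on $t$. Hence $\sigma(\tau_{i_1,h_1})\subseteq\sigma(o_{1:h_2-1})$, which is the quasi-classical (recursive) inclusion.
\item Example 6: $a_{h-2}$ is shared and $a_{i,h-1}$ is private, but $a_{j,h-1}$ is a function of $o_{1:h-1}\subseteq c_h$. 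The same recursion works.
\item Examples 7 and 8: only agent $1$ influences the state, and agent $1$'s information is generated by $o_{1,1:h}$ (plus $a_{1,1:h-2}$ and its own $a_{1,h-1}$ in Example 8). Her actions are therefore recoverable from the common information via her strategy.
\end{itemize}
To see that these examples may fail to be sQC, I exhibit a small instance in which the action $a_{1,h_1}$ affects the state, yet $a_{1,h_1}\notin\tau_{i_2,h_1+1}$ for some $i_2\neq1$. For example, take two agents in Example 5 with a nondegenerate emission, as guaranteed by Assumption \ref{weak gamma observability}. The fixed-transformation definition of sQC demands the action itself, not merely a strategy-dependent function of it, so sQC fails.

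\textbf{Main obstacle.} The hardest point is making the influence relation precise in the intrinsic-model sense. ``Influence'' must be shown to occur only through the two channels above, and the recursive inclusion in Step 2 must hold for every strategy profile rather than for a particular one. I expect to handle this by a careful induction on $h_2$, maintaining the invariant that, for every strategy profile,
\[
\sigma\Big(\textstyle\bigcup_{t<h_2,\,j}\tau_{j,t}\Big)\subseteq\sigma(c_{h_2}).
\]
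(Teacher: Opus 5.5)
Your Step 1 (Examples 1--4) is fine and matches the paper's argument, which is pure set inclusion: the influencing agent's information and action lie in $c_{h_1+1}\subseteq c_{h_2}$. The gap is in Step 2.

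\textbf{The notion of inclusion in Step 2.} You obtain QC for Examples 5--8 by recomputing actions from observations through the fixed strategies. That is, you prove an inclusion that holds only ``up to the strategies in use.'' The QC definition instead asks for $\mathscr{I}_{l_1}\subseteq\mathscr{I}_{l_2}$ between information fields of the intrinsic model, where actions are free coordinates, so it is strategy-independent. Admitting the strategy-dependent notion also makes the second half of the lemma false. Since $a_{i_1,h_1}=g_{i_1,h_1}(\tau_{i_1,h_1})$, any such inclusion $\sigma(\tau_{i_1,h_1})\subseteq\sigma(\tau_{i_2,h_2})$ immediately yields $\sigma(a_{i_1,h_1})\subseteq\sigma(\tau_{i_2,h_2})$, i.e.\ sQC. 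Your own sentence for Examples 7--8 (``her actions are therefore recoverable from the common information via her strategy'') is exactly this argument. You then switch to the structural reading (``demands the action itself'') to conclude that sQC fails. The two halves therefore rely on incompatible notions of inclusion.

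\textbf{The missing observation.} No recomputation is needed, because in each of Examples 5--8 the influencing agent's information is literally a subset of $c_{h_1+1}$.
In Example 5, $\tau_{i,h}=\{o_{1:h-1},o_{i,h}\}\subseteq\{o_{1:h}\}=c_{h+1}$.
In Example 6, $\tau_{i,h}=\{o_{1:h-1},a_{1:h-2},o_{i,h},a_{i,h-1}\}\subseteq\{o_{1:h},a_{1:h-1}\}=c_{h+1}$; the private $a_{i,h-1}$ is shared at step $h+1$, so it needs no recovery.
In Example 7, $\tau_{1,h}=c_h$.
In Example 8, $\tau_{1,h}=c_h\cup\{a_{1,h-1}\}\subseteq c_{h+1}$.
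Since $c_h$ is nondecreasing, this gives the structural QC inclusion. Failure of sQC then follows under the same notion from $a_{i_1,h_1}\notin\tau_{i_2,h_1+1}$, in an instance where $(i_1,h_1)$ influences $(i_2,h_1+1)$ through the state. In Examples 5 and 7 the controlling actions are never shared at all. This is the paper's proof.

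\textbf{The proposed invariant.} Your invariant $\sigma(\bigcup_{t<h_2,j}\tau_{j,t})\subseteq\sigma(c_{h_2})$ over \emph{all} $j$ is false for the delayed-sharing Examples 2, 4, 7, 8 with $d\ge 2$. For instance, $o_{2,h_2-1}\in\tau_{2,h_2-1}$, while $c_{h_2}$ contains only $o_{-1,1:h_2-d}$, and this fails even for a fixed strategy. The invariant must be restricted to agents who actually influence someone, as you already do in Step 1.
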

\begin{proof}
    Let $\overline{\cD}_{\cL}$ denote  the Dec-POMDP induced by $\cL$ (see \Cref{def:LTC_induced_Dec-POMDP}). We prove this lemma case by case. For convenience, we use $~\check{}~$ for  the notation of the variables/quantities in $\overline{\cD}_{\cL}$.
    \begin{itemize}
        \item \textbf{Example 1:} The information in $\overline{\cD}_\cL$ evolves as $\forall h\in[H], i\in[n], \check c_h=\{\check o_{1:h-1},\check a_{1:h-1}\}$ and $\check p_{i,h}=\{\check o_{i,h}\}$. Then, for any $i_1,i_2\in[n],h_1,h_2\in [H], h_1<h_2$,  $\check \tau_{i_1,h_1}=\{\check o_{1:h_1-1},\check a_{1:h_1-1},\check o_{i_1,h_1}\}\subseteq \check c_{h_1+1}\subseteq \check c_{h_2}\subseteq \check \tau_{i_2,h_2}$, and $\check a_{i_1,h_1}\subseteq \check c_{h_1+1}\subseteq \check c_{h_2}\subseteq \check \tau_{i_2,h_2}$. Therefore, we have $\sigma(\check \tau_{i_1,h_1})\subseteq \sigma(\check \tau_{i_2,h_2})$, and thus $\cL$ is sQC.
        \item \textbf{Example 2:} The information in $\overline{\cD}_\cL$ evolves as $\forall h\in[H], i\neq 1, \check c_h=\{\check a_{1,1:h-1},\check o_{1,1:h-1},\check o_{-1,1:h-d}\}, \check p_{1,h}=\emptyset, \check p_{i,h}=\{o_{i,h-d+1:h}\}$. Then, for any $i_1,i_2\in[n],h_1,h_2\in [H], h_1<h_2$. If $i_1\neq 1$, then agent $(i_1,h_1)$ will not influence agent $(i_2,h_2)$. If $i_1=1$, then $\check \tau_{i_1,h_1}=\{\check o_{1, 1:h_1},\check a_{1, 1:h_1-1},\check o_{-1,1:h_1-d}\}\subseteq \check c_{h_1+1}\subseteq \check c_{h_2}\subseteq \check \tau_{i_2,h_2}$, and $\check a_{i_1,h_1}\subseteq \check c_{h_1+1}\subseteq \check c_{h_2}\subseteq \check \tau_{i_2,h_2}$. Therefore, we have $\sigma(\check \tau_{i_1,h_1})\subseteq \sigma(\check \tau_{i_2,h_2})$ if agent $(i_1,h_1)$ influences agent $(i_2,h_2)$, and thus $\cL$ is sQC.
        \item \textbf{Example 3:} The information in $\overline{\cD}_\cL$ evolves as $\forall h\in[H], \check c_h=\{\check o_{1:h-1},\check a_{1:h-1},\check o_{1,h}\}$ and $\check p_{1,h}=\emptyset, \check p_{2,h}=\{\check o_{2,h}\}$. Then, for any $i_1,i_2\in[n],h_1,h_2\in [H], h_1<h_2$, $\check a_{i_1,h_1}\subseteq \check c_{h_1+1}\subseteq \check c_{h_2}\subseteq \check \tau_{i_2,h_2}$. If $i_1=1$, then $\check \tau_{i_1,h_1}=\{\check o_{1:h_1-1},\check a_{1:h_1-1},\check o_{1,h_1}\}\subseteq \check c_{h_1+1}\subseteq \check c_{h_2}\subseteq \check \tau_{i_2,h_2}$. If $i_1=2$, then $\check \tau_{i_1,h_1}=\{\check o_{1:h_1},\check a_{1:h_1-1}\}\subseteq \check c_{h_1+1}\subseteq \check c_{h_2}\subseteq \check \tau_{i_2,h_2}$. Therefore, we have $\sigma(\check \tau_{i_1,h_1})\subseteq \sigma(\check \tau_{i_2,h_2})$, and thus $\cL$ is sQC.
        \item \textbf{Example 4:} Since in $\overline{\cD}_\cL$, for any $i_1,i_2\in[n],h_1,h_2\in [H]$, agent $(i_1,h_1)$ does not influence agent $(i_2,h_2)$, then $\cL$ is sQC.
        \item  \textbf{Example 5:} The information in $\overline{\cD}_\cL$ evolves as $\forall h\in[H], i\in[n], \check c_h=\{\check o_{1:h-1}\}$ and $\check p_{i,h}=\{\check o_{i,h}\}$. Then, for any $i_1,i_2\in[n],h_1,h_2\in [H], h_1<h_2$,  $\check \tau_{i_1,h_1}=\{\check o_{1:h_1-1}, \check o_{i_1,h_1}\}\subseteq \check c_{h_1+1}\subseteq \check c_{h_2}\subseteq \check \tau_{i_2,h_2}$. Hence, $\cL$ is QC. However, for any $i_1, i_2\in[n], h_1<h_2\le H$, agent $(i_1,h_1)$ may influence agent $(i_2,h_2)$ but $\sigma(\check a_{i_1,h_1})\nsubseteq\sigma(\check \tau_{i_2, h_2})$. Hence, $\cL$ may not be sQC.
        \item \textbf{Example 6:} The information in $\overline{\cD}_\cL$ evolves as $\forall h\in[H], i\in[n], \check c_h=\{\check o_{1:h-1}, \check a_{1:h-2}\}$ and $\check p_{i,h}=\{\check o_{i,h}, \check a_{i,h-1}\}$. Then, for any $i_1,i_2\in[n],h_1,h_2\in [H], h_1<h_2$,  $\check \tau_{i_1,h_1}=\{\check o_{1:h_1-1},\check a_{1:h_1-2},\check o_{i_1,h_1},\check a_{i_1,h_1-1}\}\subseteq \check c_{h_1+1}\subseteq \check c_{h_2}\subseteq \check \tau_{i_2,h_2}$, and $\check a_{i_1,h_1}\subseteq \check c_{h_1+1}\subseteq \check c_{h_2}\subseteq \check \tau_{i_2,h_2}$. Hence, $\cL$ is QC.  However, for any $i_1\neq i_2\in[n], h_1\in[H-1]$,  agent $(i_1,h_1)$ may influence agent $(i_2,h_1+1)$ but $\sigma(\check a_{i_1,h_1})\nsubseteq\sigma(\check \tau_{i_2,h_1+1})$. Hence, $\cL$ may not be sQC.
        \item \textbf{Example 7:} The information in $\overline{\cD}_\cL$ evolves as $\forall h\in[H], i\neq 1, \check c_h=\{\check o_{1,1:h-1},\check o_{-1,1:h-d}\}, \check p_{1,h}=\emptyset, \check p_{i,h}=\{o_{i,h-d+1:h}\}$. Then, for any $i_1,i_2\in[n],h_1,h_2\in [H], h_1<h_2$. If $i_1\neq 1$, then agent $(i_1,h_1)$ will not influence agent $(i_2,h_2)$. If $i_1=1$, then $\check \tau_{i_1,h_1}=\{\check o_{1, 1:h_1}, \check o_{-1,1:h_1-d}\}\subseteq \check c_{h_1+1}\subseteq \check c_{h_2}\subseteq \check \tau_{i_2,h_2}$. Therefore, we have $\sigma(\check \tau_{i_1,h_1})\subseteq \sigma(\check \tau_{i_2,h_2})$ if agent $(i_1,h_1)$ influences agent $(i_2,h_2)$. Hence, $\cL$ is QC. However, for any $i_1\in [n], h_1<h_2\le H$, agent $(1,h_1)$ may influence agent $(i_1,h_2)$ but $\sigma(\check a_{1,h_1})\nsubseteq\sigma(\check \tau_{i_1, h_2})$. Hence, $\cL$ may not be sQC.
        \item \textbf{Example 8:} The information in $\overline{\cD}_\cL$ evolves as $\forall h\in[H], i\neq 1, \check c_h=\{\check o_{1,1:h-1},\check a_{1,1:h-2}, \check o_{-1,1:h-d}\}, \check p_{1,h}=\{\check a_{1,h-1}\}, \check p_{i,h}=\{\check{o}_{i,h-d+1:h}\}$. Then, for any $i_1,i_2\in[n],h_1,h_2\in [H], h_1<h_2$. If $i_1\neq 1$, then agent $(i_1,h_1)$ will not influence agent $(i_2,h_2)$. If $i_1=1$, then $\check \tau_{i_1,h_1}=\{\check o_{1,1:h_1}, \check a_{1,h_1-1}, \check o_{-1,1:h_1-d}\}\subseteq \check c_{h_1+1}\subseteq \check c_{h_2}\subseteq \check \tau_{i_2,h_2}$. Therefore, we have $\sigma(\check \tau_{i_1,h_1})\subseteq \sigma(\check \tau_{i_2,h_2})$ if agent $(i_1,h_1)$ influences agent $(i_2,h_2)$. Hence, $\cL$ is QC. However, for any $i_1\neq 1,h_1\in[H-1]$, agent $(1,h_1)$ may influence agent $(i_1,h_1+1)$ but $\sigma(\check a_{1,h_1})\nsubseteq\sigma(\check \tau_{i_1,h_1+1})$. Hence, $\cL$ may not be sQC.
    \end{itemize}
    This completes the proof. 
\end{proof}

\section{Deferred Details of  \S\ref{section 3}}
\label{sec: proof details sec 3}

\subsection{Supporting Lemmas}\label{sec:supporting_lemmas}

We start by proving several supporting lemmas that will be used in the later proofs in this section. 

\begin{lemma} \label{lemma: influence no-to-any}
    Given any QC LTC $\cL$, its induced Dec-POMDP    $\overline{\cD}_{\cL}$,  and any $i_1,i_2\in[n],h_1,h_2\in[H]$. If agent $(i_1,h_1)$ influences agent $(i_2,h_2)$ in the intrinsic model of $\overline{\cD}_\cL$, then for the random variables $\tau_{i_1,h_1^-},\tau_{i_2,h_2^-}$ in $\cL$, we have $\sigma(\tau_{i_1,h_1^-})\subseteq \sigma(\tau_{i_2,h_2^-})$. 
    Moreover, if $\cL$ is sQC, then for random variables $a_{i_1,h_1},\tau_{i_2,h_2^-}$ in $\cL$, we have $\sigma(a_{i_1,h_1})\subseteq \sigma(\tau_{i_2,h_2^-})$.
\end{lemma}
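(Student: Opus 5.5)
The plan is to translate the QC/sQC definition for the induced Dec-POMDP $\overline{\cD}_\cL$ back into statements about the random variables of $\cL$. The key observation is that $\overline{\cD}_\cL$ is, by Definition \ref{def:LTC_induced_Dec-POMDP}, exactly $\cL$ with the additional sharing switched off. Hence the information available to agent $(i,h)$ in the intrinsic model of $\overline{\cD}_\cL$, say $\check\tau_{i,h}=\check p_{i,h}\cup\check c_h$, is generated by the same fixed transformations $\chi_h$ and $\xi_{i,h}$ of Assumption \ref{assumption:information evolution}(a),(c) as $\tau_{i,h^-}$ in $\cL$, except that in $\cL$ one also has the extra sets $z^a_t$ for $t<h$.

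\textbf{Step 1: comparing $\check\tau_{i,h}$ with $\tau_{i,h^-}$.} I will prove by induction on $h$ that every primitive variable (observation or action) contained in $\check\tau_{i,h}$ is also contained in $\tau_{i,h^-}$. The base case is immediate: at $h=1$ no additional sharing has occurred, so the two sets coincide. For the inductive step, Assumption \ref{assumption:information evolution}(d) says that additional sharing only moves elements from $p_{i,t^-}$ into $c_{t^+}$; it never deletes information from $\tau_{i,t^+}$, and by (e) $\tau_{i,t^-}\subseteq\tau_{i,t^+}\subseteq\tau_{i,(t+1)^-}$. Because $\chi_t,\xi_{i,t}$ are fixed transformations, baseline sharing and private-information updates act on the same labelled primitive variables in both models. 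So anything $(i,h)$ knows in $\overline{\cD}_\cL$ it also knows in $\cL$, whether privately or commonly. Since the QC definition only concerns which primitives are measurable, this gives $\sigma(\check\tau_{i,h})\subseteq\sigma(\tau_{i,h^-})$, provided the primitives are identified as random variables on the common probability space built from $(s_1,o,a)$.

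\textbf{Step 2: converse inclusion for the influencing agent.} I also need $\sigma(\tau_{i_1,h_1^-})\subseteq\sigma(\check\tau_{i_1,h_1})\vee(\text{primitives already in }\tau_{i_2,h_2^-})$. The difference $\tau_{i_1,h_1^-}\setminus\check\tau_{i_1,h_1}$ consists only of elements of $z^a_t$ with $t<h_1$. These lie in $c_{t^+}\subseteq c_{h_1^-}\subseteq c_{h_2^-}\subseteq\tau_{i_2,h_2^-}$, using monotonicity of common information ($h_1\le h_2$, since influence goes forward in time). Combining this with the QC property, $\sigma(\check\tau_{i_1,h_1})\subseteq\sigma(\check\tau_{i_2,h_2})$, and with Step 1 applied to $(i_2,h_2)$, gives $\sigma(\tau_{i_1,h_1^-})\subseteq\sigma(\tau_{i_2,h_2^-})$. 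The sQC part then follows directly: sQC gives $\sigma(\check a_{i_1,h_1})\subseteq\sigma(\check\tau_{i_2,h_2})$, and Step 1 transfers this to $\tau_{i_2,h_2^-}$.

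\textbf{Main obstacle.} The delicate point is that $\sigma$-algebras in $\cL$ depend on the strategies, including communication strategies that determine which elements enter $z^a$. The inclusion in Step 1 must therefore be argued at the level of labelled primitive variables, which is where fixedness of $\chi_h,\xi_{i,h}$ and Assumption \ref{assumption:sigma_include} come in, rather than at the level of realized values. I expect to handle this by working with the sets of primitive random variables and noting that adding variables to an agent's information can only enlarge the generated $\sigma$-algebra. Assumption \ref{assumption:sigma_include} is used to ensure that anything $(i_2,h_2)$ gets from another agent sits in the common information, so no private-to-private transfer breaks the argument.
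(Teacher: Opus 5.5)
Your proposal is correct and follows essentially the same route as the paper. Both arguments combine three facts: the QC (resp.\ sQC) inclusion $\sigma(\check\tau_{i_1,h_1})\subseteq\sigma(\check\tau_{i_2,h_2})$ in $\overline{\cD}_\cL$; the observation $\tau_{i_1,h_1^-}\setminus\check\tau_{i_1,h_1}\subseteq\cup_{t<h_1}z^a_t\subseteq c_{h_1^-}\subseteq c_{h_2^-}$; and $\sigma(\check\tau_{i_2,h_2})\subseteq\sigma(\tau_{i_2,h_2^-})$. Your Step 1 induction only spells out this last inclusion, which the paper uses without proof. Assumption \ref{assumption:sigma_include} is not needed anywhere, because under Assumption \ref{assumption:information evolution} all cross-agent transfer of information already goes through common information.
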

\begin{proof}
    We denote by $\check{\tau}_{i_1,h_1},\check{\tau}_{i_2,h_2}$ the information of agent $(i_1,h_1),(i_2,h_2)$ in the problem $\overline{\cD}_\cL$. From the definition of $\overline{\cD}_\cL$ being QC, if agent $(i_1,h_1)$ influences agent $(i_2,h_2)$, then $\sigma(\check{\tau}_{i_1,h_1})\subseteq \sigma(\check{\tau}_{i_2,h_2})$. Since for any $h\in[H], i\in[n], \check{\tau}_{i,h}$ is the information of agent $(i,h)$ without additional sharing, then we know that $\tau_{i,h^-}\backslash \check{\tau}_{i,h}\subseteq \cup_{t=1}^{h-1} z_t^a, \tau_{i,h^+}\backslash \check{\tau}_{i,h}\subseteq \cup_{t=1}^{h} z_t^a$. Therefore, we know that $\sigma(\tau_{i_1,h_1^-}\backslash\check{\tau}_{i_1,h_1})\subseteq \sigma(\cup_{t=1}^{h-1} z_t^a)\subseteq \sigma(c_{h_1^-})\subseteq \sigma(c_{h_2^-})\subseteq\sigma(\tau_{i_2,h_2^-})$. Also, we know  that $\sigma(\check{\tau}_{i_1,h_1})\subseteq \sigma(\check{\tau}_{i_2,h_2})\subseteq \sigma(\tau_{i_2,h_2^-})$. Thus, we can conclude that $\sigma(\tau_{i_1,h_1^-})\subseteq \sigma(\tau_{i_2,h_2^-})$. Moreover, if $\cL$ is sQC, then from the definition of $\overline{\cD}_\cL$ being sQC and agent $(i_1,h_1)$ influences agent $(i_2,h_2)$ in $\overline{\cD}_\cL$, it holds that $\sigma(a_{i_1,h_1})\subseteq\sigma(\check{\tau}_{i_2,h_2})\subseteq \sigma(\tau_{i_2,h_2^-})$.
\end{proof} 
\begin{lemma}
\label{lemma: action influence}
    Let $\cL$ be a QC LTC problem satisfying Assumptions  \ref{useless action}  and \ref{weak gamma observability}, and $\cD_\cL$ be the reformulated Dec-POMDP. Then, for any  $i_1,i_2\in[n],t_1,t_2\in[H]$, if agent $(i_1,2t_1)$ influences agent $(i_2,2t_2)$ in $\cD_\cL$, then $\sigma(\tau_{i_1,t_1^-})\subseteq\sigma(\tau_{i_2,t_2^-})$ in $\cL$.  Moreover, if $\cL$ is sQC, then $\sigma(a_{i_1,t_1})\subseteq \sigma(\tau_{i_2,t_2^-})$.
\end{lemma}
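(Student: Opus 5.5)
The plan is to reduce the claim to Lemma \ref{lemma: influence no-to-any} by showing that any influence of $(i_1,2t_1)$ on $(i_2,2t_2)$ in the intrinsic model of $\cD_\cL$ corresponds to an influence of $(i_1,t_1)$ on $(i_2,t_2)$ in the intrinsic model of the induced Dec-POMDP $\overline{\cD}_\cL$. Since QC and sQC of $\cL$ are defined through $\overline{\cD}_\cL$ (Definition \ref{def:LTC_QC}), Lemma \ref{lemma: influence no-to-any} would then give $\sigma(\tau_{i_1,t_1^-})\subseteq\sigma(\tau_{i_2,t_2^-})$, and in the sQC case also $\sigma(a_{i_1,t_1})\subseteq\sigma(\tau_{i_2,t_2^-})$.

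First I would classify the channels through which $\tilde a_{i_1,2t_1}=a_{i_1,t_1}$ can affect the information $\tilde\tau_{i_2,2t_2}=\tau_{i_2,t_2^+}$. There are two. The \emph{physical} channel is via $s_{t_1+1}$, hence via later observations. The \emph{informational} channel is via the appearance of $a_{i_1,t_1}$ itself (or functions of it) in $\tau_{i_2,t_2^+}$, through baseline or additional sharing. In $\cD_\cL$ the odd-step agents use only common information (Assumption \ref{limited communication strategy}), so the communication actions are themselves functions of common information. Hence influence through intermediate odd-step agents reduces to influence through the common information generated by earlier even-step agents.

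I would treat the informational channel with Assumption \ref{useless action}. If $a_{i_1,t_1}$ does not affect $\TT_{t_1}$, then it appears in no later $\tau$ at all, so the informational channel is empty. Influence can therefore only exist when $a_{i_1,t_1}$ affects $s_{t_1+1}$. In that case I would use Assumption \ref{weak gamma observability}: distinct beliefs over $s_{t_1+1}$ induce distinct distributions of $o_{-i,t_1+1}$ for every $i$. From this, changing $a_{i_1,t_1}$ changes the observations of agents other than $i_1$, which in $\overline{\cD}_\cL$ means that $(i_1,t_1)$ influences some agent $(j,t_1+1)$. If $i_2\neq i_1$, choose $j=i_2$. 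If $i_2=i_1$, the non-degeneracy of $\OO_{-i,h}$ (excluding some $j\neq i_1$) still lets us pick an agent $j$ whose observation depends on the action. Propagating this through the chain of influences in $\overline{\cD}_\cL$ gives influence of $(i_1,t_1)$ on $(i_2,t_2)$ itself, or on a chain ending there. Because the QC inclusion is transitive along influence chains, either case is enough.

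The main obstacle is making the second step rigorous: defining influence in the intrinsic model carefully enough, especially for $t_2=t_1$ or when the influence passes only through additional sharing, which is absent in $\overline{\cD}_\cL$. For the additional-sharing case, I would argue that whatever $a_{i_1,t_1}$ contributes via $z^a$ is a function of $p_{i_1,\cdot}$ and common information. Combined with Assumption \ref{assumption:sigma_include}, this means the only residual dependence is again through the physical channel. In the degenerate case $t_2=t_1$, an even-step agent cannot influence a simultaneous one, so the case is vacuous.
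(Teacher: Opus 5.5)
\textbf{There is a genuine gap, in your reduction step.} You try to show that if $(i_1,2t_1)$ influences $(i_2,2t_2)$ in $\cD_\cL$, then $(i_1,t_1)$ influences $(i_2,t_2)$ in $\overline{\cD}_\cL$, or some influence chain there ends at $(i_2,t_2)$. This is false in general, because in $\cD_\cL$ the influence can be carried by additional sharing, which $\overline{\cD}_\cL$ does not have. Your step ``if $i_2\neq i_1$, choose $j=i_2$'' also misreads Assumption~\ref{weak gamma observability}. That assumption only makes the \emph{joint} emission $\OO_{-i_1,h}$ injective; $o_{i_2,t_1+1}$ by itself may carry no information about the state.

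Here is a concrete instance. Take $n=3$, with only $a_{1,h}$ driving the state, $o_{1,h}=o_{3,h}=s_h$, and $o_{2,h}$ constant. Let baseline sharing share only agent~1's observations with one-step delay, $c_{h^-}=c_{(h-1)^+}\cup\{o_{1,h-1}\}$, while agents 2 and 3 keep their observations private. This LTC is QC and satisfies the assumptions of the lemma. In $\cD_\cL$, agent $(1,2t_1)$ influences $(2,2t_1+2)$ as soon as agent~3 additionally shares $o_{3,t_1+1}$. In $\overline{\cD}_\cL$, however, $(1,t_1)$ does not influence $(2,t_1+1)$, and no influence chain reaches it. Your remark that the additional-sharing channel collapses to the physical one does not repair this, because here the physical channel reaches agent~2 at time $t_1+1$ only through additional sharing.

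\textbf{The missing idea is that you never need an influence relation ending at $(i_2,t_2)$.} First, Assumption~\ref{useless action} rules out the case where $a_{i_1,t_1}$ does not affect $s_{t_1+1}$, as you note. Then Assumption~\ref{weak gamma observability} gives \emph{some} $i_3\neq i_1$ whose $o_{i_3,t_1+1}$ is influenced. By Assumption~\ref{assumption:information evolution}(e), $o_{i_3,t_1+1}\in\tau_{i_3,t_2^-}$ even without additional sharing. So $(i_1,t_1)$ influences $(i_3,t_2)$ in $\overline{\cD}_\cL$, and Lemma~\ref{lemma: influence no-to-any} gives $\sigma(\tau_{i_1,t_1^-})\subseteq\sigma(\tau_{i_3,t_2^-})$; under sQC it also gives $\sigma(a_{i_1,t_1})\subseteq\sigma(\tau_{i_3,t_2^-})$. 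Since $i_3\neq i_1$, Assumption~\ref{assumption:sigma_include} moves these inclusions into $\sigma(c_{t_2^-})\subseteq\sigma(\tau_{i_2,t_2^-})$, for every $i_2$.

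This also covers $i_2=i_1$. There the sQC claim is not automatic, because agents need not recall their own actions. You use Assumption~\ref{assumption:sigma_include} only peripherally. Its real job is to route agent $i_1$'s information and action into the common information, and without that step your argument does not close.
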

\begin{proof}
    We prove this case by case as follows:  \\
     1) If $a_{i_1,t_1}$ influences the underlying state $s_{t_1+1}$, then from Assumption \ref{weak gamma observability}, agent $(i_1,t_1)$ influences $o_{-i_1,t_1+1}$, so there must exist $i_3\neq i_1$, such that agent $(i_1,t_1)$ influences $o_{i_3,t_1+1}$. From part (e) of Assumption \ref{assumption:information evolution} and $t_1<t_2$ (since otherwise agent $(i_1,2t_1)$ cannot influence agent $(i_2,2t_2)$ in $\cD_\cL$), we know that $o_{i_3,t_1+1}\in\tau_{i_3,(t_1+1)^-}\subseteq \tau_{i_3,t_2^-}$ even under no additional sharing, and then we get agent $(i_1,t_1)$ influences agent $(i_3,t_2)$ in $\overline{\cD}_\cL$ (the Dec-POMDP induced by $\cL$). From Lemma \ref{lemma: influence no-to-any}, it holds that $\sigma(\tau_{i_1,t_1^-})\subseteq \sigma(\tau_{i_3,t_2^-})$.  From Assumption \ref{assumption:sigma_include} and $i_3\neq i_1$, we know that $\sigma(\tau_{i_1,t_1^-})\subseteq \sigma(c_{t_2^-})\subseteq\sigma(\tau_{i_2,t_2^-}).$ If $\cL$ is sQC, by Lemma \ref{lemma: influence no-to-any},  we have $\sigma(a_{i_1,t_1})\subseteq \sigma(\tau_{i_3,t_2^-})$ , and then $\sigma(a_{i_1,t_1})\subseteq \sigma(c_{t_2^-})\subseteq\sigma(\tau_{i_2,t_2^-})$ from Assumption \ref{assumption:sigma_include}. \\
    2) If $a_{i_1,t_1}$ does not influence $s_{t_1+1}$, then from Assumption \ref{useless action}, 
    for any $t>t_1, a_{i_1,t_1}\notin \tau_{t^-}$ and $a_{i_1,t_1}\notin \tau_{t^+}$, and then agent $(i_1,2t_1)$ does not influence $\tilde{s}_{2t_1+1}$ and $\tilde{o}_{2t_1+1}$ in $\cD_\cL$. Thus $\tilde{a}_{i_1,2t_1}=a_{i_1,t_1}$ does not influence $\tilde{\tau}_{i,2t_1+1},\forall i\in[n]$, and then it does not influence $\tilde{a}_{i,2t_1+1},\forall i\in[n]$.  And hence, it does not influence $\tilde{\tau}_{i,2t_1+2}$ and $\tilde{a}_{i,2t_1+2},\forall i\in[n]$, either. By recursion, we know that agent $(i_1,2t_1)$ does not influence agent $(i_2,2t_2)$, which leads to a contradiction to the premise of the lemma.  
    This completes the proof. 
\end{proof}

\subsection{Proof of \Cref{lemma: nonqc_hardness}}

\begin{proof}
We first have the following proposition on the hardness of solving POMDPs.  
\begin{proposition}
    \label{prop: epsilon-additive}
    For any $\epsilon\in[0,\frac{1}{4})$, computing an  $\epsilon$-additive optimal strategy in POMDPs with rewards bounded in [0,1/2] is \texttt{PSPACE-hard}.
\end{proposition}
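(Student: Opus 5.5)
The statement to prove is Proposition~\ref{prop: epsilon-additive}: for $\epsilon\in[0,\frac14)$, computing an $\epsilon$-additive optimal strategy in POMDPs with rewards in $[0,1/2]$ is \texttt{PSPACE-hard}. The plan is to reduce from the classical \texttt{PSPACE-hard} problem of \cite{papadimitriou1987complexity}: deciding whether a finite-horizon POMDP with deterministic transitions and $\{0,1\}$-valued costs admits a policy of expected total cost $0$. In that construction, built from a QBF formula, either some policy achieves cost exactly $0$, or every policy incurs cost $1$ with positive probability. To make this a robust additive gap, I will use the amplified form from \cite{lusena2001nonapproximability}. There the POMDP stays polynomial-size, and in the ``no'' case every policy reaches a designated bad terminal event with probability at least some fixed constant. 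In the ``yes'' case some policy never reaches it.

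Next I rescale into the required reward range. The reward is $1/2$ at the final step if the bad event has \emph{not} occurred, and $0$ otherwise; all intermediate rewards are $0$. Hence the optimal value is $1/2$ in the ``yes'' case. In the ``no'' case every policy has value at most $\frac12\,(1-q)$, where $q$ is the failure probability. If the amplification drives $q$ up to $1$, or at least makes $\frac12 q>\epsilon$ for every $\epsilon<\frac14$, then the threshold $\frac14$ comes out naturally. So I need $q>\frac12$. To obtain it, I will repeat the formula-checking game $k$ independent times within a single episode. Each repetition uses fresh random clause choices hidden by partial observability, and the episode fails if any repetition fails. This drives $q\ge 1-(1-q_0)^k$ above $\frac12$ with $k=O(1/q_0)$.

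Given an $\epsilon$-optimal strategy $\pi$, I can evaluate its value exactly in polynomial space, or more simply decide as follows. In the ``yes'' case $\pi$'s value is at least $\frac12-\epsilon>\frac14$. In the ``no'' case its value is at most $\frac12(1-q)<\frac14$. Comparing $\pi$'s value against $\frac14$ therefore decides QBF. One caveat: the output strategy may be exponentially large as a history-dependent map. I will handle this in the standard way, by treating the computation as an oracle queried on the histories actually encountered. The value can then be computed by a polynomial-space DFS over the polynomial-depth tree of histories.

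The main obstacle is the gap amplification step. I must make sure the repeated games cannot be coordinated so that information gained in one repetition helps cheat in another. Independence of the hidden randomness, together with resetting the observation process between repetitions, should give this, following the argument of \cite{lusena2001nonapproximability}. If this proves delicate, I will instead cite their nonapproximability theorem directly and only perform the affine reward rescaling into $[0,1/2]$.
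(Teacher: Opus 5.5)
Your fallback, which is to cite the nonapproximability theorem of \cite{lusena2001nonapproximability} and rescale, is exactly the paper's proof. The paper takes the construction of \cite[Theorem 4.11]{lusena2001nonapproximability} with parameter $\epsilon_1=4\epsilon<1$, whose rewards lie in $\{0,2\}$. It asserts that $\epsilon_1$-additive approximation is \texttt{PSPACE-hard} on those instances, and then multiplies all rewards by $\frac14$.

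Your amplification step is sound but unnecessary, since that theorem already covers every $\epsilon_1<1$. The obstacle you worry about is resolved by conditioning. Given any history of the first $j-1$ rounds, round $j$ is a fresh copy of the game, so the policy restricted to it succeeds with probability at most $1-q_0$. The chain rule then bounds overall success by $(1-q_0)^k$.

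The step that fails as written is the decision procedure. Evaluating $\pi$'s value exactly by a DFS over exponentially many histories uses polynomial space but exponential time. A reduction allowed polynomial space proves no \texttt{PSPACE}-hardness at all, because it could decide the QBF instance itself without consulting $\pi$.

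To extract the answer in polynomial time you need a different route. One option is to simulate $\pi$ for $O(\log(1/\delta))$ episodes, which suffices because the gap between $\frac12-\epsilon$ and $\frac12(1-q)$ around $\frac14$ is a constant. This gives only a randomized reduction, i.e., no polynomial-time algorithm unless $\texttt{PSPACE}\subseteq\texttt{BPP}$. The other option is to phrase the hard problem as approximating the optimal \emph{value}, in which case comparing the output with $\frac14$ is immediate.

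The paper sidesteps this issue by inheriting whatever notion of hardness \cite{lusena2001nonapproximability} establishes. Your fallback is therefore the safe route, and the self-contained route needs one of the fixes above at its last step.
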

\begin{proof}[Proof of \Cref{prop: epsilon-additive}]
    In the proof of \cite[Theorem 4.11]{lusena2001nonapproximability}, given any $\epsilon\in[0,1)$, it constructed POMDPs from the problems of Stochastic Satisfiability (SSAT) of the Quantified Boolean Formulae. The constructed instances in \cite{lusena2001nonapproximability} satisfy that the reward values lie in $\{0,2\}$, and it was then proved that finding an $\epsilon$-relative approximately optimal solution in such POMDPs is \texttt{PSPACE-hard}. Also, one can verify that finding an $\epsilon$-additive approximately optimal solution in such POMDPs is \texttt{PSPACE-hard}.

    Then, for any $\epsilon\in[0,\frac{1}{4})$, let $\epsilon_1=4\epsilon\in[0,1)$, and leverage the construction in \cite[Theorem 4.11]{lusena2001nonapproximability} with $\epsilon_1$, but scaling the reward values by $\frac{1}{4}$ such that rewards are bounded in $[0,\frac{1}{2}]$. Then, finding an $\epsilon$-additive approximately optimal solution in such POMDPs (after scaling) is \texttt{PSPACE-hard}. 
\end{proof}
Now we proceed with the proof of Lemma \ref{lemma: nonqc_hardness} based on Proposition \ref{prop: epsilon-additive}. Given any POMDP $\cP=(\cS^{\cP},\cA^{\cP},\cO^{\cP},\{\OO_h^{\cP}\}_{h\in[H^\cP]},\{\TT_h^{\cP}\}_{h\in[H^\cP]},\{\cR_h^{\cP}\}_{h\in[H^\cP]}, \mu_1^{\cP})$ with rewards bounded in $[0,1/2]$,  we can construct an LTC $\cL$ with any $\alpha\in(0,1)$ as follows:
\begin{itemize}
    \item Number of agents: $n=3$; length of episode: $H=H^\cP$.
    \item Underlying state space: $\cS=\cS^\cP\times[2]$. For any $s\in\cS$, we can split $s=(s^1,s^2)$, where $s^1\in\cS^\cP,s^2\in[2]$. Initial state distribution: $\forall s\in\cS, \mu_1(s)=\mu_1^\cP(s^1)/2$.
    \item Control action space: For any $h\in [H],\cA_{1,h}=\cA^\cP, \cA_{2,h}=[2],\cA_{3,h}=\{\emptyset\}$. 
    \item Transition: For any $h\in[H], s_h,s_{h+1}\in\cS, a_h\in\cA_h, \TT_h(s_{h+1}\given s_h,a_h)=\TT_h^\cP(s_{h+1}^1\given s_h^1,a_{1,h})\mathds{1}[s_{h+1}^2=a_{2,h}]$.
    \item Observation space:  For any $h\in[H],  \cO_{1,h}=\cO^\cP_h,\cO_{2,h}=\cO_{3,h}=\cS$.
    \item Emission: For any $h\in[H], o_h\in\cO_h,s_h\in\cS,\OO_h(o_h\given s_h)=\OO_h^\cP(o_{1,h}\given s_h^1)\OO_h'(o_{2,h}\given s_h)\OO_h'(o_{3,h}\given s_h)$, where $\OO_h'$ is defined as
    \begin{equation*}
        \forall o\in \cS, s_h\in\cS,\qquad  \OO_h'(o\given s_h)=\left\{\begin{aligned}
            &\alpha+\frac{1-\alpha}{|\cS|}\quad~&\text{if } o=s_h\\&\frac{1-\alpha}{|\cS|}\quad~&\text{o.w.}
        \end{aligned}\right.. 
    \end{equation*}
    \item Baseline sharing: null. 
    \item Communication action space: For any $h\in[H],\cM_{1,h}=\cM_{2,h}=\{0,1\}^{2h-1}, \cM_{3,h}=\{0,1\}^h$. For any $i\in[2], p_{i,h^-}\in \cP_{i,h^-},\phi_{i,h}(p_{i,h^-},m_{i,h})=\{o_{i,k}\given \forall k\le h,\text{~the~} (2k-1)\text{-th digit of $m_{i,h}$ is 1 and } o_{i,k}\in p_{i,h^-}\}\cup\{a_{i,k}\given \forall k\le h-1, \text{ the } 2k\text{-th digit of $m_{i,h}$ is 1 and } a_{i,k}\in p_{i,h^-}\}\cup\{m_{i,h}\}.$ For agent $3,~ p_{3,h^-}\in\cP_{3,h^-},~ \phi_{3,h}(p_{3,h^-},m_{3,h})=\{o_{3,k}\given \forall k\le h, \text{ the }k\text{-th digit of $m_{3,h}$  is 1 and } o_{3,k}\in p_{3,h^-}\}\cup\{m_{3,h}\}$.  
    \item Reward function: For any $h\in[H]$, $\forall  s_h\in\cS,a_h\in\cA_h, \cR_h(s_h,a_h)=\cR_h^\cP(s_h^1,a_{1,h})/H$.
    \item Communication cost function: For any $h\in[H]$, $\forall  z_h^a\in\cZ_h^a, \cK_{h}(z_h^a)=\mathds{1}[z_h^a\neq\{m_h\}]$. It means that the communication cost is $1$ unless there is no additional sharing.
    \item We restrict the communication strategy to only use $c_h$ as input. And for any $t\in[H-1]$, we remove $a_{3,t}$ in $\tau_{h^-},\tau_{h^+}$ for all $h>t$.
\end{itemize} 
We first verify that such a construction satisfies Assumptions \ref{gamma observability}, \ref{limited communication strategy}, \ref{useless action}, and \ref{weak gamma observability}, but is not QC.
\begin{itemize} 
    \item $\cL$ satisfies Assumptions \ref{gamma observability}, \ref{weak gamma observability}, because both agent 2 and agent 3 have  individual $\gamma$-observability with $\gamma=\alpha$. That is, for any $b_1,b_2\in\Delta(\cS),i=2,3$, we have
    \begin{align*}
        &\norm{\OO_{i,h}^\top(b_1-b_2)}_1=\sum_{o_{i,h}\in\cO_{h}}|\sum_{s_h\in\cS}(b_1(s_h)-b_2(s_h))\OO_{i,h}(o_{i,h}\given s_h)|\\
        &\quad=\sum_{o_{i,h}\in\cO_{h}}|\sum_{s_h\in\cS}(b_1(s_h)-b_2(s_h))(\frac{1-\alpha}{|\cS|}+\alpha\mathds{1}[o_{i,h}=s_h])|\\
        &\quad=\sum_{o_{i,h}\in\cO_{h}}\alpha|b_1(o_{i,h})-b_2(o_{i,h})|=\alpha\norm{b_1-b_2}_1.
    \end{align*}
    \item $\cL$ satisfies Assumption \ref{limited communication strategy},  because we restrict that the communication strategy can only use $c_h$ as input.
    \item $\cL$ satisfies Assumption \ref{useless action} since the control actions $a_{3,t}$ (for all $t\in[H-1]$) do not influence the underlying state, and we remove $a_{3,t}$ from $\tau_h$ for any $h>t$. 
    \item $\cL$ is not QC, since for any $h\in[H-1],$ agent $(2,h)$ influences the state $s_{h+1}$ and then influences $o_{3,h+1}$ and $\tau_{3,h+1}$. However, agent $(3,h+1)$ does not know the information of agent $(2,h)$, i.e., $\sigma(\tau_{2,h})\nsubseteq 
\sigma(\tau_{3,h+1})$.
\end{itemize}

In this LTC problem $\cL$, let $(g_{1:H}^{a,\ast},g_{1:H}^{m,\ast})$ be any $\epsilon$-team  optimal strategy, with  $\epsilon\in[0,1/4)$. If any agent shares any information through additional sharing, i.e., $\exists h\in[H], \PP(z_h^a\neq \{m_h\}\given g_{1:H}^{a,\ast},g_{1:H}^{m,\ast})>0$, we then choose the $h$ to be the minimal one, i.e.,  $h=\min\{h'\in[H]\given \PP(z_{h'}^a\neq \{m_{h'}\}\given g_{1:H}^{a,\ast},g_{1:H}^{m,\ast})>0\}$ being the first time the additional sharing occurs. 
This means that at this timestep $h$, there is no observation or action in $c_{h^-}$ (almost surely), since baseline sharing is null.
Then, there exists an agent $i\in[2]$ such that $\PP(z_{i,h}^a\neq \{m_{i,h}\}\given g_{1:H}^{a,\ast},g_{1:H}^{m,\ast})>0$.

From the construction of $\cL$, we know that agent $i$ chooses $m_{i,h}$ based on $c_{h^-}$. It means it will always share, i.e.,  $\PP(z_{i,h}^a\neq \{m_{i,h}\}\given g_{1:H}^{a,\ast},g_{1:H}^{m,\ast})=1$. Therefore, $\PP(\kappa_h=1\given g_{1:H}^{a,\ast},g_{1:H}^{m,\ast})=1$, and $J_{\cL}(g_{1:H}^{a,\ast},g_{1:H}^{m,\ast})=\EE[\sum_{t=1}^H r_t-\kappa_t\given g_{1:H}^{a,\ast},g_{1:H}^{m,\ast}]\le \EE[\sum_{t=1}^H r_t\given g_{1:H}^{a,\ast},g_{1:H}^{m,\ast}]-\EE[\kappa_h\given g_{1:H}^{a,\ast},g_{1:H}^{m,\ast}]\le H\cdot \frac{1}{2H}-1\le -\frac{1}{2}$. Note that the rewards in $\cP$ are bounded by $[0,\frac{1}{2}]$, and the rewards in $\cL$ are  bounded by $[0,\frac{1}{2H}]$. Hence, $(g_{1:H}^{a,\ast},g_{1:H}^{m,\ast})$ is not an $\epsilon$-team-optimal  for any $\epsilon\in[0,1/4)$. 

Therefore, any $\epsilon$-team-optimal strategy yields no additional sharing. Then, any $(g_{1:H}^{a,\ast},g_{1:H}^{m,\ast})$ being an $\frac{\epsilon}{H}$-team-optimal strategy of $\cL$ will directly give an $\epsilon$-optimal strategy of $\cP$ as $\{g_{1,h}^{a,\ast}\}_{h\in [H]}$, since when there is no sharing, the decision process is only controlled by agent $1$.  From Proposition \ref{prop: epsilon-additive}, we complete the proof.
\end{proof}

\subsection{Proof of \Cref{lemma: limited communication}}\label{sec:proof of lemma: limited communication}

\begin{proof}
    We prove this result by showing a reduction from the Team Decision problem \cite{teamdecisionhardness}.
    \begin{definition}[Team decision problem (TDP)]
        Given finite sets $Y_1,Y_2,U_1,U_2$, a rational probability mass function $p:Y_1\times Y_2\rightarrow \mathbb{Q}$, and an integer cost function $c:Y_1\times Y_2\times U_1\times U_2\rightarrow \mathbb{N}$, find decision rules $\gamma_i:Y_i\rightarrow U_i, i=1,2$ that minimize the  expected cost
    \begin{equation}
            J(\gamma_1,\gamma_2)=\sum_{y_1\in Y_1,y_2\in Y_2} c(y_1,y_2,\gamma_1(y_1),\gamma_2(y_2))p(y_1,y_2).
        \end{equation}
    \end{definition}
    We show the \texttt{NP-hardness} of solving LTC from the problem of TDP, even with $|U_1|=|U_2|=2$ \cite{teamdecisionhardness}.  Given any  TDP $\cT\cD=(\tilde{Y}_1,\tilde{Y}_2,\tilde{U}_1,\tilde{U}_2,\tilde{c},\tilde{p}, \tilde{J})$ with $|\tilde{U}_1|=|\tilde{U}_2|=2$, let $\tilde{U}_1=\{1,2\},\tilde{U}_2=\{1,2\}$, then we can construct an  $H=3$ and $2$-agent LTC $\cL$ with two parameters $\alpha_1\in\RR, \alpha_2\in(0,1)$ (to be specified later) such that: 
    \begin{itemize}
        \item Number of agents: $n=2$. 
        \item Underlying state: $\cS=[2]^4$. For each $s_1\in \cS$, we can split $s_1$ into 4 parts as $s_1=(s_1^1, s_1^2, s_1^3, s_1^4)$, where $s_1^1,s_1^2,s_1^3, s_1^4\in [2]$.  Similarly, $s_2,s_3\in\cS$ can be split in the same way.
        \item Initial state distribution: $\forall s_1\in \cS, \mu_1(s_1)=\frac{1}{16}$.
        \item Control action space: For the first $2$ timesteps,  $\forall i=1,2,\cA_{i,1}=\cA_{i,2}=\{\emptyset\}$; for $h=3,\cA_{i,3}=[2]$.  
        \item Transition: 
        $\forall s\in \cS, a_1\in \cA_1,a_2\in\cA_2,a_3\in\cA_3, \TT_1(s\given s,a_1)=\TT_2(s\given s,a_2)=\TT_3(s\given s,a_3)=1$. Note that under the transition dynamics above, $s_1=s_2=s_3$ always holds, for any $s_1\in\cS$.
        \item Observation space: $\cO_{1,1}=\cO_{2,1}=\cO_{1,2}=\cO_{2,2}=[2]\times \cS$, $ \cO_{1,3}=\tilde{Y_1}\times\cS,\cO_{2,3}=\tilde{Y_2}\times\cS$. Hence, for each $i\in[2],h\in[2], o_{i,h}\in\cO_{i,h}$, we can split $o_{i,h}$ into 2 parts as $o_{i,h}=(o_{i,h}^1,o_{i,h}^2)$, where $o_{i,h}^1\in[2],o_{i,h}^2\in\cS$. For each $i\in[n], o_{i,3}\in\cO_{i,3}$, we can similarly  split $o_{i,3}$ into 2 parts as $o_{i,3}=(o_{i,3}^1,o_{i,3}^2)$, where $o_{i,3}^1\in\tilde{Y}_i,o_{i,3}^2\in\cS$. 
        \item Baseline sharing: null.
        \item Communication action space: For $i\in[2],h\in\{1,2\}, \cM_{i,h}=\{0,1\}^{h}$; $\phi_{i,h}$ is defined as:  $\forall h\in\{1,2\}, \phi_{i,h}(p_{i,h^-},m_{i,h})=\{o_{i,k}\given \forall k\le h, \text{the } k\text{-th}\text{ digit of $m_{i,h}$ is }1 \text{ and }o_{i,k}\in p_{i,h^-}\}$. For $h=3$, $\cM_{i,3}=\{1,2\}$, and if $m_{i,3}=1$, then $\phi_{i,3}(p_{i,3^-},m_{i,3})=\{o_{i,1},o_{i,3},m_{i,3}\}$; if $m_{i,3}=2$, then $\phi_{i,3}(p_{i,3^-},m_{i,3})=\{o_{i,2},o_{i,3},m_{i,3}\}$. 
        \item Emission: 
        For any $i\in[2],h\in[2],s_h\in\cS,o_{i,h}\in\cO_{i,h},\OO_{h}(o_h\given s_h)=\Pi_{i=1}^2\OO_{i,h}(o_{i,h}\given s_h)$ and $\OO_{i,h}(o_{i,h}\given s_h)$ is defined as:
        \begin{equation*}
            \OO_{i,h}(o_{i,h}\given s_h)=\begin{cases}
                \frac{1-\alpha_2}{16}&o_{i,h}^1=s_h^{i+2h-2},o_{i,h}^2\neq s_h\\
                \frac{1-\alpha_2}{16}+\alpha_2&o_{i,h}^1=s_h^{i+2h-2},o_{i,h}^2=s_h\\
                0& \text{o.w.}
            \end{cases}.
        \end{equation*}
        For $i\in[2],h=3$, $s_3\in\cS,o_{3}\in\cO_{3}, \OO_3(o_3\given s_3)=\tilde{p}(o_{1,3}^1,o_{2,3}^1)\Pi_{i=1}^2\OO_{i,3}^2(o_{i,3}^2\given s_3)$, where
                \begin{align*}
            \OO_{i,3}^2(o_{i,3}^2\given s_3)&=\begin{cases}
                \frac{1-\alpha_2}{16}&o_{i,3}^2\neq s_3\\
                \frac{1-\alpha_2}{16}+\alpha_2&o_{i,3}^2=s_3\\
            \end{cases}.
        \end{align*}
    \end{itemize}
    The reward functions are defined as: 
    \begin{align*}
        &\cR_{1}(s_1,a_1)=\cR_{2}(s_2,a_2)=0,~~ \forall s_1,s_2\in \cS, a_1\in \cA_1,a_2\in\cA_2;\\
        &\cR_{3}(s_3,a_3)=
        \begin{cases}
            1&\text{if ($a_{1,3}=s_3^2$ or $a_{1,3}=s_3^4$) and ($a_{2,3}=s_3^1$ or $a_{2,3}=s_3^3$)}\\
            0&\text{o.w.}\\
        \end{cases}. 
    \end{align*} 
    The communication cost functions are defined as:  
    \begin{align*}
        &\forall h\in\{1,2\},z_h^a\in\cZ_h^a,\qquad \cK_{h}(z_h^a)=1\text{ if $z_h^a\neq\{m_{1,h},m_{2,h}\}$, else }0;\\
        &\cK_{3}(z_3^a)=\begin{cases}
            \tilde{c}(o_{1,3}^1
            ,o_{2,3}^1,1,1)/\alpha_1&\text{if $\{o_{1,1},o_{2,1}\}\subseteq z_3^a$ and $\{o_{1,2},o_{2,2}\}\cap z_3^a=\emptyset$}\\
            \tilde{c}(o_{1,3}^1
            ,o_{2,3}^1,2,1)/\alpha_1&\text{if $\{o_{1,2},o_{2,1}\}\subseteq z_3^a$ and $\{o_{1,1},o_{2,2}\}\cap z_3^a=\emptyset$}\\
            \tilde{c}(o_{1,3}^1
            ,o_{2,3}^1,1,2)/\alpha_1&\text{if $\{o_{1,1},o_{2,2}\}\subseteq z_3^a$ and $\{o_{1,2},o_{2,1}\}\cap z_3^a=\emptyset$}\\
            \tilde{c}(o_{1,3}^1
            ,o_{2,3}^1,2,2)/\alpha_1&\text{if $\{o_{1,2},o_{2,2}\}\subseteq z_3^a$ and $\{o_{1,1},o_{2,1}\}\cap z_3^a=\emptyset$}\\
            0&\text{o.w.}
        \end{cases},
    \end{align*}
where we let $\alpha_0=\max_{y_1,y_2,u_1,u_2}\tilde{c}(y_1,y_2,u_1,u_2)$, and set $\alpha_1=2\alpha_0, \alpha_2\in(0,1)$. 
Note that without loss of optimality, we suppose $\alpha_1>0$, since if  $\alpha_1=0$, then $\tilde{c}(y_1,y_2,u_1,u_2)=0, \forall y_1\in \tilde{Y}_1,y_2\in\tilde{Y_2},u_1\in\tilde{U}_1,u_2\in\tilde{U}_2$,  which is a trivial instance that cannot be the one that leads to the hardness in \cite{teamdecisionhardness}. Hence, $0\leq\kappa_3=\cK_3(z_3^a)\leq \frac{1}{2}$ for any $z_3^a\in\cZ_3^a$ always holds. Also, we  remove $a_{i,t}$ in $\tau_{h^-}$ and $\tau_{h^+}$ for any $t\in[2],i\in[2], h>t$. 
Under such a construction, $\cL$ satisfies  the following conditions: 
\begin{itemize}
    \item $\cL$ is QC: For all $i_1,i_2\in[2],h_1,h_2\in[3]$ with $i_1\neq i_2$, agent $(i_1,h_1)$ does not influence $(i_2,h_2)$ because agent $(i_1,h_1)$ cannot influence the observations of agent $(i_2,h_2)$, and the baseline sharing is null. For the same agent with $i_1=i_2$, the information-inclusion assumption in \Cref{assumption:information evolution} (e) ensures that $\cL$ is QC. 
    \item $\cL$ satisfies Assumptions \ref{gamma observability} and \ref{weak gamma observability}: We prove this by showing that each agent $i\in[2]$ satisfies $\gamma$-observability. For any $i\in[2],h\in[2], b_1,b_2\in\Delta(\cS)$, let  
    \begin{align*}
        &\norm{\OO_{i,h}^\top (b_1-b_2)}_1
        =\sum_{o_{i,h}^1\in[2]}\sum_{o_{i,h}^2\in \cS}\bigg|\sum_{s_h\in\cS}(b_1(s_h)-b_2(s_h))\OO_{i,h}((o_{i,h}^1,o_{i,h}^2)\given s_h)\bigg|\\
        &\quad\ge\sum_{o_{i,h}^2\in \cS}\bigg|\sum_{o_{i,h}^1\in[2]}\sum_{s_h\in\cS}(b_1(s_h)-b_2(s_h))\OO_{i,h}((o_{i,h}^1,o_{i,h}^2)\given s_h)\bigg|\\
        &\quad= \sum_{o_{i,h}^2\in \cS}\bigg|\sum_{s_h\in\cS}\sum_{o_{i,h}^1\in[2]}(b_1(s_h)-b_2(s_h))\mathbbm{1}[o_{i,h}^1=s_h^{i+2h-2}](\frac{1-\alpha_2}{16}+\alpha_2\mathbbm{1}[o_{i,h}^2=s_h])\bigg|\\
        &\quad=\sum_{o_{i,h}^2\in \cS}\alpha_2|b_1(o_{i,h}^2)-b_2(o_{i,h}^2)|=\alpha_2\norm{b_1-b_2}_1.
    \end{align*}
    For any $i\in[2], h=3$, the proof is similar, where we  replace $o_{i,h}^1\in [2]$ with $o_{i,h}^1\in \tilde{Y}_i$ for $h=3$.
    \item $\cL$ satisfies Assumption \ref{useless action}, because the joint control action histories  $a_{1:3}$ do not influence the underlying states,  and we restrict the communication and control strategies to not use them as input.    
    \item $\cL$ does not satisfy \Cref{limited communication strategy}, since here we allow communication strategies $g_{i,h}^m$ to take $\tau_{i,h^-}$ as input for any $i\in[2],h\in[3]$. 
 \end{itemize} 
We will show below that computing a team-optimal strategy of $\cL$  can give us a team-optimal strategy of $\cT\cD$. 
Given $(g_{1:3}^{a,\ast},g_{1:3}^{m,\ast})$ being a team-optimal strategy of $\cL$,
firstly, it will have no additional sharing at timesteps $h=1,2$ under $(g_{1:3}^{a,\ast},g_{1:3}^{m,\ast})$, namely, for $h=1,2, \PP(z_h^a=\{m_{1,h},m_{2,h}\}\given g_{1:3}^{a,\ast},g_{1:3}^{m,\ast})=1$. If not, then for any $i\in[2]$, consider $g_{i,1}^{m,'}, g_{i,2}^{m,'}, g_{i,3}^{a,\ast}$ defined as 
\begin{align*}
    \forall \tau_{i,1^-},\tau_{i,2^-},\tau_{i,3^+}, g_{i,1}^{m,'}(\tau_{i,1^-})=(0),\quad g_{i,2}^{m,'}(\tau_{i,2^-})=(0,0),\quad g_{i,3^+}^{a,'}(\tau_{i,3^+})=\begin{cases}
        o_{3-i, 1}^1&\text{if $o_{3-i,1}\in \tau_{i,3^+}$}\\
        o_{3-i, 2}^1&\text{o.w.}
    \end{cases},
\end{align*}
and replace the $g_{1:2,1}^{m,\ast}, g_{1:2,2}^{m,\ast}, g_{1:2,3}^{a,\ast}$ by $g_{1:2,1}^{m,'}, g_{1:2,2}^{m,'}, g_{1:2,3}^{a,'}$ in $g_{1:3}^{a,\ast},g_{1:3}^{m,\ast}$ to get $(g_{1:3}^{a,'},g_{1:3}^{m,'})$. It is easy to verify that $J_{\cL}(g_{1:3}^{a,\ast},g_{1:3}^{m,\ast})<J_{\cL}(g_{1:3}^{a,'},g_{1:3}^{m,'})$, since   $g_{1:3}^{a,'},g_{1:3}^{m,'}$ can guarantee $r_3=1$ always holds; if there is no additional sharing under $g_{1:3}^{a,'},g_{1:3}^{m,'}$ at first two timesteps, these two strategies have the same communication cost, otherwise $(g_{1:3}^{a,'},g_{1:3}^{m,'})$ has $\sum_{h=1}^3\kappa_h=\kappa_3\le \frac{1}{2}$ but $(g_{1:3}^{a,\ast},g_{1:3}^{m,\ast})$ has $\sum_{h=1}^3\kappa_h\ge 1$.  This leads to the contradiction that $(g_{1:3}^{a,\ast},g_{1:3}^{m,\ast})$ is a team-optimal strategy. 

Also, if we replace the $g_{1:2,3}^{a,\ast}$ by $g_{1:2,3}^{a,'}$, the communication cost does not change and the reward can achieve optimal, i.e., $r_3=1$ always holds. Thus, without loss of generality, we can assume that $g_{1:2,3}^{a,\ast}=g_{1:2,3}^{a,'}$, since  otherwise we can do the replacement and it is still a team-optimal strategy. 

Therefore, $J_{\cL}(g_{1:3}^{a,\ast},g_{1:3}^{m,\ast})=\EE[\sum_{h=1}^3r_h-\kappa_h\given g_{1:3}^{a,\ast},g_{1:3}^{m,\ast}]=1-\EE[\kappa_3\given g_{1:3}^{a,\ast},g_{1:3}^{m,\ast}]=1-\frac{1}{\alpha_1}\EE[\tilde{c}(o_{1,3}^1,o_{2,3}^1,m_{1,3},m_{2,3})]$, where $m_{1,3}=g_{1,3}^{m,\ast}(\{\tau_{1,3^-}\}),m_{2,3}=g_{2,3}^{m,\ast}(\{\tau_{2,3^-}\})$, which means $(g_{1:3}^{a,\ast},g_{1:3}^{m,\ast})$ can minimize $\kappa_3$ through choosing $m_3$ properly, if there is no additional sharing at first two timesteps, i.e., $\tau_{i,3^-}=\{o_{i,1},o_{i,2},o_{i,3},m_{1:2}\}$. By  construction, $\kappa_{3}$  only depends on $o_3$ and $m_3$ and is irrelevant of $\{o_{1,1},o_{1,2},o_{1,3}^2\}$, and $m_{i,1}=(0),m_{i,2}=(0,0),\forall i\in[2]$ always hold. Hence,   $\{o_{1,1},o_{1,2},o_{1,3}^2\}$ are useless information for agent $1$ to choose $m_{1,3}$ and minimize $\EE[\kappa_3]$.  Therefore, not using them in $g_{1,3}^{m,\ast}$ does not lose any optimality. Hence, we can consider the $g_{1,3}^{m,\ast}$ that only has $o_{1,3}^1$ as input. In the same way,   we can consider the $g_{2,3}^{m,\ast}$ that only has $o_{2,3}^1$ as input. Therefore, $J_{\cL}(g_{1:3}^{a,\ast},g_{1:3}^{m,\ast})=1-\sum_{o_{1,3}^1,o_{1,3}^2}\frac{1}{\alpha_1}\tilde{c}(o_{1,3}^1,o_{2,3}^1, g_{1,3}^{m,\ast}( o_{1,3}^1), g_{2,3}^{m,\ast}(o_{2,3}^1))\tilde{p}(o_{1,3}^1,o_{2,3}^1)$. Further, we can leverage $\tilde{\gamma}^\ast_1=g_{1,3}^{m,\ast}, \tilde{\gamma}^\ast_2=g_{2,3}^{m,\ast}$ to minimize  the expected cost $\tilde{J}$ of the TDP. Therefore, 
from the \texttt{NP-hardness} of TDPs  \cite[Corollary 4.1]{teamdecisionhardness}, we complete our proof. 
\end{proof}

\subsection{Proof of \Cref{lemma: useless action}}
\begin{proof}[Proof of Lemma \ref{lemma: useless action}]
    We prove this result by showing a reduction from  the Team Decision problem. Given any TDP $\cT\cD=(\tilde{Y}_1,\tilde{Y}_2,\tilde{U}_1,\tilde{U}_2,\tilde{c},\tilde{p},\tilde{J})$ with $|\tilde{U}_1|=|\tilde{U}_2|=2$, let $\tilde{U}_1=\{1,2\},\tilde{U}_2=\{1,2\}$, then we can construct an $H=3$ and $2$-agent LTC $\cL$ with two parameters $\alpha_1\in\RR, \alpha_2\in(0,1)$ (to be specified later) such that:
\begin{itemize}
        \item Underlying state: $\cS=[2]^2$. For each  $s_1\in \cS$, we can split $s_1$ into 2 parts as $s_1=(s_1^1,s_1^2)$, where $s_1^1,s_1^2\in [2]$.  Similarly, $s_2,s_3\in\cS$ can be split in the same way.
        \item Initial state distribution: $\forall s_1\in \cS, \mu_1(s_1)=\frac{1}{4}$.
        \item Control action space: For any $i\in[2], h=1$, $\cA_{i,1}=\{\emptyset\}$; for $h=2$, $\cA_{i,2}=\{(0,x),(x,0)\given x\in[2]\}$; We can write $a_{i,2}=(a_{i,2}^1,a_{i,2}^2), a_{i,2}^1,a_{i,2}^2\in\{0,1,2\}$; for $h=3, \cA_{i,3}=[2]$
        \item Transition: 
        $\forall s\in \cS, a_1\in \cA_1,a_2\in\cA_2,a_3\in\cA_3, \TT_1(s\given s,a_1)=\TT_2(s\given s,a_2)=\TT_3(s\given s,a_3)=1$. Note that under the transition dynamics above, $s_1=s_2=s_3=s_4$ always holds, for any $s_1\in\cS$.
        \item Observation space: $\cO_{1,1}=\cO_{2,1}=[2]\times \cS, \cO_{1,2}=\tilde{Y_1}\times\cS,\cO_{2,2}=\tilde{Y_2}\times\cS, \cO_{1,3}=\cO_{2,3}=\cS$. For each $i\in[2], o_{i,1}\in\cO_{i,1}$, we can split $o_{i,1}$ into 2 parts as $o_{i,1}=(o_{i,1}^1,o_{i,1}^2)$, where $o_{i,1}^1\in[2],o_{i,1}^2\in\cS$. For each $i\in[2], o_{i,2}\in\cO_{i,2}$, similarly, we can split $o_{i,2}$ into 2 parts as $o_{i,2}=(o_{i,2}^1,o_{i,2}^2)$, where $o_{i,2}^1\in\tilde{Y}_i,o_{i,2}^2\in\cS$. 
        \item The baseline sharing is null.
        \item Communication action space: For $i\in[2],h\in\{1,2\}, \cM_{i,h}=\{0,1\}^{2h-1}$ and $\phi_{i,h}$ is defined as $\phi_{i,h}(p_{i,h^-},m_{i,h})=\{o_{i,k}\in p_{i,h^-}\given \forall k\le h, \text{the~}(2k-1)\text{-th digit of $m_{i,h}$ is }1\}\cup \{a_{i,k}\in p_{i,h^-}\given \forall k\le h-1, \text{the~}2k\text{-th digit of $m_{i,h}$ is }1\}\cup\{m_{i,h}\}$; For $h=3, \cM_{i,3}=\{1,2\}, \forall p_{i,3^-}\in \cP_{i,3^-}$, $\phi_{i,3}(p_{i,3^-},1)=\{o_{i,2}, a_{i,2}, m_{i,3}\}$ and $\phi_{i,3}(p_{i,3^-},2)=\{o_{i,2},a_{i,2},o_{i,3},m_{i,3}\}$.
        \item Emission: 
        For $h=1,\forall s_1\in\cS,o_{i,1}\in\cO_{i,1},\OO_{1}(o_1\given s_1)=\Pi_{j=1}^2\OO_{j,1}(o_{j,1}\given s_1)$ and $\forall i\in[2], \OO_{i,1}(o_{i,1}\given s_1)$ is defined as:
        \begin{equation*}
            \OO_{i,h}(o_{i,h}\given s_h)=\begin{cases}
                \frac{1-\alpha_2}{4}&o_{i,h}^1=s_h^{i+2h-2},o_{i,h}^2\neq s_h\\
                \frac{1-\alpha_2}{4}+\alpha_2&o_{i,h}^1=s_h^{i+2h-2},o_{i,h}^2=s_h\\
                0& \text{o.w.}
            \end{cases},
        \end{equation*}
        for $h=2, \forall s_2\in\cS,o_2\in\cO_{2}, \OO_2(o_2\given s_2)=\tilde{p}(o_{1,2}^1,o_{2,2}^1)\Pi_{j=1}^2\OO_{j,2}^2(o_{j,2}^2\given s_2)$, and $\forall i\in[2], \OO_{i,2}^2(o_{i,2}\given s_2)$ is defined as:
                \begin{align*}
            \OO_{i,2}^2(o_2^2\given s_2)=\begin{cases}
                \frac{1-\alpha_2}{4}&o_{i,2}^2\neq s_2\\
                \frac{1-\alpha_2}{4}+\alpha_2&o_{i,2}^2=s_2\\
                0&\text{o.w.}
            \end{cases},
        \end{align*}
        for $h=3, \forall s_3\in\cS,o_{i,3}\in\cO_{i,3},\OO_{3}(o_3\given s_3)=\Pi_{j=1}^2\OO_{j,3}(o_{j,3}\given s_3)$, and $\forall i\in[2], \OO_{i,3}(o_{i,3}\given s_3)$ is defined as:
                \begin{equation*}
            \OO_{i,3}(o_{i,3}\given s_3)=\begin{cases}
                \frac{1-\alpha_2}{4}&o_{i,3}\neq s_3\\
                \frac{1-\alpha_2}{4}+\alpha_2&o_{i,3}=s_3\\
                0&\text{o.w.}
            \end{cases}.
        \end{equation*}
        \item Reward functions:    
        \begin{align*}
        &\cR_{1}(s_1,a_1)=\cR_{2}(s_2,a_2)=0,~~ \forall s_1,s_2\in \cS, a_1\in \cA_1,a_2\in\cA_2;\\
        &\cR_{3}(s_3,a_3)=
        \begin{cases}
            1&\text{if $a_{1,3}=s_3^2$ and $a_{2,3}=s_3^1$}\\
            0&\text{o.w.}\\
        \end{cases}.
    \end{align*} 
    \item Communication cost functions:
    \begin{align*}
        &\forall h\in[2],z_h^a\in\cZ_h^a,\cK_{h}(z_h^a)=1\qquad \text{ if $z_h^a\neq\{m_{h}\}$,~}  \text{else }0;\\
        &\cK_{3}(z_3^a)=\begin{cases}
            \tilde{c}(o_{1,2}^1
            ,o_{2,2}^1,1,1)/\alpha_1&\text{if } a_{1,2},a_{2,2}\in z_3^a, a_{1,2}^1=0,a_{2,2}^1=0\\
            \tilde{c}(o_{1,2}^1
            ,o_{2,2}^1,2,1)/\alpha_1&\text{if } a_{1,2},a_{2,2}\in z_3^a, a_{1,2}^2=0,a_{2,2}^1=0\\
            \tilde{c}(o_{1,2}^1,o_{2,2}^1,1,2)/\alpha_1&\text{if } a_{1,2},a_{2,2}\in z_3^a, a_{1,2}^1=0,a_{2,2}^2=0\\
            \tilde{c}(o_{1,2}^1,o_{2,2}^1,2,2)/\alpha_1&\text{if } a_{1,2},a_{2,2}\in z_3^a, a_{1,2}^2=0,a_{2,2}^2=0\\
            0&\text{o.w.}
        \end{cases}.
    \end{align*}
    \end{itemize}

Let $\alpha_0=\max_{y_1,y_2,u_1,u_2}\tilde{c}(y_1,y_2,u_1,u_2)$, which is supposed to be positive without loss of optimality (see a discussion in \S\ref{sec:proof of lemma: limited communication}). 
We set $\alpha_1=2\alpha_0$,  and hence for any $z_3^a\in\cZ_3^a$, $\cK_3(z_3^a)\leq \frac{1}{2}$  always holds. We set $\alpha_2\in(0,1)$. Also, we restrict agents to decide their communication strategies only based on their common information.
Under such a construction, $\cL$ satisfies  the following conditions: 
\begin{itemize}
    \item $\cL$ is QC: For any $i_1,i_2\in[2],h_1,h_2\in[3]$, agent $(i_1,h_1)$ does not influence $(i_2,h_2)$, because agent $(i_1,h_1)$ cannot influence the observation of agent $(i_2,h_2)$, and the baseline sharing is null. 
    \item $\cL$ satisfies Assumptions \ref{gamma observability} and \ref{weak gamma observability}: We prove this by showing that each agent $i\in[2]$ satisfies $\gamma$-observability. For any $i\in[2],h=1, b_1,b_2\in\Delta(\cS)$, we have  
    \begin{align*}
        &\norm{\OO_{i,h}^\top (b_1-b_2)}_1
        =\sum_{o_{i,h}^1\in[2]}\sum_{o_{i,h}^2\in \cS}\bigg|\sum_{s_h\in\cS}(b_1(s_h)-b_2(s_h))\OO_{i,h}((o_{i,h}^1,o_{i,h}^2)\given s_h)\bigg|\\
        &\quad\ge\sum_{o_{i,h}^2\in \cS}\bigg|\sum_{o_{i,h}^1\in[2]}\sum_{s_h\in\cS}(b_1(s_h)-b_2(s_h))\OO_{i,h}((o_{i,h}^1,o_{i,h}^2)\given s_h)\bigg|\\
        &\quad= \sum_{o_{i,h}^2\in \cS}\bigg|\sum_{s_h\in\cS}\sum_{o_{i,h}^1\in[2]}(b_1(s_h)-b_2(s_h))\mathbbm{1}[o_{i,h}^1=s_h^{i+2h-2}](\frac{1-\alpha_2}{4}+\alpha_2\mathbbm{1}[o_{i,h}^2=s_h])\bigg|\\
        &\quad= \sum_{o_{i,h}^2\in \cS}\bigg|\frac{1-\alpha_2}{4}(\sum_{s_h\in\cS}(b_1(s_h)-b_2(s_h)))+\alpha_2(b_1(o_{i,h}^2)-b_2(o_{i,h}^2))\bigg|\\
        &\quad=\sum_{o_{i,h}^2\in \cS}\alpha_2|b_1(o_{i,h}^2)-b_2(o_{i,h}^2)|=\alpha_2\norm{b_1-b_2}_1.
    \end{align*}
    For any $i\in[2], h=2,3$, the proof is similar, by replacing $o_{i,h}^1\in [2]$ with $o_{i,h}^1\in \tilde{Y}_i$ for $h=2$ and replacing the space $o_{i,h}^1\in [2]$ with $\{\emptyset\}$  for $h=3$.
    \item $\cL$ satisfies Assumption \ref{limited communication strategy}, since we restrict agents to decide their communication actions only based on the common information.    
    \item $\cL$ does not satisfy \Cref{useless action}, since the joint control action histories $a_{1:3}$ do not influence the underlying states, but they are not removed in the information. 
 \end{itemize}

Now, we show that any team-optimal strategy of $\cL$ will give us the decision rules $\gamma_1,\gamma_2$ to solve $\cT\cD$. 
Given $(g_{1:3}^{a,\ast},g_{1:3}^{m,\ast})$ being a team-optimal strategy of $\cL$,  we can construct $(g_{1:3}^{a,'}, g_{1:3}^{m,'})$ as 
\begin{align*}
    &\forall i\in[2],\forall \tau_{i,1^-},\tau_{i,2^-}, \tau_{i,3^-}, g_{i,1}^{m,'}(\tau_{i,1^-})=(0),\quad g_{i,2}^{m,'}(\tau_{i,2^-})=(0,0,0), g_{i,3}^{m,'}(\tau_{i,3^-})=1\\
    &\forall i\in[2],\forall \tau_{i,2^+}, g_{i,2}^{a,'}(\tau_{i,2^+})=\begin{cases}
        (0,o_{i,1})&\text{if the first term of $[g_{i,2}^{a,'}(\tau_{i,2^+})]$ is 0}\\
         (o_{i,1},0)&\text{o.w.}
    \end{cases}\\
    &\forall i\in[2],\forall \tau_{i,3^+}, g_{i,3}^{a,'}(\tau_{i,3^+})=\begin{cases}
        a_{3-i,2}^2&\text{if $a_{3-i,2}^1=0$}\\
        a_{3-i,2}^1&\text{o.w.}
    \end{cases}.
\end{align*}
One can verify that $J_\cL(g_{1:3}^{a,\ast},g_{1:3}^{m,\ast})\le J_\cL(g_{1:3}^{a,'},g_{1:3}^{m,'})$. This is because, $(g_{1:3}^{a,'},g_{1:3}^{m,'})$ can always achieve $r_3=1$, which means $\EE[\sum_{h=1}^3r_h\given (g_{1:3}^{a,'},g_{1:3}^{m,'})]\ge \EE[\sum_{h=1}^3r_h\given g_{1:3}^{a,\ast},g_{1:3}^{m,\ast}]$. Also, at timesteps $h=1,2$,  $(g_{1:3}^{a,'},g_{1:3}^{m,'})$ has no additional sharing; if $(g_{1:3}^{a,\ast},g_{1:3}^{m,\ast})$ has additional sharing, then the communication cost $\sum_{h=1}^3\kappa_h$ is at least 1; if $(g_{1:3}^{a,\ast},g_{1:3}^{m,\ast})$ has no additional sharing, then $(g_{1:3}^{a,\ast},g_{1:3}^{m,\ast})$ and $(g_{1:3}^{a,'},g_{1:3}^{m,'})$ have the same communication cost of $\sum_{h=1}^3\kappa_h=\kappa_3\le \frac{1}{2}$. Therefore, $\EE[\sum_{h=1}^3\kappa_h\given (g_{1:3}^{a,'},g_{1:3}^{m,'})]\le \EE[\sum_{h=1}^3\kappa_h\given g_{1:3}^{a,\ast},g_{1:3}^{m,\ast}]$, and thus $J_\cL(g_{1:3}^{a,\ast},g_{1:3}^{m,\ast})\le J_\cL(g_{1:3}^{a,'},g_{1:3}^{m,'})$. 

From the above, we know that $(g_{1:3}^{a,'},g_{1:3}^{m,'})$ is also a  team-optimal strategy. Let $U_1=f_1(a_{1,2}):=2-\mathds{1}[a_{1,2}^1=0], U_2=f_2(a_{2,2}):=2-\mathds{1}[a_{2,2}^1=0]$, then $J_{\cL}(g_{1:3}^{a,'},g_{1:3}^{m,'})=\EE[\sum_{h=1}^3r_h-\kappa_h\given g_{1:3}^{a,'},g_{1:3}^{m,'}]=1-\EE[\kappa_3\given g_{1:3}^{a,'},g_{1:3}^{m,'}]=1-\frac{1}{\alpha_1}\EE[\tilde{c}(o_{1,2}^1,o_{2,2}^1,U_1,U_2)]$, where $U_1=f_1(a_{1,2})=f_1(g_{1,2}^{a,'}(\tau_{1,2^+})),U_2=f_2(a_{2,2})=f_2(g_{2,2}^{a,'}(\tau_{2,2^+}))$, which means  that  $(g_{1,2}^{a,'},g_{2,2}^{a,'})$ can minimize $\kappa_3$ through properly choosing $a_2$ (and $U_{1:2}$) if there is no additional sharing in the first two timesteps, i.e., $\tau_{i,2^+}=\{o_{i,1},a_{i,1},o_{i,2},m_{1:2}\}$. By construction, $\kappa_{3}$ only depends on $o_{1,2}^1,o_{2,2}^1$,  and $U_{1:2}$, and is irrelevant of $\{o_{1,1},a_{1,1}, o_{1,2}^2\}$, and $m_{i,1}=(0),m_{i,2}=(0,0,0),\forall i\in[2]$ always hold. Hence,   $\{o_{1,1},o_{1,2},o_{1,3}^2\}$ are useless information for agent $1$ to choose $a_{1,2}$ (and $U_1$),  and to minimize  $\EE[\kappa_3]$.  Therefore, not using them in $g_{1,2}^{a,'}$ does not lose any optimality. Hence, we can consider the $g_{1,2}^{a,'}$ that only has $o_{1,2}^1$ as input. In the same way,   we can consider the $g_{2,2}^{a,'}$ that only has $o_{2,2}^1$ as input. Therefore, $J_{\cL}(g_{1:3}^{a,'},g_{1:3}^{m,'})=1-\sum_{o_{1,3}^1,o_{1,3}^2}\frac{1}{\alpha_1}\tilde{c}(o_{1,2}^1,o_{2,2}^1, f_1(g_{1,2}^{a,'}( o_{1,2}^1)), f_2(g_{2,2}^{a,'}(o_{2,2}^1)))\tilde{p}(o_{1,2}^1,o_{2,2}^1)$. Further, we can leverage $\tilde{\gamma}^\ast_1=f_1\circ g_{1,2}^{a,'}, \tilde{\gamma}^\ast_2=f_2\circ g_{2,2}^{a,'}$ to minimize  the expected cost $\tilde{J}$ of the TDP, where $\circ$ denotes function composition. Therefore, 
from the \texttt{NP-hardness} of TDPs  \cite[Corollary 4.1]{teamdecisionhardness}, we complete our proof. 
\end{proof}

\subsection{Proof of \Cref{lemma: weak gamma observability}}

\begin{proof}
    We prove this by showing a reduction from the hardness of finding an $\epsilon$-optimal strategy for   POMDPs (\Cref{prop: epsilon-additive}). Given any POMDP $\cP=(\cS^{\cP},\cA^{\cP},\cO^{\cP},\{\OO_h^{\cP}\}_{h\in[H^\cP]},\{\TT_h^{\cP}\}_{h\in[H^\cP]},\{\cR_h^{\cP}\}_{h\in[H^\cP]}, \mu_1^{\cP})$ with  rewards bounded in $[0,\frac{1}{2}]$, we can construct an LTC $\cL$ with 2 agents as follows:
    \begin{itemize}
    \item Number of agents: $n=2$; length of episode: $H=H^\cP$.
        \item $\cS=\cS^\cP\times[2];$ for any $s\in \cS$, we can split state as two parts: $s=(s^1,s^2), s^1\in\cS, s^2\in[2]$. 
        \item Initial state distribution: $\forall s_1\in \cS, \mu_1(s_1)=\frac{\mu_1^\cP(s_1^1)}{2}$.
        \item Control action space: For any $h\in[H]$,  $\cA_{1,h}=\cA^\cP_h,\cA_{2,h}=[2]$.
        \item Transition: For any $h\in[H]$,   $\forall s_h,s_{h+1}\in \cS,a_h\in \cA_h, \TT_h(s_{h+1}\given s_h,a_h)=\TT^\cP_h(s_{h+1}^1\given s_h^1,a_{1,h})\mathds{1}[s_{h+1}^2=a_{2,h}]$.
        \item Observation space: For any $h\in[H]$, $\cO_{1,h}=\cO^\cP,\cO_{2,h}=\cS$.
        \item Emission: For any $h\in[H],\forall o_h\in\cO_h,s_h\in\cS$, $\OO_h(o_h\given s_h)=\OO_h^\cP(o_{1,h}\given s_h^1)\mathds{1}[o_{2,h}=s_h]$.
        \item Reward functions: For any $ h\in[H]$, $\forall s_h\in\cS,a_h\in\cA_h,\cR_{h}(s_h,a_h)=\cR_h^\cP(s_h^1,a_{1,h})/H$.
        \item Baseline sharing: For any $h\in[H],z_h^b=\{o_{1,h},a_{1,h-1}\}$.
        \item Communication action space: For any $h\in[H],\cM_{1,h}=\{\emptyset\}, \cM_{2,h}=\{0,1\}^{2h-1}$. For any $p_{1,h^-}\in\cP_{1,h^-}, p_{2,h^-}\in \cP_{2,h^-}, m_h\in\cM_h, \phi_{1,h}(p_{1,h^-},m_{1,h})=\{m_{1,h}\}, \phi_{2,h}(p_{2,h^-},m_{2,h})=\{o_{2,k}\given \forall k\le h, \text{ the }2k-1\text{-th digit of $m_{2,h}$ is 1 and } o_{2,k}\in p_{2,h^-}\}\cup\{a_{2,k}\given \forall k< h, \text{ the }2k\text{-th digit of $m_{2,h}$ is 1 and } a_{2,k}\in p_{2,h^-}\}\cup\{m_{2,h}\}.$
        \item Communication cost functions: For any $h\in[H], z_h^a\in\cZ_h^a,\cK_{h}(z_h^a)=\mathds{1}[z_h^a\neq \{m_h\}]$, which means that the communication cost is 1 unless there is no additional sharing.
        \item We restrict the communication strategy to only use $c_h$ as input. 
    \end{itemize}
We first verify that $\cL$ is QC and satisfies Assumptions \ref{gamma observability}, \ref{limited communication strategy}, and \ref{useless action}.
    \begin{itemize}
        \item {$\cL$ is QC:} For any $\forall h_1<h_2\le H$, agent $(2,h_1)$ does not influence agent $(1,h_2)$ under baseline sharing,  since agent $(2,h_1)$ does not influence $s_h^1,\forall h\in[H]$, then does not influence $o_{1,h},\forall h\in[H]$. Also, agent 2 shares nothing via baseline sharing. Therefore, agent $(2,h_1)$ does not influence  agent $(1,h_2)$. 
        For any $h_1<h_2\le H$, under baseline sharing, $p_{1,h_1^-}=\emptyset$. Then, we have  $\sigma(\tau_{1,h_1^-})\subseteq \sigma(c_{h_1^-})\subseteq\sigma(c_{h_2^-})\subseteq \sigma(\tau_{2,h_2^-})$.
        \item {$\cL$ satisfies Assumption \ref{gamma observability}:} For any $h\in[H], b_1,b_2\in \Delta(\cS)$, $\OO_h$ satisfies that
        \begin{align*}
        &\norm{\OO_{h}^\top (b_1-b_2)}_1
        =\sum_{o_{1,h}\in\cO^\cP}\sum_{o_{2,h}\in \cS}\bigg|\sum_{s_h\in\cS}(b_1(s_h)-b_2(s_h))\OO_{h}((o_{1,h},o_{2,h})\given s_h)\bigg|\\
        &\quad\ge \sum_{o_{2,h}\in \cS}\bigg|\sum_{o_{1,h}\in\cO^\cP}\sum_{s_h\in\cS}(b_1(s_h)-b_2(s_h))\OO_{1,h}(o_{1,h}\given s_h)\OO_{2,h}(o_{2,h}\given s_h)\bigg|\\
        &\quad= \sum_{o_{2,h}\in \cS}\bigg|\sum_{s_h\in\cS}(b_1(s_h)-b_2(s_h))\OO_{2,h}(o_{2,h}\given s_h)\sum_{o_{1,h}\in\cO^\cP}\OO_{1,h}(o_{1,h}\given s_h)\bigg|\\
        &\quad= \sum_{o_{2,h}\in \cS}\bigg|\sum_{s_h\in\cS}(b_1(s_h)-b_2(s_h))\mathds{1}[o_{2,h}=s_h]\bigg|\\
        &\quad= \sum_{o_{2,h}\in \cS}|b_1(o_{2,h})-b_2(o_{2,h})|=\norm{b_1-b_2}_1,
        \end{align*}
        showing that $\gamma$-observability is satisfied with $\gamma=1$. 
        \item {$\cL$ satisfies Assumption \ref{limited communication strategy}:} For any $h\in[H]$, we restricted that  each agent $i$ decides $m_{i,h}$ based on $c_h$ only. 
        \item {$\cL$ satisfies Assumption \ref{useless action}:} For any $h\in[H-1], a_{1,h}$ influences $s_{h+1}^1$, $a_{2,h}$ influences $s_{h+1}^2$. 
        \item $\cL$ does not satisfy \Cref{weak gamma observability}. This is because for any $h\in[H]$ and fixed  $s_h^1\in \cS^\cP$, we can define $s^1=(s_h^1,1), s^2=(s_h^1,2)$, and $b_1=\delta_{s^1}\in\Delta(\cS), b_2=\delta_{s^2}\in\Delta(\cS)$, where $\delta_{s^1},\delta_{s^2}$ denote the distributions over $\cS$ only with mass on $s^1,s^2$, respectively. Then, one can verify that $b_1\neq b_2$ but $\OO_{-1,h}^\top b_1=\OO_{-1,h}^\top b_2$.
    \end{itemize}
    Consider the communication strategy $g_{1:H}^{m,'}$ that yields no additional sharing, namely, $\forall h\in[H], \tau_{2,h^-}\in\cT_{2,h^-}, g_{2,h}^{m,'}(\tau_{2,h^-})=\bm{0}_h$.    For any $\epsilon$-optimal strategy $(g_{1:H}^{a,\ast},g_{1:H}^{m,\ast})$ with $\epsilon\in[0,\frac{1}{4})$, we claim that $(g_{1:H}^{a,\ast},g_{1:H}^{m,'})$ is also an $\epsilon$-optimal strategy. 
    
    This is because, comparing to strategy $(g_{1:H}^{a,\ast},g_{1:H}^{m,'})$, for the trajectories where $(g_{1:H}^{a,\ast},g_{1:H}^{m,\ast})$ leads to no  additional sharing, these two strategies output the same actions and gain the same total  return; for the other trajectories where $(g_{1:H}^{a,\ast},g_{1:H}^{m,\ast})$ leads to some additional sharing, i.e., $z_h\neq \{m_h\}$ for some $h\in[H]$, then it has the return of $\sum_{t=1}^H(r_t-\kappa_t)\le \sum_{t=1}^H(r_t)-\kappa_h\le \frac{1}{2H}\cdot H-1<0$, which is less than when replacing it by $(g_{1:H}^{a,\ast},g_{1:H}^{m,'})$ with return $\sum_{t=1}^H(r_t-\kappa_t)= \sum_{t=1}^Hr_t\ge 0$. 
    
    Therefore, $J_\cL(g_{1:H}^{a,\ast},g_{1:H}^{m,'})\ge J_\cL(g_{1:H}^{a,\ast},g_{1:H}^{m,\ast})$, and $(g_{1:H}^{a,\ast},g_{1:H}^{m,'})$ is also an $\epsilon$-optimal strategy with $\epsilon\in[0,\frac{1}{4})$.

    Meanwhile, any $(g_{1:H}^{a,\ast},g_{1:H}^{m,'})$ being an $\frac{\epsilon}{H}$-team-optimal strategy of $\cL$ will directly give an $\epsilon$-team-optimal strategy of $\cP$ as $\{g_{1,h}^{a,\ast}\}_{h\in [H]}$, since when there is no sharing, the decision process is only controlled by agent $1$.  From Proposition \ref{prop: epsilon-additive}, we complete the proof.
\end{proof}

\begin{remark}
    For each construction in the proofs of \Cref{lemma: nonqc_hardness,lemma: limited communication,lemma: useless action,lemma: weak gamma observability}, if we slightly change the construction to make it satisfy the corresponding assumption, then the associated  hardness can be avoided. For example, if we require the constructed instance in the proof of \Cref{lemma: limited communication} to satisfy \Cref{limited communication strategy}, i.e., the communication strategy $g_{1:3}^a$ only takes common information as input. Then, finding the optimal communication strategy at timestep $h=3$ can no longer be reduced to TDPs, and such LTC problems can be computed via \Cref{main algorithm} in polynomial time. 
\end{remark}

\section{Deferred Details of  \S\ref{sec:positive_results}}
\label{proof details sec 4}

\subsection{Proof of \Cref{prop: equivalence of LTC and Dec-POMDP}}\label{sec:equivalence_appendix}

\begin{proof}
    Given any strategy $(g_{1:H}^{a},g_{1:H}^{m}), g_{1:H}^{a}\in\cG_{1:H}^a, g_{1:H}^m\in\cG_{1:H}^m$, we can define $\tilde{g}_{1:\tilde{H}}=(g^m_1,g^a_1,\cdots,g^m_H,g^a_H)$. From the construction of $\cD_\cL$, comparing to applying $(g_{1:H}^{a},g_{1:H}^{m})$ in $\cL$, if we apply $\tilde{g}_{1:\tilde{H}}$ in $\cD_\cL$, then it is easy to verify that $\forall h\in [H], \tilde{r}_{2h-1}=-\kappa_h, \tilde{r}_{2h}=r_h$ always holds. Therefore, $J_{\cD_\cL}(\tilde{g}_{1:\tilde{H}})=J_\cL(g_{1:H}^{a},g_{1:H}^{m})$. In the same way, for any strategy $\tilde{g}_{1:\tilde{H}}$, we can define $\tilde{g}_{1:\tilde{H}}=(g^m_1,g^a_1,\cdots,g^m_H,g^a_H)$, and verify that $J_\cL(g_{1:H}^{a},g_{1:H}^{m})=J_{\cD_\cL}(\tilde{g}_{1:\tilde{H}})$.
\end{proof}

\subsection{Proof of \Cref{reformulate QC}}\label{sec:reformulate_QC_appendix}

\begin{proof} 
    Firstly, we prove the QC case. To show $\cD_\cL$ is QC, we  need to prove that $\forall i_1,i_2\in[n], h_1,h_2\in[\tilde{H}]$, if agent $(i_1,h_1)$ influences agent $(i_2,h_2)$ with $h_1<h_2$, then $\sigma(\tilde{\tau}_{i_1,h_1})\subseteq \sigma(\tilde{\tau}_{i_2,h_2})$, where we use $\tilde\tau_{i,h}$ to denote the  information available to agent $(i,h)$ in $\cD_\cL$.
    We prove it by considering the following cases:
    \begin{itemize}
        \item If $h_1=2t_1-1$ with $t_1\in[H]$, by the construction of $\cD_\cL$ and Assumption \ref{limited communication strategy}, we have       $\tilde{\tau}_{i_1,h_1}=\tilde{c}_{h_1}=c_{t_1^-}\subseteq \tilde{\tau}_{i_2,h_2}$, since common information accumulates over time by definition, and will always be included in the available information $\tilde\tau_{i,h}$ in later steps with $h>h_1$. Thus,  $\sigma(\tilde{\tau}_{i_1,h_1})\subseteq \sigma(\tilde{\tau}_{i_2,h_2})$.
            \item If $h_1=2t_1,h_2=2t_2$ with $ t_1,t_2\in[H]$, then $\tilde{\tau}_{i_1,h_1}=\tau_{i_1,t_1^+}=\tau_{i_1,t_1^-}\cup z_{t_1}^a$ by definition.  Consider agent $(i_1,t_1)$ and $(i_2,t_2)$ in $\cL$. {From \Cref{lemma: action influence},  we know that  if any agent in a  control step influences another agent in a  control step in $\cD_\cL$, then the former also influences the latter in $\cL$ under  baseline sharing, and thus $\sigma(\tau_{i_1,t_1^-})\subseteq \sigma(\tau_{i_2,t_2^-})$ since $\cL$ is QC. This further implies $\sigma(\tau_{i_1,t_1^-})\subseteq \sigma(\tau_{i_2,t_2^+})$ since $\sigma(\tau_{i_2,t_2^-})\subseteq \sigma(\tau_{i_2,t_2^+})$.} 
       Also, $z_{t_1}^a\subseteq c_{t_1^+}\subseteq c_{t_2^+}\subseteq \tau_{i_2,t_2^+}=\tilde \tau_{i_2,h_2}$ by the accumulation of $c_{h^+}$ over time. Thus, we have  $\sigma(\tilde{\tau}_{i_1,h_1})\subseteq \sigma(\tilde{\tau}_{i_2,h_2})$.  
        \item If $h_1=2t_1,h_2=2t_2-1, t_1,t_2\in[H]$, then $\tilde{\tau}_{i_2,h_2}=\tilde{c}_{h_2}$. Hence, $\exists i_3\in[n],i_3\neq i_1$, such that $ \tilde{\tau}_{i_2,h_2}\subseteq \tilde{c}_{h_2+1}\subseteq\tilde{\tau}_{i_3,h_2+1}$. Since agent $(i_1,h_1)$ influences agent $(i_2,h_2)$, we   know that agent $(i_1,h_1)$ also influences agent $(i_3,h_2+1)$ in $\cD_\cL$. 
        From \Cref{lemma: action influence}, similarly as above, we know that $\sigma(\tau_{i_1,t_1^-})\subseteq \sigma(\tau_{i_3,t_2^-})$ since $\cL$ is QC. From Assumption \ref{assumption:sigma_include} and $i_1\neq i_3$,  we know that $\sigma(\tau_{i_1,t_1^-})\subseteq \sigma(c_{t_2^-})=\sigma(\tilde{\tau}_{i_2,h_2})$. Also, it holds that $\tau_{i_1,t_1^+}\backslash\tau_{i_1,t_1^-}\subseteq c_{t_1^+}\subseteq c_{t_2^-}$.  Hence, we have $\sigma(\tilde{\tau}_{i_1,h_1})=\sigma(\tau_{i_1,t_1^+})\subseteq \sigma(c_{t_2^-})=\sigma(\tilde{\tau}_{i_2,h_2})$. 
    \end{itemize}
    Secondly, we prove the sQC case. In $\cD_\cL$, for any $i_1,i_2\in[n], h_1,h_2\in[\tilde{H}]$, suppose agent $(i_1,h_1)$ influences $(i_2,h_2)$. 
    From the proof  above, we know that $\sigma(\tilde{\tau}_{i_1,h_1})\subseteq \sigma(\tilde{\tau}_{i_2,h_2})$. We only need to prove $\sigma(\tilde{a}_{i_1,h_1})\subseteq \sigma(\tilde{\tau}_{i_2,h_2})$.
    \begin{itemize}
        \item If $h_1=2t_1-1$ with $t_1\in[H]$, then we know that $\tilde{a}_{i_1,h_1}=m_{i_1,t_1}$. From Assumption \ref{assumption:information evolution}, we know that $m_{i_1,t_1}\subseteq z_{i_1,t_1}^a$. Then,  we get $\sigma(\tilde{a}_{i_1,h_1})\subseteq \sigma(\tilde{z}_{h_1+1})\subseteq \sigma(\tilde{c}_{h_2})\subseteq \sigma(\tilde{\tau}_{i_2,h_2})$. 
        \item If  $h_1=2t_1,h_2=2t_2$ with $ t_1,t_2\in[H]$, then from \Cref{lemma: action influence}, we know that $\sigma(\tilde{a}_{i_1,h_1})\subseteq \sigma(\tilde{\tau}_{i_2,h_2})$.
        \item If $h_1=2t_1,h_2=2t_2-1$ with $t_1,t_2\in[H]$, then $\tilde{\tau}_{i_2,h_2}=\tilde{c}_{h_2}$. Hence, $\exists i_3\in[n],i_3\neq i_1$, such that $ \tilde{\tau}_{i_2,h_2}\subseteq \tilde{c}_{h_2+1}\subseteq\tilde{\tau}_{i_3,h_2+1}$. Since agent $(i_1,h_1)$ influences $(i_2,h_2)$, we thus  know that agent $(i_1,h_1)$ also influences agent $(i_3,h_2+1)$. 
        Since $\cL$ is sQC, we know that $\sigma(a_{i_1,t_1})\subseteq \sigma(\tau_{i_3,t_2^-})$  in $\cL$ by \Cref{lemma: action influence}. From Assumption \ref{assumption:sigma_include} and $i_1\neq i_3$,  we know that $\sigma(\tilde{a}_{i_1,h_1})=\sigma(a_{i_1,t_1})\subseteq \sigma(c_{t_2^-})=\sigma(\tilde{\tau}_{i_2,h_2})$.
    \end{itemize}
    This completes the proof.
\end{proof}

\subsection{Proof of Lemma \ref{lemma:QC to sQC}}
\begin{proof}
From the construction of $\cD_\cL^\dag$, since $\cD_\cL^\dag$ requires agents to share more than $\cD_\cL$, it is easy to observe that $\forall h\in[\tilde{H}],i\in[n], \tilde{c}_h\subseteq \Breve{c}_h, \tilde{\tau}_{i,h}\subseteq \Breve{\tau}_{i,h}$.\\
Let $i_1,i_2\in[n],h_1,h_2\in[\tilde{H}], h_1<h_2$, and agent $(i_1,h_1)$ influences agent $(i_2,h_2)$ in $\cD_\cL^\dag$. 
\begin{itemize} 
    \item If $h_1=2t_1-1$ with $t_1\in[H]$, then $h_1$ is a communication step. Hence,  $\Breve{\tau}_{i_1,h_1}=\Breve{c}_{h_1}\subseteq \Breve{c}_{h_2}$ and $\tilde{a}_{i_1,h_1}=m_{i_1,t_1}\subseteq \Breve{c}_{h_1+1}\subseteq \Breve{c}_{h_2}$ from Assumption \ref{assumption:information evolution}. Therefore, we have $\sigma(\Breve{\tau}_{i_1,h_1})\cup \sigma(\Breve{a}_{i_1,h_1})\subseteq \sigma(\Breve{c}_{h_2})\subseteq \sigma(\Breve{\tau}_{i_2,h_2})$.
    \item If $h_1=2t_1, h_2=2t_2-1$ with $t_1,t_2\in[H]$, then $\Breve{\tau}_{i_2,h_2}=\Breve{c}_{h_2}$. If agent $(i_1,h_1)$ does not influence $(i_2,h_2)$ in $\cD_\cL$, but agent $(i_1,h_1)$ influences $(i_2,h_2)$ in $\cD_\cL^\dag$, then it means $\Breve{a}_{i_1,h_1}\in \Breve{\tau}_{i_2,h_2}$ but $\tilde{a}_{i_1,h_1}\notin \tilde{\tau}_{i_2,h_2}$. This can only happen when $\sigma(\tilde{\tau}_{i_1,h_1})\subseteq \sigma(\tilde{c}_{h_2})\subseteq \sigma(\Breve{c}_{h_2})$, and $\tilde{a}_{i_1,h_1}\subseteq \Breve{c}_{h_2}$. Also, from the construction of $\cD_\cL^\dag$, we know that $\Breve{\tau}_{i_1,h_1}\backslash \tilde{\tau}_{i_1,h_1}\subseteq \Breve{c}_{h_1}$. Therefore, we have $\sigma(\Breve{\tau}_{i_1,h_1})\cup \sigma(\tilde{a}_{i_1,h_1})\subseteq \sigma(\Breve{c}_{h_2})\subseteq \sigma(\Breve{\tau}_{i_2,h_2})$.
    
    If agent $(i_1,h_1)$ influences agent $(i_2,h_2)$ in $\cD_\cL$, then we claim that $\tilde{a}_{i_1,h_1}$ influences $\tilde{s}_{h_1+1}$ in $\cD_\cL$, since otherwise, $\tilde{a}_{i_1,h_1}$ does not influence any $\tilde{s}_{h},\tilde{o}_h$, and $\tilde{a}_h$ with $h>h_1$; also, from Assumption \ref{useless action}, $\tilde{a}_{i_1,h_1}$ is removed from $\tilde{\tau}_h, \forall h>h_1$, and thus agent $(i_1,h_1)$ cannot influence agent $(i_2,h_2)$. Meanwhile, 
    from the QC IS of $\cD_\cL$, we know that $\sigma(\tilde{\tau}_{i_1,h_1})\subseteq \sigma(\tilde{\tau}_{i_2,h_2})=\sigma(\tilde{c}_{h_2})$. Then, from the construction of $\cD_\cL^\dag$, we know that $\tilde{a}_{i_1,h_1}$ is added in $\Breve{c}_{h_2}$, i.e.,  $\tilde{a}_{i_1,h_1}\in \Breve{c}_{h_2}$. Still, due to $\Breve{\tau}_{i_1,h_1}\backslash \tilde{\tau}_{i_1,h_1}\subseteq \Breve{c}_{h_1}$, we further have  $\sigma(\Breve{\tau}_{i_1,h_1})\cup \sigma(\tilde{a}_{i_1,h_1})\subseteq \sigma(\Breve{\tau}_{i_2,h_2})$. 
    \item If $h_1=2t_1, h_2=2t_2$ with $t_1,t_2\in[H]$. If agent $(i_1,h_1)$ does not influence $(i_2,h_2)$ in $\cD_\cL$, then it means that adding  $\tilde{a}_{i_1,h_1}$ in $\Breve{c}_{h_2}$ via strict expansion leads to such influence. Then, it must hold that $\sigma(\tilde{\tau}_{i_1,h_1})\subseteq \sigma(\tilde{c}_{h_2})\subseteq \sigma(\Breve{c}_{h_2})$, and $\tilde{a}_{i_1,h_1}\subseteq \Breve{c}_{h_2}$. We can conclude $\sigma(\Breve{\tau}_{i_1,h_1})\cup \sigma(\tilde{a}_{i_1,h_1})\subseteq \sigma(\Breve{c}_{h_2})\subseteq \sigma(\Breve{\tau}_{i_2,h_2})$.\\      
   If agent $(i_1,h_1)$ influences $(i_2,h_2)$ in $\cD_\cL$, then we know that $\tilde{a}_{i_1,h_1}$ influences $\tilde{s}_{h_1+1}$ in $\cD_\cL${, similarly as shown in the case for $h_2=2t_2-1$.
    Therefore, from Assumption \ref{weak gamma observability}, we know that there exists some $i_3\neq i_1$ such that agent $\tilde{a}_{i_1,h_1}$ influences agent $(i_3,h_1+1)$ in $\cD_\cL$. Then, from the QC IS of $\cD_\cL$, we know that $\sigma(\tilde{\tau}_{i_1,h_1})\subseteq \sigma(\tilde{\tau}_{i_3,h_1+1})$. Meanwhile, from Assumption \ref{assumption:information evolution} (e), we know that $\tilde{\tau}_{i_3,h_1+1}\subseteq \tilde{\tau}_{i_3,h_2}$. Therefore, we can get $\sigma(\tilde{\tau}_{i_1,h_1})\subseteq\sigma(\tilde{\tau}_{i_3,h_2})$ and further $\sigma(\tilde{\tau}_{i_1,h_1})\subseteq \sigma(\tilde{c}_{h_2})$ due to Assumption \ref{assumption:sigma_include} and $i_3\neq i_1$. Then, from the construction of $\cD_\cL^\dag$, we know that $\tilde{a}_{i_1,h_1}$ is added in $\Breve{c}_{h_2}$. Together with  $\Breve{\tau}_{i_1,h_1}\backslash\tilde{\tau}_{i_1,h_1}\subseteq \Breve{c}_{h_1}\subseteq \Breve{c}_{h_2}$, we can conclude that $\sigma(\Breve{\tau}_{i_1,h_1})\cup\sigma(\Breve{a}_{i_1,h_1})\subseteq\sigma(\Breve{\tau}_{h_2}).$
   }
\end{itemize}
This completes the proof. 
\end{proof}

\subsection{Proof of Theorem \ref{theorem: sQC to QC}}
\begin{proof}

Firstly, we claim that given any strategy $\Breve{g}_{1:\Breve{H}}$ and $\tilde{g}_{1:\tilde{H}}=\varphi(\Breve{g}_{1:\Breve{H}},\cD_\cL)$, $J_{\cD_\cL^\dag}(\Breve{g}_{1:\Breve{H}})=J_{\cD_\cL}(\tilde{g}_{1:\tilde{H}})$, where the function $\varphi$ is given by \Cref{algorithm varphi}. It suffices to prove that $\tilde{g}_{i,h}(\tilde{\tau}_{i,h})=\Breve{g}_{i,h}(\Breve{\tau}_{i,h})$ always holds for any $\tilde \tau_{i,h}$. Namely, for any $i\in[n], h\in[\tilde{H}]$, and  $\tilde{\tau}_{i,h}$,  \Cref{algorithm varphi} can compute the associated $\Breve{\tau}_{i,h}$ from the expansion in \Cref{construction:QC to sQC}, and use it as the input of $\Breve{g}_{i,h}$ (Line $11$ of \Cref{algorithm varphi}).  
Let $\Breve{\tau}_{i,h}'$ be the information constructed by \Cref{algorithm varphi}, which is the  
input of $\Breve{g}_{i,h}$ used in Line 11 of the algorithm. From the construction of the algorithm and the expansion, we have $\tilde{\tau}_{i,h}\subseteq \Breve{\tau}_{i,h}', \tilde{\tau}_{i,h}\subseteq \Breve{\tau}_{i,h}$;  $\Breve{\tau}_{i,h}\backslash \tilde{\tau}_{i,h}$ and $\Breve{\tau}_{i,h}'\backslash \tilde{\tau}_{i,h}$ only consist of some actions. For any action $\Breve{a}_{j,t}$ with $j\in[n],t<h$, it will be added in $\Breve{\tau}_{i,h}'\backslash \tilde{\tau}_{i,h}$ if and only if it lies in $\Breve{\tau}_{i,h}\backslash \tilde{\tau}_{i,h}$, since we use the same condition in construction in the algorithm as that in the expansion (i.e., \Cref{construction:QC to sQC}).

Since $\cD_\cL^\dag$ has larger strategy spaces, we have$\max_{\tilde{g}_{1:\tilde{H}}\in\tilde{G}_{1:\tilde{H}}}J_{\cD_\cL}(\tilde{g}_{1:\tilde{H}})\le \max_{\Breve{g}_{1:\Breve{H}}\in\Breve{G}_{1:\Breve{H}}}J_{\cD_\cL^\dag}(\Breve{g}_{1:\Breve{H}})$. Let $\Breve{g}^\ast_{1:\Breve{H}}$ be the strategy satisfying $J_{\cD_\cL^\dag}(\Breve{g}^\ast_{1:\Breve{H}})\ge \max_{\Breve{g}_{1:\Breve{H}}\in\Breve{G}_{1:\Breve{H}}}J_{\cD_\cL^\dag}(\Breve{g}_{1:\Breve{H}})-\epsilon$. Then, we have $J_{\cD_\cL}(\varphi(\Breve{g}_{1:\Breve{H}}^\ast,\cD_\cL))=J_{\cD_\cL^\dag}(\Breve{g}^\ast_{1:\Breve{H}})\ge \max_{\Breve{g}_{1:\Breve{H}}\in\Breve{G}_{1:\Breve{H}}}J_{\cD_\cL^\dag}(\Breve{g}_{1:\Breve{H}})-\epsilon\ge \max_{\tilde{g}_{1:\tilde{H}}\in\tilde{G}_{1:\tilde{H}}}J_{\cD_\cL}(\tilde{g}_{1:\tilde{H}})-\epsilon$. Thus,  $\varphi(\Breve{g}_{1:\Breve{H}}^\ast,\cD_\cL)$ is an $\epsilon$-team-optimal strategy of $\cD_\cL$.
\end{proof}

\begin{lemma}
\label{lemma: equivalence alg 3-4}
  For any given strategy $\Breve{g}_{1:\Breve{H}}\in \Breve{\cG}_{1:\Breve{H}}$, implementing Algorithm \ref{algorithm Implement varphi} in $\cD_\cL$ is equivalent to implementing $\varphi(\Breve{g}_{1:\Breve{H}},\cD_\cL)$ in $\cD_\cL$.
\end{lemma}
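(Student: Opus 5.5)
We need to show that Algorithm \ref{algorithm Implement varphi}, run online in $\cD_\cL$, produces at every realization the same actions as the offline translated strategy $\tilde g_{1:\tilde H}=\varphi(\Breve g_{1:\Breve H},\cD_\cL)$ from \Cref{algorithm varphi}. Since the algorithms are not displayed, we assume the natural structure. \Cref{algorithm varphi} tabulates, for every $\tilde\tau_{i,h}$, the reconstructed expanded information $\Breve\tau'_{i,h}$ and sets $\tilde g_{i,h}(\tilde\tau_{i,h})=\Breve g_{i,h}(\Breve\tau'_{i,h})$. Algorithm \ref{algorithm Implement varphi} instead builds $\Breve\tau'_{i,h}$ on the fly along the realized trajectory and outputs $\Breve g_{i,h}(\Breve\tau'_{i,h})$. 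The plan is to show that the two reconstructions agree, and then that the induced joint laws of $(\tilde s_{1:\tilde H},\tilde o,\tilde a)$, and hence the values, coincide.

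We argue by induction on $h\in[\tilde H]$. The hypothesis is: for every trajectory prefix reachable in $\cD_\cL$, the actions $\tilde a_{1:h-1}$ produced by the online implementation equal those produced by $\varphi(\Breve g_{1:\Breve H},\cD_\cL)$. The base case holds because $\tilde\tau_{i,1}$ contains no earlier actions, so both constructions give $\Breve\tau'_{i,1}=\tilde\tau_{i,1}$. For the inductive step, fix $i$ and $h$. The only difference between $\Breve\tau_{i,h}$ and $\tilde\tau_{i,h}$ is the set of actions $\tilde a_{j,t}$, $t<h$, added through \eqref{construction:QC to sQC}. The criterion for adding $\tilde a_{j,t}$ (that $\sigma(\tilde\tau_{j,t})\subseteq\sigma(\tilde c_h)$ and that $\tilde a_{j,t}$ influences $\tilde s_{t+1}$) depends only on the model $\cD_\cL$ and the indices, not on realized values. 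Both algorithms therefore select the same index set. What must be checked is that the values the online algorithm fills in agree with those in the offline table.

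The online algorithm computes each added $\tilde a_{j,t}$ by re-evaluating $\Breve g_{j,t}$ on a reconstructed $\Breve\tau'_{j,t}$. We need this input to be a function of $\tilde c_h$, and to equal agent $j$'s true input. The first point uses $\sigma(\tilde\tau_{j,t})\subseteq\sigma(\tilde c_h)$. For the second, $\Breve\tau_{j,t}\setminus\tilde\tau_{j,t}\subseteq\Breve c_t$ consists of earlier added actions, which by recursion are themselves recomputable from $\tilde c_t\subseteq\tilde c_h$. By the inductive hypothesis these recomputed values equal the actions actually taken in $\cD_\cL$ under $\varphi(\Breve g_{1:\Breve H},\cD_\cL)$. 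Hence $\Breve\tau'_{i,h}$ is the same in both algorithms, the outputs $\Breve g_{i,h}(\Breve\tau'_{i,h})$ agree pointwise, and the induction closes.

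The main obstacle is the measurability step: passing from the $\sigma$-algebra inclusion $\sigma(\tilde\tau_{j,t})\subseteq\sigma(\tilde c_h)$ to an explicit, well-defined map from $\tilde c_h$ to $\tilde\tau_{j,t}$ that the online algorithm can evaluate. We would take this map via the Doob--Dynkin lemma on the finite spaces. Because the spaces are finite, the map is defined everywhere on the support, and ambiguity could occur only off the reachable set, which has probability zero and does not affect $J_{\cD_\cL}$. With pointwise equality of the actions established, the joint laws agree, and therefore the two implementations yield the same value and the same strategy in $\cD_\cL$.
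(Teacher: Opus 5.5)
Your proposal is correct and follows essentially the paper's route: an induction on $h$ showing that the on-the-fly reconstruction in Algorithm~\ref{algorithm recover} produces the same expanded information $\Breve\tau_{i,h}$ that \Cref{algorithm varphi} feeds to $\Breve g_{i,h}$. It does so by using $\sigma(\tilde\tau_{j,t})\subseteq\sigma(\tilde c_h)$ to recover earlier inputs and re-evaluating $\Breve g_{j,t}$ recursively (the paper compares both reconstructions to the true $\Breve\tau_{i,h}$ while you compare them directly, which is immaterial). Your points that the selected index set is model-determined and that the recovery map comes from Doob--Dynkin on finite spaces make explicit what the paper leaves implicit. One caveat: this needs Recover to test the influence condition of \eqref{construction:QC to sQC}, as in your assumed ``natural'' version, whereas the displayed Algorithm~\ref{algorithm recover} tests $\tilde a_{j,h'}\notin\tilde\tau_{i,h}$ instead and would then also add state-irrelevant control actions — a discrepancy the paper's own proof glosses over.
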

\begin{proof}
    As shown in Theorem \ref{theorem: sQC to QC}, Algorithm \ref{algorithm varphi} can compute the associated $\Breve{\tau}_{i,h}$ and use it as the input of $\Breve{g}_{i,h}$ (Line 12). Therefore, it suffices to prove that Algorithm \ref{algorithm Implement varphi} can also compute the associated $\Breve{\tau}_{i,h}$ and use it as the input of $\Breve{g}_{i,h}$ (Line 6), i.e., Algorithm \ref{algorithm recover} can output the associated $\Breve{\tau}_{i,h}$ from $\tilde{\tau}_{i,h}$ and $\Breve{g}_{1:h-1}$. We prove this by induction.\\
    Firstly, when $h=1$, it holds for any $i\in[n]$ such that $\tilde{\tau}_{i,1}=\Breve{\tau}_{i,1}$. In Algorithm \ref{algorithm recover}, when $h=1$, it will never enter the for loop, and thus the output is $\tilde{\tau}_{i,1}=\Breve{\tau}_{i,1}$.\\
    Secondly, we assume for any $h<t$, the hypothesis holds. Then, for $h=t$, given any $\tilde{\tau}_{i,t}\in\tilde{\cT}_{i,t}$ and $\Breve{g}_{1:t-1}$, let $\Breve{\tau}_{i,t}'$ be the output of Algorithm \ref{algorithm recover}. For any $j\in[n], h'<t$, if it holds that $\sigma(\tilde{\tau}_{j,h'})\subseteq \sigma(\tilde{c}_t)$ in $\cD_\cL$ and $\tilde{a}_{j,h'}\notin \tilde{\tau}_{i,t}$, then it can compute the associated $\Breve{\tau}_{j,h'}$ from induction hypothesis (Lines 5-6), compute the exact $\Breve{a}_{j,h'}$ based on $\Breve{g}_{j,h'}$ (Line 7), and add it into $\Breve{\tau}_{i,h}'$ (Line 8).
    Therefore, we know that $\Breve{\tau}_{i,t}'=\tilde{\tau}_{i,t}\cup \{\tilde{a}_{j,h'}\given \sigma(\tilde{\tau}_{j,h'})\subseteq \sigma(\tilde{c}_t) \text{ and } \tilde{a}_{j,h'}\notin \tilde{\tau}_{i,t}\}=\Breve{\tau}_{i,t}$. 
    By induction, we complete the proof.
\end{proof}
\begin{remark}
    The difference between Algorithms  \ref{algorithm varphi} and \ref{algorithm Implement varphi} lies as follows. Given any $\Breve{g}_{1:\Breve{H}}$ and $\cD_\cL$, Algorithm \ref{algorithm varphi} needs to recover the output $\tilde{a}_{i,h}$ of $\tilde{g}_{i,h}$ under all possible input $\tilde{\tau}_{i,h}\in\tilde{\cT}_{i,h}, i\in[n], h\in[\tilde{H}]$, where the cardinality of $\tilde{\cT}_{i,h}$ could be exponentially large. Thus, Algorithm \ref{algorithm varphi} may suffer from computational intractability. However, Algorithm \ref{algorithm Implement varphi} only requires recovering the output $\tilde{g}_{i,h}$ under the specific $\tilde{\tau}_{i,h}$ that occurred in the trajectory, which can be implemented in polynomial time.  
\end{remark}

\subsection{Proof of Theorem \ref{theorem: SI=QC}}
\begin{proof}

To prove that $\cD_\cL^\dag$ has SI-CIBs, it suffices to prove that for any  $h=2,\cdots,\Breve{H}$, fix any $h_1\in[h-1], i_1\in[n]$, and for any $\Breve{g}_{1:h-1}\in\Breve{\cG}_{1:h-1},\Breve{g}_{i_1,h_1}'\in\Breve{\cG}_{i_1,h_1}$, let $\Breve{g}_{h_1}':=(\Breve{g}_{1,h_1},\cdots,\Breve{g}_{i_1,h_1}',\cdots,\Breve{g}_{n,h_1})$ and $\Breve{g}_{1:h-1}':=(\Breve{g}_1,\cdots,\Breve{g}_{h_1}',\cdots,\Breve{g}_{h-1})$. If $\Breve{c}_h$ is reachable under both  $\Breve{g}_{1:h-1}$ and $\Breve{g}_{1:h-1}'$, then the following holds%
\begin{equation}
    \PP_h^{\cD_\cL^\dag}(\Breve{s}_h,\Breve{p}_h\given \Breve{c}_h,\Breve{g}_{1:h-1})=\PP_h^{\cD_\cL^\dag}(\Breve{s}_h,\Breve{p}_h\given \Breve{c}_h,\Breve{g}'_{1:h-1}).
\end{equation}
We prove this result  case by case as follows: 
\begin{itemize}
    \item 
If there exists some $i_3\neq i_1$ such that $\sigma(\Breve{\tau}_{i_1,h_1})\subseteq \sigma(\Breve{\tau}_{i_3,h})$  and $ \sigma(\Breve{a}_{i_1,h_1})\subseteq \sigma(\Breve{\tau}_{i_3,h})$, then from Assumption \ref{assumption:sigma_include}, we know that $\sigma(\Breve{\tau}_{i_1,h_1})\subseteq \sigma(\Breve{c}_h),\sigma(\Breve{a}_{i_1,h_1})\subseteq \sigma(\Breve{c}_h)$. Therefore, there exist deterministic measurable functions $\alpha_1,\alpha_2$ such that $\Breve{\tau}_{i_1,h_1}=\alpha_1(\Breve{c}_h), \Breve{a}_{i_1,h_1}=\alpha_2(\Breve{c}_h)$, and further it holds that 
\begin{align*}
&\PP_h^{\cD_\cL^\dag}(\Breve{s}_h,\Breve{p}_h\given \Breve{c}_h,\Breve{g}_{1:h-1})=\PP_h^{\cD_\cL^\dag}(\Breve{s}_h,\Breve{p}_h\given \alpha_1(\Breve{c}_h), \alpha_2(\Breve{c}_h), \Breve{c}_h,\Breve{g}_{1:h-1})=\PP_h^{\cD_\cL^\dag}(\Breve{s}_h,\Breve{p}_h\given \Breve{c}_h, \Breve{g}_{1:h-1}').  
\end{align*}
The last equality is due to the fact that both the input and output of $\Breve{g}_{i_1,h_1}$ are conditioned on.
\item If for any $i_2\neq i_1$, either $\sigma(\Breve{\tau}_{i_1,h_1})\nsubseteq \sigma(\Breve{\tau}_{i_2,h})$ or $\sigma(\Breve{a}_{i_1,h_1})\nsubseteq \sigma(\Breve{\tau}_{i_2,h})$, then agent $(i_1,h_1)$ does not influence any agent $(i_2,h)$ with $i_2\neq i_1$ in $\cD_\cL^\dag$, since otherwise, due to the sQC IS of $\cD_\cL^\dag$, it must hold that $\sigma(\Breve{\tau}_{i_1,h_1})\subseteq \sigma(\Breve{\tau}_{i_2,h})$ and $\sigma(\Breve{a}_{i_1,h_1})\subseteq \sigma(\Breve{\tau}_{i_2,h})$. Moreover, we claim that such an $h_1$ has to be even, since otherwise, the agent will be at a {communication}  step, and we must have $\Breve{\tau}_{i_1,h_1}=\Breve{c}_{h_1}\subseteq \Breve{c}_{h}\subseteq \Breve{\tau}_{i_2,h}$ by Assumption \ref{limited communication strategy}  and the reformulation in \Cref{LTC to Dec-POMDP}, and $\Breve{a}_{i_1,h_1}=m_{i_1,\frac{h_1+1}{2}}\in z_{\frac{h_1+1}{2}}^a=\Breve{z}_{h_1+1}\subseteq \Breve{c}_h\subseteq \Breve{\tau}_{i_2,h}$ by Assumption \ref{assumption:information evolution} (b), which violates the premise of this case. Let $k_1:=h_1/2$. 
Now, we claim that agent $(i_1,h_1)$ does not influence the state $\Breve{s}_h$ nor the information 
$\Breve{\tau}_{i_1,h}$. We prove this case by case as follows:
\begin{itemize}
\item Suppose   $h$ is even. If agent $(i_1,h_1)$ influences $\Breve{s}_{h_1+1}$, then from Assumption \ref{weak gamma observability}, there exists some $i_3\neq i_1$ such that agent $(i_1,h_1)$ influences  $\Breve{o}_{i_3,h_1+1}$. However, from Assumption \ref{assumption:information evolution}  (e), we know that $\Breve{o}_{i_3,h_1+1}=o_{i_3,\frac{h_1}{2}+1}\in\tau_{i_3,\frac{h_1}{2}+1}\subseteq\tau_{i_3,\frac{h}{2}}=\tilde{\tau}_{i_3,h}\subseteq\Breve{\tau}_{i_3,h}$, which means  that agent $(i_1,h_1)$ influences agent $(i_3,h)$, leading to a contradiction. { Therefore, we know that agent $(i_1,h_1)$ does not influence $\Breve{s}_{h_1+1}$, and thus for any $i_2\in[n]$, it does not influence $\Breve{o}_{i_2,h_1+1}$. Also, from Assumption \ref{useless action}, we know that $\Breve{a}_{i_1,h_1}\notin \Breve{\tau}_{i_2, h_1+1}$.  Therefore, agent $(i_1,h_1)$ does not influence  $\Breve{\tau}_{i_2,h_1+1}$ nor  $\Breve{a}_{i_2,h_1+1}$. By recursion, we know that agent $(i_1,h_1)$ does not influence $\Breve{s}_{h'}$ nor $\Breve{\tau}_{i_2,h'}$ $\forall~i_2\in[n],h'>h$. }
    \item Suppose $h$ is odd, then $\Breve{p}_{h}=\emptyset$ by \Cref{LTC to Dec-POMDP}.  If agent $(i_1,h_1)$ influences $\Breve{s}_h$ in $\cD_\cL^\dag$, then agent $(i_1,h_1)$ influences $\tilde{s}_{h}$ in $\cD_\cL$, since strict expansion does not change system dynamics. This implies that agent $(i_1,h_1)$ influences $s_{h_1+1}$ in $\cL$, since otherwise, as argued above, it will not influence the later state $\Breve{s}_h$. Moreover, from Assumption \ref{weak gamma observability}, we know that agent $(i_1,h_1)$ also influences $\tilde{o}_{-i_1,h}$, i.e., there   must exist some $i_3\neq i_1$ such that agent $(i_1,h_1)$ influences $\tilde{o}_{i_3,h}$ in $\cD_\cL$. From Assumption \ref{assumption:information evolution} (e), it holds that $\tilde{o}_{i_3,h}\in \tilde{\tau}_{i_3,h+1}$. 
    Therefore, agent $(i_1,h_1)$ influences agent $(i_3,h+1)$ in  $\cD_\cL$. From \Cref{lemma: action influence}, we know that $\sigma(\tau_{i_1,k_1^-})\subseteq\sigma(\tau_{i_3,k^-})$ in $\cL$, where $k:=(h+1)/2$. Furthermore, from Assumption \ref{assumption:sigma_include} and $i_3\neq i_1$, it holds that $\sigma(\tau_{i_1,k_1^-})\subseteq\sigma(c_{k^-})$. Also, from {\Cref{LTC to Dec-POMDP}}, it holds that  $\tilde{\tau}_{i_1,h_1}=\tau_{i_1,k_1^+}=\tau_{i_1,k_1^-}\cup z_{k_1}^a$ and $z_{k_1}^a=\tilde{z}_{h_1}\subseteq \tilde{c}_h$. Then, we have $\sigma(\tilde{\tau}_{i_1,h_1})\subseteq\sigma(\tilde{c}_{h})=\sigma(\tilde{\tau}_{i_3,h})$. Based on the strict expansion from $\cD_\cL$ to $\cD_\cL^\dag$, we can get $\Breve{\tau}_{i_1,h_1}\backslash\tilde{\tau}_{i_1,h_1}\subseteq \Breve{c}_{h_1}\subseteq \Breve{\tau}_{i_3,h}$  and $\Breve{a}_{i_1,h_1}\in\Breve{c}_h$. 
    Then, it holds that $\sigma(\Breve{\tau}_{i_1,h_1})\subseteq\sigma(\Breve{\tau}_{i_3,h}), \sigma(\Breve{a}_{i_1,h_1})\subseteq\sigma(\Breve{\tau}_{i_3,h})$, which leads to a contradiction to the premise that for any $i_2\neq i_1$, either $\sigma(\Breve{\tau}_{i_1,h_1})\nsubseteq \sigma(\Breve{\tau}_{i_2,h})$ or $\sigma(\Breve{a}_{i_1,h_1})\nsubseteq \sigma(\Breve{\tau}_{i_2,h})$.    
    Hence, we know that agent $(i_1,h_1)$ does not influence the state $\Breve{s}_h$.   Additionally, for any $i_2\neq i_1$, since agent $(i_1,h_1)$ does not influence  agent $(i_2,h)$ in  $\cD_\cL^\dag$, and   $\Breve{\tau}_{i_1,h}=\Breve{c}_h=\Breve{\tau}_{i_2,h}$, then  we know that agent $(i_1,h_1)$ does not influence $\Breve{\tau}_{i_1,h}$. 
\end{itemize}
Combining the two cases above, we know that agent $(i_1,h_1)$ does not influence  $\Breve{s}_h$, $\Breve{\tau}_{i,h}, \forall i\in[n]$, and $\Breve{c}_h=\cap_{i=1}^n\Breve{\tau}_{i,h}$ in $\cD_\cL^\dag$, yielding 
    \begin{align*}
        &\PP_h^{\cD_\cL^\dag}(\Breve{s}_h,\Breve{p}_h\given \Breve{c}_h,\Breve{g}_{1:h-1})=\PP_h^{\cD_\cL^\dag}(\Breve{s}_h, \Breve{p}_h,\Breve{c}_h\given \Breve{c}_h,\Breve{g}_{1:h-1})=\PP_h^{\cD_\cL^\dag}(\Breve{s}_h,\Breve{\tau}_{h}\given \Breve{c}_h,\Breve{g}_{1:h-1})=\PP_h^{\cD_\cL^\dag}(\Breve{s}_h,\{\Breve{\tau}_{i,h}\}_{i\in[n]}\given \Breve{c}_h,\Breve{g}_{1:h-1})\\
        &\qquad
        =\frac{\PP_h^{\cD_\cL^\dag}(\Breve{s}_h,\{\Breve{\tau}_{i,h}\}_{i\in[n]},\Breve{c}_h\given\Breve{g}_{1:h-1})}{\PP_h^{\cD_\cL^\dag}(\Breve{s}_h,\Breve{c}_h\given\Breve{g}_{1:h-1})}=\PP_h^{\cD_\cL^\dag}(\Breve{s}_h,\{\Breve{\tau}_{i,h}\}_{i\in[n]}\given \Breve{c}_h,\Breve{g}_{1:h-1}')
    =\PP_h^{\cD_\cL^\dag}(\Breve{s}_h,\Breve{p}_h\given \Breve{c}_h,\Breve{g}_{1:h-1}').
    \end{align*}
\end{itemize} 
This completes the proof.  
\end{proof}

\subsection{Proof of \Cref{theorem: refinement}}
\begin{proof}
    Firstly, from the construction of $\cD_\cL'$ and the strategy space $\overline{\cG}_{1:\overline{H}}$, we know that for any $h\in[H], i\in[n], \overline{\cC}_{2h-1}=\Breve{\cC}_{2h-1}, \overline{\cA}_{i, 2h-1}=\Breve{\cA}_{i,2h-1}, \overline{\cT}_{i,2h}=\Breve{\cT}_{i,2h}, \overline{\cA}_{i,2h}=\Breve{\cA}_{i,2h}$. Therefore, $\overline{\cG}_{1:\overline{H}}=\Breve{\cG}_{1:\Breve{H}}$, and finding a team-optimal strategy  of $\cD_\cL'$ in space $\overline{\cG}_{1:\overline{H}}$ is equivalent to finding that of $\cD_\cL^\dag$  in the  space $\Breve{\cG}_{1:\Breve{H}}$ by definition.
    
    Secondly, we will prove that the Dec-POMDP $\cD_\cL'$ satisfies the information evolution rules. From Assumption \ref{assumption:information evolution}, it holds that, for any $i\in[n], h\in[\overline{H}]$, if $h=2t-1$ with $t\in[H]$, then
    \begin{align*}
        \tilde{z}_{h}=\chi_{t}(\tilde{p}_{h-1},\tilde{a}_{h-1},\tilde{o}_h),\qquad 
        p_{i,h}=\xi_{i,t}(\tilde{p}_{i,h-1},\tilde{a}_{i,h-1},\tilde{o}_{i,h});
    \end{align*}
    if $h=2t$ with $t\in[H]$, then 
    \begin{align*}
        \tilde{z}_h=\phi_t(p_{h-1},\tilde{a}_{h-1}),\qquad \tilde{p}_{i,h}=p_{i,h-1}\backslash \phi_{i,t}(p_{i,h-1},\tilde{a}_{i,h-1}),
    \end{align*}
    where $\chi_t, \xi_{i,t}$ are fixed transformations and $\phi_{h},\phi_{i,h}$ are additional-sharing functions. Recall that we defined $p_{i,2t-1}=p_{i,t^-}$ for any $i\in[n],t\in[H]$. From the expansion (\Cref{construction:QC to sQC}), we know that $\Breve{z}_h=\Breve{c}_h\backslash\Breve{c}_{h-1}$.  Also, from the refinement, we know that $\forall i\in[n], h\in[\overline{H}], \overline{z}_{h}=\Breve{z}_{h}, \overline{c}_h=\Breve{c}_h, \overline{a}_{i,h}=\Breve{a}_{i,h}=\tilde{a}_{i,h}, \overline{o}_{i,h}=\Breve{o}_{i,h}=\tilde{o}_{i,h}$, and $\forall t\in[H], \overline{p}_{i, 2t-1}=p_{i,t^-}, \overline{p}_{i,2t}=\tilde{p}_{i,2t}$.

    Then, we can  construct $\{\overline{\chi}_{h}\}_{h\in[\overline{H}]}, \{\overline{\xi}_{i,h}\}_{i\in[n],h\in[\overline{H}]}$ accordingly as follows:
\begin{itemize}
       \item If $h=2t-1$ with $t\in[H]$, we define $\overline{\chi}_h,\{\overline{\xi}_{i,h}\}_{i\in[n]}$ as
       \begin{align*}
           &\forall i\in[n], \overline{\xi}_{i,h}:=\xi_{i,t}, \quad\forall \overline{p}_{h-1}\in\overline{\cP}_{h-1},\overline{a}_{h-1}\in\overline{\cA}_{h-1}, \overline{o}_{h}\in\overline{\cO}_{h}, \\
           &\overline{\chi}_h(\overline{p}_{h-1},\overline{a}_{h-1},\overline{o}_h):=\chi_t(\overline{p}_{h-1},\overline{a}_{h-1},\overline{o}_h)\cup \varrho_h^1\backslash\varrho_h^2,\text{ where}\\
           &\varrho_h^1:=\{\tilde{a}_{j,h-1}\given j\in[n],\sigma(\tilde{\tau}_{j,h-1})\subseteq \sigma(\tilde{c}_h), \tilde{a}_{j,h-1}\text{~influences }\tilde{s}_{h}\}\backslash\chi_t(\overline{p}_{h-1},\overline{a}_{h-1},\overline{o}_h)\text{ if $h>1$, otherwise $\emptyset.$}\\
           &\varrho_h^2:=\{\tilde{a}_{j,h_0}\given j\in[n], h_0<h-1, \sigma(\tilde{a}_{j,h_0})\subseteq \sigma(\tilde{c}_{h-1}),\tilde{a}_{j,h_0} \text{ influences } \tilde{s}_{h_0+1}\}\cap \chi_t(\overline{p}_{h-1},\overline{a}_{h-1},\overline{o}_h).
       \end{align*}
       Note that there exist  some functions $f_h^1,f_h^2$ such that $\varrho_h^1=f_h^1(\overline{p}_{h-1},\overline{a}_{h-1},\overline{o}_h),\varrho_h^2=f_h^2(\overline{p}_{h-1},\overline{a}_{h-1},\overline{o}_h)$. This is because, $\varrho_h^1\subseteq \{\overline{a}_{i,h-1}\}_{i\in[n]}$; which element $\overline{a}_{j,h-1}$ is in $\varrho_h^1$ is based on whether $\sigma(\tilde{\tau}_{j,h-1})\subseteq \sigma(\tilde{c}_h)$ and whether $ \tilde{a}_{j,h-1}\text{~influences }\tilde{s}_{h}$, which is the property of the problem $\cD_\cL$. Similarly, $\varrho_h^2\subseteq \chi_t(\overline{p}_{h-1},\overline{a}_{h-1},\overline{o}_h)$, and which element $\overline{a}_{j,h_0}$ is in $\varrho_h^2$ is based on whether $\sigma(\tilde{a}_{j,h_0})\subseteq \sigma(\tilde{c}_{h-1}),\tilde{a}_{j,h_0} \text{ influences } \tilde{s}_{h_0+1}$.
        Now we will show that $\overline{\chi}_h,\{\overline{\xi}_{i,h}\}_{i\in[n]}$ satisfy  
    \begin{align*}
        \overline{c}_{h}=\overline{c}_{h-1}\cup \overline{z}_{h},~~ \overline{z}_{h}=\overline{\chi}_{h}(\overline{p}_{h-1},\overline{a}_{h-1},\overline{o}_{h}),\qquad 
        \text{for each }i\in[n],~~ \overline{p}_{i,h}=\overline{\xi}_{i,h}(\overline{p}_{i,h-1},\overline{a}_{i,h-1},\overline{o}_{i,h}).
    \end{align*}
    For functions $\{\overline{\xi}_{i,h}\}_{i\in[n]}$, based on the construction of  $\cD_\cL'$ from $\cD_\cL$, we know that $\overline{p}_{i,h-1}=\tilde{p}_{i,h-1},\overline{a}_{i,h-1}=\tilde{a}_{i,h-1}, \overline{o}_{i,h-1}=\tilde{o}_{i,h-1}$, and $\overline{p}_{i,h}=p_{i,t^-}=p_{i,h}$. 

    For function $\overline{\chi}_h$, it is easy to verify that $\overline{z}_h\backslash\tilde{z}_h=(\overline{c}_h\backslash\tilde{c}_h)\backslash(\overline{c}_{h-1}\backslash\tilde{c}_{h-1})$ and $\tilde{z}_h\backslash\overline{z}_h=(\overline{c}_{h-1}\backslash\tilde{c}_{h-1})\backslash (\overline{c}_h\backslash\tilde{c}_h)$. Together with the fact that $\overline{z}_h=\tilde{z}_h\cup (\overline{z}_h\backslash\tilde{z}_h)\backslash(\tilde{z}_h\backslash\overline{z}_h)$, it suffices to show that $\varrho_h^1=\overline{z}_h\backslash\tilde{z}_h,\varrho_h^2=\tilde{z}_h\backslash\overline{z}_h$.  From the expansion, we know that for any $h'\in[\overline{H}],\overline{c}_{h'}\backslash\tilde{c}_{h'}$ only consists of some actions at the even timesteps, since the actions at odd timesteps cannot influence the underlying state. Therefore,  $\overline{z}_h\backslash\tilde{z}_h$ and $\tilde{z}_h\backslash\overline{z}_h$ only consist of some actions at the even timesteps. 

    For any $\tilde{a}_{i,2t_1},i\in[n],t_1<t,$ if $\tilde{a}_{i,2t_1}$ influences $\tilde{s}_{2t_1+1}$, then from Assumption \ref{weak gamma observability}, there exists $i_2\neq i$ such that $\tilde{a}_{i,2t_1}$ influences $\tilde{o}_{i_2,2t_1+1}$ and thus influences $\tilde{\tau}_{i_2,2t_1+1}$ due to Assumption \ref{assumption:information evolution} (e). From the QC IS of $\cD_\cL$, we know that $\sigma(\tilde{\tau}_{i,2t_1})\subseteq\sigma(\tilde{\tau}_{i_2,2t_1+1})$, and thus $\sigma(\tilde{\tau}_{i,2t_1})\subseteq\sigma(\tilde{c}_{2t_1+1})$ due to Assumption \ref{assumption:sigma_include} and $i_2\neq i$. Therefore, from \Cref{construction:QC to sQC}, we know that $\tilde{a}_{i,2t_1}\in\Breve{c}_{2t_1+1}$. This means, if any action $\tilde{a}_{i,2t_1}$ influences underlying state $\tilde{s}_{2t_1+1}$, we have $\tilde{a}_{i,2t_1}\in \Breve{c}_{2t_1+1}=\overline{c}_{2t_1+1}$, since it will be added  in $\Breve{c}_{2t_1+1}$ via expansion. 

    Therefore, if any action $\overline{a}_{i,h_1}=\tilde{a}_{i,h_1}$ with $i\in[n],h_1<h$ is in $(\overline{c}_h\backslash\tilde{c}_h)\backslash(\overline{c}_{h-1}\backslash\tilde{c}_{h-1})$, then it can only happen if $h_1=h-1$. Also, for any $i\in[n]$, if $\overline{a}_{i,h-1}\in \varrho_h^1$, then $\sigma(\tilde{\tau}_{i,h-1})\subseteq \sigma(\tilde{c}_h), \tilde{a}_{i,h-1}\text{~influences }\tilde{s}_{h}$, and furthermore,  $ \overline{a}_{i,h-1}\notin\chi_t(\overline{p}_{h-1},\overline{a}_{h-1},\overline{o}_h)$. Then, it means $\overline{a}_{i,h-1}\notin\tilde{c}_h$.  Therefore, $\overline{a}_{i,h-1}\in (\overline{c}_h\backslash\tilde{c}_h)\backslash(\overline{c}_{h-1}\backslash\tilde{c}_{h-1})$ and we proved that $ \varrho_h^1\subseteq \overline{z}_h\backslash\tilde{z}_h$. Also, for any  $i\in[n]$, $\overline{a}_{i,h-1}\in \overline{z}_{h}\backslash\tilde{z}_h$ only if $\overline{a}_{i,h-1}$ is added via expansion and $\overline{a}_{i,h-1}\notin \chi_t(\overline{p}_{h-1},\overline{a}_{h-1},\overline{o}_h)$. Then, this can only happen if $\sigma(\tilde{\tau}_{i,h-1})\subseteq \sigma(\tilde{c}_h)$ and $ \tilde{a}_{i,h-1}\text{~influences }\tilde{s}_{h}$. Therefore, we proved that $\overline{z}_{h}\backslash\tilde{z}_h\subseteq \varrho_h^1$. Combining the two parts we obtain  $\varrho_h^1=\overline{z}_{h}\backslash\tilde{z}_h$.

    If any action $\overline{a}_{i,h_1}$ with $i\in[n],h_1<h$ is in $\tilde{z}_h\backslash\overline{z}_h=(\overline{c}_{h-1}\backslash\tilde{c}_{h-1})\backslash (\overline{c}_h\backslash\tilde{c}_h)$,  then we will know that $\overline{a}_{i,h_1}\in \overline{c}_{h-1}$, then $h_1+1\le h-1$ and $\sigma(\overline{a}_{i,h_1})\subseteq \sigma(\overline{c}_{h-1})$. Also, from the proof above, we know that $\overline{a}_{i,h_1}$ influences the state $\overline{s}_{h_1+1}$, then together with $\overline{a}_{i,h_1}\in\tilde{z}_h$, we have $\overline{a}_{i,h_1}\in \{\tilde{a}_{j,h_0}\given j\in[n], h_0<h-1, \sigma(\tilde{a}_{j,h_0})\subseteq \sigma(\tilde{c}_{h-1}),\tilde{a}_{j,h_1} \text{ influences } \tilde{s}_{h_1+1}\}\cap\tilde{z}_h$. Therefore, $\tilde{z}_h\backslash\overline{z}_h\subseteq \varrho_h^2$. Meanwhile, for any $i\in[n],h_1<h$, if $\overline{a}_{i,h_1}\in\varrho_h^2$, then it holds $\overline{a}_{i,h_1}\in\tilde{z}_h$, $h_1<h-1$, and $\overline{a}_{i,h_1}$ influences $\overline{s}_{h_1+1}$. Then, from above, we know that $\overline{a}_{i,h_1}=\Breve{a}_{i,h_1}$ will be added in $\Breve{c}_{h_1+1}=\overline{c}_{h_1+1}\subseteq \overline{c}_{h-1}$ since $h_1<h-1$. Then, $\overline{a}_{i,h_1}\notin \overline{z}_h=\overline{c}_h\backslash\overline{c}_{h-1}$. Therefore, $\overline{a}_{i,h_1}\in \tilde{z}_h\backslash\overline{z}_h$, and then $\varrho_h^2\subseteq \tilde{z}_h\backslash\overline{z}_h$. Combining the two parts we obtain  $\varrho_h^2=\tilde{z}_h\backslash\overline{z}_{h}$.
    \item If $h=2t$ with $t\in[H]$, we define $\overline{\chi}_h,\{\overline{\xi}_{i,h}\}_{i\in[n]}$ as
       \begin{align*}
           &\forall i\in[n], \overline{p}_{i,h-1}\in\overline{\cP}_{i,h-1},\overline{a}_{i,h-1}\in \overline{\cA}_{i,h-1},\overline{o}_{i,h}\in \overline{\cO}_{i,h}, \overline{\xi}_{i,h}(\overline{p}_{i,h-1},\overline{a}_{i,h-1},\overline{o}_{i,h})=\overline{p}_{i,h-1}\backslash\phi_{i,t}(\overline{p}_{i,h-1}, \overline{a}_{i,h-1})\\
           &\forall \overline{p}_{h-1}\in\overline{\cP}_{h-1},\overline{a}_{h-1}\in \overline{\cA}_{h-1},\overline{o}_{h}\in \overline{\cO}_{h},\overline{\chi}_h(\overline{p}_{h-1},\overline{a}_{h-1},\overline{o}_{h})=\phi_t(\overline{p}_{h-1},\overline{a}_{h-1}). 
       \end{align*}
       Now we will show that $\overline{\chi}_h,\{\overline{\xi}_{i,h}\}_{i\in[n]}$ satisfy  
    \begin{align*}
        \overline{c}_{h}=\overline{c}_{h-1}\cup \overline{z}_{h},~~ \overline{z}_{h}=\overline{\chi}_{h}(\overline{p}_{h-1},\overline{a}_{h-1},\overline{o}_{h}),\qquad 
        \text{for each }i\in[n],~~ \overline{p}_{i,h}=\overline{\xi}_{i,h}(\overline{p}_{i,h-1},\overline{a}_{i,h-1},\overline{o}_{i,h}).
    \end{align*}
     For functions $\{\overline{\xi}_{i,h}\}_{i\in[n]}$, based on the construction of $\cD_\cL'$ from $\cD_\cL$, we know that $\overline{p}_{i,h-1}=p_{i,t^-}=p_{i,h-1},\overline{a}_{i,h-1}=\tilde{a}_{i,h-1}$ and $\overline{p}_{i,h}=\tilde{p}_{i,h}$. 

     For function $\overline{\chi}_h$, we know that $\overline{p}_{h-1}=p_{t^-}=p_{h-1}, \overline{a}_{h-1}=\tilde{a}_{h-1}$, so it suffices to show that $\tilde{z}_h=\overline{z}_h$. As shown above, $\tilde{z}_h\backslash\overline{z}_h$ and $\overline{z}_h\backslash\tilde{z}_h$ only consist of some actions at the even timesteps. Moreover, for any action $\tilde{a}_{i,2t_1}$ with $i\in[n], t_1<t$ that influences $\tilde{s}_{2t_1+1}$, it will be added in $\overline{c}_{2t_1+1}$  via expansion (if $\tilde{a}_{i,2t_1}\notin \tilde{c}_{2t_1+1}$); if $\tilde{a}_{i,2t_1}$ does not influence $\tilde{s}_{2t_1+1}$, then it will never be added via expansion. Therefore, $\overline{z}_h\backslash\tilde{z}_h=\emptyset$. Also, if some action $\tilde{a}_{i,2t_1}\in \tilde{z}_h\backslash\overline{z}_h$, then we know that $\tilde{a}_{i,2t_1}\in \tilde{z}_h=\phi_t(\overline{p}_{h-1},\tilde{a}_{h-1})$, and further we know that $\tilde{a}_{i,2t_1}\in \overline{p}_{h-1}=p_{i,t^-}$. However, if $\tilde{a}_{i,2t_1}$ influences $\tilde{s}_{2t_1+1}$, then it will be added in $\overline{c}_{2t_1+1}$ and cannot lie in $\overline{p}_{h-1}$; if $\tilde{a}_{i,2t_1}$ does not influence $\tilde{s}_{2t_1+1}$, then from Assumption \ref{useless action}, $\tilde{a}_{i,2t_1}\notin p_{i,t^-}$. Therefore, we know that $\tilde{z}_h\backslash\overline{z}_h=\emptyset$, and get $\tilde{z}_h=\overline{z}_h$.
\end{itemize}
    
    Thirdly, we prove that such a Dec-POMDP $\cD_\cL'$ has SI-CIBs  with respect to the strategy space $\overline{\cG}_{1:\overline{H}}$. This is equivalent to that
    for any $h\in[2:\overline{H}], \overline{s}_h\in\overline{\cS}, \overline{p}_h\in\overline{\cP}_h, \overline{c}_h\in \overline{\cC}_h, i_1\in[n],h_1<h, \overline{g}_{1:h-1},\overline{g}_{i_1,h_1}'\in \overline{\cG}_{i_1:h_1}$, let $\overline{g}_{h_1}':=(\overline{g}_{1,h_1},\cdots,\overline{g}_{i_1,h_1}',\cdots,\overline{g}_{n,h_1})$ and $\overline{g}_{1:h-1}':=(\overline{g}_1,\cdots,\overline{g}_{h_1}',\cdots,\overline{g}_{h-1})$. If $\overline{c}_h$ is reachable from both $\overline{g}_{1:h-1}$ and $\overline{g}_{1:h-1}'$, it holds that 
    \begin{equation}
        \PP_h^{\cD_\cL'}(\overline{s}_h,\overline{p}_h\given \overline{c}_h,\overline{g}_{1:h-1})=\PP_h^{\cD_\cL'}(\overline{s}_h,\overline{p}_h\given \overline{c}_h,\overline{g}'_{1:h-1}).
    \end{equation}
    We prove this case by case. If $h=2t$ with $t\in[H]$, then from the result of Theorem \ref{theorem: SI=QC}, it holds  that
    \begin{equation*}
        \PP_h^{\cD_\cL'}(\overline{s}_h,\overline{p}_h\given \overline{c}_h,\overline{g}_{1:h-1})=\PP_h^{\cD_\cL^\dag}(\overline{s}_h,\overline{p}_h\given \overline{c}_h,\overline{g}_{1:h-1})=\PP_h^{\cD_\cL^\dag}(\overline{s}_h,\overline{p}_h\given \overline{c}_h,\overline{g}_{1:h-1}')=\PP_h^{\cD_\cL'}(\overline{s}_h,\overline{p}_h\given \overline{c}_h,\overline{g}'_{1:h-1}).
    \end{equation*}
    Therefore, now we consider the case that $h=2t-1$ with $t\in[H]$.\\
    {
    Suppose $h_1$ is odd, which means that $\overline{a}_{h_1}$ corresponds to the communication action in $\cL$.  Then, it holds that $\overline{c}_{h_1}\subseteq \overline{c}_h, \overline{a}_{i_1,h_1}=m_{i_1,\frac{h_1+1}{2}}\in \overline{c}_h$, then
    \begin{align*}
        &\PP_h^{\cD_\cL'}(\overline{s}_h,\overline{p}_h\given \overline{c}_h,\overline{g}_{1:h-1})=\PP_h^{\cD_\cL'}(\overline{s}_h,\overline{p}_h\given \overline{c}_{h_1}, \overline{a}_{i_1,h_1}, \overline{c}_h,\overline{g}_{1:h-1})\\
        &\quad=\PP_h^{\cD_\cL'}(\overline{s}_h,\overline{p}_h\given \overline{c}_{h_1}, \overline{a}_{i_1,h_1}, \overline{c}_h,\overline{g}_{1:h-1}\backslash \overline{g}_{i_1,h_1})=\PP_h^{\cD_\cL'}(\overline{s}_h,\overline{p}_h\given \overline{c}_h,\overline{g}_{1:h-1}'),
    \end{align*}
    where the second equality is because the input and output of $\overline{g}_{i_1,h_1}$ are $\overline{c}_{h_1}$ and $\overline{a}_{i_1,h_1}$.\\
    Suppose $h_1$ is even, which means that $h_1$ is a  control timestep, and  let $t_1=\frac{h_1}{2}$. If $\sigma(\overline{\tau}_{i_1,h_1})\subseteq \sigma(\overline{c}_h)$ and $\sigma(\overline{a}_{i_1,h_1})\subseteq \sigma(\overline{c}_h)$, then  there exist deterministic measurable functions $\overline{\alpha}_1,\overline{\alpha}_2$ such that $\overline{\tau}_{i_1,h_1}=\overline{\alpha}_1(\overline{c}_h), \overline{a}_{i_1,h_1}=\overline{\alpha}_2(\overline{c}_h)$, and further it holds that 
\begin{align*}
&\PP_h^{\cD_\cL'}(\overline{s}_h,\overline{p}_h\given \overline{c}_h,\overline{g}_{1:h-1})=\PP_h^{\cD_\cL'}(\overline{s}_h,\overline{p}_h\given \overline{\alpha}_1(\overline{c}_h), \overline{\alpha}_2(\overline{c}_h), \overline{c}_h,\overline{g}_{1:h-1})\\
    &\quad=\PP_h^{\cD_\cL'}(\overline{s}_h,\overline{p}_h\given \overline{\tau}_{i_1,h_1},\overline{a}_{i_1,h_1},\overline{c}_h, \overline{g}_{1:h-1})=\PP_h^{\cD_\cL'}(\overline{s}_h,\overline{p}_h\given \overline{\tau}_{i_1,h_1},\overline{a}_{i_1,h_1},\overline{c}_h, \overline{g}_{1:h-1}').  
\end{align*}
    If $\sigma(\overline{\tau}_{i_1,h_1})\nsubseteq \sigma(\overline{c}_h)$ or $\sigma(\overline{a}_{i_1,h_1})\nsubseteq \sigma(\overline{c}_h)$, then  since $h_1$ is even, $\overline{\tau}_{i_1,h_1}=\Breve{\tau}_{i_1,h_1}, \overline{a}_{i_1,h_1}=\Breve{a}_{i_1,h_1}$. Also, we know that $\overline{c}_{h}=\Breve{c}_h$, then it holds that $\sigma(\Breve{\tau}_{i_1,h_1})\nsubseteq \sigma(\Breve{c}_h)$ or $\sigma(\Breve{a}_{i_1,h_1})\nsubseteq \sigma(\Breve{c}_h)$. 
    Firstly, from the sQC of $\cD_\cL^\dag$, we know that agent $(i_1,h_1)$ does not influence agent $(i_2,h)$ in $\cD_\cL^\dag$ for any $i_2\neq i_1$. Then, as shown in the proof of  Theorem \ref{theorem: SI=QC}, we know that agent  $(i_1,h_1)$ does not influence $\Breve{s}_{h_1+1}=\overline{s}_{h_1+1}$ and does not influence $\Breve{c}_{h'}=\overline{c}_{h'}$ for any $h'\le h$. Secondly,  from Assumption \ref{assumption:information evolution}, we know that for any $i\in[n], p_{i,(t_1+1)^-}=\xi_{i,t_1+1}(p_{i,t_1^+},a_{i,t_1},o_{i,t_1+1})$, where $\xi_{i,t_1+1}$ is a fixed transformation. Also, from Assumption \ref{useless action}, we know that $a_{i,t_1}\notin p_{i,(t_1+1)^-}$. Therefore, we can write $p_{i,(t_1+1)^-}=\xi_{i,t_1+1}(p_{i,t_1^+},o_{i,t_1+1})$. From the definition of refinement, we know that $\overline{p}_{i,h_1+1}=\xi_{i,t_1+1}(p_{i,t_1^+},\overline{o}_{i,h_1+1})$. Since  agent $(i_1,h_1)$ does not influence $\overline{s}_{h_1+1}$, it does not influence $\overline{o}_{i,h_1+1}$. Also, agent $(i_1,h_1)$ does not influence $\overline{c}_{h_1+1}$ and $p_{i,h_1^+}$ (which happens before choosing $a_{i_1,t_1}=\overline{a}_{i_1,h_1}$). Therefore, agent $(i_1,h_1)$ does not influence $\overline{\tau}_{i,h_1+1}$, and thus does not influence $\overline{a}_{i,h_1+1}$ for any $i\in[n]$.  Thirdly, we know that for any $i\in[n], \overline{p}_{i,h_1+2}=\overline{\xi}_{i,h_1+2}(\overline{p}_{i,h_1+1},\overline{a}_{i,h_1+1},\overline{o}_{i,h_1+2})$, where $\overline{o}_{i,h_1+2}=\emptyset$.  Since agent $(i_1,h_1)$ does not influence $\overline{p}_{i,h_1+1}$ and $\overline{a}_{i,h_1+1}$, it does not influence  $\overline{p}_{i,h_1+2}$. Also,  we know that it does not influence $\overline{\tau}_{i,h_1+2}$. Therefore, agent $(i_1,h_1)$ does not influence $\overline{\tau}_{i,h_1+2}$, and thus does not influence $\overline{a}_{i,h_1+2}$. This way, we know that agent $(i_1,h_1)$ does not influence $\overline{\tau}_{i,h'}$ for any $i\in[n],h_1<h'\le h$.\\
    Finally, since we proved that  agent $(i_1,h_1)$ does not influence  $\overline{s}_{h}$, $\overline{\tau}_{i,h},\forall i\in[n]$, and $\overline{c}_h=\cap_{i=1}^n\overline{\tau}_{i,h}$. Therefore, we have 
    \begin{align*}
        &\PP_h^{\cD_\cL'}(\overline{s}_h,\overline{p}_h\given \overline{c}_h,\overline{g}_{1:h-1})=\PP_h^{\cD_\cL'}(\overline{s}_h, \overline{p}_h,\overline{c}_h\given \overline{c}_h,\overline{g}_{1:h-1})=\PP_h^{\cD_\cL'}(\overline{s}_h,\overline{\tau}_{h}\given \overline{c}_h,\overline{g}_{1:h-1})\\
        &\quad =\PP_h^{\cD_\cL'}(\overline{s}_h,\{\overline{\tau}_{i,h}\}_{i\in[n]}\given \overline{c}_h,\overline{g}_{1:h-1})=\PP_h^{\cD_\cL'}(\overline{s}_h,\{\overline{\tau}_{i,h}\}_{i\in[n]}\given \overline{c}_h,\overline{g}_{1:h-1}')
    =\PP_h^{\cD_\cL'}(\overline{s}_h,\overline{p}_h\given \overline{c}_h,\overline{g}_{1:h-1}'),
    \end{align*}
    }
    which completes the proof.
\end{proof}

\subsection{Important Auxiliary Definitions} 
\begin{definition}[Full-history strategy]
    Given any Dec-POMDP $\cD_\cL'$ constructed from an LTC problem $\cL$ after reformulation, strict expansion, and refinement, we define the strategy $g_h:\prod_{t=1}^h\overline{\cO}_t\times\prod_{t=1}^{h-1}\overline{\cA}_t\rightarrow \overline{\cA}_{h}$ that uses the full joint observation-action history as a \emph{full-history strategy}, and define $\cG_h$ as the space of all such possible $g_h$, i.e., $\cG_h=\{g_h:\prod_{t=1}^h\overline{\cO}_t\times\prod_{t=1}^{h-1}\overline{\cA}_t\rightarrow \overline{\cA}_{h}\}$. 
    \label{def: fully history strategy}
\end{definition}

\begin{definition}[Perfect recall \cite{kuhn1953extensive}]\label{def:PR}
    We say that agent $i$ has perfect recall if $\forall h=2,\cdots, \overline{H}$, 
    it holds that $
        \overline{\tau}_{i,h-1}\cup\{\overline{a}_{i,h-1}\}\subseteq \overline{\tau}_{i,h}.$ 
    If for any $i\in[n]$, agent $i$ has perfect recall, we call that the Dec-POMDP has a perfect recall property. 
\end{definition}
The following definitions are important for solving the Dec-POMDP constructed from LTC.

\begin{definition}[Value function]
    For each $h\in[\overline{H}]$, given common information $\overline{c}_h$ and strategy $\overline{g}_{1:\overline{H}}\in\overline{\cG}_{1:\overline{H}}$, the value function conditioned on the common information is defined as:
    \begin{equation}
V^{\overline{g}_{1:\overline{H}},\cD_\cL'}_h(\overline{c}_h):=\EE_{\overline{g}_{1:\overline{H}}}^{\cD_\cL'}
\left[\sum_{h'=h}^{\overline{H}}\overline{\cR}_{h'}(\overline{s}_{h'},\overline{a}_{h'},\overline{p}_{h'})\bigggiven \overline{c}_h\right],
    \end{equation}
\end{definition}
\noindent where $\overline{\cR}_{h'}$ takes $\overline{s}_{h'},\overline{a}_{h'},\overline{p}_{h'}$ as input. 

\begin{definition}[Prescription and Q-value function]
\label{definition: prescription}
    \emph{Prescription} is an important concept in the common-information-based framework \cite{ashutosh2013team,ashutosh2013game}. For any $h\in[\overline{H}],i\in[n]$, the \emph{prescription} of agent $i$ at timestep $h$ is defined as $\gamma_{i,h}\in \Gamma_{i,h}$, where  $\Gamma_{i,h}:=\overline{\cA}_{i,h}=\cM_{i,\frac{h+1}{2}}$ if $h=2k-1,k\in[H]$, and $\Gamma_{i,h}:=\{\gamma_{i,h}:\overline{\cP}_{i,h}\rightarrow \overline{\cA}_{i,h}\}$ if $h=2k,k\in[H]$. We use $\gamma_h:=(\gamma_{1,h},\cdots,\gamma_{n,h})$ to denote the joint prescription and $\Gamma_h$ to denote the joint prescription space. The prescriptions are the marginalization of the  strategies $\overline{g}_{1:\overline{H}}$, i.e.,  $\forall k\in[H], \gamma_{i,2k-1}=\overline{g}_{i,2k-1}(\overline{c}_{2k-1}), \gamma_{i,2k}(\cdot)=\overline{g}_{i,2k}(\overline{c}_{2k},\cdot)$.  Then, for any $\overline{g}_{1:\overline{H}}$, we can define the Q-value function as
    \begin{equation}
    Q^{\overline{g}_{1:\overline{H}},\cD_\cL'}_h(\overline{c}_h,\gamma_h):=\EE_{\overline{g}_{1:\overline{H}}}^{\cD_\cL'}\left[\sum_{h'=h}^{\overline{H}}  \overline{\cR}_{h'}(\overline{s}_{h'},\overline{a}_{h'},\overline{p}_{h'})\bigggiven \overline{c}_h,\gamma_h\right].
    \end{equation}
\end{definition}

Note that for any $i\in[n]$, and any odd timestep $2k-1$ with $k\in[H]$, by Assumption \ref{limited communication strategy}, we consider the strategies that only take common information as input. Thus, the prescription $\gamma_{i,2k-1}=\overline{a}_{i,2k-1}$.  To unify the notation, we may also write it as $\gamma_{i,2k-1}(\cdot)=\overline{g}_{i,2k-1}(\overline{c}_{2k-1},\cdot)$, where $\cdot$ takes the value $\emptyset$ rather than $\overline{p}_{i,2k-1}$.

\begin{definition}[Expected approximate common-information model]
\label{definition: AIS}
We define an \emph{expected approximate common-information model} of $\cD_\cL'$ as
\begin{equation*}
    \cM:=\left(\{\hat{\cC}_h\}_{h\in[\overline{H}+1]},\{\hat{\phi}_h\}_{h\in[\overline{H}]},\{\PP^{\cM,z}_h\}_{h\in[\overline{H}]},\Gamma,\{\hat{\cR}_h^\cM\}_{h\in[\overline{H}]}\right)
\end{equation*}
where $\Gamma=\{\Gamma_h\}_{h\in[\overline{H}]}$ is the joint prescription space, $\hat{\cC}_h$ is the space of approximate common information at timestep $h$. $\PP^{\cM,z}_h:\hat{\cC}_h\times \Gamma_h\rightarrow \Delta(\overline{Z}_{h+1})$ gives the probability of $\overline{z}_{h+1}$ under $\hat{c}_h$ and $\gamma_h$. $\hat{\cR}_h^\cM:\hat{\cC}_h\times \Gamma_h\rightarrow [-1,1]$ gives the reward at timestep $h$ given $\hat{c}_h$ and $\gamma_h$.  Then, we call that $\cM$ is an \emph{$(\epsilon_r(\cM),\epsilon_z(\cM))$-expected approximate common-information model} of $\cD_\cL'$, if it has some compression function Compress$_h$ such that $\hat{c}_h=$Compress$_h(\overline{c}_h)$ for each  $h\in[\overline{H}+1]$, and satisfies the  following:  
\begin{itemize}
    \item There exists a transformation function $\hat{\phi}_h$ for all $h\in[\overline{H}]$ such that
    \begin{equation}
        \hat{c}_{h+1}=\hat{\phi}_{h+1}(\hat{c}_{h},\overline{z}_{h+1}), \qquad\text{ where $\overline{z}_{h+1}=\overline{c}_{h+1}\backslash \overline{c}_{h}$}.
        \label{AIS:evolution}
    \end{equation}
    \item For any $\overline{g}_{1:h-1}$ and any prescription $\gamma_h\in \Gamma_h$, it holds that 
    \begin{equation}
\EE^{\cD_\cL'}_{\overline{a}_{1:h-1},\overline{o}_{1:h}\sim \overline{g}_{1:h-1}}|\EE^{\cD_\cL'}[\overline{\cR}_h(\overline{s}_h,\overline{a}_h,\overline{p}_h)\given \overline{c}_h,\gamma_h]-\hat{\cR}^\cM_h(\hat{c}_h,\gamma_h)|\le \epsilon_r(\cM).
        \label{AIS: r close}
\end{equation}
    \item For any $\overline{g}_{1:h-1}$ and any prescription $\gamma_h\in \Gamma_h$, it holds that 
    \begin{equation}
\EE^{\cD_\cL'}_{\overline{a}_{1:h-1},\overline{o}_{1:h}\sim \overline{g}_{1:h-1}}\norm{\PP^{\cD_\cL'}_h(\cdot\given \overline{c}_h,\gamma_h)-\PP^{\cM,z}_h(\cdot\given \hat{c}_h,\gamma_h)}_1\le \epsilon_z(\cM).
        \label{AIS: z close}
    \end{equation}
\end{itemize}
\end{definition}

\begin{definition}[Value functions under $\cM$]\label{def:values_under_M}
Given any expected approximate common-information model $\cM$ of a Dec-POMDP $\cD_\cL'$, any strategy $\overline{g}_{1:\overline{H}}\in\overline{\cG}_{1:\overline{H}}$, and $h\in[\overline{H}]$, we define the value function and Q-value functions under $\cM$ as
\begin{align*}
&V_h^{\overline{g}_{1:\overline{H}},\cM}(\overline{c}_h)=\hat{\cR}_h^{\cM}(\text{Compress}_h(\overline{c}_h), \{\overline{g}_{j,h}(\overline{c}_h,\cdot)\}_{j\in[n]})+\EE^\cM[V_{h+1}^{\overline{g}_{1:\overline{H}},\cM}(\overline{c}_{h+1})\given \text{Compress}_h(\overline{c}_h),\{\overline{g}_{j,h}(\overline{c}_h,\cdot)\}_{j\in[n]}],\\
    &Q_h^{\overline{g}_{1:\overline{H}},\cM}(\overline{c}_h,\gamma_h)=\hat{\cR}_h^{\cM}(\text{Compress}_h(\overline{c}_h),\gamma_h)+\EE^\cM[V_{h+1}^{\overline{g}_{1:\overline{H}},\cM}(\overline{c}_{h+1})\given \text{Compress}_h(\overline{c}_h),\gamma_h],\\
    &Q_h^{\ast,\cM}(\overline{c}_h,\gamma_h)=\max_{\overline{g}_{1:\overline{H}}\in\overline{\cG}_{1:\overline{H}}}Q_h^{\overline{g}_{1:\overline{H}},\cM}(\overline{c}_h,\gamma_h).
\end{align*}
\end{definition}

  \begin{definition}[Model-belief consistency]\label{def:consistency}
   We say the {expected approximate} common-information model $\cM$ of $\cD_\cL'$ is \emph{consistent with} some approximate common-information-based beliefs  {$\{\PP_{h}^{\cM, c}(\overline{s}_h, \overline{p}_h\given \hat{c}_h)\}_{h\in [\overline{H}]}$},  if it satisfies the following:  for all  $h\in [H]$,  
    \begin{equation}
    \label{consis:t}
        \begin{aligned}
&\PP_{2h-1}^{\cM, z}(\overline{z}_{2h}\given\hat{c}_{2h-1}, {\gamma}_{2h-1}) = \sum_{\overline{s}_{2h-1}}\sum_{ \substack{\overline{p}_{2h-1}:\\ \overline{\chi}_{2h}(\overline{p}_{2h-1}, \gamma_{2h-1}, \overline{o}_{2h}=\emptyset) = \overline{z}_{2h}} } \PP_{2h-1}^{\cM, c}(\overline{s}_{2h-1}, \overline{p}_{2h-1}\given\hat{c}_{2h-1}),\\
&\PP_{2h}^{\cM, z}(\overline{z}_{2h+1}\given\hat{c}_{2h}, {\gamma}_{2h}) = \sum_{\overline{s}_{2h}}\sum_{ \substack{\overline{p}_{2h}, \overline{a}_{2h}, \overline{o}_{2h+1}:\\ \overline{\chi}_{2h+1}(\overline{p}_{2h}, \overline{a}_{2h}, \overline{o}_{2h+1}) = \overline{z}_{2h+1}}} \Big(\PP_{2h}^{\cM, c}(\overline{s}_{2h}, \overline{p}_{2h}\given\hat{c}_{2h})\mathds{1}[\overline{a}_{2h}=\gamma_{2h}(\overline{p}_{2h})]\\
&\qquad\qquad\qquad\qquad\qquad\qquad\sum_{\overline{s}_{2h+1}}\overline{\TT}_{2h}(\overline{s}_{2h+1}\given \overline{s}_{2h}, \overline{a}_{2h})]\overline{\OO}_{2h+1}(\overline{o}_{2h+1}\given \overline{s}_{2h+1})\Big),
\end{aligned}
\end{equation}
    \begin{equation}
        \begin{aligned}
&\hat{\cR}_{2h-1}^{\cM}(\hat{c}_{2h-1}, \gamma_{2h-1})=\sum_{\overline{s}_{2h-1}}\sum_{\overline{p}_{2h-1}}\PP_{2h-1}^{\cM, c}(\overline{s}_{2h-1}, \overline{p}_{2h-1}\given\hat{c}_{2h-1})\overline{\cR}_{2h-1}(\overline{s}_{2h-1}, \gamma_{2h-1}, \overline{p}_{2h-1}),\\ 
&\hat{\cR}_{2h}^{\cM}(\hat{c}_{2h}, \gamma_{2h})=\sum_{\overline{s}_{2h}, \overline{p}_{2h},\overline{a}_{2h}}\PP_{2h}^{\cM, c}(\overline{s}_{2h}, \overline{p}_{2h}\given\hat{c}_{2h}) \mathds{1}[\overline{a}_{2h}=\gamma_{2h}(\overline{p}_{2h})]\overline{\cR}_{2h}(\overline{s}_{2h}, \overline{a}_{2h}, \overline{p}_{2h}). 
\label{consis:r}
\end{aligned}
\end{equation}
\end{definition}

\begin{definition}[{Strategy-dependent approximate common-information model}]\label{def:simulation_main} Given an expected approximate common-information model $\Tilde \cM$ of $\cD_\cL'$ (as in Definition \ref{definition: AIS}) and $\overline{H}$ {joint} strategies $g^{1:\overline{H}}$, where each $g^h\in \cG_{1:\overline{H}}$  for $h\in [\overline{H}]$ (as in \Cref{def: fully history strategy}),  
 {we say $\Tilde \cM$ is a    \emph{strategy-dependent  expected approximate {common-information} model}}, denoted as  $\Tilde{\cM}(g^{1:\overline{H}})$,  if it is consistent with {the}  \emph{strategy-dependent} beliefs $\{\PP_{h}^{g^h, \cD_{\cL}'}(\overline{s}_h, \overline{p}_h\given \hat{c}_h)\}_{h\in [\overline{H}]}$ (as per Definition \ref{def:consistency}).  
\end{definition}

\begin{definition}[{Length of approximate common information}]\label{def:L_main}
Given the compression functions $\{\text{Compress}_h\}_{h\in[\overline{H}]}$,  we define an integer $\hat{L}> 0$ as the minimum length such that there exists a mapping $\hat{f}_h:\overline{\cO}_{\max\{1,h-\hat{L}\}: h}\times\overline{\cA}_{\max\{1,h-\hat{L}\}: h-1}\rightarrow\hat{\cC}_h$ that satisfies: for each $h\in [\overline{H}+1]$ and joint history $\{\overline{o}_{1:h}, \overline{a}_{1:h-1}\}$, we have $\hat{f}_h(x_h)=\hat{c}_h$, where $x_{h}=\{\overline{o}_{\max\{h-\hat{L},1\}},\overline{a}_{\max\{h-\hat{L}, 1\}}, \overline{o}_{\max\{h-\hat{L}, 1\} + 1}, \cdots, \overline{a}_{h-1}, \overline{o}_{h}\}$.
\end{definition}

\subsection{Main Results for Planning in QC LTCs}\label{sec:planning_QC_LTC}
\begin{theorem}[Full version of \Cref{theorem: planning}]
    Given any QC LTC problem $\cL$ satisfying Assumptions \ref{gamma observability}, \ref{limited communication strategy}, \ref{useless action}, and \ref{weak gamma observability}, we can construct a Dec-POMDP problem $\cD_\cL'$ with SI-CIBs  such that for any $\epsilon>0$, solving an $\epsilon$-team-optimal strategy in $\cD_\cL'$ can give us an $\epsilon$-team-optimal strategy of $\cL$, and the following holds. Fix $\epsilon_r, \epsilon_z>0$ and given any $(\epsilon_r, \epsilon_z)$-expected approximate common-information model $\cM$ for $\cD_\cL'$ that satisfies  Assumption \ref{assu: one_step_tract}, there exists an algorithm that can compute a  $(2\overline{H}\epsilon_r + \overline{H}^2\epsilon_z)$-team-optimal strategy for the original LTC problem $\cL$ with time complexity $\max_{h\in[\overline{H}]}|\hat{\cC}_h|\cdot \texttt{poly}(|\overline{\cS}|, |\overline{\cA}_h|, |\overline{\cP}_h|, \overline{H})$. In particular, for any fixed $\epsilon>0$, if $\cL$ has a baseline sharing protocol as one of the examples in \S \ref{sec: examples of QC}, one can construct such an   $\cM$ and apply \Cref{main algorithm} to compute an $\epsilon$-team-optimal strategy for $\cL$ with the following complexities:  
    \begin{itemize}
        \item \textbf{Examples 1, 3, 5, 6:} $\texttt{poly}(\max_{h\in[\overline{H}]}(|\overline{\cO}_h||\overline{\cA}_h|)^{C\gamma^{-4}\log(\frac{|\overline{\cS}|}{\epsilon})}, |\overline{\cS}|,
        \overline{H},\frac{1}{\epsilon})$;  
        \item \textbf{Examples 2, 4, 7, 8:} $\texttt{poly}(\max_{h\in[\overline{H}]}(|\overline{\cO}_h||\overline{\cA}_h|)^{C\gamma^{-4}\log(\frac{|\overline{\cS}|}{\epsilon})+2d}, |\overline{\cS}|,
        \overline{H},\frac{1}{\epsilon})$,
    \end{itemize}
    for some universal constant $C>0$. Recall that $\gamma$ is the constant in \Cref{gamma observability}. And $d$ is the delayed step of sharing,  which is a constant as stated in \S\ref{sec: examples of QC}. Note that if $d=\texttt{poly}\log H$, the complexity is still quasi-polynomial.
    \label{thm: full planning}
\end{theorem}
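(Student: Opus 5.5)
The plan has three parts: chain the structural results already established, prove an approximate dynamic-programming error bound for the compressed model, and then instantiate that bound for each example.

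\textbf{Part 1: reduction.} Starting from a QC $\cL$ satisfying Assumptions \ref{gamma observability}, \ref{limited communication strategy}, \ref{useless action}, and \ref{weak gamma observability}, we proceed as follows.
\begin{enumerate}
\item Proposition \ref{prop: equivalence of LTC and Dec-POMDP} gives a value-preserving reformulation $\cD_\cL$.
\item Theorem \ref{reformulate QC} shows that $\cD_\cL$ is QC.
\item Lemma \ref{lemma:QC to sQC} shows that the expansion $\cD_\cL^\dag$ is sQC.
\item Theorem \ref{theorem: refinement} gives $\cD_\cL'$. It has SI-CIBs over $\overline{\cG}_{1:\overline{H}}$, its optimization over that class is equivalent to optimizing $\cD_\cL^\dag$, and it satisfies the standard information-evolution rules.
\end{enumerate}
Since the strategy classes coincide, an $\epsilon$-team-optimal strategy of $\cD_\cL'$ is $\epsilon$-team-optimal for $\cD_\cL^\dag$. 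Theorem \ref{theorem: sQC to QC}, implemented efficiently via Algorithm \ref{algorithm Implement varphi} and Lemma \ref{lemma: equivalence alg 3-4}, maps it to an $\epsilon$-team-optimal strategy of $\cD_\cL$. Proposition \ref{prop: equivalence of LTC and Dec-POMDP} then maps that back to $\cL$. This settles the first claim.

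\textbf{Part 2: error bound.} Given an $(\epsilon_r,\epsilon_z)$-expected approximate model $\cM$, run backward induction over $\hat c_h\in\hat\cC_h$. At each step solve $\max_{\gamma_h} Q^{\ast,\cM}_h(\hat c_h,\gamma_h)$ exactly:
\begin{itemize}
\item at odd steps, by enumerating $\cM_h$;
\item at even steps, by Assumption \ref{assu: one_step_tract}.
\end{itemize}
This gives the claimed time $\max_h|\hat\cC_h|\cdot\texttt{poly}(\cdot)$.

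For the error, I will adapt the argument of \cite{liu2023tractable}. For any strategy $\overline g$, bound $|V_1^{\overline g,\cD_\cL'}-V_1^{\overline g,\cM}|$ by telescoping over $h$. Each step contributes $\epsilon_r$ from \eqref{AIS: r close}, plus $\overline H\epsilon_z$ from \eqref{AIS: z close}, using that values are bounded by $\overline H$. The expectations are taken under trajectories of $\overline g$, which is why the \emph{expected} version of the definition suffices.

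Summing gives $\overline H\epsilon_r+\overline H^2\epsilon_z/2$ per strategy. Applying this to both the optimal strategy and the computed one, and using that the computed strategy is optimal in $\cM$, yields $2\overline H\epsilon_r+\overline H^2\epsilon_z$.

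Rewards depend on $\overline p_h$, so the reward term uses the joint belief over $(\overline s_h,\overline p_h)$. This causes no issue, because the definition already compares conditional expected rewards.

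\textbf{Part 3: instantiation for the examples.} For each example, construct $\cM$ by truncating the common information to its last $\hat L$ steps. Take $\hat L=C\gamma^{-4}\log(|\overline\cS|/\epsilon)$ for Examples 1, 3, 5, 6, and add $2d$ steps for the delayed Examples 2, 4, 7, 8. Define $\PP^{\cM,c}_h$ as the belief obtained by running the SI-CIB filter from a uniform prior over the window. SI-CIB makes this belief strategy-independent, so consistency (Definition \ref{def:consistency}) holds.

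The $\epsilon_r,\epsilon_z$ guarantees then reduce to belief contraction under $\gamma$-observability, via Lemma \ref{lemma: AIS belief close}.

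\textbf{Main obstacle.} Unlike \cite{liu2023tractable}, $\gamma$-observability fails at even steps, where $\tilde\cO_{2h}=\{\emptyset\}$. I would handle this by noting that even steps have identity transition $\tilde\TT_{2h-1}$ and uninformative emissions. Their filter update is therefore the identity on the state marginal, so the contraction per pair of steps matches one observable step of $\cL$. The window length measured in $\overline H$ steps only doubles, and this is absorbed into $C$.

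Next, verify Assumption \ref{assu: one_step_tract} for each example using the additional structural conditions listed in \S\ref{sec: examples of QC}:
\begin{itemize}
\item additive rewards decouple the team decision problem into per-agent maximizations;
\item the single-controller condition reduces it to one agent's maximization;
\item the factorized model decouples it across agents;
\item the nested emission condition makes the team decision problem tractable as in \cite{liu2023tractable}.
\end{itemize}

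Finally, $|\hat\cC_h|\le(|\overline\cO_h||\overline\cA_h|)^{\hat L}$. Since $|\overline\cP_h|$ is polynomial for constant $d$, the stated quasi-polynomial bounds follow by choosing $\epsilon_r,\epsilon_z=\Theta(\epsilon/\overline H^2)$.
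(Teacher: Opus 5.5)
Your three-part plan coincides with the paper's proof: the same reduction chain, the same per-strategy bound $(\overline{H}-h+1)\epsilon_r+\frac{(\overline{H}-h)(\overline{H}-h+1)}{2}\epsilon_z$ applied to an arbitrary strategy and to the backward-induction output, and the same handling of the uninformative even steps. For the latter, the paper maps the filter to a fully-sharing Dec-POMDP with the transitions and emissions of $\cL$ and takes $L=2L'+1$ in Lemma~\ref{lemma: AIS belief close}.

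The one step your sketch does not supply is the error analysis for the delayed Examples 2, 4, 7, 8. There the last $2d$ steps of $\overline{c}_h$ contain only part of the joint observation: agent~1's observations where applicable, and the additionally shared $\overline{o}_M$. Joint $\gamma$-observability therefore gives no contraction on that segment. Lemma~\ref{lemma: AIS belief close} only covers full histories or a single partially observed last step. So ``reduce to belief contraction via Lemma~\ref{lemma: AIS belief close}'' does not by itself control the belief error at $h$.

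The paper handles this in three moves. It truncates only the fully shared prefix ending at $h-2d$. It writes both the true and the approximate belief as the same kernel $\PP_h^{\cD_\cL'}(\overline{s}_h,\overline{p}_h\given \overline{s}_{h-2d},f_a,f_o)$ composed with the Bayes posteriors $F^{P_h(\cdot\given\cdot,f_a)}(\cdot\,;f_o)$ of the exact and truncated beliefs at $h-2d$. It then uses that Bayes updates are non-expansive in expectation up to a constant (\cite[Lemma 12]{liu2023tractable}; e.g.\ $\EE_{y}\norm{F(b;y)-F(b';y)}_1\le 2\norm{b-b'}_1$ when $y$ is drawn under $b$). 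With this ingredient added, your window of length $\hat{L}+2d$ goes through.

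Three smaller points.
\begin{itemize}
\item Consistency (\Cref{def:consistency}) is not a consequence of SI-CIB. You impose it by defining $\PP^{\cM,z}_h$ and $\hat{\cR}^{\cM}_h$ from $\PP^{\cM,c}_h$, after which Lemma~\ref{lemma: eps_r,eps_z to belief} turns the expected belief error into $\epsilon_r,\epsilon_z$. What SI-CIB does is make the target $\PP^{\cD_\cL'}_h(\cdot,\cdot\given\overline{c}_h)$ independent of the past strategy.
\item Additive rewards alone do not decouple the one-step team decision problem, since the $V_{h+1}$ term still couples the agents through the transition and the common-information increment. It is additivity together with a single controller (the turn-based condition) that does.
\item Your scaling $\epsilon_r,\epsilon_z=\Theta(\epsilon/\overline{H}^2)$ is the correct one; the paper's Part III builds $\cM$ with $\epsilon_r=\epsilon_z=\epsilon$ and leaves the rescaling implicit. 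But the window must then be chosen from $\epsilon/\overline{H}^2$ rather than $\epsilon$. The exponent you actually obtain is $C\gamma^{-4}\log(|\overline{\cS}|\overline{H}/\epsilon)$ rather than the displayed $C\gamma^{-4}\log(|\overline{\cS}|/\epsilon)$. This is still quasi-polynomial, but it is not literally the stated bound.
\end{itemize}
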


\begin{proof}
    We divide the proof into the following three {\bf Parts}. \\
    
    \noindent\textbf{Part I:} Given any QC LTC problem $\cL$ satisfying Assumptions \ref{gamma observability}, \ref{limited communication strategy}, \ref{useless action}, and \ref{weak gamma observability}, we can construct a Dec-POMDP problem $\cD_\cL'$ with SI-CIBs  such that finding an $\epsilon$-team-optimal strategy can give us an $\epsilon$-team-optimal strategy of $\cL$, as shown in \Cref{main algorithm}.\\
    
    \noindent\textbf{Part II:} Given any $(\epsilon_r,\epsilon_z)$-expected approximate common-information model $\cM$ of the Dec-POMDP $\cD_\cL'$, we aim to show that there exists an algorithm, Algorithm \ref{algorithm under AIS}, that can output an $\epsilon$-team-optimal strategy of $\cD_\cL'$ with $\epsilon=2\overline{H}\epsilon_r + \overline{H}^2\epsilon_z$. Then, together with the result in \textbf{Part I}, we can further conclude that the output of \Cref{main algorithm} is an $\epsilon$-team-optimal strategy of $\cL$. 
    
    First,  we need to prove that solving $\cM$ can get the $\epsilon$-team-optimal strategy of $\cD_\cL'$. We prove the following 2 lemmas first.
    \begin{lemma}
    \label{lemma: V closed}
        For any strategy $\overline{g}_{1:\overline{H}}\in\overline{\cG}_{1:\overline{H}}$, and $h\in[\overline{H}]$, we have
        \begin{equation}
            \EE_{\overline{g}_{1:\overline{H}}}^{\cD_\cL'}[|V_h^{\overline{g}_{1:\overline{H}},\cD_\cL'}(\overline{c}_h)-V_h^{\overline{g}_{1:\overline{H}},\cM}(\overline{c}_h)|]\le (\overline{H}-h+1)\epsilon_r+\frac{(\overline{H}-h+1)(\overline{H}-h)}{2}\epsilon_z.
        \end{equation}
    \end{lemma}
    \begin{proof}
        We prove it by induction. For $h=\overline{H}+1$, we have $V_h^{\overline{g}_{1:\overline{H}},\cD_\cL'}(\overline{c}_h)=V_h^{\overline{g}_{1:\overline{H}},\cM}(\overline{c}_h)=0$. \\
        For the step $h\le \overline{H}$, we have
        \begin{align*}
            &\EE_{\overline{g}_{1:\overline{H}}}^{\cD_\cL'}[|V_h^{\overline{g}_{1:\overline{H}},\cD_\cL'}(\overline{c}_h)-V_h^{\overline{g}_{1:\overline{H}},\cM}(\overline{c}_h)|]\\
            \le &\EE_{\overline{g}_{1:\overline{H}}}^{\cD_\cL'}\left[|\EE^{\cD_\cL}[\overline{\cR}_h(\overline{s}_h,\overline{a}_h,\overline{p}_h)\given \overline{c}_h,\{\overline{g}_{j,h}(\overline{c}_h,\cdot)\}_{j\in[n]}]-\hat{\cR}_h^\cM(\hat{c}_h,\{\overline{g}_{j,h}(\overline{c}_h,\cdot)\}_{j\in[n]})|\right]\\
            &+\EE_{\overline{g}_{1:\overline{H}}}^{\cD_\cL'}\left[|\EE_{\overline{z}_{h+1}\sim \PP^{\cD_\cL'}_h(\overline{c}_h,\{\overline{g}_{j,h}(\overline{c}_h,\cdot)\}_{j\in[n]})}[V_h^{\overline{g}_{1:\overline{H}},\cD_\cL'}(\overline{c}_h\cup \overline{z}_{h+1})]-\EE_{\overline{z}_{h+1}\sim \PP^{\cM,z}_h(\cdot\given\hat{c}_h,\{\overline{g}_{j,h}(\overline{c}_h,\cdot)\}_{j\in[n]})}[V_h^{\overline{g}_{1:\overline{H}},\cM}(\overline{c}_h\cup \overline{z}_{h+1})]|\right]\\
            \le& \epsilon_r+(\overline{H}-h)\EE^{\cD_\cL'}_{\overline{a}_{1:h-1},\overline{o}_{1:h}\sim \overline{g}_{1:h-1}}\norm{\PP^{\cD_\cL'}_h(\cdot\given \overline{c}_h,\gamma_h)-\PP^{\cM,z}_h(\cdot\given \hat{c}_h,\gamma_h)}_1+\EE^{\cD_\cL'}_{\overline{a}_{1:h-1},\overline{o}_{1:h}\sim \overline{g}_{1:h-1}}\left[|V_{h+1}^{\overline{g}_{1:\overline{H}},\cD_\cL'}(\overline{c}_{h+1})-V_{h+1}^{\overline{g}_{1:\overline{H}},\cM}(\overline{c}_{h+1})|\right]\\
            \le &\epsilon_r+(\overline{H}-h)\epsilon_z+(\overline{H}-h)\epsilon_r+\frac{(\overline{H}-h)(\overline{H}-h-1)}{2}\epsilon_z\\
            \le & (\overline{H}-h+1)\epsilon_r+\frac{(\overline{H}-h)(\overline{H}-h+1)}{2}\epsilon_z.
        \end{align*}

        In the third line of this proof, we had $\overline{z}_{h+1}\sim \PP^{\cD_\cL'}_h(\cdot\given \overline{c}_h,\{\overline{g}_{j,h}( \overline{c}_h,\cdot)\}_{j\in[n]})$, where $\overline{z}_{h+1}$ is generated as
        \begin{align*}
        &\PP^{\cD_\cL'}_h(\overline{z}_{h+1}\given \overline{c}_h,\gamma_h)\\
        =&\sum_{\overline{s}_h\in\overline{\cS},\overline{p}_h\in\overline{\cP}_h}\PP^{\cD_\cL'}_h(\overline{s}_h,\overline{p}_h\given \overline{c}_h)\sum_{\overline{s}_{h+1}\in\overline{\cS},\overline{o}_{h+1}\in\overline{\cO}_{h+1}}\overline{\TT}_{h}(\overline{s}_{h+1}\given \overline{s}_h,
        \overline{a}_h)\overline{\OO}_{h+1}(\overline{o}_{h+1}\given \overline{s}_{h+1})\mathds{1}[\overline{\chi}_{h+1}(\overline{p}_h,\overline{a}_h,\overline{o}_{h+1})],
        \end{align*}
        with $\gamma_h=\{\overline{g}_{j,h}( \overline{c}_h,\cdot)\}_{j\in[n]}, \overline{a}_h=\gamma_h(\overline{p}_h)$ if $h=2k$, and $\overline{a}_h=\gamma_h$ otherwise, for  $k\in[H]$.  
    \end{proof}
    \begin{lemma}
    \label{lemma: algorithm output optimum}
        Let $\hat{g}_{1:\overline{H}}^\ast\in \overline{\cG}_{1:\overline{H}}$ be the strategy output by Algorithm \ref{algorithm under AIS}, then for any $h\in[\overline{H}],\overline{c}_h\in\overline{\cC}_h, \overline{g}_{1:\overline{H}}\in\overline{\cG}_{1:\overline{H}}$, it holds that
        \begin{equation}
            V_h^{\overline{g}_{1:\overline{H}},\cM}(\overline{c}_h)\le V_h^{\hat{g}_{1:\overline{H}}^\ast,\cM}(\overline{c}_h).
        \end{equation}
    \end{lemma}
    \begin{proof}
        We prove it by induction. For $h=\overline{H}+1$, we have $V_h^{\overline{g}_{1:\overline{H}},\cM}(\overline{c}_h)=V_h^{\hat{g}_{1:\overline{H}}^\ast,\cM}(\overline{c}_h)=0$. 
        For the timestep $h\le \overline{H}$, we have
        \begin{align*}
             &V_h^{\overline{g}_{1:\overline{H}},\cM}(\overline{c}_h)=\EE^\cM[\hat{r}_h^\cM(\hat{c}_h,\{\overline{g}_{j,h}(\overline{c}_h,\cdot)\}_{j\in[n]})+V_{h+1}^{\overline{g}_{1:\overline{H}},\cM}(\overline{c}_{h+1})\given \hat{c}_h,\overline{g}_{h}]\\
             &\quad\le\EE^\cM[\hat{r}_h^\cM(\hat{c}_h,\{\overline{g}_{j,h}(\overline{c}_h,\cdot)\}_{j\in[n]})+V_{h+1}^{\hat{g}_{1:\overline{H}}^\ast,\cM}(\overline{c}_{h+1})\given \hat{c}_h,\overline{g}_{h}]\\
             &\quad=Q_h^{\hat{g}_{1:\overline{H}}^\ast,\cM}(\overline{c}_h,\{\overline{g}_{j,h}(\overline{c}_h,\cdot)\}_{j\in[n]})\\
            &\quad \le Q_h^{\hat{g}_{1:\overline{H}}^\ast,\cM}(\overline{c}_h,\{\hat{g}_{j,h}(\overline{c}_h,\cdot)\}_{j\in[n]})\\
             &\quad=V_h^{\hat{g}_{1:\overline{H}}^\ast,\cM}(\overline{c}_h).
        \end{align*}
        For the first inequality, we use the induction hypothesis. For the second inequality, we use the property of $\argmax$ in algorithm and $V_h^{\hat{g}_{1:\overline{H}}^\ast,\cM}(\overline{c}_h)=V_h^{\hat{g}_{1:\overline{H}}^\ast,\cM}(\hat{c}_h)$. By induction, we complete the proof.
    \end{proof}
    We now go back to the proof of the theorem. Let $\hat{g}_{1:\overline{H}}^\ast$ be the solution output by Algorithm \ref{algorithm under AIS}, then for any $\overline{g}_{1:\overline{H}}\in\overline{\cG}_{1:\overline{H}}, h\in[\overline{H}],\overline{c}_h\in\overline{\cC}_h$, we have
    \begin{equation}
    \begin{aligned}
    &\EE^{\cD_\cL'}_{\overline{g}_{1:\overline{H}}}\left[V_h^{\overline{g}_{1:\overline{H}},\cD_\cL'}(\overline{c}_h)-V_h^{\hat{g}_{1:\overline{H}}^\ast,\cD_\cL'}(\overline{c}_h)\right]\\
    &\quad=\EE^{\cD_\cL'}_{\overline{g}_{1:\overline{H}}}\left[\left(V_h^{\overline{g}_{1:\overline{H}},\cD_\cL'}(\overline{c}_h)-V_h^{\hat{g}_{1:\overline{H}}^\ast,\cM}(\overline{c}_h)\right)+\left(V_h^{\hat{g}_{1:\overline{H}}^\ast,\cM}(\overline{c}_h)-V_h^{\hat{g}_{1:\overline{H}}^\ast,\cD_\cL'}(\overline{c}_h)\right)\right]\\
    &\quad\le \EE^{\cD_\cL'}_{\overline{g}_{1:\overline{H}}}\left[\left(V_h^{\overline{g}_{1:\overline{H}},\cD_\cL'}(\overline{c}_h)-V_h^{\overline{g}_{1:\overline{H}},\cM}(\overline{c}_h)\right)+\left(V_h^{\hat{g}_{1:\overline{H}}^\ast,\cM}(\overline{c}_h)-V_h^{\hat{g}_{1:\overline{H}}^\ast,\cD_\cL'}(\overline{c}_h)\right)\right]\\
    &\quad\le  (\overline{H}-h+1)\epsilon_r+\frac{(\overline{H}-h)(\overline{H}-h+1)}{2}\epsilon_z+(\overline{H}-h+1)\epsilon_r+\frac{(\overline{H}-h)(\overline{H}-h+1)}{2}\epsilon_z\\
    &\quad= 2(\overline{H}-h+1)\epsilon_r+(\overline{H}-h)(\overline{H}-h+1)\epsilon_z.
    \end{aligned}
    \label{eq: algorithm 6 output gap}
    \end{equation} 
    For the first inequality, we use Lemma \ref{lemma: algorithm output optimum}. For the second inequality, we use Lemma \ref{lemma: V closed}. Then, we can apply $h=1$ to \Cref{eq: algorithm 6 output gap} and have  $J_{\cD_\cL'}(\overline{g}_{1:\overline{H}})\le J_{\cD_\cL'}(\hat{g}_{1:\overline{H}}^\ast)+2\overline{H}\epsilon_r+\overline{H}^2\epsilon_z$. Note that if $\cM$ (the model constructed in the Line \ref{line: construct AIS} of \Cref{main algorithm}) satisfies \Cref{assu: one_step_tract}, then \Cref{main algorithm} has time complexity $\max_{h\in[\overline{H}]}|\hat{\cC}_h|\cdot \texttt{poly}(|\overline{\cS}|, \max_{h\in[\overline{H}]}|\overline{\cA}_h|, \max_{h\in[\overline{H}]}|\overline{\cP}_h|, \overline{H})$. This is because there are at most $\overline{H}\max_{h\in[\overline{H}]}|\hat{\cC}_h|$ iterations in \Cref{algorithm under AIS} (called in Line \ref{line: dp in planning} of \Cref{main algorithm}), and for each iteration, it has $\texttt{poly}(|\overline{\cS}|, \max_{h\in[\overline{H}]}|\overline{\cA}_h|, \max_{h\in[\overline{H}]}||\overline{\cP}_h|)$ time complexity due to \Cref{assu: one_step_tract}. 
    This completes the proof of {\bf Part II}.\\
    
    \noindent\textbf{Part III:} If the baseline sharing of $\cL$ is one of the 8 examples in \S\ref{sec: examples of QC}, we can construct an expected approximate common-information model $\cM$ of $\cD_\cL'$ with $\epsilon_r=\epsilon_z=\epsilon$, and $\cM$ satisfies \Cref{assu: one_step_tract} and is consistent with some belief. 
    
    We first prove the following lemmas, which are useful for bounding the errors $\epsilon_r, \epsilon_z$.
    \begin{lemma}\label{lemma: eps_r,eps_z to belief}
        Given any belief $\{\PP_h^{\cM,c}(\overline{s}_h,\overline{p}_h\given \hat{c}_h)\}_{h\in[\overline{H}]}$ such that an expected approximate common-information model $\cM$ is consistent with such belief, it holds that for any $h\in[\overline{H}],\overline{\cC}_h,\gamma_h\in \Gamma_h$:
        \begin{align*}
            \norm{\PP_h^{\cD_\cL'}(\cdot \given \overline{c}_h, \gamma_h)-\PP_h^{\cM,z}(\cdot\given \hat{c}_h,\gamma_h)}_1&\le \norm{\PP_h^{\cD_\cL'}(\cdot,\cdot\given \overline{c}_h)-\PP_h^{\cM,c}(\cdot,\cdot\given \hat{c}_h)}_1,\\
            |\EE^{\cD_\cL'}[\cR_h(\overline{s}_h,\overline{a}_h,\overline{p}_h)\given \overline{c}_h,\gamma_h]-\hat{\cR}_h^{\cM}(\hat{c}_h,\gamma_h)|&\le \norm{\PP_h^{\cD_\cL'}(\cdot,\cdot\given \overline{c}_h)-\PP_h^{\cM,c}(\cdot,\cdot\given \hat{c}_h)}_1,
        \end{align*}
        where  $\hat{c}_h=$Compress$_h(\overline{c}_h)$.
    \end{lemma}
    \begin{proof}
        Adapted 
        from Lemma 4   in \cite{liu2023tractable} by changing the reward function of $r_{i,h}(s_h,a_h)$ to  $\cR_h(\overline{s}_h,\overline{a}_h,\overline{p}_h)$. Note that the latter can still be evaluated given the common-information-based belief, $\PP_h^{\cD_\cL'}(\overline{s}_h,\overline{p}_h\given \overline{c}_h)$.
    \end{proof}
    Then, we define the belief states following the notation in \cite{noah2022gamma,liu2023tractable} as $\overline{\bb}_1(\emptyset)=\mu_1$, $\overline{\bb}_h(\overline{o}_{1:h},\overline{a}_{1:h-1})=\PP_h^{\cD_\cL'}(\overline{s}_h=\cdot\given \overline{o}_{1:h},\overline{a}_{1:h-1}), \overline{\bb}_h(\overline{o}_{1:h-1},\overline{a}_{1:h-1})=\PP_h^{\cD_\cL'}(\overline{s}_h=\cdot\given \overline{o}_{1:h-1},\overline{a}_{1:h-1})$, where $\overline{\bb}\in\Delta(\cS)$.
    Also, for any $L\ge1$, we define the approximate beliefs of states using the most recent $L$-step history as: 
    \begin{align*}
        \overline{\bb}_{h}'(\overline{o}_{h-L+1:h},\overline{a}_{h-L:h-1})&=\PP_h^{\cD_\cL'}(\overline{s}_{h}=\cdot\given \overline{s}_{h-L}\sim \text{Unif}(\overline{\cS}),\overline{o}_{h-L+1:h},\overline{a}_{h-L:h-1})\\
            \overline{\bb}_h'(\overline{o}_{h-L+1:h-1},\overline{a}_{h-L:h-1})&=\PP_h^{\cD_\cL'}(\overline{s}_h=\cdot\given \overline{s}_{h-L}\sim \text{Unif}(\overline{\cS}),\overline{o}_{h-L+1:h-1},\overline{a}_{h-L:h-1}).
    \end{align*}
    Moreover, for any set $N\subseteq [n]$, we define $\overline{o}_{N,h}=\{\overline{o}_{i,h}\}_{i\in N}$. We can also define the beliefs of states given historical observations and actions as:  for any $N\subseteq[n]$, %
    \begin{align*}
        \overline{\bb}_h(\overline{o}_{1:h-1},\overline{a}_{1:h-1},\overline{o}_{N, h})&=\PP_h^{\cD_\cL'}(\overline{s}_{h}=\cdot\given \overline{o}_{1:h-1},\overline{a}_{1:h-1},\overline{o}_{N, h})\\
        \overline{\bb}_{h}'(\overline{o}_{h-L+1:h-1},\overline{a}_{h-L:h-1},\overline{o}_{N, h})&=\PP_h^{\cD_\cL'}(\overline{s}_{h}=\cdot\given \overline{s}_{h-L}\sim \text{Unif}(\overline{\cS}),\overline{o}_{h-L+1:h-1},\overline{a}_{h-L:h-1},\overline{o}_{N, h}). 
    \end{align*}
    Let $S=|\cS|$ be the cardinality of the state space $\cS$, we have the following lemma. 
    \begin{lemma}\label{lemma: AIS belief close}
        There is a constant $C\ge 1$ such that the following holds. Given any LTC problem $\cL$ satisfying Assumption \ref{gamma observability}, and let $\cD_\cL'$ be the Dec-POMDP after reformulation, strict expansion,  and refinement. Let $\epsilon\ge 0$, fix a strategy $\overline{g}_{1:\overline{H}}\in\overline{\cG}_{1:\overline{H}}$ and indices $1\le h-L<h-1\le \overline{H}$. If $L\ge C\gamma^{-4}\log(\frac{|\overline{\cS}|}{\epsilon})$, then the following set of inequalities hold %
        \begin{align}
            &\EE_{\overline{o}_{1:h},\overline{a}_{1:h-1}\sim \overline{g}_{1:\overline{H}}}\norm{\overline{\bb}_h(\overline{o}_{1:h},\overline{a}_{1:h-1})-\overline{\bb}_h'(\overline{o}_{h-L+1:h},\overline{a}_{h-L:h-1})}_1\le \epsilon\label{belief difference 1}\\
            &\EE_{\overline{o}_{1:h-1},\overline{a}_{1:h-1}\sim \overline{g}_{1:\overline{H}}}\norm{\overline{\bb}_h(\overline{o}_{1:h-1},\overline{a}_{1:h-1})-\overline{\bb}_h'(\overline{o}_{h-L+1:h-1},\overline{a}_{h-L:h-1})}_1\le \epsilon\label{belief difference 2}\\
            &\EE_{\overline{o}_{1:h},\overline{a}_{1:h-1}\sim \overline{g}_{1:\overline{H}}}\norm{\overline{\bb}_h(\overline{o}_{1:h-1},\overline{a}_{1:h-1},\overline{o}_{N,h})-\overline{\bb}_h'(\overline{o}_{h-L+1:h-1},\overline{a}_{h-L:h-1},\overline{o}_{N,h})}_1\le \epsilon\label{belief difference 2.1}.
        \end{align}
    \end{lemma}
    \begin{proof}
        Given any LTC problem $\cL$, we can construct a Dec-POMDP $\check{\cD}$ such that the transition and emission functions of $\check{\cD}$ are the same as $\cL$, and $\check{\cD}$ has a fully-sharing information structure, which means that it shares all the $o_{1:h-1},a_{1:h}$ as common information at timestep $h$. Since $\cD_\cL'$ is constructed from $\cL$ after reformulation, expansion,  and refinement, for any $h\in[\overline{H}]$, then for any $t\in[H]$, $\overline{o}_{2t}=\emptyset$ always holds  and $\overline{a}_{2t-1}$ does not influence the underlying states. Therefore, we have 
    \begin{align*}
            \overline{\bb}_h(\overline{o}_{1:h},\overline{a}_{1:h-1})=\bb_{\lfloor\frac{h+1}{2}\rfloor}(o_{1:\lfloor \frac{h+1}{2}\rfloor},a_{1:\lfloor \frac{h-1}{2}\rfloor})=\check{\bb}_{\lfloor\frac{h+1}{2}\rfloor}(\check{o}_{1:\lfloor \frac{h+1}{2}\rfloor},\check{a}_{1:\lfloor \frac{h-1}{2}\rfloor})\\
            \overline{\bb}_h(\overline{o}_{1:h-1},\overline{a}_{1:h-1})=\bb_{\lfloor\frac{h+1}{2}\rfloor}(o_{1:\lfloor\frac{h}{2}\rfloor},a_{1:\lfloor\frac{h-1}{2}\rfloor})=\check{\bb}_{\lfloor\frac{h+1}{2}\rfloor}( \check{o}_{1:\lfloor \frac{h}{2}\rfloor},\check{a}_{1:\lfloor \frac{h-1}{2}\rfloor}),
        \end{align*}
        where $\bb$ and $\check{\bb}$ are the beliefs of  states similarly defined as $\overline{\bb}$, but with respect to the models $\cL$ and $\check{\cD}$, respectively. 
        For the approximate beliefs of the state, similarly, for any $h\in[\overline{H}]$, we have
        \begin{align*}
            \overline{\bb}'_{h}(\overline{o}_{h-L+1:h-1},\overline{a}_{h-L:h-1})=\bb'_{\lfloor\frac{h+1}{2}\rfloor}(o_{\lfloor\frac{h-L+2}{2}\rfloor:\lfloor\frac{h}{2}\rfloor}, a_{\lfloor \frac{h-L}{2}\rfloor: \lfloor \frac{h-1}{2}\rfloor})=\check{\bb}'_{\lfloor\frac{h+1}{2}\rfloor}( \check{o}_{\lfloor\frac{h-L+2}{2}\rfloor:\lfloor\frac{h}{2}\rfloor},\check{a}_{\lfloor \frac{h-L}{2}\rfloor: \lfloor \frac{h-1}{2}\rfloor})\\
            \overline{\bb}'_{h}(\overline{o}_{h-L+1:h},\overline{a}_{h-L:h-1})=\bb'_{\lfloor\frac{h+1}{2}\rfloor}( o_{\lfloor\frac{h-L+2}{2}\rfloor:\lfloor\frac{h+1}{2}\rfloor},a_{\lfloor \frac{h-L}{2}\rfloor: \lfloor \frac{h-1}{2}\rfloor})=\check{\bb}'_{\lfloor\frac{h+1}{2}\rfloor}( \check{o}_{\lfloor\frac{h-L+2}{2}\rfloor:\lfloor\frac{h+1}{2}\rfloor},\check{a}_{\lfloor \frac{h-L}{2}\rfloor: \lfloor \frac{h-1}{2}\rfloor}),
        \end{align*}
        where $\bb'$ and $\check{\bb}'$ are the belief states similarly defined as $\overline{\bb}'$ with respect to models $\cL$ and $\check{\cD}$.
        Also, since for any $t\in[H], \overline{a}_{2t-1}$ are communication actions,  $\overline{o}_{2t}=\emptyset$ is null, and $\overline{s}_{2t-1}=\overline{s}_{2t}$ always holds. Then, we can write \Cref{belief difference 1} and  \Cref{belief difference 2} as %
        \begin{align}
            \EE_{\{\overline{o}_{2t-1}\}_{t=1}^{\lfloor\frac{h+1}{2}\rfloor},\{\overline{a}_{2t}\}_{t=1}^{\lfloor\frac{h-1}{2}\rfloor}\sim \overline{g}_{1:\overline{H}}}\norm{\overline{\bb}_h(\overline{o}_{1:h},\overline{a}_{1:h-1})-\overline{\bb}_h'(\overline{o}_{h-L+1:h},\overline{a}_{h-L:h-1})}_1&\le \epsilon \label{belief difference 3}\\
            \EE_{\{\overline{o}_{2t-1}\}_{t=1}^{\lfloor\frac{h}{2}\rfloor},\{\overline{a}_{2t}\}_{t=1}^{\lfloor\frac{h-1}{2}\rfloor}\sim \overline{g}_{1:\overline{H}}}\norm{\overline{\bb}_h(\overline{o}_{1:h-1},\overline{a}_{1:h-1})-\overline{\bb}_h'(\overline{o}_{h-L+1:h-1},\overline{a}_{h-L:h-1})}_1 &\le \epsilon\label{belief difference 4}.
        \end{align}
        Since $\check{\cD}$ has a fully-sharing IS, 
          given any strategy $\overline{g}_{1:\overline{H}}$, we can construct a strategy $\check{g}_{1:H}$ such that, for any $\overline{a}_{1:h-1},\overline{o}_{1:h}$
        \begin{align*}
            \PP^{\cD_\cL'}(\{\overline{o}_{2t-1}\}_{t=1}^{\lfloor\frac{h+1}{2}\rfloor},\{\overline{a}_{2t}\}_{t=1}^{\lfloor\frac{h-1}{2}\rfloor}\given \overline{g}_{1:\overline{H}})=\PP^{\check{\cD}}(\check{o}_{1:\lfloor\frac{h+1}{2}\rfloor},\check{a}_{1:\lfloor\frac{h-1}{2}\rfloor}\given \check{g}_{1:H}).
        \end{align*}
        Since $\check{\cD}$ satisfies Assumption \ref{gamma observability}, we can apply Theorem 10 in \cite{liu2023tractable} with $\check{g}_{1:H}$ to get the result that there is a constant $C_0\ge 1$ such that if $L'\ge C_0\gamma^{-4}\log(\frac{S}{\epsilon})$, the following holds
        \begin{align}
            \EE_{\check{o}_{1:\lfloor\frac{h+1}{2}\rfloor},\check{a}_{1:\lfloor\frac{h-1}{2}\rfloor}\sim \check{g}_{1:H}}\norm{\check{\bb}_{\lfloor\frac{h+1}{2}\rfloor}( \check{o}_{1:\lfloor \frac{h+1}{2}\rfloor},\check{a}_{1:\lfloor \frac{h-1}{2}\rfloor})-\check{\bb}'_{\lfloor\frac{h+1}{2}\rfloor}( \check{o}_{\lfloor\frac{h+1}{2}\rfloor-L'+1:\lfloor\frac{h+1}{2}\rfloor},\check{a}_{\lfloor \frac{h}{2}\rfloor-L': \lfloor \frac{h-1}{2}\rfloor})}_1\le \epsilon\\
            \EE_{\check{o}_{1:\lfloor\frac{h}{2}\rfloor},\check{a}_{1:\lfloor\frac{h-1}{2}\rfloor}\sim \check{g}_{1:H}}\norm{\check{\bb}_{\lfloor\frac{h+1}{2}\rfloor}(\check{o}_{1:\lfloor \frac{h}{2}\rfloor},\check{a}_{1:\lfloor \frac{h-1}{2}\rfloor})-\check{\bb}'_{\lfloor\frac{h+1}{2}\rfloor}( \check{o}_{\lfloor\frac{h+1}{2}\rfloor-L'+1:\lfloor\frac{h}{2}\rfloor},\check{a}_{\lfloor \frac{h}{2}\rfloor-L': \lfloor \frac{h-1}{2}\rfloor})}_1\le \epsilon.
        \end{align}
        We choose $C=3C_0, L=2L'+1$. If $L\ge C\gamma^{-4}\log(\frac{|\cS|}{\epsilon})$, we have $L'\ge C_0\gamma^{-4}\log(\frac{|\overline{\cS}|}{\epsilon})$. Therefore, we can get  \Cref{belief difference 3} and \Cref{belief difference 4}.
         
        For \Cref{belief difference 2.1}, we can apply Equation  (E.10) of Theorem 10 in \cite{liu2023tractable} with a slight change as 
        \begin{equation}
            \EE_{\check{o}_{1:h},\check{a}_{1:h-1}\sim \check{g}_{1:H}}^{\check{\cD}}\norm{\check{\bb}_h(\check{o}_{1:h-1}, \check{a}_{1:h-1}, \check{o}_{N,h})-\check{\bb}_h'(\check{o}_{h-L+1:h-1},\check{a}_{h-L:h-1}, \check{o}_{N,h})}_1\le \epsilon.
            \label{belief difference new 2.1}
        \end{equation}
        This still holds we change the posterior update $F^q(P:\check{o}_{1,h})$ to $F^q(P:\check{o}_{N,h})$,  when applying Lemma 12 in the proof of Theorem 10 in \cite{liu2023tractable}. Therefore, we can use the same arguments to prove \Cref{belief difference 2.1} from \Cref{belief difference new 2.1} as above, which  completes the proof.
    \end{proof}
    
    Then, we discuss the 8 examples of QC LTC given in \S\ref{sec: examples of QC} case by case. For each of them, we compress the common information as $\{\hat{c}_h\}_{h\in[\overline{H}]}$ using a finite-memory truncation, 
    construct beliefs $\{\PP_h^{\cM,c}(\overline{s}_h,\overline{p}_h\given \hat{c}_h)\}_{h\in[\overline{H}]}$, and then  
    construct an expected approximate common-information model $\cM$ according to $\{\hat{c}_h\}_{h\in[\overline{H}]}$, $\{\PP_h^{\cM,c}(\overline{s}_h,\overline{p}_h\given \hat{c}_h)\}_{h\in[\overline{H}]}$, and \Cref{def:consistency}.   Note that after reformulation, strict expansion, and refinement, {\bf Example 5} will be the same as {\bf Example 1}, and
    {\bf Example 7} will be the same as {\bf Example 2}; {\bf Example 6} is similar  to \textbf{Example 1}, and \textbf{Example 8} is similar to  \textbf{Example 2}. 
    Hence, we categorize the examples into 6 {\bf Types}, and verify that the constructed $\cM$ is an $(\epsilon_r, \epsilon_z)$-expected approximate common-information model with $\epsilon_r=\epsilon_z=\epsilon$ and satisfies \Cref{assu: one_step_tract}.\\

    \noindent\textbf{Type 1:} Baseline sharing of $\cL$ is one of {\bf Examples 1} and \textbf{5} in \S\ref{sec: examples of QC}.
    Then, common information should be that for any $t\in[H], \overline{c}_{2t-1}=\{\overline{o}_{1:2t-2},\overline{a}_{1:2t-2}\}, \overline{c}_{2t}=\{\overline{o}_{1:2t-2},\overline{a}_{1:2t-1},\overline{o}_{N,2t-1}\}, N\subseteq [n]$, where $N$ is the set of agents that choose to share their observations through additional sharing, and $N$ can be inferred from $\overline{c}_{2t}$. Then, we have that
    $\PP_{2t-1}^{\cD_\cL'}(\overline{s}_{2t-1},\overline{p}_{2t-1}\given \overline c_{2t-1})=\overline{\bb}_{2t-1}(\overline{o}_{1:2t-2},\overline{a}_{1:2t-2})(\overline{s}_{2t-1})\overline{\OO}_{2t-1}(\overline{o}_{2t-1}\given \overline{s}_{2t-1})$. Fix compression  length $L>0$, for timestep $2t-1$, we define the approximate common information as $\hat{c}_{2t-1}=\{\overline{o}_{2t-L:2t-2},\overline{a}_{2t-1-L:2t-2}\}$, and the approximate common information conditioned belief as $\PP_{2t-1}^{\cM,c}(\overline{s}_{2t-1},\overline{p}_{2t-1}\given \hat{c}_{2t-1})=\overline{\bb}_{2t-1}'(\overline{o}_{2t-L:2t-2},\overline{a}_{2t-1-L:2t-2})(\overline{s}_{2t-1})\overline{\OO}_{2t-1}(\overline{o}_{2t-1}\given \overline{s}_{2t-1})$. Also, we have $\PP_{2t}^{\cD_\cL'}(\overline{s}_{2t},\overline{p}_{2t}\given \overline{c}_{2t})=\overline{\bb}_{2t-1}(\overline{o}_{1:2t-2},\overline{a}_{1:2t-2},\overline{o}_{N,2t-1})(\overline{s}_{2t-1})\PP_{2t-1}(\overline{o}_{-N,2t-1}\given \overline{s}_{2t-1},\overline{o}_{N,2t-1})$, where $\PP_{2t-1}(\overline{o}_{-N,2t-1}\given \overline{s}_{2t-1},\overline{o}_{N,2t-1})=\frac{\overline{\OO}_{2t-1}(\overline{o}_{N,2t-1},\overline{o}_{-N,2t-1}\given \overline{s}_{2t-1})}{\sum_{\overline{o}_{-N,2t-1}'}\overline{\OO}_{2t-1}(\overline{o}_{N,2t-1},\overline{o}_{-N,2t-1}'\given \overline{s}_{2t-1})}$, we used the belief at timestep $2t-1$ since $\overline{s}_{2t}=\overline{s}_{2t-1}$, and we noted that $\overline{p}_{2t}=\overline{o}_{-N,2t-1}$. For timestep $2t$, we define the approximate common information a $\hat{c}_{2t}=\{\overline{o}_{2t-L:2t-2},\overline{a}_{2t-L-1:2t-1},\overline{o}_{N,2t-1}\}$, and the common information conditioned belief as  $\PP_{2t}^{\cM,c}(\overline{s}_{2t},\overline{p}_{2t}\given \hat{c}_{2t})=\overline{\bb}_{2t-1}'(\overline{o}_{2t-L:2t-2},\overline{a}_{2t-1-L:2t-2},\overline{o}_{N,2t-1})(\overline{s}_{2t-1})\PP_{2t-1}(\overline{o}_{-N,2t-1}\given \overline{s}_{2t-1},\overline{o}_{N,2t-1})$. After that, we can construct $\{\PP_h^{\cM,z}\}_{h\in[\overline{H}]}, \{\hat{\cR}_h\}_{h\in[\overline{H}]}$ based on $\cD_\cL'$ and $\{\PP_h^{\cM,c}\}_{h\in[\overline{H}]}$.

    Now, we need to verify that Definition \ref{definition: AIS} is satisfied.
        \begin{itemize}
            \item The $\{\hat{c}_h\}_{h\in[\overline{H}]}$ satisfied \Cref{AIS:evolution}, since for any $h\in[\overline{H}]$, $\hat{c}_{h+1}\subseteq \hat{c}_h\cup \overline{z}_h$.
            \item Note that for any $\overline{c}_{2t-1}$ and the corresponding $\hat{c}_{2t-1}$ constructed above, according to Lemma \ref{lemma: eps_r,eps_z to belief}, we have:
            \begin{align*}
                &\norm{\PP_{2t-1}^{\cD_\cL'}(\cdot,\cdot\given \overline{c}_{2t-1})-\PP_{2t-1}^{\cM,c}(\cdot,\cdot\given \hat{c}_{2t-1})}_1\\
                &\quad=\sum_{\overline{s}_{2t-1},\overline{o}_{2t-1}}|\overline{\bb}_{2t-1}(\overline{o}_{1:2t-2},\overline{a}_{1:2t-2})(\overline{s}_{2t-1})\overline{\OO}_{2t-1}(\overline{o}_{2t-1}\given  \overline{s}_{2t-1})\\&\qquad\qquad-\overline{\bb}_{2t-1}'(\overline{o}_{2t-L:2t-1},\overline{a}_{2t-1-L:2t-2})(\overline{s}_{2t-1})\overline{\OO}_{2t-1}(\overline{o}_{2t-1}\given \overline{s}_{2t-1})|\\
                &\quad\le\norm{\overline{\bb}_{2t-1}(\overline{o}_{1:2t-2},\overline{a}_{1:2t-2})-\overline{\bb}_{2t-1}'(\overline{o}_{2t-L:2t-1},\overline{a}_{2t-1-L:2t-2})}_1.
            \end{align*} For any $\overline{c}_{2t}$ and the corresponding $\hat{c}_{2t}$ constructed above, according to Lemma \ref{lemma: eps_r,eps_z to belief}, we have: 
            \begin{align*}
                &\norm{\PP_{2t}^{\cD_\cL'}(\cdot,\cdot\given \overline{c}_{2t})-\PP_{2t}^{\cM,c}(\cdot,\cdot\given \hat{c}_{2t})}_1\\
                &\quad=\sum_{\overline{s}_{2t-1},\overline{o}_{-N,2t-1}}|\overline{\bb}_{2t-1}(\overline{o}_{1:2t-2},\overline{a}_{1:2t-2},\overline{o}_{N,2t-1})(\overline{s}_{2t-1})\PP_{2t-1}(\overline{o}_{-N,2t-1}\given \overline{s}_{2t-1},\overline{o}_{N,2t-1})\\
                &\qquad\qquad-\overline{\bb}_{2t-1}'(\overline{o}_{2t-L:2t-2},\overline{a}_{2t-1-L:2t-2},\overline{o}_{N,2t-1})(\overline{s}_{2t-1})\PP_{2t-1}(\overline{o}_{-N,2t-1}\given \overline{s}_{2t-1},\overline{o}_{N,2t-1})|\\
                &\quad\le\norm{\overline{\bb}_{2t-1}(\overline{o}_{1:2t-2},\overline{a}_{1:2t-2},\overline{o}_{N,2t-1})-\overline{\bb}_{2t-1}'(\overline{o}_{2t-L:2t-2},\overline{a}_{2t-1-L:2t-2},\overline{o}_{N,2t-1})}_1.
            \end{align*}
            If we choose $L\ge C\gamma^{-4}\log(\frac{|\overline{\cS}|}{\epsilon})$, then from \Cref{lemma: AIS belief close}, we have that for any $h\in [\overline{H}]$
        \begin{align*}
        \EE_{\overline{o}_{1:h},\overline{a}_{1:h-1}\sim \overline{g}_{1:\overline{H}}}\norm{\PP_{h}^{\cD_\cL'}(\cdot,\cdot\given \overline{c}_h)-\PP_h^{\cM,c}(\cdot,\cdot\given \hat{c}_h)}_1\le \epsilon. 
        \end{align*}
            \item Based on \Cref{lemma: one_step_tract}, $\cM$ satisfies \Cref{assu: one_step_tract} since: \textbf{Turn-based structures} in \S\ref{one-step tractability condition}  is satisfied if the additional condition \ref{cond:tract_1}) in \S\ref{sec: examples of QC} holds; \textbf{Factorized structures} in \S\ref{one-step tractability condition}  is satisfied if the additional condition \ref{cond:tract_2}) in \S\ref{sec: examples of QC} holds; \textbf{Nested private information} in \S\ref{one-step tractability condition}  is satisfied if the additional condition \ref{cond:tract_3}) in \S\ref{sec: examples of QC} holds.
             
        \end{itemize}
        Therefore, based on Lemma \ref{lemma: eps_r,eps_z to belief}, such a model is an $(\epsilon_r,\epsilon_z)$-expected approximate common-information model with $\epsilon_r=\epsilon_z=\epsilon$ and satisfies \Cref{assu: one_step_tract}.\\ 
        
        \noindent\textbf{Type 2:} Baseline sharing of $\cL$ is {\bf Example 3} in \S\ref{sec: examples of QC}. Then, common information should be that for any $t\in[H], \overline{c}_{2t-1}=\{\overline{o}_{1:2t-2},\overline{a}_{1:2t-2},\overline{o}_{1,2t-1}\}, \overline{c}_{2t}=\{\overline{o}_{1:2t-2},\overline{a}_{1:2t-1},\overline{o}_{N,2t-1}\}, N\subseteq [n], 1\in N$. Here $N$ is the same as defined in {\bf Type 1}, but it must satisfy that $1\in N$. Similarly as {\bf Type 1}, we construct $\hat{c}_{2t-1}=\{\overline{o}_{2t-L:2t-2},\overline{a}_{2t-L-1:2t-2},\overline{o}_{1,2t-1}\}, \hat{c}_{2t}=\{\overline{o}_{2t-L:2t-2},\overline{a}_{2t-L-1:2t-1},\overline{o}_{N,2t-1}\}$, and approximate common information conditioned belief as 
        $\PP_{2t-1}^{\cM,c}(\overline{s}_{2t-1},\overline{p}_{2t-1}\given \hat{c}_{2t-1})=\overline{\bb}_{2t-1}'(\overline{o}_{2t-L:2t-2},\overline{a}_{2t-1-L:2t-2},\overline{o}_{1,2t-1})(\overline{s}_{2t-1})\PP_{2t-1}(\overline{o}_{-1,2t-1}\given \overline{s}_{2t-1},\overline{o}_{1,2t-1}), \PP_{2t}^{\cM,c}(\overline{s}_{2t},\overline{p}_{2t}\given \hat{c}_{2t})=\overline{\bb}_{2t-1}'(\overline{o}_{2t-L:2t-2},\overline{a}_{2t-1-L:2t-2},\overline{o}_{N,2t-1})(\overline{s}_{2t-1})\PP_{2t-1}(\overline{o}_{-N,2t-1}\given \overline{s}_{2t-1},\overline{o}_{N,2t-1})$, where we used the belief at timestep $2t-1$, similarly as \textbf{Type 1}. After that, we can construct $\{\PP_h^{\cM,z}\}_{h\in[\overline{H}]}, \{\hat{\cR}_h\}_{h\in[\overline{H}]}$ based on $\cD_\cL'$ and    $\{\PP_h^{\cM,c}\}_{h\in[\overline{H}]}$. Now, we need to verify Definition \ref{definition: AIS} is satisfied.
        \begin{itemize}
            \item The $\{\hat{c}_h\}_{h\in[\overline{H}]}$ satisfies \Cref{AIS:evolution}, since for any $h\in[\overline{H}]$, $\hat{c}_{h+1}\subseteq \hat{c}_h\cup \overline{z}_h$.
            \item Note that for any $\overline{c}_{2t-1}$ and the corresponding $\hat{c}_{2t-1}$ constructed above, according to Lemma \ref{lemma: eps_r,eps_z to belief}, we have:
                \begin{align*}
                &\norm{\PP_{2t-1}^{\cD_\cL'}(\cdot,\cdot\given \overline{c}_{2t-1})-\PP_{2t-1}^{\cM,c}(\cdot,\cdot\given \hat{c}_{2t-1})}_1\\
                &\quad=\sum_{\overline{s}_{2t-1},\overline{o}_{-1,2t-1}}|\overline{\bb}_{2t-1}(\overline{o}_{1:2t-2},\overline{a}_{1:2t-1},\overline{o}_{1,2t-1})(\overline{s}_{2t-1})\PP_{2t-1}(\overline{o}_{-1,2t-1}\given \overline{s}_{2t-1},\overline{o}_{1,2t-1})\\
               &\qquad\qquad \quad -\overline{\bb}_{2t-1}'(\overline{o}_{2t-L:2t-2},\overline{a}_{2t-1-L:2t-2},\overline{o}_{1,2t-1})(\overline{s}_{2t-1})\PP_{2t-1}(\overline{o}_{-1,2t-1}\given \overline{s}_{2t-1},\overline{o}_{1,2t-1})|\\
                &\quad\le\norm{\overline{\bb}_{2t-1}(\overline{o}_{1:2t-2},\overline{a}_{1:2t-1},\overline{o}_{1,2t-1})-\overline{\bb}_{2t-1}'(\overline{o}_{2t-L:2t-2},\overline{a}_{2t-1-L:2t-2},\overline{o}_{1,2t-1})}_1.
            \end{align*}
            For any $\overline{c}_{2t}$ and the corresponding $\hat{c}_{2t}$ constructed above, according to Lemma \ref{lemma: eps_r,eps_z to belief}, we have: 
            \begin{align*}
                &\norm{\PP_{2t}^{\cD_\cL'}(\cdot,\cdot\given \overline{c}_{2t})-\PP_{2t}^{\cM,c}(\cdot,\cdot\given \hat{c}_{2t})}_1\\
                &\quad=\sum_{\overline{s}_{2t-1},\overline{o}_{-N,2t-1}}|\overline{\bb}_{2t-1}(\overline{o}_{1:2t-2},\overline{a}_{1:2t-2},\overline{o}_{N,2t-1})(\overline{s}_{2t-1})\PP_{2t-1}(\overline{o}_{-N,2t-1}\given \overline{s}_{2t-1},\overline{o}_{N,2t-1})\\
               &\qquad\qquad -\overline{\bb}_{2t-1}'(\overline{o}_{2t-L:2t-2},\overline{a}_{2t-1-L:2t-2},\overline{o}_{N,2
               t-1})(\overline{s}_{2t-1})\PP_{2t-1}(\overline{o}_{-N,2t-1}\given \overline{s}_{2t-1},\overline{o}_{N,2t-1})|\\
                &\quad\le\norm{\overline{\bb}_{2t-1}(\overline{o}_{1:2t-2},\overline{a}_{1:2t-2},\overline{o}_{N,2t-1})-\overline{\bb}_{2t-1}'(\overline{o}_{2t-L:2t-2},\overline{a}_{2t-1-L:2t-2},\overline{o}_{N,2t-1})}_1.
            \end{align*}        
            If we choose $L\ge C\gamma^{-4}\log(\frac{|\overline{\cS}|}{\epsilon})$, then from Lemma \ref{lemma: AIS belief close}, we have that   for any $h\in [\overline{H}]$ 
        \begin{align*}
        \EE_{\overline{o}_{1:h},\overline{a}_{1:h-1}\sim \overline{g}_{1:\overline{H}}}\norm{\PP_{h}^{\cD_\cL'}(\cdot,\cdot\given \overline{c}_h)-\PP_h^{\cM,c}(\cdot,\cdot\given \hat{c}_h)}_1\le \epsilon. 
        \end{align*}
            \item {Based on \Cref{lemma: one_step_tract}, $\cM$ satisfies \Cref{assu: one_step_tract},  since the condition \textbf{Nested private information} in \S\ref{one-step tractability condition}  is satisfied.}
        \end{itemize}
        Therefore, based on Lemma \ref{lemma: eps_r,eps_z to belief}, such a model is an $(\epsilon_r,\epsilon_z)$-expected approximate common-information model with $\epsilon_r=\epsilon_z=\epsilon$  and satisfies \Cref{assu: one_step_tract}.\\ 

        \noindent\textbf{Type 3:} Baseline sharing of $\cL$ is one of {\bf Examples 2}  and \textbf{7} in \S\ref{sec: examples of QC}. 
        Then, the common information should be that, for any $h\in [\overline{H}],\overline{c}_{h}=\{\overline{o}_{1:h-2d},\overline{a}_{1,1:h-1},\{\overline{a}_{-1,2t-1}\}_{t=1}^{\lfloor\frac{h}{2}\rfloor}, \overline{o}_{1,h-2d+1:h},\overline{o}_{M}\}$, where $M\subseteq\{(i,t)\given 1<i\le n, h-2d+1\le t\le h\}$, $\overline{o}_M=\{\overline{o}_{i,t}\given (i,t)\in M\}$, and the $-1$ index means all the agents except agent 1. The corresponding private information is defined as $\overline{p}_h=\{\overline{o}_{i,t}\given 1<i\le n,h-2d<t\le h, (i,t)\notin M\}$. Actually, $\overline{o}_M$ is the observation shared by the additional sharing in $\cL$.  Denote $f_{\tau, h-2d}=\{\overline{o}_{1:h-2d},\overline{a}_{1,1:h-2d-1},\{\overline{a}_{-1,2t-1}\}_{t=1}^{\lfloor\frac{h-2d}{2}\rfloor}\}, f_a=\{\overline{a}_{1,h-2d: h-1}, \{\overline{a}_{-1,2t-1}\}_{t=\lfloor \frac{h-2d}{2}\rfloor+1}^{\lfloor\frac{h}{2}\rfloor}\}, f_o=\{\overline{o}_{1,h-2d+1:h},\overline{o}_M\}$, 
        we can then compute the common-information-based belief as
        \begin{align*}
            \PP_h^{\cD_\cL'}(\overline{s}_h,\overline{p}_h\given\overline{c}_h)=&\sum_{\overline{s}_{h-2d}}\PP_h^{\cD_\cL'}(\overline{s}_h,\overline{p}_h\given\overline{s}_{h-2d},f_a,f_o)\PP_h^{\cD_\cL'}(\overline{s}_{h-2d}\given f_{\tau,h-2d},f_a, f_o)\\
            =&\sum_{\overline{s}_{h-2d}}\PP_h^{\cD_\cL'}(\overline{s}_h,\overline{p}_h\given\overline{s}_{h-2d},f_a,f_o)\frac{\PP_h^{\cD_\cL'}(\overline{s}_{h-2d},f_a,f_o\given f_{\tau,h-2d})}{\sum_{\overline{s}_{h-2d}'}\PP_h^{\cD_\cL'}(\overline{s}_{h-2d}',f_a,f_o\given f_{\tau,h-2d})}.
        \end{align*}
        Denote the probability $P_h(f_o\given \overline{s}_{h-2d},f_a):=\Pi_{t=1}^{2d}\PP_h^{\cD_\cL'}(\overline{o}_{1,h-2d+t},\overline{o}_{M_{h-2d+t}}\given \overline{s}_{h-2d},\overline{a}_{1,h-2d:h-2d+t})$, where $M_{h-2d+t}=\{(i,h-2d+t)\given (i,h-2d+t)\in M\}$ denotes the set of observations at timestep $h-2d+t$ and shared through additional sharing. With such notation, we have
        \begin{align*}
            \PP_h^{\cD_\cL'}(\overline{s}_{h-2d}\given f_{\tau,h-2d},f_a,f_o)=&\frac{\overline{\bb}_{h-2d}(\overline{o}_{1:h-2d},\overline{a}_{1:h-2d-1})(\overline{s}_{h-2d})P_h(f_o\given \overline{s}_{h-2d},f_a)}{\sum_{\overline{s}_{h-2d}'}\overline{\bb}_{h-2d}(\overline{o}_{1:h-2d},\overline{a}_{1:h-2d-1})(\overline{s}_{h-2d}')P_h(f_o\given \overline{s}_{h-2d}',f_a)}\\
            =&F^{P_h(\cdot\given \cdot, f_a)}(\overline{\bb}_{h-2d}(\overline{o}_{1:h-2d},\overline{a}_{1:h-2d-1});f_o)(\overline{s}_{h-2d}),
        \end{align*}
        where $F^{P_h(\cdot\given \cdot, f_a)}(\cdot; f_o):\Delta(\cS)\rightarrow \Delta(\cS)$ is the posterior belief update function, which is defined as: $F^{P_h(\cdot\given \cdot, f_a)}(\overline{\bb}; f_o)(\overline{s})=\frac{\overline{\bb}(\overline{s})P_h(f_o\given \overline{s}, f_a)}{\sum_{\overline{s}'\in\overline{\cS}}\overline{\bb}(\overline{s}')P_h(f_o\given \overline{s}', f_a)}$ for any $\overline{s}\in\overline{\cS},\overline{\bb}\in\Delta(\overline{\cS})$.
        
        Then, we can define the approximate common information as  $\hat{c}_h:=\{\overline{o}_{h-2d-L+1:h-2d},\overline{o}_{1,h-2d+1:h},\overline{a}_{1,h-2d-L:h-1},\{\overline{a}_{-1,2t-1}\}_{t=\lfloor \frac{h-2d+1}{2}\rfloor}^{\lfloor\frac{h}{2}\rfloor}, \overline{o}_M\}$ and the corresponding approximate common information conditioned belief as 
        \begin{align*}
            \PP_h^{\cM,c}(\overline{s}_h,\overline{p}_h\given \hat{c}_h)=\sum_{\overline{s}_{h-2d}}\PP_h^{\cD_\cL'}(\overline{s}_h,\overline{p}_h\given \overline{s}_{h-2d},f_a,f_o)F^{P_h(\cdot\given \cdot, f_a)}(\overline{\bb}_{h-2d}'(\overline{o}_{h-2d-L+1:h-2d},\overline{a}_{h-2d-L:h-2d-1});f_o)(\overline{s}_{h-2d}).
        \end{align*}
         After that, we can construct $\{\PP_h^{\cM,z}\}_{h\in[\overline{H}]}, \{\hat{\cR}_h\}_{h\in[\overline{H}]}$ based on $\cD_\cL'$ and  $\{\PP_h^{\cM,c}\}_{h\in[\overline{H}]}$. Now we verify that Definition \ref{definition: AIS} is satisfied. 
        \begin{itemize}
            \item By definition, the $\{\hat{c}_h\}_{h\in[\overline{H}]}$ satisfies  \Cref{AIS:evolution}. 
            \item For any $\overline{c}_h$ and the corresponding $\hat{c}_h$ constructed above, according to Lemma \ref{lemma: eps_r,eps_z to belief}, we have:
            \begin{align*}
                &\norm{\PP_{h}^{\cD_\cL'}(\cdot,\cdot\given \overline{c}_h)-\PP_h^{\cM,c}(\cdot,\cdot\given \hat{c}_h)}_1\\
                &\quad\le \norm{F^{P_h(\cdot\given \cdot, f_a)}(\overline{\bb}_{h-2d}(\overline{o}_{1:h-2d},\overline{a}_{1:h-2d-1});f_o)-F^{P_h(\cdot\given \cdot, f_a)}(\overline{\bb}_{h-2d}'(\overline{o}_{h-2d-L+1:h-2d},\overline{a}_{h-2d-L:h-2d-1});f_o)}_1. 
            \end{align*}
            If we choose $L\ge C\gamma^{-4}\log(\frac{|\overline{\cS}|}{\epsilon})$, then for any strategy $\overline{g}_{1:\overline{H}}$, by taking expectations over $f_{\tau,h-2d},f_a,f_o$, from Lemma \ref{lemma: AIS belief close} and Lemma 12 in \cite{liu2023tractable}, we have that  for any $h\in [\overline{H}]$
            \begin{align*}
                \EE_{\overline{o}_{1:h},\overline{a}_{1:h-1}\sim \overline{g}_{1:\overline{H}}}\norm{\PP_{h}^{\cD_\cL'}(\cdot,\cdot\given \overline{c}_h)-\PP_h^{\cM,c}(\cdot,\cdot\given \hat{c}_h)}_1\le \epsilon. 
            \end{align*}
            \item {Based on \Cref{lemma: one_step_tract}, $\cM$ satisfies \Cref{assu: one_step_tract},  since the condition \textbf{Turn-based structures} in \S\ref{one-step tractability condition}  is satisfied.}
        \end{itemize}
        Therefore, based on Lemma \ref{lemma: eps_r,eps_z to belief}, such a model is an $(\epsilon_r,\epsilon_z)$-expected approximate common-information model with $\epsilon_r=\epsilon_z=\epsilon$  and satisfies \Cref{assu: one_step_tract}.\\ 

        \noindent\textbf{Type 4:} Baseline sharing of $\cL$ is  {\bf Example 4} in \S\ref{sec: examples of QC}. Then, the common information should be  that for any 
        $h\in[\overline{H}]$,  $\overline{c}_h=\{\overline{o}_{1:h-2d},\{\overline{a}_{2t-1}\}_{t=1}^{\lfloor\frac{h}{2}\rfloor},\overline{o}_M\}$, where $M\subseteq\{(i,t)\given i\in[n],h-2d+1\le t\le h \}$. Then, we denote $f_{\tau,h-2d}=\{\overline{o}_{1:h-2d},\{\overline{a}_{2t-1}\}_{t=1}^{\lfloor\frac{h}{2}\rfloor}\}, f_o=\{\overline{o}_M\}$. We can compute the common-information-based belief as
        \begin{align*}
            \PP_h^{\cD_\cL'}(\overline{s}_h,\overline{p}_h\given\overline{c}_h)=&\sum_{\overline{s}_{h-2d}}\PP_h^{\cD_\cL'}(\overline{s}_h,\overline{p}_h\given\overline{s}_{h-2d},f_o)\PP_h^{\cD_\cL'}(\overline{s}_{h-2d}\given f_{\tau,h-2d}, f_o)\\
            =&\sum_{\overline{s}_{h-2d}}\PP_h^{\cD_\cL'}(\overline{s}_h,\overline{p}_h\given\overline{s}_{h-2d},f_o)\frac{\PP_h^{\cD_\cL'}(\overline{s}_{h-2d},f_o\given f_{\tau,h-2d})}{\sum_{\overline{s}_{h-2d}'}\PP_h^{\cD_\cL'}(\overline{s}_{h-2d}', f_o\given f_{\tau,h-2d})}.
        \end{align*}
        Denote the probability
        $P_h(f_o\given \overline{s}_{h-2d}):=\Pi_{t=1}^{2d}\PP_h^{\cD_\cL'}(\overline{o}_{1,h-2d+t},\overline{o}_{M_{h-2d+t}}\given \overline{s}_{h-2d})$, where  $M_{h-2d+t}=\{(i,h-2d+t)\given (i,h-2d+t)\in M\}$ denotes the set of observations at timestep $h-2d+t$ and shared through additional sharing. Since the actions do not influence the underlying states, here we use the belief notation $\overline{\bb}_k(\overline{o}_{1:k}), \overline{\bb}_k(\overline{o}_{k-L:k})$, $\forall k\in[\overline{H}],L<k$.  With such notation, we have 
        \begin{align*}
            \PP_h^{\cD_\cL'}(\overline{s}_{h-2d}\given f_{\tau,h-2d},f_o)=&\frac{\overline{\bb}_{h-2d}(\overline{o}_{1:h-2d})(\overline{s}_{h-2d})P_h(f_o\given \overline{s}_{h-2d})}{\sum_{\overline{s}_{h-2d}'}\overline{\bb}_{h-2d}(\overline{o}_{1:h-2d})(\overline{s}_{h-2d}')P_h(f_o\given \overline{s}_{h-2d}')} =F^{P_h(\cdot\given \cdot)}(\overline{\bb}_{h-2d}(\overline{o}_{1:h-2d});f_o)(\overline{s}_{h-2d}),
        \end{align*}
        where $F^{P_h(\cdot\given \cdot)}(\cdot; f_o):\Delta(\cS)\rightarrow \Delta(\cS)$ is the posterior belief update function, as discussed in {\bf Type 3}.\\
        Then, we define the approximate common information as $\hat{c}_h:=\{\overline{o}_{h-2d-L+1:h-2d}, \{\overline{a}_{2t-1}\}_{t=\lfloor \frac{h-2d+1}{2}\rfloor}^{\lfloor \frac{h}{2}\rfloor}, \overline{o}_M\}$ and the corresponding approximate common information conditioned belief as 
        \begin{align*}
            \PP_h^{\cM,c}(\overline{s}_h,\overline{p}_h\given \hat{c}_h)=\sum_{\overline{s}_{h-2d}}\PP_h^{\cD_\cL'}(\overline{s}_h,\overline{p}_h\given \overline{s}_{h-2d}, f_o)F^{P_h(\cdot\given \cdot)}(\overline{\bb}_{h-2d}'(\overline{o}_{h-2d-L+1:h-2d});f_o)(\overline{s}_{h-2d}).
        \end{align*}
         Furthermore,  we can construct $\{\PP_h^{\cM,z}\}_{h\in[\overline{H}]}, \{\hat{\cR}_h\}_{h\in[\overline{H}]}$ based on $\cD_\cL'$ and  $\{\PP_h^{\cM,c}\}_{h\in[\overline{H}]}$. Now we verify that Definition \ref{definition: AIS} is satisfied. 
        \begin{itemize}
            \item By definition, the $\{\hat{c}_h\}_{h\in[\overline{H}]}$ satisfies  \Cref{AIS:evolution}. 
            \item For any $\overline{c}_h$ and the corresponding $\hat{c}_h$ constructed above, according to Lemma \ref{lemma: eps_r,eps_z to belief}, we have:
            \begin{align*}
                &\norm{\PP_{h}^{\cD_\cL'}(\cdot,\cdot\given \overline{c}_h)-\PP_h^{\cM,c}(\cdot,\cdot\given \hat{c}_h)}_1\\
                &\quad\le \norm{F^{P(\cdot\given \cdot)}(\overline{\bb}_{h-2d}(\overline{o}_{1:h-2d});f_o)-F^{P(\cdot\given \cdot)}(\overline{\bb}_{h-2d}'(\overline{o}_{h-2d-L+1:h-2d});f_o)}_1.
            \end{align*}
            If we choose $L\ge C\gamma^{-4}\log(\frac{|\overline{\cS}|}{\epsilon})$, then for any strategy $\overline{g}_{1:\overline{H}}$, by taking expectations over $f_{\tau,h-2d}, f_o$, from Lemma \ref{lemma: AIS belief close} and Lemma 12 in \cite{liu2023tractable}, we have that  for any $h\in [\overline{H}]$
            \begin{align*}
                \EE_{\overline{o}_{1:h},\overline{a}_{1:h-1}\sim \overline{g}_{1:\overline{H}}}\norm{\PP_{h}^{\cD_\cL'}(\cdot,\cdot\given \overline{c}_h)-\PP_h^{\cM,c}(\cdot,\cdot\given \hat{c}_h)}_1\le \epsilon. 
            \end{align*}
            \item Based on \Cref{lemma: one_step_tract}, $\cM$ satisfies \Cref{assu: one_step_tract},  since the condition of \textbf{Turn-based structures} in \S\ref{one-step tractability condition} is satisfied.
        \end{itemize}
        Therefore, based on Lemma \ref{lemma: eps_r,eps_z to belief}, such a model is an $(\epsilon_r,\epsilon_z)$-expected approximate common-information model with $\epsilon_r=\epsilon_z=\epsilon$  and satisfies \Cref{assu: one_step_tract}.\\ 
        
        \noindent\textbf{Type 5:} Baseline sharing of $\cL$ is {\bf Example 6}. Note that after reformulation, expansion, and refinement, the common information in \textbf{Example 6} is the same as that in \textbf{Example 1}. The only difference is in private information: for any $t\in[H-1]$, $i\in[n]$, $\overline{a}_{i,2t}\in \overline{p}_{i,2t+1}$ always holds, and $\overline{a}_{i,2t}\in \overline{p}_{i,2t+2}$ may happen. Meanwhile, $\overline{a}_{i,2t}\in \overline{c}_{2t+1}\subseteq \overline{c}_{2t+2}$ always holds. Therefore, we can use the same expected approximate common-information model as \textbf{Type 1}: $\forall t\in[H], \hat{c}_{2t-1}=\{\overline{o}_{2t-L:2t-2},\overline{a}_{2t-1-L:2t-2}\}, \hat{c}_{2t}=\{\overline{o}_{2t-L:2t-2},\overline{a}_{2t-1-L:2t-1},\overline{o}_{N,2t-1}\}$. We construct the approximate common information conditioned belief as $\PP_{2t-1}^{\cM,c}(\overline{s}_{2t-1},\overline{p}_{2t-1}\given \hat{c}_{2t-1})=\overline{\bb}_{2t-1}'(\overline{o}_{2t-L:2t-2},\overline{a}_{2t-1-L:2t-2})(\overline{s}_{2t-1})\overline{\OO}_{2t-1}(\overline{o}_{2t-1}\given \overline{s}_{2t-1})\Psi_{2t-1}^1(\overline{p}_{2t-1},\hat{c}_{2t-1}),\PP_{2t}^{\cM,c}(\overline{s}_{2t},\overline{p}_{2t}\given \hat{c}_{2t})=\overline{\bb}_{2t-1}'(\overline{o}_{2t-L:2t-2},\overline{a}_{2t-1-L:2t-2},\overline{o}_{N,2t-1})(\overline{s}_{2t-1})\PP_{2t-1}(\overline{o}_{-N,2t-1}\given \overline{s}_{2t-1},\overline{o}_{N,2t-1})\Psi_{2t}^1(\overline{p}_{2t},\hat{c}_{2t})$. Note that here we use the belief at timestep $2t-1$ to construct $\PP_{2t}^{\cM,c}$ similarly as \textbf{Type 1}, and the functions $\{\Psi^1_h\}_{h\in[\overline{H}]}$ are defined as:  $\forall t\in[H], \overline{p}_{2t-1},\hat{c}_{2t-1},\overline{p}_{2t}$ and $\hat{c}_{2t}$, 
        \begin{align*}
            &\Psi_{2t-1}^1(\overline{p}_{2t-1},\hat{c}_{2t-1})=\begin{cases}
                1&\text{the value of $\overline{a}_{2t-1}$ in $\overline{p}_{2t-1}$ is the same as the value of that in $\hat{c}_{2t-1}$}\\
                0&\text{o.w.}
            \end{cases}\\
            &\Psi_{2t}^1(\overline{p}_{2t},\hat{c}_{2t})=\begin{cases}
                1&\text{for any $i\in[n]$ such that the random variable $\overline{a}_{i,2t-2}\in \overline{p}_{2t}$, }\\
                &\text{the value of $\overline{a}_{i,2t-2}$ in $\overline{p}_{2t}$ is the same as that in $\hat{c}_{2t}$ }\\
                0&\text{o.w.}
            \end{cases}. 
        \end{align*}
        One can verify that Definition \ref{definition: AIS} is satisfied as in \textbf{Type 1}. This is because, for any $h\in[\overline{H}]$, compared to \textbf{Type 1}, the only difference is that there are only some actions in $\overline{p}_h$, and such actions will also appear in $\hat{c}_{h}$. Therefore, if the values of such actions in $\overline{p}_h$ are consistent with those in $\hat{c}_h$, which is ensured by the functions $\{\Psi^1_h\}_{h\in[\overline{H}]}$, then we can leverage the validation in \textbf{Type 1}. \\

        \noindent\textbf{Type 6:} Baseline sharing of $\cL$ is {\bf Example 8}. Note that after reformulation, expansion, and refinement, the common information in \textbf{Example 8} is the same as that in \textbf{Example 2}. The only difference is in the private information: for any $t\in[H-1], \overline{a}_{1,2t}\in \overline{p}_{1,2t+1}$ always holds, and $\overline{a}_{1,2t}\in \overline{p}_{1,2t+2}$ may happen. Meanwhile, $\overline{a}_{1,2t}\in \overline{c}_{2t+1}\subseteq \overline{c}_{2t+2}$ always holds. Therefore, we can use the same expected approximate common-information model as \textbf{Type 3}:  $\forall h\in[\overline{H}], \hat{c}_h:=\{\overline{o}_{h-2d-L+1:h-2d},\overline{o}_{1,h-2d+1:h},\overline{a}_{1,h-2d-L:h-1},\{\overline{a}_{-1,2t-1}\}_{t=\lfloor \frac{h-2d+1}{2}\rfloor}^{\lfloor\frac{h}{2}\rfloor}, \overline{o}_M\}$, and construct beliefs as:
        \begin{align*}
            \PP_h^{\cM,c}(\overline{s}_h,\overline{p}_h\given\hat{c}_h)=\sum_{\overline{s}_{h-2d}}\PP_h^{\cD_\cL'}(\overline{s}_h,\overline{p}_h\given \overline{s}_{h-2d},f_a,f_o)F^{P_h(\cdot\given \cdot, f_a)}(\overline{\bb}_{h-2d}'(\overline{o}_{h-2d-L:h-2d},\overline{a}_{h-2d-L:h-2d-1},);f_o)(\overline{s}_{h-2d})\Psi_h^2(\overline{p}_h,\hat{c}_h),
        \end{align*}
        where the functions $\{\Psi_h^2\}_{h\in[\overline{H}]}$ are defined as:  $\forall t\in[H], \overline{p}_{2t-1},\hat{c}_{2t-1},\overline{p}_{2t}$ and $\hat{c}_{2t}$ 
        \begin{align*}
            &\Psi_{2t-1}^2(\overline{p}_{2t-1},\hat{c}_{2t-1})=\begin{cases}
                1&\text{the value of $\overline{a}_{1,2t-1}$ in $\overline{p}_{2t-1}$ is the same as the value of that in $\hat{c}_{2t-1}$}\\
                0&\text{o.w.}
            \end{cases}\\
            &\Psi_{2t}^2(\overline{p}_{2t},\hat{c}_{2t})=\begin{cases}
                1&\text{random variable $\overline{a}_{1,2t-2}\notin \overline{p}_{2t}$ or the value of $\overline{a}_{1,2t-2}$ in $\overline{p}_{2t}$ is the same as that in $\hat{c}_{2t}$ }\\
                0&\text{o.w.}
            \end{cases}. 
        \end{align*} 
        One can verify that Definition \ref{definition: AIS} is satisfied as in \textbf{Type 3}. This is because, for any $h\in[\overline{H}]$, compared to \textbf{Type 3}, the only difference is that there are only some actions of agent 1  in $\overline{p}_h$, but such actions will also appear in $\hat{c}_{h}$. Therefore, if the values of such actions in $\overline{p}_h$ are consistent with those in $\hat{c}_h$, which is ensured by the functions $\{\Psi^2_h\}_{h\in[\overline{H}]}$, then we can leverage the validation in \textbf{Type 3}. 
    
        Lastly, we can apply the result of {\bf Part II} and obtain the time complexity as follows,  and complete {\bf Part III}: 
    \begin{itemize}
        \item \textbf{Examples 1, 3, 5, 6:} Since the compressed common information satisfies $\max_{h\in[\overline{H}]}|\hat{\cC}_h|=\max_{h\in[\overline{H}]}(|\overline{\cO}_h||\overline{\cA}_h|)^{C\gamma^{-4}\log(\frac{|\overline{\cS}|}{\epsilon})}$, the complexity is $\texttt{poly}(\max_{h\in[\overline{H}]}(|\overline{\cO}_h||\overline{\cA}_h|)^{C\gamma^{-4}\log(\frac{|\overline{\cS}|}{\epsilon})}, |\overline{\cS}|,
        \overline{H},\frac{1}{\epsilon})$ ;  
        \item \textbf{Examples 2, 4, 7, 8:} Since the compressed common information satisfies $\max_{h\in[\overline{H}]}|\hat{\cC}_h|=\max_{h\in[\overline{H}]}(|\overline{\cO}_h||\overline{\cA}_h|)^{C\gamma^{-4}\log(\frac{|\overline{\cS}|}{\epsilon})+2d}$, the complexity is $\texttt{poly}(\max_{h\in[\overline{H}]}(|\overline{\cO}_h||\overline{\cA}_h|)^{C\gamma^{-4}\log(\frac{|\overline{\cS}|}{\epsilon})+2d}, |\overline{\cS}|,
        \overline{H},\frac{1}{\epsilon})$.
    \end{itemize}
        Combining {\bf Parts I, II, III}, we complete the proof.
\end{proof}

\subsection{Main Results for Learning in QC LTCs}\label{main_results_append_learning}

\begin{theorem}[Full version of \Cref{thm:learning}]

Given any QC LTC problem $\cL$ satisfying Assumptions  \ref{gamma observability}, \ref{limited communication strategy}, \ref{useless action},  and \ref{weak gamma observability}, we can construct a Dec-POMDP problem $\cD_\cL'$ with SI-CIBs. 
Moreover, given any compression functions $\{\text{Compress}_h\}_{h\in[\overline{H}]}$, evolution rules $\{\hat{\phi}_h\}_{h\in[\overline{H}]}$ of the compressed common information $\{\hat{c}_h\in\hat{\cC}_h\}_{h\in[\overline{H}]}$, $\epsilon\in(0,1),\delta\in(0,1)$, and $\hat{L}$ as defined in \Cref{def:L_main},  
we can apply \Cref{main learning algorithm} with a universal constant $C$ as chosen in \cite[Theorem 8]{liu2023tractable}. If the $K=2\overline{H}|\overline{\cS}|$ learned  expected approximate common-information models $\{\hat{\cM}(g^{1:\overline{H},j})\}_{j\in[K]}$ all satisfy Assumption \ref{assu: one_step_tract}, then 
an $\epsilon_0$-team-optimal strategy for $\cL$ can be learned with probability $1-\delta$, 
with time and sample complexities polynomial in the parameters of $\{\hat{\cM}(g^{1:\overline{H},j})\}_{j\in[K]}$, 
where $\epsilon_0$ is defined as
\begin{align*}
\epsilon_0:=\min_{j\in[K]}~~  2\overline{H}\epsilon_r(\hat{\cM}(g^{1:\overline{H},j})) +\overline{H}^2\epsilon_z(\hat{\cM}(g^{1:\overline{H},j}))+\frac{\epsilon}{2}.
\end{align*}

Specifically, if $\cL$ has a baseline sharing protocol as one of the examples in \S \ref{sec: examples of QC}, then given any $\epsilon\in(0,1),\delta\in(0,1)$, we can construct compression functions $\{\text{Compress}_h\}_{h\in[\overline{H}]}$ and  evolution rules $\{\hat{\phi}_h\}_{h\in[\overline{H}]}$, such that \Cref{main learning algorithm} can learn an $\epsilon$-team-optimal strategy of $\cL$ with probability $1-\delta$, with the following time and sample complexities:

    \begin{itemize}
        \item \textbf{Examples 1, 3, 5, 6:} $\texttt{poly}\left(\max_{h\in[\overline{H}]}(|\overline{\cO}_h|| \overline{\cA}_h|)^{C\gamma^{-4}\log(\frac{|\overline{\cS}|}{\epsilon})}, |\overline{\cS}|, \overline{H},\frac{1}{\epsilon},\log(\frac{1}{\delta})\right)$;  
        \item \textbf{Examples 2, 4, 7, 8:} $\texttt{poly}\left(\max_{h\in[\overline{H}]}(|\overline{\cO}_h||\overline{\cA}_h|)^{C\gamma^{-4}\log(\frac{|\cS|}{\epsilon})+2d}, |\overline{\cS}|,\overline{H},\frac{1}{\epsilon},\log(\frac{1}{\delta})\right)$.
    \end{itemize}
    \label{theorem: full learning}
\end{theorem}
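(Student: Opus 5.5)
The proof has three parts: a structural part, a reduction to the learning analysis of \cite{liu2023tractable}, and an instantiation of the compression for the eight examples.

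\textbf{Structural part.} This is identical to Part I of the proof of \Cref{thm: full planning}. By \Cref{prop: equivalence of LTC and Dec-POMDP}, \Cref{theorem: sQC to QC}, and \Cref{theorem: refinement}, the QC LTC $\cL$ is converted into $\cD_\cL'$, which has SI-CIBs with respect to $\overline{\cG}_{1:\overline{H}}$. Any $\epsilon$-team-optimal strategy of $\cD_\cL'$ in that class maps back, through the efficient implementation of $\varphi$ (Lemma \ref{lemma: equivalence alg 3-4}), to an $\epsilon$-team-optimal strategy of $\cL$.

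Samples from $\cL$ can be used directly as samples from $\cD_\cL'$. Reformulation only relabels variables. Expansion only adds actions that the agents themselves generate. Refinement only restores private information already present in $\cL$. So any trajectory-level sampling oracle for $\cD_\cL'$ under a full-history strategy $g^h\in\cG_{1:\overline{H}}$ is realized by running $\cL$.

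\textbf{Reduction to \cite{liu2023tractable}.} I would replicate the learning algorithm of \cite[Theorem 8]{liu2023tractable} in $\cD_\cL'$, in four steps.
\begin{enumerate}
\item Construct $K=2\overline{H}|\overline{\cS}|$ exploration strategy profiles $g^{1:\overline{H},j}$ by the barycentric-spanner or state-coverage procedure.
\item For each $j$, estimate the strategy-dependent model $\hat{\cM}(g^{1:\overline{H},j})$ of Definition \ref{def:simulation_main} by empirical frequencies of $(\hat c_h,\gamma_h)\mapsto \overline{z}_{h+1}$ and of the rewards.
\item Plan in each learned model using \Cref{algorithm under AIS}.
\item Select the best of the $K$ candidate strategies by Monte Carlo policy evaluation.
\end{enumerate}

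The error bookkeeping runs as follows.
\begin{itemize}
\item Lemmas \ref{lemma: V closed} and \ref{lemma: algorithm output optimum} give the planning error $2\overline{H}\epsilon_r+\overline{H}^2\epsilon_z$ for each learned model.
\item Concentration with a union bound over $\hat{\cC}_h\times\Gamma_h$ and over $j$ controls the statistical error of the learned $\hat{\cM}$ relative to the strategy-dependent beliefs $\PP^{g^h,\cD_\cL'}_h$. The sample size is polynomial in $|\hat{\cC}_h|$, $|\Gamma_h|$, $\overline{H}$, $1/\epsilon$, and $\log(1/\delta)$, and this error is charged to $\epsilon/4$.
\item The final policy-selection step costs another $\epsilon/4$.
\end{itemize}
Taking the minimum over $j$ yields $\epsilon_0$.

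The two deviations from \cite{liu2023tractable} are handled as in the planning proof.
\begin{itemize}
\item Rewards now depend on $\overline{p}_h$. This is fine because consistency (Definition \ref{def:consistency}) evaluates rewards through the joint belief over $(\overline{s}_h,\overline{p}_h)$, and Lemma \ref{lemma: eps_r,eps_z to belief} bounds both errors by the belief error.
\item Observability holds only at odd steps. This is handled by Lemma \ref{lemma: AIS belief close}.
\end{itemize}
Under Assumption \ref{assu: one_step_tract}, each backward-induction step is polynomial, so the total time is polynomial in the model parameters.

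\textbf{Example-specific part.} For the eight examples, I would reuse the compressions from Types 1--6 in the proof of \Cref{thm: full planning}, with $L=C\gamma^{-4}\log(|\overline{\cS}|/\epsilon)$ plus the $2d$ delay window where applicable. The key point is that a barycentric-spanner-type choice of the $g^{1:\overline{H},j}$ makes some $j$ have small $\epsilon_r(\hat{\cM}(g^{1:\overline{H},j}))$ and $\epsilon_z(\hat{\cM}(g^{1:\overline{H},j}))$. The argument is that of \cite[Theorem 8]{liu2023tractable}: the truncated-history belief under the exploration strategy approximates the true belief, via Lemma \ref{lemma: AIS belief close} and Lemma 12 of \cite{liu2023tractable}. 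One-step tractability of the learned models follows from \Cref{lemma: one_step_tract} under the additional structural conditions of \S\ref{sec: examples of QC}, because the learned models inherit the turn-based, factorized, or nested structure. Plugging in $|\hat{\cC}_h|\le(|\overline{\cO}_h||\overline{\cA}_h|)^{L}$, or $(|\overline{\cO}_h||\overline{\cA}_h|)^{L+2d}$ for Examples 2, 4, 7, 8, gives the stated quasi-polynomial bounds.

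\textbf{Main obstacle.} The hardest step is checking that the strategy-dependent-belief argument of \cite{liu2023tractable} survives the alternating structure. The exploration strategies must be full-history strategies of $\cD_\cL'$, yet their beliefs must still be compared with SI-CIBs that hold only over $\overline{\cG}_{1:\overline{H}}$. I would first resolve this by mapping everything to the fully-sharing POMDP $\check{\cD}$, as in the proof of Lemma \ref{lemma: AIS belief close}, where the original theorem applies verbatim, and then transporting the bounds back through the relations $\overline{s}_{2t-1}=\overline{s}_{2t}$ and $\overline{o}_{2t}=\emptyset$.
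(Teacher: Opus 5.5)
Your structural part and the general $\epsilon_0$ bound match the paper's Parts I--II. One bookkeeping remark: $\epsilon_r,\epsilon_z$ in $\epsilon_0$ are the errors of the learned $\hat{\cM}(g^{1:\overline{H},j})$ itself, so no separate statistical term enters, and the whole $\epsilon/2$ is the Pos-Dec cost.

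The first gap is how you obtain some $j$ with small error for the examples. \Cref{lemma: AIS belief close} concerns the truncated belief started from $\mathrm{Unif}(\overline{\cS})$. What LEE estimates from exploration data, however, is the strategy-dependent model $\tilde{\cM}(g^{1:\overline{H},j})$, whose belief is started from $d^{g^{h,j},\cD_\cL'}_{\cS,h-\hat{L}}$. Contraction from that prior holds only up to $6\,d^{\overline{g},\cD_\cL'}_{\cS,h-\hat{L}}(\cU^{\cD_\cL'}_{\phi,h-\hat{L}}(g^{h,j}))$, which is the mass the evaluated strategy $\overline{g}$ puts on states the exploration strategy rarely reaches. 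The same coverage term enters LEE's estimation error $\epsilon^j_{apx}$. A union bound over $\hat{\cC}_h\times\Gamma_h$ only controls compressed histories that are frequent under $g^{h,j}$, while $\epsilon_r,\epsilon_z$ are expectations under every $\overline{g}$.

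The missing ingredient is \Cref{lemma:belief_contract_learning}: with probability $1-\delta_2$, BaSeCAMP returns some $j^\ast$ for which this term is below $\epsilon_1=\epsilon/(200(\overline{H}+1)^2)$ uniformly in $\overline{g}$ and $h$. Your $\check{\cD}$ reduction is essentially how the paper transports Lemmas 18--19 of \cite{liu2023tractable} to get it. With it, the paper uses an adapted Theorem 8 value-difference bound and checks, type by type, that $\epsilon_r$ and $\epsilon_z$ of $\tilde{\cM}(g^{1:\overline{H},j^\ast})$ are at most $7\epsilon_1$ and that $\epsilon^{j^\ast}_{apx}\le 6\epsilon_1$. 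Relatedly, the window must be $L\ge 2C\gamma^{-4}\log(\overline{H}|\overline{\cS}|\max_h|\overline{\cO}_h|/(\epsilon_1\gamma))$ rather than $C\gamma^{-4}\log(|\overline{\cS}|/\epsilon)$, because per-step errors are amplified by $\overline{H}^2$.

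The second gap is your claim that the learned models inherit the turn-based or factorized structure. Suppose LEE estimates $\PP(\overline{p}_h\given\hat{c}_h)$, $\PP(\overline{o}_{h+1}\given\hat{c}_h,\overline{p}_h,\overline{a}_h)$ and the rewards jointly, as in \cite[Algorithm 5]{liu2023tractable}. Then the empirical belief is not an exact product over agents, and through sampling noise the empirical transition depends on the non-controllers' prescriptions. Neither condition of \Cref{lemma: one_step_tract} then holds for $\hat{\cM}(g^{1:\overline{H},j})$, and Line \ref{line:one_step_opt} of \Cref{algorithm under AIS} remains a general team decision problem.

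The paper instead builds the structure into the estimator, replacing (B.1)--(B.2) of that algorithm:
\begin{itemize}
\item in the factorized case, by per-agent estimates over $(\hat{c}_{i,h},\overline{p}_{i,h},\overline{z}_{i,h+1})$;
\item in the turn-based case, by estimates depending only on $\gamma_{ct(h),h}$, with per-agent rewards;
\item in the nested-private-information case, by no change.
\end{itemize}
Only with these modifications does Assumption \ref{assu: one_step_tract} hold for all $K$ learned models.
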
 

\begin{proof}We divide the proof into the following three {\bf Parts}.\\

   \noindent \textbf{Part I:} Given any LTC problem $\cL$ satisfying the  assumptions in the theorem, we can construct a $\cD_\cL'$ by conducting the reformulation, strict expansion, and refinement for $\cL$. According to Theorem \ref{theorem: refinement}, we know that $\cD_\cL'$ has SI-CIBs with respect to the strategy space  $\overline{\cG}_{1:\overline{H}}$.\\
    
     \noindent\textbf{Part II:}  
     Given any LTC problem $\cL$, we can apply Algorithm $\ref{main learning algorithm}$ for learning in such a problem. For any $j\in[K]$, let $\overline{g}_{1:\overline{H}}^{j,\ast}$ be the output of \Cref{algorithm under AIS} (Line \ref{line: dp in learning} of \Cref{main learning algorithm}), we can guarantee that 
     $J_{\cD_\cL'}(\overline{g}_{1:\overline{H}}^{j,\ast})\ge \max_{\overline{g}_{1:\overline{H}}\in\overline{\cG}_{1:\overline{H}}}J_{\cD_\cL'}(\overline{g}_{1:\overline{H}})-2 \overline{H}\epsilon_r(\hat{\cM}(g^{1:\overline{H},j}))-\overline{H}^2\epsilon_z(\hat{\cM}(g^{1:\overline{H},j}))$ according to the result of \textbf{Part II} of the proof of \Cref{thm: full planning}. Meanwhile, let $\hat{j}$ be the index of the strategy selected by Algorithm Pos-Dec (Line \ref{line: pos}), such that  $\overline{g}_{1:\overline{H}}^{\hat{j},\ast}$ is the output of Pos-Dec.  Then, we can adapt \cite[Lemma 21]{liu2023tractable} to the team setting and guarantee that 
     \begin{align*}
         \max_{\overline{g}_{1:\overline{H}}\in\overline{\cG}_{1:\overline{H}}} J_{\cD_\cL'}(\overline{g}_{1:\overline{H}})-J_{\cD_\cL'}(\overline{g}_{1:\overline{H}}^{\hat{j},\ast})\le \min_{j\in[K]}\left(\max_{\overline{g}_{1:\overline{H}}\in\overline{\cG}_{1:\overline{H}}} J_{\cD_\cL'}(\overline{g}_{1:\overline{H}})-J_{\cD_\cL'}(\overline{g}_{1:\overline{H}}^{j,\ast})\right)+\frac{\epsilon}{2} 
     \end{align*}
     with probability $1-\delta_3\ge 1-\delta$. Combining these two results, we can guarantee that $\max_{\overline{g}_{1:\overline{H}}\in\overline{\cG}_{1:\overline{H}}} J_{\cD_\cL'}(\overline{g}_{1:\overline{H}})-J_{\cD_\cL'}(\hat{g}_{1:\overline{H}}^{\hat{j},\ast})\le \epsilon_0$ with probability $1-\delta$. 
         Also, together with \Cref{prop: equivalence of LTC and Dec-POMDP} and \Cref{theorem: sQC to QC}, with probability $1-\delta$,  we conclude that the output of \Cref{main learning algorithm} is an $\epsilon_0$-team-optimal strategy of $\cL$. 
         
         Meanwhile, we analyze the sample and time complexities of \Cref{main learning algorithm}. For Algorithm BaSeCAMP (Line \ref{line: sample strategy}), from \cite[Lemma 19]{liu2023tractable}, it has time and sample complexities $(\max_{h\in[\overline{H}]}|\overline{\cO}_h||\overline{\cA_h}|)^{\hat{L}}\log(\frac{1}{\delta_2})$; for Algorithm LEE (Line \ref{line: LEE}), it has sample complexity $\overline{H}N_0$ and time complexity $\texttt{poly}(\max_{h\in[\overline{H}]}|\hat{\cC}_h|,\max_{h\in[\overline{H}]}|\overline{\cP}_h|,\max_{h\in[\overline{H}]}|\overline{\cO}_h|,\max_{h\in[\overline{H}]}|\Gamma_h|,\max_{h\in[\overline{H}]}|\overline{\cZ}_h|,\max_{h\in[\overline{H}]}|\overline{\cA}_h|)$; for Algorithm Pos-Dec (Line \ref{line: pos}), it has sample and time complexities $KN_2$; for Algorithm \ref{algorithm under AIS} (Line \ref{line: dp in learning}), as long as the models $\{\hat{\cM}(g^{1:\overline{H},j})\}_{j\in[K]}$ satisfy   \Cref{assu: one_step_tract}, it has time complexity $\max_{h\in[\overline{H}]}|\hat{\cC}_h|\cdot \texttt{poly}(|\overline{\cS}|, \max_{h\in[\overline{H}]}|\overline{\cA}_h|, \max_{h\in[\overline{H}]}||\overline{\cP}_h|, \overline{H})$.  Therefore, \Cref{main learning algorithm} has time and sample complexities polynomial in the parameters of $\{\hat{\cM}(g^{1:\overline{H},j})\}_{j\in[K]}$.  \\

     \noindent\textbf{Part III:} If the baseline sharing of $\cL$ is one of the examples in \S\ref{sec: examples of QC}, we can construct the compressed common information with length $L\ge 2C\frac{\log(\overline{H}|\overline{\cS}|\max_{h\in[\overline{H}]}|\overline{\cO}_h|/(\epsilon_1\gamma))}{\gamma^4}$ as follows, where $\gamma$ is the constant in \Cref{gamma observability} and $\epsilon_1$ is defined in \Cref{main learning algorithm}. 
     \begin{itemize}
         \item \textbf{Examples 1, 5, 6:} For any $t\in[H]$, $\hat{c}_{2t-1}=\{\overline{o}_{2t-L:2t-2},\overline{a}_{2t-1-L:2t-2}\}, \hat{c}_{2t}=\{\overline{o}_{2t-L:2t-2},\overline{a}_{2t-L-1:2t-1},\overline{o}_{N,2t-1}\}$ with $N\subseteq[n]$. Here $\hat{L}=L$.
         \item \textbf{Example 3:} For any $t\in[H], \hat{c}_{2t-1}=\{\overline{o}_{1,2t-L:2t-2},\overline{a}_{2t-L-1:2t-2},\overline{o}_{1,2t-1}\}, \hat{c}_{2t}=\{\overline{o}_{2t-L:2t-2},\overline{a}_{2t-L-1:2t-1},\overline{o}_{N,2t-1}\}$ with $N\subseteq[n]$ and $1\in N$. Here $\hat{L}=L$.
         \item \textbf{Examples 2, 7, 8:} For any $h\in[\overline{H}], \hat{c}_h:=\{\overline{o}_{h-2d-L+1:h-2d},\overline{o}_{1,h-2d+1:h},\overline{a}_{1,h-2d-L:h-1},\{\overline{a}_{-1,2t-1}\}_{t=\lfloor \frac{h-2d+1}{2}\rfloor}^{\lfloor\frac{h}{2}\rfloor}, \overline{o}_M\}$ with $M\subseteq\{(i,t)\given 1<i\le n, h-2d+1\le t\le h\}$, and recall that $\overline{o}_M=\{o_{i,t}\given (i,t)\in M\}$. Here $\hat{L}=L+2d$.
         \item \textbf{Example 4:} For any $h\in[\overline{H}], \hat{c}_h:=\{\overline{o}_{h-L-2d+1:h-2d},\{\overline{a}_{2t-1}\}_{t=\lfloor \frac{h-2d+1}{2}\rfloor}^{\lfloor\frac{h}{2}\rfloor},\overline{o}_M\}$ with $M\subseteq\{(i,t)\given 1\le i\le n, h-2d+1\le t\le h\}$, and recall that $\overline{o}_M=\{o_{i,t}\given (i,t)\in M\}$. Here $\hat{L}=L+2d$.
     \end{itemize}
     The compression functions $\{\text{Compress}_h\}_{h\in[\overline{H}]}$ and evolution rules $\{
     \hat{\phi}_h\}_{h\in[\overline{H}]}$ of the compressed common information can be constructed correspondingly.
     
     Meanwhile, we can slightly change  Theorem 8 in \cite{liu2023tractable} to obtain that, for any $j\in[K]$, Algorithm LEE (Line \ref{line: LEE} of \Cref{main learning algorithm}) can guarantee that the output $\hat{\cM}(g^{1:\overline{H},j})$ has an approximation error as 
    \begin{align}
        \forall \overline{g}_{1:\overline{H}}\in \overline{\cG}_{1:\overline{H}}, |V_0^{\overline{g}_{1:\overline{H}},\cD_\cL'}(\emptyset)-V_0^{\overline{g}_{1:\overline{H}},\hat{\cM}(g^{1:\overline{H},j})}(\emptyset)|\le \overline{H}\epsilon_r(\tilde{\cM}(g^{1:\overline{H},j})) +\frac{1}{2}\overline{H}^2\epsilon_z(\tilde{\cM}(g^{1:\overline{H},j}))+\epsilon_{apx}^j
        \label{eq: difference}
    \end{align}
    with probability $1-\delta_1$, with %
    \begin{align*}
        \epsilon_{apx}^j&:=\theta_1+2\max_{h\in[\overline{H}]}|\overline{\cA}_h|\max_{h\in[\overline{H}]}|\overline{\cP}_h|\frac{\zeta_1}{\zeta_2}+\max_{h\in[\overline{H}]}|\overline{\cA}_h|\max_{h\in[\overline{H}]}|\overline{\cP}_h|\theta_2+\frac{\max_{h\in[\overline{H}]}|\overline{\cA}_h|^{2\hat{L}}\max_{h\in[\overline{H}]}|\overline{\cO}_h|^{\hat{L}}}{\phi}\\
        &\qquad +\max_{h\in[\overline{H}]}\max_{\overline{g}\in\overline{\cG}_{1:\overline{H}}}\mathds{1}[h>\hat{L}]\cdot2\cdot d_{\cS,h-\hat{L}}^{\overline{g},\cD_\cL'}(\cU_{\phi,h-\hat{L}}^{\cD_\cL'}(g^{h,j})),
    \end{align*}
    where for any $g\in\cG_{1:\overline{H}},h\in[\overline{H}]$, we define $d_{\cS,h}^{g,\cD_\cL'}(\overline{s}):=\PP_h^{g,\cD_\cL'}(\overline{s}_h=\overline{s})$, and for any set $X\subseteq \overline{\cS}, d_{\cS,h}^{g,\cD_\cL'}(X)=\sum_{\overline{s}\in X}d_{\cS,h}^{g,\cD_\cL'}(\overline{s}), \cU_{\phi,h}^{\cD_\cL'}(g):=\{\overline{s}\in\overline{\cS}\given d_{\cS,h}^{g,\cD_\cL'}(\overline{s})<\phi\}$. 
    Note that $\tilde{\cM}$  here is the expected approximate common-information model with components $\{\hat{c}_h\}_{h\in[\overline{H}]},\{\hat{\phi}_h\}_{h\in[\overline{H}]}, \Gamma$, and $\forall j\in[K], \tilde{\cM}(g^{1:\overline{H},j})$ are  as defined in \Cref{def:simulation_main}. 
    Under the parameters specified in Line \ref{line: parameters} of \Cref{main learning algorithm}, we have
    \begin{align*}
        \epsilon_{apx}^j\le 4\epsilon_1+\max_{h\in[\overline{H}]}\max_{\overline{g}\in\overline{\cG}_{1:\overline{H}}}\mathds{1}[h>\hat{L}]\cdot2\cdot d_{\cS,h-\hat{L}}^{\overline{g},\cD_\cL'}(\cU_{\phi,h-\hat{L}}^{\cD_\cL'}(g^{h,j})).
    \end{align*}
    
    Also, we can prove the following lemma.
    \begin{lemma}\label{lemma:belief_contract_learning}
        Given any $\hat{L}>0$  and parameters $K,\alpha,\beta, \epsilon_1, N_0,N_1, O, S$ specified in \Cref{main learning algorithm}, and let $\{g^{1:\overline{H},j}\}_{j\in[K]}$ be the output of the algorithm BaSeCAMP($\hat{L},N_0,N_1,\alpha,\beta, K$) (Line \ref{line: sample strategy} in \Cref{main learning algorithm}). As long as $\hat{L}\ge C\frac{\log(\overline{H}SO/(\epsilon_1\gamma))}{\gamma^4}$,  where $C$ is a large enough constant as chosen in \cite[Theorem 8]{liu2023tractable}, then with probability at least $1-\delta_2$, there exists at least one $j^\ast\in[K]$ such that for any strategy $\overline{g}\in\overline{\cG}_{1:\overline{H}}, g\in \cG_{1:\overline{H}}, h\in[\overline{H}], N\subseteq [n]$
        \begin{align*}
            &\EE_{\overline{g}}\norm{\overline{\bb}_h(\overline{o}_{1:h},\overline{a}_{1:h-1})-\overline{\bb}_h^{g}(\overline{o}_{h-\hat{L}+1:h},\overline{a}_{h-\hat{L}:h-1})}_1\le \epsilon_1+\mathds{1}[h>\hat{L}]\cdot6\cdot d_{\cS,h-\hat{L}}^{\overline{g},\cD_\cL'}(\cU_{\phi,h-\hat{L}}^{\cD_\cL'}(g))\\
            &\EE_{\overline{g}}\norm{\overline{\bb}_h(\overline{o}_{1:h-1},\overline{a}_{1:h-1})-\overline{\bb}_h^{g}(\overline{o}_{h-\hat{L}+1:h-1},\overline{a}_{h-\hat{L}:h-1})}_1\le \epsilon_1+\mathds{1}[h>\hat{L}]\cdot6\cdot d_{\cS,h-\hat{L}}^{\overline{g},\cD_\cL'}(\cU_{\phi,h-\hat{L}}^{\cD_\cL'}(g))\\
            &\EE_{\overline{g}}\norm{\overline{\bb}_h(\overline{o}_{1:h-1},\overline{a}_{1:h-1},\overline{o}_{N,h})-\overline{\bb}_h^{g}(\overline{o}_{h-\hat{L}+1:h-1},\overline{a}_{h-\hat{L}:h-1},\overline{o}_{N,h})}_1\le \epsilon_1+\mathds{1}[h>\hat{L}]\cdot6\cdot d_{\cS,h-\hat{L}}^{\overline{g},\cD_\cL'}(\cU_{\phi,h-\hat{L}}^{\cD_\cL'}(g)),\\
            &d_{\cS,h-\hat{L}}^{\overline{g},\cD_\cL'}(\cU_{\phi,h-\hat{L}}^{\cD_\cL'}(g^{h,j^\ast}))<\epsilon_1,
        \end{align*}where $\overline{\bb}_h^{g}(\cdot):=\overline{\bb}_h^{apx,\cD_\cL'}(\cdot,d_{\cS,h-\hat{L}}^{g,\cD_\cL'})$, and $\overline{\bb}_h^{apx,\cD_\cL'}$ is defined as follows:  $\forall D\in \Delta(\overline{\cS}), N\subseteq[n]$
           \label{lemma: learning belief contract}
    \begin{align*}
         &\overline{\bb}_h^{apx,\cD_\cL'}(\overline{o}_{h-\hat{L}+1:h-1}, \overline{a}_{h-\hat{L}:h-1},D)=\PP(\overline{s}_{h}=\cdot\given \overline{s}_{h-\hat{L}}\sim D,\overline{o}_{h-\hat{L}+1:h-1},\overline{a}_{h-\hat{L}:h-1})\\
           &\overline{\bb}_h^{apx,\cD_\cL'}(\overline{o}_{h-\hat{L}+1:h},\overline{a}_{h-\hat{L}:h-1},D)=\PP(\overline{s}_h=\cdot\given \overline{s}_{h-\hat{L}}\sim D,\overline{o}_{h-\hat{L}+1:h},\overline{a}_{h-\hat{L}:h-1})\\
           &\overline{\bb}_h^{apx,\cD_\cL'}(\overline{o}_{h-\hat{L}+1:h-1},\overline{a}_{h-\hat{L}:h-1},\overline{o}_{N,h}, D)=\PP(\overline{s}_h=\cdot\given \overline{s}_{h-\hat{L}}\sim D,\overline{o}_{h-\hat{L}+1:h-1},\overline{a}_{h-\hat{L}:h-1},\overline{o}_{N,h}).
    \end{align*}
    \end{lemma} 
    \begin{proof}
        We can adapt Lemma 18 and Lemma 19 in \cite{liu2023tractable} to our setting. Note that $\cD_\cL'$ only has $\gamma$-observability at odd timesteps $h=2t-1, t\in[H]$. 
        Therefore, it holds that $\overline{\bb}_{h+1}(\overline{o}_{1:h+1},\overline{a}_{1:h})=\overline{\bb}_h(\overline{o}_{1:h},\overline{a}_{1:h-1})$  and $\overline{\bb}_{h+1}^{g}(\overline{o}_{h-\hat{L}+1:h+1},\overline{a}_{h-\hat{L}:h})=\overline{\bb}_h^{g}(\overline{o}_{h-\hat{L}+1:h},\overline{a}_{h-\hat{L}:h-1})$,  since the communication action at timestep $h=2t-1$ (i.e., $\overline{a}_h=m_t$) does not affect the state, and $\overline{s}_{h+1}=\overline{s}_h,\overline{o}_{h+1}=\emptyset$. A similar relationship of the beliefs also holds for the other two types of beliefs.   We can thus apply the corollary.
    \end{proof}
    
    Therefore, based on Lemma \ref{lemma: learning belief contract}, we have $\max_{\overline{g}_{1:\overline{H}}\in \overline{\cG}_{1:\overline{H}}}J_{\cD_\cL'}(\overline{g}_{1:\overline{H}})-J_{\cD_\cL'}(\overline{g}_{1:\overline{H}}^{j^\ast,\ast})\le 2\big(\overline{H}\epsilon_r(\tilde{\cM}(g^{1:\overline{H},j^\ast})) +\frac{1}{2}\overline{H}^2\epsilon_z(\tilde{\cM}(g^{1:\overline{H},j^\ast}))+\frac{6\epsilon}{200(\overline{H}+1)^2}\big)$, where $j^\ast$ is the $j$ selected in Lemma \ref{lemma: learning belief contract}, and $\overline{g}_{1:\overline{H}}^{j^\ast,\ast}$ is the strategy computed in the $j^\ast$-th iteration of Line \ref{line: dp in learning} of \Cref{main learning algorithm}.
    
    Now, we validate for each example in \S\ref{sec: examples of QC} that, under the compressed common information, $\epsilon_r(\tilde{\cM}(g^{1:\overline{H},j^\ast}))$ and $\epsilon_z(\tilde{\cM}(g^{1:\overline{H},j^\ast}))$ in \Cref{eq: difference} 
    can be made small. Note that after reformulation, expansion, and refinement, \textbf{Examples 1} and \textbf{5} are the same, and \textbf{Examples 2} and \textbf{7} are the same. 
    Therefore, all the examples in \S\ref{sec: examples of QC} can be classified into the following 6 {\bf Types}. 
    Finally, from  Lemma \ref{lemma: eps_r,eps_z to belief}, for any $h\in[\overline{H}]$, we have 
    \begin{align}
            &|\EE^{\cD_\cL'}[\cR_h(\overline{s}_h,\overline{a}_h,\overline{p}_h)\given \overline{c}_h,\gamma_h]-\hat{\cR}_h^{\tilde{\cM}(g^{1:\overline{H},j^\ast})}(\hat{c}_h,\gamma_h)|\le \norm{\PP_h^{\cD_\cL'}(\cdot,\cdot\given \overline{c}_h)-\PP_h^{\tilde{\cM}(g^{1:\overline{H},j^\ast}),c}(\cdot,\cdot\given \hat{c}_h)}_1\label{equ:eps_to_z_1}\\
            &\norm{\PP_h^{\cD_\cL'}(\cdot\given \overline{c}_h,\gamma_h)-\PP_h^{\tilde{\cM}(g^{1:\overline{H},j^\ast}),z}(\cdot\given \hat{c}_h,\gamma_h)}_1\le \norm{\PP_h^{\cD_\cL'}(\cdot,\cdot\given \overline{c}_h)-\PP_h^{\tilde{\cM}(g^{1:\overline{H},j^\ast}),c}(\cdot,\cdot\given \hat{c}_h)}_1.\label{equ:eps_to_z_2}
        \end{align} 
        
\noindent\textbf{Type 1:} The baseline sharing is either   \textbf{Example 1} or \textbf{Example 5}. 
        Consider any $h\in[\overline{H}]$. If $h=2t-1,t\in[H]$, it holds that 
        \begin{align*}
            &\PP_h^{\tilde{\cM}(g^{1:\overline{H},j^\ast}),c}(\overline{s}_h,\overline{p}_h\given \hat{c}_h)=\overline{\bb}_h^{g^{h,j^\ast}}(\overline{o}_{h-\hat{L}+1:h-1},\overline{a}_{h-\hat{L}:h-1})(\overline{s}_h)\overline{\OO}_h(\overline{o}_h\given \overline{s}_h),\\
            &\PP_h^{\cD_\cL'}(\overline{s}_h,\overline{p}_h\given \overline{c}_h)=\overline{\bb}_h(\overline{o}_{1:h-1},\overline{a}_{1:h-1})(\overline{s}_h)\overline{\OO}_h(\overline{o}_h\given \overline{s}_h). 
        \end{align*} 
        Then, from \Cref{lemma:belief_contract_learning} and \Cref{equ:eps_to_z_1}, we have
        \begin{align*}
            &\max_{\overline{g}\in\overline{\cG}_{1:\overline{H}}}\EE^{\cD_\cL'}_{\overline{o}_{1:h},\overline{a}_{1:h-1}\sim \overline{g}}|\EE^{\cD_\cL'}[\cR_h(\overline{s}_h,\overline{a}_h,\overline{p}_h)\given \overline{c}_h,\gamma_h]-\hat{\cR}_h^{\tilde{\cM}(g^{1:\overline{H},j^\ast})}(\hat{c}_h,\gamma_h)|\\
            &\quad\le \max_{\overline{g}\in\overline{\cG}_{1:\overline{H}}}\EE^{\cD_\cL'}_{\overline{o}_{1:h-1},\overline{a}_{1:h-1}\sim \overline{g}}\norm{\overline{\bb}_h(\overline{o}_{1:h-1},\overline{a}_{1:h-1})-\overline{\bb}_h^{g^{h,j^\ast}}(\overline{o}_{h-\hat{L}+1:h-1},\overline{a}_{h-\hat{L}:h-1})}_1\le 7\epsilon_1.
        \end{align*}
        Similarly,  $\max_{\overline{g}\in\overline{\cG}_{1:\overline{H}}}\EE^{\cD_\cL'}_{\overline{o}_{1:h},\overline{a}_{1:h-1}\sim \overline{g}}\norm{\PP_h^{\cD_\cL'}(\cdot\given \overline{c}_h,\gamma_h)-\PP_h^{\tilde{\cM}(g^{1:\overline{H},j^\ast}),z}(\cdot\given \hat{c}_h,\gamma_h)}_1\le 7\epsilon_1.$

        If $h=2t,t\in[H]$, then it holds that
              \begin{align*}
            &\PP_h^{\tilde{\cM}(g^{1:\overline{H},j^\ast}),c}(\overline{s}_h,\overline{p}_h\given \hat{c}_h)=
        \overline{\bb}_{h-1}^{g^{h,j^\ast}}(\overline{o}_{h-\hat{L}+1:h-2},\overline{a}_{h-\hat{L}:h-2},\overline{o}_{N,h-1})(\overline{s}_h)\PP_{h-1}(\overline{o}_{-N,h-1}\given \overline{s}_{h-1},\overline{o}_{N,h-1})\\
            &\PP_h^{\cD_\cL'}(\overline{s}_h,\overline{p}_h\given \overline{c}_h)=\overline{\bb}_{h-1}(\overline{o}_{1:h-2},\overline{a}_{1:h-2},\overline{o}_{N,h-1})(\overline{s}_h)\PP_{h-1}(\overline{o}_{-N,h-1}\given \overline{s}_{h-1},\overline{o}_{N,h-1}).
        \end{align*} 
        Note that we use the belief at timestep $h-1$ since $\overline{s}_h=\overline{s}_h$, and it holds that $\overline{p}_{h}=\overline{o}_{N,h}$, where $N\subseteq[n]$ can be inferred by $\overline{a}_{h-1}$, which lies in $\hat{c}_h$ and $\overline{c}_h$. 
        Then, from \Cref{lemma:belief_contract_learning} and \Cref{equ:eps_to_z_1}, we have 
        \begin{align*}
            &\max_{\overline{g}\in\overline{\cG}_{1:\overline{H}}}\EE^{\cD_\cL'}_{\overline{o}_{1:h},\overline{a}_{1:h-1}\sim \overline{g}}|\EE^{\cD_\cL'}[\cR_h(\overline{s}_h,\overline{a}_h,\overline{p}_h)\given \overline{c}_h,\gamma_h]-\hat{\cR}_h^{\tilde{\cM}(g^{1:\overline{H},j^\ast})}(\hat{c}_h,\gamma_h)|\\
            &\quad\le \max_{\overline{g}\in\overline{\cG}_{1:\overline{H}}}\EE^{\cD_\cL'}_{\overline{o}_{1:h-1},\overline{a}_{1:h-2}\sim \overline{g}}\norm{\overline{\bb}_{h-1}(\overline{o}_{1:h-2},\overline{a}_{1:h-2},\overline{o}_{N,h-1})-\overline{\bb}_{h-1}^{g^{h,j^\ast}}(\overline{o}_{h-\hat{L}+1:h-2},\overline{a}_{h-\hat{L}:h-2},\overline{o}_{N,h-1})}_1\le 7\epsilon_1. 
        \end{align*}   
        Similarly, $\max_{\overline{g}\in\overline{\cG}_{1:\overline{H}}}\EE^{\cD_\cL'}_{\overline{o}_{1:h},\overline{a}_{1:h-1}\sim \overline{g}}\norm{\PP_h^{\cD_\cL'}(\cdot\given \overline{c}_h,\gamma_h)-\PP_h^{\tilde{\cM}(g^{1:\overline{H},j^\ast}),z}(\cdot\given \hat{c}_h,\gamma_h)}_1\le 7\epsilon_1.$\\
        
        \noindent\textbf{Type 2:} The baseline sharing is \textbf{Example 3}. 
        For any $h\in[\overline{H}]$, if $h=2t-1$ with $ t\in[H]$,
              \begin{align*}
            &\PP_h^{\tilde{\cM}(g^{1:\overline{H},j^\ast}),c}(\overline{s}_h,\overline{p}_h\given \hat{c}_h)=\overline{\bb}_h^{g^{h,j^\ast}}(\overline{o}_{h-\hat{L}+1:h-1},\overline{a}_{h-\hat{L}:h-1},\overline{o}_{1,h})(\overline{s}_h)\PP_{2t-1}(\overline{o}_{-1,2t}\given \overline{s}_{2t-1},\overline{o}_{1,2t-1})\\
            &\PP_h^{\cD_\cL'}(\overline{s}_h,\overline{p}_h\given \overline{c}_h)=\overline{\bb}_h(\overline{o}_{1:h-1},\overline{a}_{1:h-1},\overline{o}_{1,h})(\overline{s}_h)\PP_{2t-1}(\overline{o}_{-1,2t}\given \overline{s}_{2t-1},\overline{o}_{1,2t-1}).
        \end{align*}  
        Then, from \Cref{lemma:belief_contract_learning} and \Cref{equ:eps_to_z_1}, we have
         \begin{align*}
            &\max_{\overline{g}\in\overline{\cG}_{1:\overline{H}}}\EE^{\cD_\cL'}_{\overline{o}_{1:h},\overline{a}_{1:h-1}\sim \overline{g}}|\EE^{\cD_\cL'}[\cR_h(\overline{s}_h,\overline{a}_h,\overline{p}_h)\given \overline{c}_h,\gamma_h]-\hat{\cR}_h^{\tilde{\cM}(g^{1:\overline{H},j^\ast})}(\hat{c}_h,\gamma_h)|\\
            &\quad \le \max_{\overline{g}\in\overline{\cG}_{1:\overline{H}}}\EE^{\cD_\cL'}_{\overline{o}_{1:h},\overline{a}_{1:h-1}\sim \overline{g}}\norm{\overline{\bb}_h(\overline{o}_{1:h-1},\overline{a}_{1:h-1},\overline{o}_{1,h})-\overline{\bb}_h^{g^{h,j^\ast}}(\overline{o}_{h-\hat{L}+1:h-1},\overline{a}_{h-\hat{L}:h-1},,\overline{o}_{1,h})}_1\le 7\epsilon_1.
        \end{align*}
        Similarly, $\max_{\overline{g}\in\overline{\cG}_{1:\overline{H}}}\EE^{\cD_\cL'}_{\overline{o}_{1:h},\overline{a}_{1:h-1}\sim \overline{g}}\norm{\PP_h^{\cD_\cL'}(\cdot\given \overline{c}_h,\gamma_h)-\PP_h^{\tilde{\cM}(g^{1:\overline{H},j^\ast}),z}(\cdot\given \hat{c}_h,\gamma_h)}_1\le 7\epsilon_1. $

        If $h=2t$ with $ t\in[H]$, then it holds that
              \begin{align*}
            &\PP_h^{\tilde{\cM}(g^{1:\overline{H},j^\ast}),c}(\overline{s}_h,\overline{p}_h\given \hat{c}_h)=\overline{\bb}_{h-1}^{g^{h,j^\ast}}(\overline{o}_{h-\hat{L}+1:h-2},\overline{a}_{h-\hat{L}:h-2},\overline{o}_{N,h-1})(\overline{s}_h)\PP_{h-1}(\overline{o}_{-N,h-1}\given \overline{s}_{h-1},\overline{o}_{N,h-1})\\
            &\PP_h^{\cD_\cL'}(\overline{s}_h,\overline{p}_h\given \overline{c}_h)=\overline{\bb}_{h-1}(\overline{o}_{1:h-2},\overline{a}_{1:h-2},\overline{o}_{N,h-1})(\overline{s}_h)\PP_{h-1}(\overline{o}_{-N,h-1}\given \overline{s}_{h-1},\overline{o}_{N,h-1}).
        \end{align*} 
        Note that  we use the belief at timestep $h-1$ since $\overline{s}_h=\overline{s}_h$, and it holds that $1\in N$, and $N$ can be inferred by $\overline{a}_{h-1}$, which lies in $\hat{c}_h$ and $\overline{c}_h$.
        Then, from \Cref{lemma:belief_contract_learning} and \Cref{equ:eps_to_z_1}, we have 
        \begin{align*}
            &\max_{\overline{g}\in\overline{\cG}_{1:\overline{H}}}\EE^{\cD_\cL'}_{\overline{o}_{1:h},\overline{a}_{1:h-1}\sim \overline{g}}|\EE^{\cD_\cL'}[\cR_h(\overline{s}_h,\overline{a}_h,\overline{p}_h)\given \overline{c}_h,\gamma_h]-\hat{\cR}_h^{\tilde{\cM}(g^{1:\overline{H},j^\ast})}(\hat{c}_h,\gamma_h)|\\
            &\quad\le \max_{\overline{g}\in\overline{\cG}_{1:\overline{H}}}\EE^{\cD_\cL'}_{\overline{o}_{1:h-1},\overline{a}_{1:h-2}\sim \overline{g}}\norm{\overline{\bb}_{h-1}(\overline{o}_{1:h-2},\overline{a}_{1:h-2},\overline{o}_{N,h-1})-\overline{\bb}_{h-1}^{g^{h,j^\ast}}(\overline{o}_{h-\hat{L}+1:h-2},\overline{a}_{h-\hat{L}:h-2},\overline{o}_{N,h-1})}_1\le 7\epsilon_1. 
        \end{align*}   
        Similarly, $\max_{\overline{g}\in\overline{\cG}_{1:\overline{H}}}\EE^{\cD_\cL'}_{\overline{o}_{1:h},\overline{a}_{1:h-1}\sim \overline{g}}\norm{\PP_h^{\cD_\cL'}(\cdot\given \overline{c}_h,\gamma_h)-\PP_h^{\tilde{\cM}(g^{1:\overline{H},j^\ast}),z}(\cdot\given \hat{c}_h,\gamma_h)}_1\le 7\epsilon_1.$\\
        
        \noindent\textbf{Type 3:} The baseline sharing is either   \textbf{Example 2} or \textbf{Example 7}. 
        For any $h\in[\overline{H}]$, 
        \begin{align*}
            &\PP_h^{\cD_\cL'}(\overline{s}_{h},\overline{p}_h\given \overline{c}_h)=\sum_{\overline{s}_{h-2d}}\PP_h^{\cD_\cL'}(\overline{s}_h,\overline{p}_h\given \overline{s}_{h-2d},f_a,f_o)F^{P_h(\cdot\given \cdot, f_a)}(\overline{\bb}_{h-2d}(\overline{o}_{1:h-2d},\overline{a}_{1:h-2d-1});f_o)(\overline{s}_{h-2d}),\\
            &\PP_h^{\tilde{\cM}(g^{1:\overline{H},j^\ast}),c}(\overline{s}_{h},\overline{p}_h\given \hat{c}_h)=\sum_{\overline{s}_{h-2d}}\PP_h^{\cD_\cL'}(\overline{s}_h,\overline{p}_h\given \overline{s}_{h-2d},f_a,f_o)F^{P_h(\cdot\given \cdot, f_a)}(\overline{\bb}_{h-2d}^{g^{h,j^\ast}}(\overline{o}_{h-\hat{L}+1:h-2d},\overline{a}_{h-\hat{L}:h-2d-1});f_o)(\overline{s}_{h-2d}),
        \end{align*} 
        where $f_{\tau, h-2d}=\{\overline{o}_{1:h-2d},\overline{a}_{1,1:h-2d-1},\{\overline{a}_{-1,2t-1}\}_{t=1}^{\lfloor\frac{h-2d}{2}\rfloor}\}, f_a=\{\overline{a}_{1,h-2d: h-1}, \{\overline{a}_{-1,2t-1}\}_{t=1\lfloor \frac{h-2d+1}{2}\rfloor}^{\frac{h}{2}}\}, f_o=\{\overline{o}_{1,h-2d+1:h},\overline{o}_M\}, F^{P_h(\cdot\given \cdot, f_a)}(\cdot; f_o):\Delta(\cS)\rightarrow \Delta(\cS)$ is the posterior belief update function (as introduced in \textbf{Type 3} in the proof of  \Cref{thm: full planning}).
        Then, from \Cref{lemma:belief_contract_learning} and \Cref{equ:eps_to_z_1}, we have 
        \begin{align*}
            &\max_{\overline{g}\in\overline{\cG}_{1:\overline{H}}}\EE^{\cD_\cL'}_{\overline{o}_{1:h},\overline{a}_{1:h-1}\sim \overline{g}}|\EE^{\cD_\cL'}[\cR_h(\overline{s}_h,\overline{a}_h,\overline{p}_h)\given \overline{c}_h,\gamma_h]-\hat{\cR}_h^{\tilde{\cM}(g^{1:\overline{H},j^\ast})}(\hat{c}_h,\gamma_h)|\\
            &\qquad\le \max_{\overline{g}\in\overline{\cG}_{1:\overline{H}}}\EE^{\cD_\cL'}_{\overline{o}_{1:h},\overline{a}_{1:h-1}\sim \overline{g}}\norm{\sum_{\overline{s}_{h-2d}}\PP_h^{\cD_\cL'}(\overline{s}_h,\overline{p}_h\given \overline{s}_{h-2d},f_a,f_o)F^{P_h(\cdot\given \cdot, f_a)}(\overline{\bb}_{h-2d}(\overline{o}_{1:h-2d},\overline{a}_{1:h-2d-1});f_o)\\
            &\qquad\qquad-\sum_{\overline{s}_{h-2d}}\PP_h^{\cD_\cL'}(\overline{s}_h,\overline{p}_h\given \overline{s}_{h-2d},f_a,f_o)F^{P_h(\cdot\given \cdot, f_a)}(\overline{\bb}_{h-2d}^{g^{h,j^\ast}}(\overline{o}_{h-\hat{L}:h-2d},\overline{a}_{h-\hat{L}:h-2d-1});f_o)(\overline{s}_{h-2d})}_1\\
            &\qquad\le \max_{\overline{g}\in\overline{\cG}_{1:\overline{H}}}\EE^{\cD_\cL'}_{\overline{o}_{1:h},\overline{a}_{1:h-1}\sim \overline{g}}\norm{\overline{\bb}_{h-2d}(\overline{o}_{1:h-2d},\overline{a}_{1:h-1-2d})-\overline{\bb}_{h-2d}^{g^{h,j^\ast}}(\overline{o}_{h-\hat{L}+1:h-2d},\overline{a}_{h-\hat{L}:h-2d-1})}_1\le 7\epsilon_1. 
        \end{align*}        
        Similarly, $\max_{\overline{g}\in\overline{\cG}_{1:\overline{H}}}\EE^{\cD_\cL'}_{\overline{o}_{1:h},\overline{a}_{1:h-1}\sim \overline{g}}\norm{\PP_h^{\cD_\cL'}(\cdot\given \overline{c}_h,\gamma_h)-\PP_h^{\tilde{\cM}(g^{1:\overline{H},j^\ast}),z}(\cdot\given \hat{c}_h,\gamma_h)}_1\le 7\epsilon_1.$\\
        
        \noindent\textbf{Type 4:} The baseline sharing is \textbf{Example 4}.
        For any $h\in[\overline{H}]$,
        \begin{align*}
            &\PP_h^{\cD_\cL'}(\overline{s}_h,\overline{p}_h\given \overline{c}_h)=\sum_{\overline{s}_{h-2d}}\PP_h^{\cD_\cL'}(\overline{s}_h,\overline{p}_h\given \overline{s}_{h-2d},f_o)F^{P_h(\cdot\given \cdot)}(\overline{\bb}_{h-2d}(\overline{o}_{1:h-2d},\overline{a}_{1:h-2d-1});f_o)(\overline{s}_{h-2d}),\\
            &\PP_h^{\tilde{\cM}(g^{1:\overline{H},j^\ast}),c}(\overline{s}_h,\overline{p}_h\given \hat{c}_{h})=\sum_{\overline{s}_{h-2d}}\PP_h^{\cD_\cL'}(\overline{s}_h,\overline{p}_h\given \overline{s}_{h-2d},f_o)F^{P_h(\cdot\given \cdot)}(\overline{\bb}_{h-2d}^{g^{h,j^\ast}}(\overline{o}_{h-\hat{L}+1:h-2d},\overline{a}_{h-\hat{L}:h-2d-1});f_o)(\overline{s}_{h-2d}),
        \end{align*} 
        where $f_{\tau,h-2d}=\{\overline{o}_{1:h-2d},\{\overline{a}_{2t-1}\}_{t=1}^{\lfloor\frac{h}{2}\rfloor}\}, f_o=\{\overline{o}_M\}, F^{P_h(\cdot\given \cdot)}(\cdot; f_o):\Delta(\cS)\rightarrow \Delta(\cS)$ is the posterior belief update function (as introduced in \textbf{Type 4} in the proof  of  \Cref{thm: full planning})). Recall that $M\subseteq \{(i,t)\given i\in[n], h-2d+1\le t\le h\}$. 
        Then, from \Cref{lemma:belief_contract_learning} and \Cref{equ:eps_to_z_1}, we have 
        \begin{align*}
            &\max_{\overline{g}\in\overline{\cG}_{1:\overline{H}}}\EE^{\cD_\cL'}_{\overline{o}_{1:h},\overline{a}_{1:h-1}\sim \overline{g}}|\EE^{\cD_\cL'}[\cR_h(\overline{s}_h,\overline{a}_h,\overline{p}_h)\given \overline{c}_h,\gamma_h]-\hat{\cR}_h^{\tilde{\cM}(g^{1:\overline{H},j^\ast})}(\hat{c}_h,\gamma_h)|\\
            &\qquad\le \max_{\overline{g}\in\overline{\cG}_{1:\overline{H}}}\EE^{\cD_\cL'}_{\overline{o}_{1:h},\overline{a}_{1:h-1}\sim \overline{g}}\norm{\sum_{\overline{s}_{h-2d}}\PP_h^{\cD_\cL'}(\overline{s}_h,\overline{p}_h\given \overline{s}_{h-2d},f_a,f_o)F^{P_h(\cdot\given \cdot)}(\overline{\bb}_{h-2d}(\overline{o}_{1:h-2d},\overline{a}_{1:h-2d-1});f_o)\\
            &\qquad\qquad-\sum_{\overline{s}_{h-2d}}\PP_h^{\cD_\cL'}(\overline{s}_h,\overline{p}_h\given \overline{s}_{h-2d},f_a,f_o)F^{P_h(\cdot\given \cdot)}(\overline{\bb}_{h-2d}^{g^{h,j^\ast}}(\overline{o}_{h-\hat{L}+1:h-2d},\overline{a}_{h-\hat{L}:h-2d-1});f_o)(\overline{s}_{h-2d})}_1\\
            &\qquad\le \max_{\overline{g}\in\overline{\cG}_{1:\overline{H}}}\EE^{\cD_\cL'}_{\overline{o}_{1:h},\overline{a}_{1:h-1}\sim \overline{g}}\norm{\overline{\bb}_{h-2d}(\overline{o}_{1:h-2d},\overline{a}_{1:h-1-2d})-\overline{\bb}_{h-2d}^{g^{h,j^\ast}}(\overline{o}_{h-\hat{L}+1:h-2d},\overline{a}_{h-\hat{L}:h-2d-1})}_1\le 7\epsilon_1. 
        \end{align*}
        Similarly, $\max_{\overline{g}\in\overline{\cG}_{1:\overline{H}}}\EE^{\cD_\cL'}_{\overline{o}_{1:h},\overline{a}_{1:h-1}\sim \overline{g}}\norm{\PP_h^{\cD_\cL'}(\cdot\given \overline{c}_h,\gamma_h)-\PP_h^{\tilde{\cM}(g^{1:\overline{H},j^\ast}),z}(\cdot\given \hat{c}_h,\gamma_h)}_1\le 7\epsilon_1.$\\
        
    \noindent\textbf{Type 5:} The baseline sharing is \textbf{Example 6}. 
    Consider any $h\in[\overline{H}]$. If $h=2t-1,t\in[H]$, it holds that 
        \begin{align*}
            &\PP_h^{\tilde{\cM}(g^{1:\overline{H},j^\ast}),c}(\overline{s}_h,\overline{p}_h\given \hat{c}_h)=\overline{\bb}_h^{g^{h,j^\ast}}(\overline{o}_{h-\hat{L}+1:h-1},\overline{a}_{h-\hat{L}:h-1})(\overline{s}_h)\overline{\OO}_h(\overline{o}_h\given \overline{s_h})\Psi_{h}^1(\overline{p}_{h},\hat{c}_h),\\
            &\PP_h^{\cD_\cL'}(\overline{s}_h,\overline{p}_h\given \overline{c}_h)=\overline{\bb}_h(\overline{o}_{1:h-1},\overline{a}_{1:h-1})(\overline{s}_h)\overline{\OO}_h(\overline{o}_h\given \overline{s_h})\Psi_{h}^1(\overline{p}_{h},\overline{c}_h);
        \end{align*} 
        if $h=2t,t\in[H]$, then it holds that for any $t\in[H]$, 
              \begin{align*}
            &\PP_h^{\tilde{\cM}(g^{1:\overline{H},j^\ast}),c}(\overline{s}_h,\overline{p}_h\given \hat{c}_h)=\overline{\bb}_{h-1}^{g^{h,j^\ast}}(\overline{o}_{h-\hat{L}+1:h-2},\overline{a}_{h-\hat{L}:h-2},\overline{o}_{N,h-1})(\overline{s}_h)\PP_{h-1}(\overline{o}_{-N,h-1}\given \overline{s}_{h-1},\overline{o}_{N,h-1})\Psi_{h}^1(\overline{p}_{h},\hat{c}_h)\\
            &\PP_h^{\cD_\cL'}(\overline{s}_h,\overline{p}_h\given \overline{c}_h)=\overline{\bb}_{h-1}(\overline{o}_{1:h-2},\overline{a}_{1:h-2},\overline{o}_{N,h-1})(\overline{s}_h)\PP_{h-1}(\overline{o}_{-N,h-1}\given \overline{s}_{h-1},\overline{o}_{N,h-1})\Psi_{h}^1(\overline{p}_{h},\overline{c}_h),
        \end{align*} 
        where  we extend the definition of $\Psi_h^1$ in \textbf{Type 5} in the proof of \Cref{thm: full planning}, by replacing the input $\hat{c}_h$ therein  by $\overline{c}_h, \forall h\in[\overline{H}]$.     
        Then, similar  to \textbf{Type 1}, we can verify that for any $h\in[\overline{H}]$
        \begin{align*}
            &\max_{\overline{g}\in\overline{\cG}_{1:\overline{H}}}\EE^{\cD_\cL'}_{\overline{o}_{1:h},\overline{a}_{1:h-1}\sim \overline{g}}|\EE^{\cD_\cL'}[\cR_h(\overline{s}_h,\overline{a}_h,\overline{p}_h)\given \overline{c}_h,\gamma_h]-\hat{\cR}_h^{\tilde{\cM}(g^{1:\overline{H},j^\ast})}(\hat{c}_h,\gamma_h)|\le 7\epsilon_1\\      &\max_{\overline{g}\in\overline{\cG}_{1:\overline{H}}}\EE^{\cD_\cL'}_{\overline{o}_{1:h},\overline{a}_{1:h-1}\sim \overline{g}}\norm{\PP_h^{\cD_\cL'}(\cdot\given \overline{c}_h,\gamma_h)-\PP_h^{\tilde{\cM}(g^{1:\overline{H},j^\ast}),z}(\cdot\given \hat{c}_h,\gamma_h)}_1\le 7\epsilon_1. 
        \end{align*}
        
        \noindent\textbf{Type 6:} The baseline sharing is \textbf{Example 8}. 
        For any $h\in[\overline{H}]$,
        \begin{align*}
            &\PP_h^{\cD_\cL'}(\overline{s}_{h},\overline{p}_h\given f_{\tau,h-2d},f_a,f_o)=\sum_{\overline{s}_{h-2d}}\PP_h^{\cD_\cL'}(\overline{s}_h,\overline{p}_h\given \overline{s}_{h-2d},f_a,f_o)F^{P_h(\cdot\given \cdot, f_a)}(\overline{\bb}_{h-2d}(\overline{o}_{1:h-2d},\overline{a}_{1:h-2d-1});f_o)(\overline{s}_{h-2d})\Psi_{h}^2(\overline{p}_{h},\overline{c}_{h}),\\
            &\PP_h^{\tilde{\cM}(g^{1:\overline{H},j^\ast}),c}(\overline{s}_{h},\overline{p}_h\given f_{\tau,h-2d},f_a,f_o)\\
            &\quad=\sum_{\overline{s}_{h-2d}}\PP_h^{\cD_\cL'}(\overline{s}_h,\overline{p}_h\given \overline{s}_{h-2d},f_a,f_o)F^{P_h(\cdot\given \cdot, f_a)}(\overline{\bb}_{h-2d}^{g^{h,j^\ast}}(\overline{o}_{h-\hat{L}+1:h-2d},\overline{a}_{h-\hat{L}:h-2d-1});f_o)(\overline{s}_{h-2d})\Psi_{h}^2(\overline{p}_{h},\hat{c}_{h}),
        \end{align*} 
        where $f_a,f_o,f_{\tau,h-2d}, F^{P_h(\cdot\given \cdot, f_a)}$ are  defined in \textbf{Type 3}, and 
        we extend the definition of $\Psi_h^2$ in \textbf{Type 6} of the proof of \Cref{thm: full planning},  by replacing $\hat{c}_h$ by $\overline{c}_h$ as input, $ \forall h\in[\overline{H}]$. 
       Then, similar to \textbf{Type 3}, we can verify that for any $h\in[\overline{H}]$ 
        \begin{align*}
            &\max_{\overline{g}\in\overline{\cG}_{1:\overline{H}}}\EE^{\cD_\cL'}_{\overline{o}_{1:h},\overline{a}_{1:h-1}\sim \overline{g}}|\EE^{\cD_\cL'}[\cR_h(\overline{s}_h,\overline{a}_h,\overline{p}_h)\given \overline{c}_h,\gamma_h]-\hat{\cR}_h^{\tilde{\cM}(g^{1:\overline{H},j^\ast})}(\hat{c}_h,\gamma_h)|\le 7\epsilon_1\\      &\max_{\overline{g}\in\overline{\cG}_{1:\overline{H}}}\EE^{\cD_\cL'}_{\overline{o}_{1:h},\overline{a}_{1:h-1}\sim \overline{g}}\norm{\PP_h^{\cD_\cL'}(\cdot\given \overline{c}_h,\gamma_h)-\PP_h^{\tilde{\cM}(g^{1:\overline{H},j^\ast}),z}(\cdot\given \hat{c}_h,\gamma_h)}_1\le 7\epsilon_1. 
        \end{align*}
    Therefore, for any example in \S\ref{sec: examples of QC}, with probability $1-\delta_2$, we know that there exists some $j^\ast\in[K]$ such that
    \begin{align*}
        \epsilon_r(\tilde{\cM}(g^{1:\overline{H},j^\ast})) \le 7\epsilon_1,~~ \epsilon_z(\tilde{\cM}(g^{1:\overline{H},j^\ast}))\le  7\epsilon_1,~~ 
        \epsilon_{apx}^{j^\ast}\le 4\epsilon_1+\max_{h\in[\overline{H}]}\max_{\overline{g}\in\overline{\cG}_{1:\overline{H}}}\mathds{1}[h>\hat{L}]\cdot2\cdot d_{\cS,h-\hat{L}}^{\overline{g},\cD_\cL'}(\cU_{\phi,h-\hat{L}}^{\cD_\cL'}(g^{h,j^\ast}))\le 6\epsilon_1. 
    \end{align*} 
    Then, combined with the result in \textbf{Part II},  we have that with probability $1-\delta_1-\delta_2$, $\max_{\overline{g}_{1:\overline{H}}\in\overline{\cG}_{1:\overline{H}}} J_{\cD_\cL'}(\overline{g}_{1:\overline{H}})-J_{\cD_\cL'}(\overline{g}_{1:\overline{H}}^{j^\ast,\ast})\le 2(\frac{7}{2}\overline{H}^2+\overline{H}+6)\epsilon_1$, which further leads to 
    \begin{align*}
    &\max_{\overline{g}_{1:\overline{H}}\in\overline{\cG}_{1:\overline{H}}} J_{\cD_\cL'}(\overline{g}_{1:\overline{H}})-J_{\cD_\cL'}(\overline{g}_{1:\overline{H}}^{\hat{j},\ast})\le \min_{j\in[K]}\left(\max_{\overline{g}_{1:\overline{H}}\in\overline{\cG}_{1:\overline{H}}} J_{\cD_\cL'}(\overline{g}_{1:\overline{H}})-J_{\cD_\cL'}(\overline{g}_{1:\overline{H}}^{j,\ast})\right)+\frac{\epsilon}{2}\\
    &\qquad\le\max_{\overline{g}_{1:\overline{H}}\in\overline{\cG}_{1:\overline{H}}} J_{\cD_\cL'}(\overline{g}_{1:\overline{H}})-J_{\cD_\cL'}(\overline{g}_{1:\overline{H}}^{j^\ast,\ast})+\frac{\epsilon}{2}\le \epsilon,
    \end{align*}     
    with probability $1-\delta_1-\delta_2-\delta_3\ge 1-\delta$, with the choice of $\epsilon_1$ in \Cref{main learning algorithm}.

    {Finally, to apply the results in {\bf Part II} above, it remains to examine the satisfaction of  \Cref{assu: one_step_tract} for the learned models $\{\hat{\cM}(g^{1:\overline{H},j})\}_{j\in[K]}$. To this end,  we will instantiate \cite[Algorithm 5]{liu2023tractable} (Line \ref{line: LEE} of \Cref{main learning algorithm}) as follows:
    \begin{itemize}
        \item If $\cL$ has a baseline sharing protocol  as one of  \textbf{Examples 1, 5, 6} with the additional condition \ref{cond:tract_2}) in \S\ref{sec: examples of QC}, then we replace the Equation  (B.1) of \cite[Algorithm 5]{liu2023tractable} by: for any $i\in[n]$
    \begin{equation}
        \begin{aligned}          
\text{if $h=2t-1$, }\PP_h^{\hat{\cM}(g^{1:\overline{H}}), z}(\overline{z}_{i,h+1}\given\hat{c}_{i,h}, {\gamma}_{i,h}) &\leftarrow \sum_{\overline{p}_{i,h}}\mathds{1}[\overline{z}_{i,h+1}=\overline{\chi}_{i,h+1}(\overline{p}_{i,h},\overline{\gamma}_{i,h},\overline{o}_{i,h+1}=\emptyset)]\PP_h^{\hat{\cM}(g^{1:\overline{H}})}(\overline{p}_{i,h}\given\hat{c}_{i,h}),\\
\text{if $h=2t$, }\PP_h^{\hat{\cM}(g^{1:\overline{H}}), z}(\overline{z}_{i,h+1}\given\hat{c}_{i,h}, {\gamma}_{i,h}) &\leftarrow \sum_{\overline{p}_{i,h},\overline{a}_{i,h},\overline{o}_{i,h+1}}\mathds{1}[\overline{z}_{i,h+1}=\overline{\chi}_{i,h+1}(\overline{p}_{i,h},\overline{a}_{i,h},\overline{o}_{i,h+1})]\PP_h^{\hat{\cM}(g^{1:\overline{H}})}(\overline{p}_{i,h}\given\hat{c}_{i,h})\\
&\qquad\qquad \gamma_{i,h}(\overline{a}_{i,h}\given \overline{p}_{i,h})\PP_h^{\hat{\cM}(g^{1:\overline{H}})}(\overline{o}_{i,h+1}\given\hat{c}_{i,h},\overline{p}_{i,h},\overline{a}_{i,h}),
\end{aligned}
\end{equation}
and replace Equation (B.2) of \cite[Algorithm 5]{liu2023tractable} by: for any $i\in[n]$
    \begin{equation}
        \begin{aligned}
&\text{if $h=2t-1$, }\hat{\cR}^{\hat{\cM}(g^{1:\overline{H}})}_{i,h}(\hat{c}_{i,h}, {\gamma}_{i,h}) \leftarrow \sum_{\overline{p}_{i,h}}\PP_h^{\hat{\cM}(g^{1:\overline{H}})}(\overline{p}_{i,h}\given\hat{c}_{i,h})\hat{\cR}_{i,h}^{\hat{\cM}(g^{1:\overline{H}})}(\hat{c}_{i,h},\overline{p}_{i,h},\gamma_{i,h}),\\
&\text{if $h=2t$, }\hat{\cR}^{\hat{\cM}(g^{1:\overline{H}})}_{i,h}(\hat{c}_{i,h}, {\gamma}_{i,h}) \leftarrow \sum_{\overline{p}_{i,h},\overline{a}_{i,h}}\PP_h^{\hat{\cM}(g^{1:\overline{H}})}(\overline{p}_{i,h}\given\hat{c}_{i,h})\hat{\cR}_{i,h}^{\hat{\cM}(g^{1:\overline{H}})}(\hat{c}_{i,h},\overline{p}_{i,h},\overline{a}_{i,h})\gamma_{i,h}(\overline{a}_{i,h}\given\overline{p}_{i,h}),
\end{aligned}
\end{equation}
where 
$\{\overline{\chi}_{i,h}\}_{h\in[\overline{H}]}$ can be constructed based on $\{\chi_{i,t}\}_{t\in[H]}$ and $\{\phi_{i,t}\}_{t\in[H]}$ similarly as the proof of \Cref{theorem: refinement} as follows:
\begin{align*}
           &\forall \overline{p}_{i,h-1}\in\overline{\cP}_{i,h-1},\overline{a}_{i,h-1}\in\overline{\cA}_{i,h-1}, \overline{o}_{i,h}\in\overline{\cO}_{i,h}, \text{ if $h$ is even, then }\overline{\chi}_{i,h}(\overline{p}_{i,h-1},\overline{a}_{i,h-1},\overline{o}_{i,h})=\phi_{i,\frac{h}{2}}(\overline{p}_{i,h-1},\overline{a}_{i,h-1})\\
           &\text{if $h$ is odd, then }\overline{\chi}_{i,h}(\overline{p}_{i,h-1},\overline{a}_{i,h-1},\overline{o}_{i,h})=\chi_{i,\frac{h+1}{2}}(\overline{p}_{i,h-1},\overline{a}_{i,h-1},\overline{o}_{i,h})\cup \varrho_{i,h}^1\backslash\varrho_{i,h}^2,\text{ where}\\
           &\varrho_{i,h}^1:=\{\tilde{a}_{i,h-1}\given \forall \sigma(\tilde{\tau}_{i,h-1})\subseteq \sigma(\tilde{c}_h), \tilde{a}_{i,h-1}\text{~influences }\tilde{s}_{h}\}\backslash\chi_{i,\frac{h+1}{2}}(\overline{p}_{i,h-1},\overline{a}_{i,h-1},\overline{o}_{i,h})\text{ if $h>1$, otherwise $\emptyset.$}\\
           &\varrho_{i,h}^2:=\{\tilde{a}_{i,h_0}\given \forall h_0<h-1, \sigma(\tilde{a}_{i,h_0})\subseteq \sigma(\tilde{c}_{h-1}),\tilde{a}_{i,h_0} \text{ influences } \tilde{s}_{h_0+1}\}\cap \chi_{i,\frac{h+1}{2}}(\overline{p}_{i,h-1},\overline{a}_{i,h-1},\overline{o}_{i,h}). 
       \end{align*}
       \item If $\cL$ has a baseline sharing protocol as one of \textbf{Examples 2, 4, 7, 8}, or one of \textbf{Examples 1, 5, 6} with additional condition  \ref{cond:tract_1}) in \S\ref{sec: examples of QC}, then we replace Equation  (B.1) of \cite[Algorithm 5]{liu2023tractable} by: 
    \begin{equation}
        \begin{aligned}         
&\text{if $h=2t-1$, we do not make any change,}\\
&\text{if $h=2t$, }\PP_h^{\hat{\cM}(g^{1:\overline{H}}), z}(\overline{z}_{h+1}\given\hat{c}_{h}, {\gamma}_{ct(h),h}) \leftarrow \sum_{\overline{p}_{h},\overline{a}_{ct(h),h},\overline{o}_{h+1}}\mathds{1}[\overline{z}_{h+1}=\overline{\chi}_{h+1}(\overline{p}_{i,h},\overline{a}_{ct(h),h},\overline{o}_{h+1})]\PP_h^{\hat{\cM}(g^{1:\overline{H}})}(\overline{p}_{h}\given\hat{c}_{h})\\
&\qquad\qquad\qquad\qquad\qquad\qquad  \qquad\qquad \qquad \gamma_{ct(h),h}(\overline{a}_{ct(h),h}\given \overline{p}_{h})\PP_h^{\hat{\cM}(g^{1:\overline{H}})}(\overline{o}_{h+1}\given\hat{c}_{h},\overline{p}_{h},\overline{a}_{ct(h),h
}),
\end{aligned}
\end{equation}
and replace Equation (B.2) of \cite[Algorithm 5]{liu2023tractable} by:
    \begin{equation}
        \begin{aligned}
&\text{if $h=2t-1$, we do not make any change,}\\
&\text{if $h=2t$, }\hat{\cR}^{\hat{\cM}(g^{1:\overline{H}})}_{i,h}(\hat{c}_{h}, {\gamma}_{i,h}) \leftarrow \sum_{\overline{p}_h,\overline{a}_{i,h}}\PP_h^{\hat{\cM}(g^{1:\overline{H}})}(\overline{p}_{h}\given\hat{c}_{h})\hat{\cR}_{i,h}^{\hat{\cM}(g^{1:\overline{H}})}(\hat{c}_{h},\overline{p}_{h},\overline{a}_{i,h})\gamma_{i,h}(\overline{a}_{i,h}\given\overline{p}_{i,h}).
\end{aligned}
\end{equation}
\item If $\cL$ has a baseline sharing protocol as  \textbf{Example 3} or one of \textbf{Examples 1} and  {\bf 5} with additional condition \ref{cond:tract_3}) in \S\ref{sec: examples of QC}, then we do not make any change.
    \end{itemize}
    Then, the $\hat{\cM}=\{\hat{\cM}(g^{1:\overline{H},j})\}_{j=1}^K$ learned in Line \ref{line: LEE} of \Cref{main learning algorithm} satisfies that: for any $j\in [K]$, $\hat{\cM}(g^{1:\overline{H},j})$ satisfies the  \textbf{Factorized structures} condition in \S\ref{sec:one_step_examples} if
    the baseline sharing protocol of $\cL$ is one of \textbf{Examples 1, 5, 6} with additional condition \ref{cond:tract_2});  $\hat{\cM}(g^{1:\overline{H},j})$ satisfies the \textbf{Turn-based structures} condition in \S\ref{sec:one_step_examples}, if the baseline sharing protocol of $\cL$ is one of \textbf{Examples 2, 4, 7, 8} or \textbf{Examples 1, 5, 6} with additional condition \ref{cond:tract_1});
$\hat{\cM}(g^{1:\overline{H},j})$ satisfies the  \textbf{Nested private information} condition in \S\ref{sec:one_step_examples}, if the baseline sharing protocol of $\cL$ is \textbf{Example  3} or \textbf{Examples 1, 5} with additional condition \ref{cond:tract_3}). From Lemma \ref{lemma: one_step_tract}, we can conclude that Assumption \ref{assu: one_step_tract} holds, by noticing that in these examples,  $\max_{h\in[\overline{H}]}|\overline{\cP}_h|$ depends polynomially on  the parameters of the original  LTC problem $\cL$. Hence, \Cref{main learning algorithm} has  the following  sample and time complexities
    \begin{itemize}
        \item \textbf{Examples 1, 3, 5, 6:} $\texttt{poly}\left(\max_{h\in[\overline{H}]}(|\overline{\cO}_h|| \overline{\cA}_h|)^{C\gamma^{-4}\log(\frac{|\overline{\cS}|}{\epsilon})}, |\overline{\cS}|, \overline{H},\frac{1}{\epsilon},\log(\frac{1}{\delta})\right)$;  
        \item \textbf{Examples 2, 4, 7, 8:} $\texttt{poly}\left(\max_{h\in[\overline{H}]}(|\overline{\cO}_h||\overline{\cA}_h|)^{C\gamma^{-4}\log(\frac{|\cS|}{\epsilon})+2d}, |\overline{\cS}|,\overline{H},\frac{1}{\epsilon},\log(\frac{1}{\delta})\right)$, 
    \end{itemize}
    to achieve $\max_{\overline{g}_{1:\overline{H}}\in\overline{\cG}_{1:\overline{H}}} J_{\cD_\cL'}(\overline{g}_{1:\overline{H}})-J_{\cD_\cL'}(\overline{g}_{1:\overline{H}}^{\hat{j},\ast})\leq \epsilon$ with probability  at least $1-\delta$,  
    which completes the proof of \textbf{Part III}.
}
        Combining {\bf Parts I, II, III}, we complete the proof.
\end{proof}

\section{Deferred Details of  \S\ref{general Dec-POMDP}}
\label{proof details sec 5}
We first introduce the  notion of \emph{perfect recall} \cite{kuhn1953extensive}: 
\begin{definition}[Perfect recall]\label{def:PR}
    We say that agent $i$ has perfect recall if $\forall h=2,\cdots, \overline{H}$, 
    it holds that $
        \overline{\tau}_{i,h-1}\cup\{\overline{a}_{i,h-1}\}\subseteq \overline{\tau}_{i,h}.$ 
    If for any $i\in[n]$, agent $i$ has perfect recall, we call that the Dec-POMDP has a perfect recall property. 
\end{definition}

\subsection{Proof of Theorem \ref{theorem: SI=sQC general}}
\begin{proof} 
sQC $\Rightarrow$ SI-CIB:\\
Let $\cD$ be a Dec-POMDP with an sQC information structure that  satisfies Assumptions \ref{assumption:information evolution} (e), \ref{assumption:sigma_include}, \ref{useless action}, and \ref{weak gamma observability}. To prove that $\cD$ has SI-CIBs, it is sufficient to prove that for any $h=2,\cdots,\overline{H}$, fix any $h_1\in[h-1], i_1\in[n]$, and for any $\overline{g}_{1:h-1}\in \overline{\cG}_{1:h-1}, \overline{g}'_{i_1,h_1}\in \overline{\cG}_{i_1,h_1}$, let $\overline{g}_{h_1}':=(\overline{g}_{1,h_1},\cdots,\overline{g}_{i_1,h_1}',\cdots,\overline{g}_{n,h_1})$ and 
$\overline{g}_{1:h-1}':=(\overline{g}_{1},\cdots,\overline{g}_{h_1}',\cdots,\overline{g}_{h-1})$, for any $\overline{c}_h\in\overline{\cC}_h$ reachable under both $\overline{g}_{1:h-1},\overline{g}_{1:h-1}'$, 
the following holds:
\begin{equation}
\PP(\overline{s}_h,\overline{p}_h\given \overline{c}_h,\overline{g}_{1:h-1})=\PP(\overline{s}_h,\overline{p}_h\given \overline{c}_h,\overline{g}'_{1:h-1}),
\end{equation}
where, for notational simplicity, we omit both the superscript $\cD$ and the subscript $h$ for the beliefs throughout this proof, as they are clear from the context.  
We prove this case by case as follows: 
\begin{itemize}
    \item 
If there exists some $i_2\neq i_1$ such that
$\sigma(\overline{\tau}_{i_1,h_1})\cup \sigma(\overline{a}_{i_1,h_1})\subseteq \sigma(\overline{\tau}_{i_2,h})$, then from Assumption \ref{assumption:sigma_include}, we know that $\sigma(\overline{\tau}_{i_1,h_1})\cup \sigma(\overline{a}_{i_1,h_1})\subseteq \sigma(\overline{c}_h)$. Therefore, there exist deterministic measurable functions $\beta_1, \beta_2$ such that $\overline{\tau}_{i_1,h_1}=\beta_1(\overline{c}_h), \overline{a}_{i_1,h_1}=\beta_2(\overline{c}_h)$, and it holds that 
\begin{align*} 
\PP(\overline{s}_h,\overline{p}_h\given \overline{c}_h,\overline{g}_{1:h-1})&=\PP(\overline{s}_h,\overline{p}_h\given \beta_1(\overline{c}_h), \beta_2(\overline{c}_h), \overline{c}_h,\overline{g}_{1:h-1})=\PP(\overline{s}_h,\overline{p}_h\given \overline{\tau}_{i_1,h_1},\overline{a}_{i_1,h_1},\overline{c}_h, \overline{g}_{1:h-1}')\\
&=\PP(\overline{s}_h,\overline{p}_h\given \overline{c}_h,\overline{g}_{1:h-1}'). 
\end{align*}
\normalsize 
The last equality is because the input and output of $\overline{g}_{i_1,h_1}$ (and $\overline{g}_{i_1,h_1}'$) are $\overline{\tau}_{i_1,h_1}$ and $\overline{a}_{i_1,h_1}$, respectively, and they are both conditioned on.  
 \item If 
 for all $i_2\neq i_1$, 
 either  $\sigma(\overline{\tau}_{i_1,h_1})\nsubseteq \sigma(\overline{\tau}_{i_2,h})$ or $\sigma(\overline{a}_{i_1,h_1})\nsubseteq \sigma(\overline{\tau}_{i_2,h})$, then agent $(i_1,h_1)$ does not influence agent $(i_2,h)$ for any $i_2\neq i_1$,  since $\cD$ is sQC.
 
 Firstly, we claim that agent $(i_1,h_1)$ does not influence  $\overline{s}_{h_1+1}$:  if it influences, from Assumption \ref{weak gamma observability}, there exists some $i_3\neq i_1$ such that agent $(i_1,h_1)$ influences $\overline{o}_{i_3,h_1+1}$; however, from Assumption \ref{assumption:information evolution} (e), we know that  $\overline{o}_{i_3,h_1+1}\in \overline{\tau}_{i_3,h_1+1}\subseteq \overline{\tau}_{i_3,h}$; therefore, agent $(i_1,h_1)$ influences agent $(i_3,h)$, contradicting the premise above.  
 Similarly, we  have that agent $(i_1,h_1)$ does not influence $\overline{s}_{h_2}$  for any $h_2>h_1$.

 Secondly, we claim that agent $(i_1,h_1)$ does not influence $\overline{\tau}_{i_3,h_2}$, for any $i_3\in[n]$ and $h_2>h_1$. Since agent $(i_1,h_1)$ does not influence $\overline{s}_{h_1+1}$, then by Assumption \ref{useless action}, for any $h_2>h_1$, $\overline{a}_{i_1,h_1}\notin \overline{\tau}_{h_2}$ and agent   $(i_1,h_1)$ does not influence $\overline{o}_{i_3,h_1+1}$ for any $i_3\in[n]$, which implies that agent $(i_1,h_1)$  does not influence any element in $\overline{\tau}_{i_3,h_1+1}$ for any $i_3\in[n]$, either directly or indirectly. Since $\overline{\tau}_{i_3,h_1+1}$ is the input of agent $i_3$'s strategy at timestep $h_1+1$ to decide $\overline{a}_{i_3,h_1+1}$, agent $(i_1,h_1)$ thus does not influence $\overline{a}_{i_3,h_1+1}$ for any $i_3\in[n]$, either,  which, together with the fact that it does not influence $\overline{s}_{h_1+2}$ and thus not $\overline{o}_{i_3,h_1+2}$ for any $i_3\in[n]$, further implies that it does not influence any element in $\overline{\tau}_{i_3,h_1+2}$ for any $i_3\in[n]$. By recursion, agent $(i_1,h_1)$ does not influence  $\overline{\tau}_{i_3,h_2}$  for any $i_3\in[n]$ and $h_2>h_1$.
 
 Therefore, agent $(i_1,h_1)$ does not influence $\overline{c}_{h}=\cap_{i_3=1}^n \overline{\tau}_{i_3,h}$ (due to Assumption \ref{assumption:information evolution}) nor $\overline{p}_h=\overline{\tau}_h\backslash \overline{c}_h$, and it does not influence $\overline{s}_h$, either, which means that 
 \begin{align*}
\PP(\overline{s}_h,\overline{p}_h\given \overline{c}_h,\overline{g}_{1:h-1})=\PP(\overline{s}_h,\overline{p}_h\given \overline{c}_h,\overline{g}'_{1:h-1}). 
 \end{align*}
\end{itemize}
SI-CIB $\Rightarrow$ sQC:\\
Since $\cD$ has perfect recall and has SI-CIBs, we know that $\forall i\in[n], h\in[\overline{H}],\forall \overline{g}_{1:h-1},\overline{g}'_{1:h-1}\in \overline{\cG}_{1:h-1},\overline{c}_h\in\overline{\cC}_h,\overline{s}_h\in\overline{\cS},\overline{p}_h\in\overline{\cP}_h$, if $\overline{c}_h$ is reachable under both $\overline{g}_{1:h-1},\overline{g}_{1:h-1}'$, then the following holds:
\begin{equation*}
    \PP(\overline{s}_h,\overline{p}_h\given \overline{c}_h,\overline{g}_{1:h-1})=\PP(\overline{s}_h,\overline{p}_h\given \overline{c}_h,\overline{g}'_{1:h-1}).
\end{equation*}
Our goal is to prove that 
{if agent $(i_1,h_1)$ influences agent $(i_2,h_2)$ in the intrinsic model of the Dec-POMDP $\cD$,  
then under any strategy $\overline{g}_{1:\overline{H}}\in \overline{\cG}_{1:\overline{H}}$, $\sigma(\overline{\tau}_{i_1,h_1})\cup\sigma(\overline{a}_{i_1,h_1})\subseteq\sigma(\overline{\tau}_{i_2,h_2})$ holds.} 
{Note that throughout the proof, when it comes to $\sigma$-algebra inclusion, we meant it up to the null sets generated by $\overline{g}_{1:\overline{H}}$}.
We prove this by contradiction. If this is not true, then there exist $i_1,i_2\in[n],h_1,h_2\in[\overline{H}]$, such that agent $(i_1,h_1)$ influences agent $(i_2,h_2)$, but either  $\sigma(\overline{\tau}_{i_1,h_1})\nsubseteq\sigma(\overline{\tau}_{i_2,h_2})$ or $\sigma(\overline{a}_{i_1,h_1})\nsubseteq\sigma(\overline{\tau}_{i_2,h_2})$. 
First, we know that $i_2\neq i_1$, since otherwise it always holds that $\overline{\tau}_{i_1,h_1}\subseteq\overline{\tau}_{i_1,h_2}$ and $\overline{a}_{i_1,h_1}\in\overline{\tau}_{i_1,h_2}$, due to the perfect recall assumption.
Then, we 
discuss the following  different cases.  
\begin{itemize}
\item If $\sigma(\overline{a}_{i_1,h_1})\nsubseteq \sigma(\overline{\tau}_{i_2,h_2})$, then it implies that $\sigma(\overline{a}_{i_1,h_1})\nsubseteq \sigma(\overline{c}_{h_2})$ and $\overline{a}_{i_1,h_1}\notin \overline{c}_{h_2}$,  since $\overline{c}_{h_2}\subseteq \overline{\tau}_{i_2,h_2}$. 
{This further implies that there must exist a strategy  $\overline{g}_{1:h_2-1}$ and some realizations $\overline{c}_{h_2}\in\overline{\cC}_{h_2},  \overline{a}_{i_1,h_1}\neq\overline{a}_{i_1,h_1}'\in\overline{\cA}_{i_1,h_1}$ such that $\PP(\overline{c}_{h_2}\given \overline{g}_{1:h_2-1})>0, \PP(\overline{a}_{i_1,h_1}\given \overline{c}_{h_2},\overline{g}_{1:h_2-1})>0$, and $\PP(\overline{a}_{i_1,h_1}'\given \overline{c}_{h_2},\overline{g}_{1:h_2-1})>0$.  Hence, due to perfect recall, there must exist some realizations  $\overline{p}_{h_2}\in\overline{\cP}_{h_2}, \overline{s}_{h_2}\in\overline{\cS}$ such that $\overline{a}_{i_1,h_1}\in\overline{p}_{h_2}$, and $\PP(\overline{s}_{h_2},\overline{p}_{h_2}\given \overline{c}_{h_2}, \overline{g}_{1:h_2-1})>0$. Then, we define another strategy  $\overline{g}'_{i_1,h_1}$ as:
\begin{equation}\label{equ:reassign_g}
        \forall \overline{\tau}_{i_1,h_1}\in \overline{\cT}_{i_1,h_1},~~~ \overline{g}'_{i_1,h_1}(\overline{\tau}_{i_1,h_1})=\overline{a}_{i_1,h_1}',
\end{equation} 
and we let  $\overline{g}_{h_1}':=(\overline{g}_{1,h_1},\cdots,\overline{g}_{i_1,h_1}',\cdots,\overline{g}_{n,h_1})$ and 
$\overline{g}_{1:h_2-1}':=(\overline{g}_{1},\cdots,\overline{g}_{h_1}',\cdots,\overline{g}_{h_2-1})$. Now we claim that $\overline{c}_{h_2}$ has non-zero probability under $\overline{g}'_{1:h_2-1}$, i.e., $\overline{c}_{h_2}$ is reachable under $\overline{g}'_{1:h_2-1}$. Since $\overline{c}_{h_2}$ is reachable under $\overline{g}_{1:h_2-1}$ and  
$ \PP(\overline{a}_{i_1,h_1}'\given \overline{c}_{h_2}, \overline{g}_{1:h_2-1})>0$, we know that $\PP(\overline{a}_{i_1,h_1}', \overline{c}_{h_2}\given \overline{g}_{1:h_2-1})>0$. Since $\overline{g}_{1:h_2-1}'$ only differs from $\overline{g}_{1:h_2-1}$ in the strategy of agent $(i_1,h_1)$, and $\overline{g}_{i_1,h_1}'$ always chooses $\overline{a}_{i_1,h_1}'$, then it holds that $\PP(\overline{a}_{i_1,h_1}', \overline{c}_{h_2}\given \overline{g}_{1:h_2-1}')\ge\PP(\overline{a}_{i_1,h_1}', \overline{c}_{h_2}\given \overline{g}_{1:h_2-1})>0$, and thus $\PP(\overline{c}_{h_2}\given \overline{g}_{1:h_2-1}')>0$.
Meanwhile, due to \eqref{equ:reassign_g}, we have $
\PP(\overline{s}_{h_2},\overline{p}_{h_2}\given \overline{c}_{h_2},\overline{g}'_{1:h_2-1})=0
    \neq \PP(\overline{s}_{h_2},\overline{p}_{h_2}\given \overline{c}_{h_2},\overline{g}_{1:h_2-1}),$ 
which leads to a contradiction to the fact that $\cD$ has SI-CIBs. 
\item If  $\sigma(\overline{a}_{i_1,h_1})\subseteq \sigma(\overline{\tau}_{i_2, h_2})$, then we know that $\sigma(\overline{\tau}_{i_1,h_1})\nsubseteq \sigma(\overline{\tau}_{i_2,h_2})$. This further implies that there must exist a strategy $\overline{g}_{1:h_2-1}$ and some realizations $\overline{\tau}_{i_1,h_1}\neq \overline{\tau}_{i_1,h_1}'\in\overline{\cT}_{i_1,h_1}, \overline{\tau}_{i_2,h_2}\in\overline{\cT}_{i_2,h_2}$ such that $\PP(\overline{\tau}_{i_2,h_2}\given \overline{g}_{1:h_2-1})>0, \PP(\overline{\tau}_{i_1,h_1}\given \overline{\tau}_{i_2,h_2}, \overline{g}_{1:h_2-1})>0$,  and $\PP(\overline{\tau}'_{i_1,h_1}\given \overline{\tau}_{i_2,h_2}, \overline{g}_{1:h_2-1})>0$. {Then, due to perfect recall, there exist some realizations $\overline{s}_{h_2}\in \overline{\cS}, \overline{\tau}_{h_2}\in\overline{\cT}_{h_2}$ such that $\overline{\tau}_{i_1,h_1}\subseteq \overline{\tau}_{h_2}, \overline{\tau}_{i_2,h_2}\subseteq \overline{\tau}_{h_2}$ and $\PP(\overline{s}_{h_2},\overline{\tau}_{h_2}\given \overline{g}_{1:h_2-1})>0$. Let $\overline{c}_{h_2}\in\overline{\cC}_{h_2}, \overline{p}_{h_2}\in\overline{\cP}_{h_2}$ be the realizations such that $\overline{c}_{h_2}\subseteq \overline{\tau}_{h_2}$ and $\overline{p}_{h_2}\subseteq \overline{\tau}_{h_2}$, then it holds that $\PP(\overline{s}_{h_2},\overline{p}_{h_2},\overline{c}_{h_2}\given\overline{g}_{1:h_2-1})=\PP(\overline{s}_{h_2},\overline{\tau}_{ h_2}\given\overline{g}_{1:h_2-1})>0$. Thus, we know that $\PP(\overline{s}_{h_2},\overline{p}_{h_2}\given \overline{c}_{h_2},\overline{g}_{1:h_2-1})>0$, and that $\overline{c}_{h_2}$ is reachable under $\overline{g}_{1:h_2-1}$, i.e., $\PP(\overline{c}_{h_2}\given \overline{g}_{1:h_2-1})>0$.}
Meanwhile, since  $\sigma(\overline{a}_{i_1,h_1})\subseteq \sigma(\overline{\tau}_{i_2,h_2})$, we know that 
there exists a realization   $\overline{a}_{i_1,h_1}\in \overline{\cA}_{i_1,h_1}$ such that 
$\PP(\overline{a}_{i_1,h_1}\given \overline{\tau}_{i_2,h_2},\overline{g}^{\circ}_{1:h_2-1})=1$ holds for any strategy $\overline{g}^{\circ}_{1:h_2-1}\in\overline{\cG}_{1:h_2-1}$. By applying $\overline{g}^{\circ}_{1:h_2-1}=\overline{g}_{1:h_2-1}$, we obtain  $\PP(\overline{a}_{i_1,h_1}\given \overline{\tau}_{i_2,h_2},\overline{g}_{1:h_2-1})=1$. 
Consider another different action realization $\overline{a}_{i_1,h_1}'\neq \overline{a}_{i_1,h_1}$, and define a new strategy $\overline{g}'_{i_1,h_1}$ as
\begin{equation}
    \overline{g}'_{i_1,h_1}(\overline{\tau}_{i_1,h_1})=\overline{a}_{i_1,h_1}',~~~\overline{g}'_{i_1,h_1}(\overline{\tau}_{i_1,h_1}')=\overline{a}_{i_1,h_1}, 
    \label{equ:reassign_g2}
\end{equation}
and keep $g'_{i_1,h_1}(\overline{\tau}_{i_1,h_1}'')$ the same as $g_{i_1,h_1}(\overline{\tau}_{i_1,h_1}'')$ for any other realizations  $\overline{\tau}_{i_1,h_1}''\in\overline{\cT}_{i_1,h_1}$. We denote $\overline{g}_{h_1}':=(\overline{g}_{1,h_1},\cdots,\overline{g}_{i_1,h_1}',\cdots,\overline{g}_{n,h_1})$ and 
$\overline{g}_{1:h_2-1}':=(\overline{g}_{1},\cdots,\overline{g}_{h_1}',\cdots,\overline{g}_{h_2-1})$. Since $\overline{\tau}_{i_2,h_2}$ has non-zero probability under $\overline{g}_{1:h_2-1}$, and $\PP(\overline{\tau}_{i_1,h_1}'\given \overline{\tau}_{i_2,h_2}, \overline{g}_{1:h_2-1})>0$, it holds that $\PP(\overline{\tau}_{i_1,h_1}', \overline{\tau}_{i_2,h_2}\given \overline{g}_{1:h_2-1})>0$. Together with the fact that $\PP(\overline{a}_{i_1,h_1}\given \overline{\tau}_{i_2,h_2}, \overline{g}_{1:h_2-1})=1$, we can obtain that
$\PP(\overline{\tau}_{i_1,h_1}', \overline{a}_{i_1,h_1}, \overline{\tau}_{i_2,h_2}\given \overline{g}_{1:h_2-1})>0$. Since $\overline{g}_{1:h_2-1}'$ only differs from $\overline{g}_{1:h_2-1}$ in the strategy of agent $(i_1,h_1)$, and $\overline{g}_{i_1,h_1}'$ always chooses $\overline{a}_{i_1,h_1}$ when inputting $\overline{\tau}_{i_1,h_1}'$, then it holds that 
$\PP(\overline{c}_{h_2}\given \overline{g}_{1:h_2-1}')\ge \PP(\overline{\tau}_{i_2,h_2}\given \overline{g}_{1:h_2-1}')\ge \PP(\overline{\tau}_{i_1,h_1}',\overline{a}_{i_1,h_1}, \overline{\tau}_{i_2,h_2}\given \overline{g}_{1:h_2-1}')\ge \PP(\overline{\tau}_{i_1,h_1}', \overline{a}_{i_1,h_1}, \overline{\tau}_{i_2,h_2}\given \overline{g}_{1:h_2-1})>0$, which means $\overline{c}_{h_2}$ is reachable under $\overline{g}_{1:h_2-1}'$. Meanwhile, it holds that
\begin{equation}
\begin{aligned}    
&\PP(\overline{s}_{h_2},\overline{p}_{h_2}\given \overline{c}_{h_2},\overline{g}'_{1:h_2-1})=\frac{\PP(\overline{s}_{h_2},\overline{p}_{h_2},\overline{c}_{h_2}\given\overline{g}'_{1:h_2-1})}{\PP(\overline{c}_{h_2}\given\overline{g}'_{1:h_2-1})}=\frac{\PP(\overline{s}_{h_2},\overline{\tau}_{h_2}\given\overline{g}'_{1:h_2-1})}{\PP(\overline{c}_{h_2}\given\overline{g}'_{1:h_2-1})}=0
    \neq\PP(\overline{s}_{h_2},\overline{p}_{h_2}\given \overline{c}_{h_2},\overline{g}_{1:h_2-1}),
\end{aligned}
\label{equ:sQC->SI tau contradict}
\end{equation}
where the third equality is because  1) $\overline{\tau}_{i_1,h_1}\subseteq \overline{\tau}_{h_2}$, $\overline{a}_{i_1,h_1}\in \overline{\tau}_{h_2}$ since  $\PP(\overline{a}_{i_1,h_1}\given \overline{\tau}_{i_2,h_2})=1$, $\overline{\tau}_{i_2,h_2}\subseteq \overline{\tau}_{h_2}$, and perfect recall; 2) $\overline{a}_{i_1,h_1},\overline{\tau}_{i_1,h_1}$ can never be realized simultaneously under $\overline{g}'_{1:h_2-1}$ due to \eqref{equ:reassign_g2}. Therefore, \eqref{equ:sQC->SI tau contradict} leads to a contradiction to the fact that $\cD$ has SI-CIBs.
}
\end{itemize}
This  completes the proof.  
\end{proof}

\clearpage 
\section{Collection of Algorithm Pseudocodes}

Here we collect both our planning and learning algorithms as pseudocodes below. 

\begin{algorithm}
    \caption{Planning in QC LTC Problems}
    \label{main algorithm}
    \begin{algorithmic}[1]
    \REQUIRE LTC problem  $\cL$ 
    \STATE Reformulate $\cL$ as $\cD_\cL$ based on \Cref{LTC to Dec-POMDP} 
    \STATE Expand $\cD_\cL$ to $\cD_\cL^\dag$  based on \Cref{construction:QC to sQC} 
    \STATE Refine $\cD_\cL^\dag$ as $\cD_\cL'$ based on $\cL$ and \S\ref{sec: refinement}  
    \STATE Construct an expected approximate common-information model $\cM$ from $\cD_\cL'$ (see examples of such constructions in the proof of \Cref{theorem: planning}) \label{line: construct AIS}
    \STATE $\hat{g}_{1:\tilde{H}}^\ast\leftarrow$ Algorithm \ref{algorithm under AIS}($\cM$)\label{line: dp in planning}
    \STATE $\tilde{g}_{1:\tilde{H}}^\ast\leftarrow$ $\varphi(\hat{g}_{1:\tilde{H}}^\ast,\cD_\cL)$ 
    \STATE $g^{m,\ast}_{1:H}\leftarrow \{\tilde{g}_{1}^\ast,\tilde{g}_{3}^\ast,\cdots,\tilde{g}_{2H-1}^\ast\}$
    \STATE $g^{a,\ast}_{1:H}\leftarrow \{\tilde{g}_{2}^\ast,\tilde{g}_{4}^\ast,\cdots,\tilde{g}_{2H}^\ast\}$
    \RETURN $(g^{a,\ast}_{1:H},g^{m,\ast}_{1:H})$
    \end{algorithmic}
\end{algorithm}
\begin{algorithm}[!h]
    \caption{Learning in QC LTC Problems}
    \label{main learning algorithm}
    \begin{algorithmic}[1]
    \REQUIRE  LTC problem  $\cL$, compression functions $\{\text{Compress}_h\}_{h\in[\overline{H}]}$ and rules $\{\hat{\phi}_{h}\}_{h\in[\overline{H}]}$, length $\hat{L}$, accuracy level $\epsilon$, probability $\delta$, constant $C$ 
    \STATE Reformulate $\cL$ as $\cD_\cL$ based on \Cref{LTC to Dec-POMDP} 
    \STATE Expand $\cD_\cL$ to $\cD_\cL^\dag$  based on \Cref{construction:QC to sQC} 
    \STATE Refine $\cD_\cL^\dag$ as $\cD_\cL'$ based on $\cL$ and \S\ref{sec: refinement}  
    \STATE Construct $\{\hat{\cC}_h\}_{h\in[\overline{H}]}$ as $\forall h\in[\overline{H}], \hat{\cC}_h=\{\text{Compress}_h(\overline{c}_h)\given \overline{c}_h\in\overline{\cC}_h\}$
    \STATE Denote $S=|\overline\cS|,A=\max_{h\in[\overline{H}]}|\overline{\cA}_h|, O=\max_{h\in[\overline{H}]}|\overline{\cO}_h|, P=\max_{h\in[\overline{H}]}|\overline{\cP}_h|, \hat{C}=\max_{h\in[\overline{H}]}|\hat{\cC}_h|$,  and recall $\gamma$ is the parameter in \Cref{gamma observability} 
    \STATE Define $K=2\overline{H}S, \alpha=\frac{C\overline{H}^2\epsilon}{200(\overline{H}+1)^2}, \epsilon_1=\frac{\epsilon}{200(\overline{H}+1)^2}, \theta_1=\frac{\epsilon}{200(\overline{H}+1)^2O}, \theta_2=\frac{\epsilon}{200(\overline{H}+1)^2AP},  \phi=\frac{\epsilon_1\gamma^2}{C^2\overline{H}^8S^5O^4},\zeta_1=\min\left\{\frac{\epsilon\phi}{200(\overline{H}+1)^2A^{2\hat{L}}O^{\hat{L}}},\frac{\epsilon}{400(\overline{H}+1)^2AP}\right\}, \zeta_2=\zeta_1^2, \beta=\frac{\delta}{3}, N_0=\lceil\max\left\{\frac{C(P+\log\frac{4\overline{H}{\hat{C}}}{\beta}}{\zeta_1\theta_1^2},\frac{CA(O+\log\frac{4\overline{H}\hat{C}PA}{\beta})}{\zeta_2\theta_2^2}\right\}\rceil, N_1=\lceil(AO)^{\hat{L}}\log(\frac{1}{\beta})\rceil, N_2=\lceil 16C\frac{\overline{H}^2\log\frac{K^2n}{\beta}}{\epsilon^2}\rceil$\label{line: parameters}  
    \STATE Define $\hat{\cM}:=\{\hat{\cM}(g^{1:\overline{H},j})\}_{j=1}^K$ \label{line: hat M} 
    \STATE $\{g^{1:\overline{H},j}\}_{j=1}^K\leftarrow \text{BaSeCAMP}(\hat{L},N_0,N_1,\alpha,\beta,K)$ by calling Algorithm 3 of \cite{noahlearning}  under $\cD_\cL'$ \label{line: sample strategy}
    \FOR{$j=1$ to $K$}
    \STATE $\hat{\cM}(g^{1:\overline{H},j})\leftarrow \text{LEE}(g^{1:\overline{H},j},\{\hat{\cC}_h\}_{h\in[\overline{H}]},\{\hat{\phi}_h\}_{h\in[\overline{H}]}, \Gamma, \zeta_1,\zeta_2,\theta_1,\theta_2,\beta)$ by calling Algorithm 5 of \cite{liu2023tractable} under $\cD_\cL'$ \label{line: LEE}
    \STATE $\overline{g}_{1:\overline{H}}^{j,\ast}\leftarrow$ Algorithm \ref{algorithm under AIS}$(\hat{\cM}(g^{1:\overline{H},j}))$\label{line: dp in learning}
    \ENDFOR
    \STATE $\overline{g}_{1:\overline{H}}^\ast\leftarrow \text{Pos-Dec}(\{\overline{g}_{1:\overline{H}}^{j,\ast}\}_{j=1}^K, N_2)$ by calling Algorithm 8 of \cite{liu2023tractable} under $\cD_\cL'$ \label{line: pos}
    \STATE $\tilde{g}_{1:\tilde{H}}^\ast\leftarrow$ $\varphi(\overline{g}_{1:\overline{H}}^\ast,\cD_\cL)$ 
    \STATE $g^{m,\ast}_{1:H}\leftarrow \{\tilde{g}_{1}^\ast,\tilde{g}_{3}^\ast,\cdots,\tilde{g}_{2H-1}^\ast\}$
    \STATE $g^{a,\ast}_{1:H}\leftarrow \{\tilde{g}_{2}^\ast,\tilde{g}_{4}^\ast,\cdots,\tilde{g}_{2H}^\ast\}$
    \RETURN $(g^{a,\ast}_{1:H},g^{m,\ast}_{1:H})$
    \end{algorithmic}
\end{algorithm}
 
\begin{algorithm}[!h]
    \caption{Vanilla  Realization of $\varphi(\Breve{g}_{1:\Breve{H}},\cD_\cL)$}
    \label{algorithm varphi}
    \begin{algorithmic}[1]
    \REQUIRE Strategy $\Breve{g}_{1:\Breve{H}}$, QC Dec-POMDP $\cD_\cL$
    \STATE $\tilde{g}_{1:\Breve{H}}\leftarrow \emptyset$
    \FOR{$h_2=1$ to $\Breve{H}$, $i_2=1$ to $n$, $\tilde{\tau}_{i_2,h_2}\in \tilde{\cT}_{i_2,h_2}$}
    \STATE $\Breve{\tau}_{i_2,h_2}'\leftarrow \tilde{\tau}_{i_2,h_2}$
    \FOR{$h_1=1$ to $h_2-1$, $i_1=1$ to $n$}
\IF{$\sigma(\tilde{\tau}_{i_1,h_1})\subseteq \sigma(\tilde{c}_{h_2})$\footnotemark~in $\cD_\cL$ and $\tilde{a}_{i_1,h_1}$ influences $\tilde{s}_{h_1+1}$}
    \STATE Obtain the value of $\tilde{\tau}_{i_1,h_1}$ from that of $\tilde{c}_{h_2}$ (based on $\tilde{\tau}_{i_2,h_2}$)
    \STATE $\tilde{a}_{i_1,h_1}\leftarrow \tilde{g}_{i_1,h_1}(\tilde{\tau}_{i_1,h_1})$ \\
    \STATE $\Breve{\tau}_{i_2,h_2}'\leftarrow \Breve{\tau}_{i_2,h_2}'\cup \{\tilde{a}_{i_1,h_1}\}$
    \ENDIF
    \ENDFOR
    \STATE $\tilde{g}_{i_2,h_2}(\tilde{\tau}_{i_2,h_2})\leftarrow \Breve{g}_{i_2,h_2}(\Breve{\tau}_{i_2,h_2}')$
    \ENDFOR
    \RETURN $\tilde{g}_{1:\tilde{H}}$
    \end{algorithmic}
\end{algorithm}
\footnotetext{
Note that the inclusion of $\sigma$-algebras here does not rely on the realized values of $\tilde{\tau}_{i_1,h_1},\tilde{c}_{h_2}$, but relies on the information structure of $\cD_\cL$.}

\begin{algorithm}
    \caption{Efficient  Implementation of $\varphi(\Breve{g}_{1:\Breve{H}},\cD_\cL)$} 
    \label{algorithm Implement varphi}
    \begin{algorithmic}[1]
    \REQUIRE Strategy $\Breve{g}_{1:\Breve{H}}$, QC Dec-POMDP $\cD_\cL$
    \FOR{$h=1$ to $\Breve{H}$}
    \FOR{$i=1$ to $n$}
    \STATE Agent $i$ receives $\tilde{\tau}_{i,h}$ 
    \STATE $\Breve{\tau}_{i,h}\leftarrow$ Recover$(\tilde{\tau}_{i,h},\Breve{g}_{1:h-1},\cD_\cL)$ $\backslash\backslash$ Recursion of  Algorithm \ref{algorithm recover}
    \STATE Agent $i$ takes action  $\tilde{a}_{i,h}\leftarrow \Breve{g}_{i,h}(\Breve{\tau}_{i,h})$  
    \ENDFOR
    \ENDFOR
    \end{algorithmic}
\end{algorithm}
\begin{algorithm}[!h]
    \caption{Recover$(\tilde{\tau}_{i,h},\Breve{g}_{1:h-1},\cD_\cL)$}    \label{algorithm recover}
    \begin{algorithmic}[1]
        \REQUIRE Information $\tilde{\tau}_{i,h}$, Strategy $\Breve{g}_{1:h-1}$, QC Dec-POMDP $\cD_\cL$
        \STATE $\Breve{\tau}_{i,h}\leftarrow \tilde{\tau}_{i,h}$
        \FOR{$j=1$ to $n$, $h'=1$ to $h-1$}
        \IF{$\sigma(\tilde{\tau}_{j,h'})\subseteq \sigma(\tilde{c}_h)$ in $\cD_\cL$ and $\tilde{a}_{j,h'}\notin \tilde{\tau}_{i,h}$}
        \STATE Obtain the value of $\tilde{\tau}_{j,h'}$ from that of $\tilde{c}_{h}$ (based on $\tilde{\tau}_{i,h}$)
        \STATE $\Breve{\tau}_{j,h'}\leftarrow $ Recover$(\tilde{\tau}_{j,h'},\Breve{g}_{1:h'-1},\cD_\cL)$
        \STATE $\tilde{a}_{j,h'}\leftarrow \Breve{g}_{j,h'}(\Breve{\tau}_{j,h'})$ 
        \STATE $\Breve{\tau}_{i,h}\leftarrow \Breve{\tau}_{i,h}\cup \{\tilde{a}_{j,h'}\}$
        \ENDIF
        \ENDFOR
        \RETURN $\Breve{\tau}_{i,h}$
    \end{algorithmic}
\end{algorithm}
\begin{algorithm}[!t] 
    \caption{Planning in  Dec-POMDPs with Expected Approximate Common-information Model}
    \label{algorithm under AIS}
    \begin{algorithmic}[1]
    \REQUIRE Expected approximate common-information model $\cM$  
    \FOR{$\hat{c}_{\overline{H}+1}\in \hat{\cC}_{\overline{H}+1}$}
    \STATE $V^{\ast,\cM}_{\overline{H}+1}(\hat{c}_{\overline{H}+1})\leftarrow 0$
    \ENDFOR
    \FOR{$h=\overline{H}$ to $1$}
    \FOR{$\hat{c}_{h}\in \hat{\cC}_h$}
    \STATE Define $Q^{\ast,\cM}_{h}
    (\hat{c}_h,\gamma_{1,h},\cdots,\gamma_{n,h}):=\hat{\cR}_h^\cM(\hat{c}_h,\gamma_h)+\EE^\cM\left[V^{\ast,\cM}_{h+1}(\hat{c}_{h+1})\given \hat{c}_h,\gamma_h\right]$ \label{line:one_step_opt}
    \#\label{equ:one_step_opt}
\left(\hat{g}_{1,h}^\ast(\cdot\given \hat{c}_h,\cdot),\cdots,\hat{g}_{n,h}^\ast(\cdot\given \hat{c}_h,\cdot)\right)\leftarrow \argmax_{\gamma_{1:n,h}\in\Gamma_h}Q^{\ast,\cM}_{h}(\hat{c}_h,\gamma_{1,h},\cdots,\gamma_{n,h})
    \#
    \STATE $V^{\ast,\cM}_h(\hat{c}_h)\leftarrow \max_{\gamma_{1:n,h}\in\Gamma_h}Q^{\ast,\cM}_{h}(\hat{c}_h,\gamma_{1,h},\cdots,\gamma_{n,h})$
    \ENDFOR
    \ENDFOR
    \RETURN $\hat{g}^\ast_{1:\overline{H}}$
    \end{algorithmic}
\end{algorithm}

\section{Decentralized POMDPs (with Information Sharing)} 
\label{sec:Dec-POMDP definition}

A Dec-POMDP with $n$ agents and potential information sharing can be characterized  by a tuple
\begin{equation*}
\cD=\la H,\cS,\{\cA_{i,h}\}_{i\in[n],h\in[H]},\{\cO_{i,h}\}_{i\in[n],h\in[H]},\{\TT_{h}\}_{h\in[H]},\{\OO_h\}_{h\in[H]},\mu_1,\{\cR_{h}\}_{h\in[H]}\ra,
\end{equation*}
where $H$ denotes the length of each episode, $\cS$ denotes the state space, and $\cA_{i,h}$ denotes the \emph{control action} space of agent $i$ at timestep  $h$.  We denote by $s_h\in \cS$ the state and by $a_{i,h}$ the control action of agent $i$ at timestep $h$. We use $a_h:=(a_{1,h},\cdots,a_{n,h})\in \cA_h:=\cA_{1,h}\times\cA_{2,h}\times\cdots\cA_{n,h}$ to denote the joint control action for all the $n$ agents at timestep $h$, with $\cA_h$ denoting 
the joint control action space at timestep $h$. We denote $\TT=\{\TT_h\}_{h\in[H]}$ the collection of transition functions, where 
$\TT_h(\cdot\given s_h,a_h)\in \Delta(\cS)$ gives the transition probability to the next state $s_{h+1}$ when taking the joint control action $a_h$ at state $s_h$. We use $\mu_1\in\Delta(\cS)$ to denote the distribution of the initial state $s_1$.  We denote by $\cO_{i,h}$ the observation space and by $o_{i,h}\in \cO_{i,h}$ the observation of agent $i$ at timestep $h$. We use
$o_h:=(o_{1,h},o_{2,h},\cdots,o_{n,h})\in\cO_h:=\cO_{1,h}\times\cO_{2,h}\times\cdots\cO_{n,h}$ to denote the joint observation of all the $n$ agents at timestep $h$, with $\cO_h$ denoting 
the joint observation space at timestep $h$. We use $\{\OO_h\}_{h\in[H]}$ to denote the collection of emission matrices, where $o_h\sim\OO_h(\cdot\given s_h)\in \Delta(\cO_h)$ at timestep $h$ under state $s_h\in\cS$. For notational convenience, we adopt the matrix convention, where $\OO_h$ is a matrix with each row  $\OO_h(\cdot\given s_h)$ for all $s_h\in\cS$. Also, we denote by $\OO_{i,h}$ the marginalized emission for agent $i$ at timestep $h$. Finally, $\{\cR_{h}\}_{h\in[H]}$ is a collection of reward functions among all the  agents, where $\cR_{h}:\cS\times \cA_h\rightarrow [0,1]$.

At timestep $h$, 
each agent $i$ in the Dec-POMDP has access  to some
information $\tau_{i,h}$, a subset of historical joint observations and actions, namely, $\tau_{i,h}\subseteq\{o_1,a_1,o_2,\cdots, a_{h-1},o_h\}$, and the collection of all possible such available information is denoted by $\cT_{i,h}$. We use $\tau_{h}$ to denote the \emph{joint} available information at timestep $h$. 
Meanwhile, agents may \emph{share}  part of the history with each other. The \emph{common information}  $c_h=\cup_{t=1}^h z_t$ at timestep $h$ is thus a subset of the joint history $\tau_h$, where $z_h$ is the information shared at timestep $h$. We use $\cC_h$ to denote the collection of all possible $c_h$ at timestep $h$, and use $\cT_{i,h}$ to denote the collection of all possible $\tau_{i,h}$ of agent $i$ at timestep $h$. Besides the common information $c_h$, each agent also has her \emph{private information}  $p_{i,h}=\tau_{i,h}\backslash c_h$, where the collection of $p_{i,h}$ is denoted by $\cP_{i,h}$. We also denote by $p_h$ the \emph{joint} private information, and by $\cP_h$ the collection of all possible $p_h$ at timestep $h$. We refer to the above the \emph{state-space model} of the Dec-POMDP (with information sharing), which follows the models in \cite{ashutosh2013team,ashutosh2013game}. 

{Each agent $i$ at timestep $h$ chooses the control action $a_{i,h}$ based on some strategy $g_{i,h}:\cT_{i,h}\rightarrow \cA_{i,h}$. We denote by $g_h:=(g_{1,h},g_{2,h},\cdots,g_{n,h})$ the joint control strategy of all the agents, and by $g_{1:h}:=(g_{1},g_{2},\cdots,g_h), \forall h\in[H]$ the sequence of joint strategies from timestep $1$ to $h$. We use $\cG_{i,h}$ to denote the strategy space of $g_{i,h}$, and use $\cG_{h},\cG_{1:h}$ to denote joint strategy spaces, correspondingly.} 
Next, we introduce some background on the intrinsic model and information structure of Dec-POMDPs.

\subsection{Intrinsic Model}
\label{sec:Intrinsic model}
In an intrinsic model \cite{witsenhausen1975intrinsic}, we regard the agent $i$ at different timesteps as \emph{different agents}, and each agent only acts \emph{once}  throughout.  
Any Dec-POMDP $\cD$ with $n$ agents can be formulated within the  intrinsic-model framework, and can be characterized by a tuple
$\la (\Omega,\mathscr{F}),N,\{(\UU_l,\mathscr{U}_l)\}_{l=1}^N,\{(\II_l,\mathscr{I}_l)\}_{l=1}^N\ra$ \cite{aditya2012information}, where $(\Omega,\cF)$ is a measurable space of the environment, 
$N=n\times H$ is the number of agents in the intrinsic model. By a slight abuse of notation, we write $[N]:=[n]\times[H]$, and  write $l:=(i,h)\in[N]$ for notational convenience. This way, any agent $l\in[N]$ corresponds to an agent $i\in[n]$ at timestep $h\in[H]$ in the state-space model.
We denote by $\UU_l$ the measurable action space of agent $l$ and by $\mathscr{U}_l$ the $\sigma$-algebra over $\UU_l$. For $A\subseteq[N]$, let $\HH_A:=\Omega\times \prod_{l\in A}\UU_l$ and $\HH:=\HH_{[N]}$. For any $\sigma$-algebra $\mathscr{C}$ over $\HH_{A}$, let $\la \mathscr{C}\ra$ denote the cylindrical extension of $\mathscr{C}$ on $\HH$. 
Let $\mathscr{H}_A:=\la \mathscr{F}$\scalebox{0.8}{$\otimes$}$(\otimes_{l\in A}\mathscr{U}_l)\ra$ and $\mathscr{H}=\mathscr{H}_{[N]}$. 
We denote by $\II_l$ the space of \emph{information available} to agent $l$, and by $\mathscr I_l$ the $\sigma$-algebra over $\HH$. For $l\in[N]$, we denote by $I_l$ the information of agent $l$, and $U_l$ the action of agent $l$. 
The spaces and random variables of agent $l=(i,h)$ in the intrinsic model are related to those in the state-space model as follows:   
$\forall l=(i,h)\in[N], \UU_l=\cA_{i,h},\II_l=\cT_{i,h}, U_l=a_{i,h},I_l=\tau_{i,h}$.
\subsection{Information Structures of Dec-POMDPs}
\label{sec:Information structure}
 
 An important class of IS is the \emph{quasi-classical}  one, which is defined as follows \cite{witsenhausen1975intrinsic,aditya2012information,yuksel2023stochastic}. 

\begin{definition}[Quasi-classical Dec-POMDPs]
    We call a Dec-POMDP problem \emph{QC} if each agent in the intrinsic model knows the information available to the agents who influence her{, directly or indirectly}, i.e.,   $\forall l_1,l_2\in[N], l_1=(i_1,h_1),l_2=(i_2,h_2), i_1,i_2\in[n], h_1,h_2\in[H]$, if agent $l_1$ influences agent $l_2$, Then, $\mathscr{I}_{l_1}\subseteq \mathscr{I}_{l_2}$. 
\end{definition}

Furthermore, \emph{strictly} quasi-classical IS \cite{witsenhausen1975intrinsic,mahajan2010measure}, as a subclass of QC IS, is defined as follows. 

\begin{definition}[Strictly quasi-classical Dec-POMDPs]
We call a Dec-POMDP problem \emph{sQC} if each agent in the intrinsic model knows the information \emph{and} actions available to the agents who influence her, directly or indirectly. That is, $\forall l_1,l_2\in[N], l_1=(i_1,h_1),l_2=(i_2,h_2), i_1,i_2\in[n], h_1,h_2\in[H]$, if agent $l_1$ influences agent $l_2$, Then, $\mathscr{I}_{l_1}\cup\la \mathscr{U}_{l_1}\ra\subseteq \mathscr{I}_{l_2}$. 
\end{definition}

\subsection{Intrinsic Model of LTC Problems} 
\label{intrinsic of LTC}
Given any LTC $\cL$ of the state-space-model form defined in \S\ref{sec:ltc_formulation}, we define the intrinsic model of $\cL$ as a tuple   $\la (\Omega,\mathscr{F}), N,\{(\UU_l,\mathscr{U}_l)\}_{l=1}^N,\{(\MM_l,\mathscr{M}_l)\}_{l=1}^N,\{(\II_{l^-},\mathscr{I}_{l^-})\}_{l=1}^N,\{(\II_{l^+},\mathscr{I}_{l^+})\}_{l=1}^N\ra$, where $(\Omega,\mathscr{F})$ is the measure space representing all the uncertainty in the system;  $N=n\times H$ is the number of agents in the intrinsic model. By a slight abuse of notation, we write $[N]:=[n]\times[H]$, and  write $l:=(i,h)\in[N]$ for convenience. This way, any agent $l\in[N]$ corresponds to an agent $i\in[n]$ at timestep $h\in[H]$ in the state-space model, and we thus define $l^-:=(i,h^-)$ and $l^+:=(i,h^+)$ accordingly. 
    We denote by $\UU_l$ and $\MM_l$ the measurable control and  communication action spaces of agent $l$, and by $\mathscr{U}_l$ and $\mathscr{M}_l$ the $\sigma$-algebra over $\UU_l$ and $\MM_l$, respectively. 
    For any $A\subseteq[N]$, let $\HH_A:=\Omega\times \prod_{l\in A}(\UU_l\times \MM_l)$ and $\HH:=\HH_{[N]}$. For any $\sigma$-algebra $\mathscr{C}$ over $\HH_{A}$, let $\la \mathscr{C}\ra$ denote the cylindrical extension of $\mathscr{C}$ on $\HH$. 
    Let 
   $\mathscr{H}_A:=\la \mathscr{F}$\scalebox{0.8}{$\otimes$}$(\otimes_{l\in A}\mathscr{U}_l)$\scalebox{0.8}{$\otimes$}$(\otimes_{l\in A}\mathscr{M}_l)\ra$, $\mathscr{H}=\mathscr{H}_{[N]}$. 
    We denote by $\II_{l^-}$ and $\II_{l^+}$ the spaces of \emph{information available} to agent $l$ \emph{before} and \emph{after} additional sharing, respectively, 
    and by $\mathscr I_{l^-}\subseteq \mathscr{H}$ and $\mathscr I_{l^+}\subseteq \mathscr{H}$ the associated  $\sigma$-algebra.
The spaces and random variables of agent $l=(i,h)$ in the intrinsic model are related to those in the state-space model as follows:   
$\forall l=(i,h)\in[N], \UU_l=\cA_{i,h},\MM_l=\cM_{i,h},\II_{l^-}=\cT_{i,h^-},\II_{l^+}=\cT_{i,h^+}, U_l=a_{i,h},M_l=m_{i,h}, I_{l^-}=\tau_{i,h^-},I_{l^+}=\tau_{i,h^+}$. For notational convenience, for any random variable $B$ in LTC and the $\sigma$-algebra $\mathscr{B}$ generated by $B$, we overload $\sigma(B)$ to denote the cylindrical extension of $\mathscr{B}$ on $\HH$, i.e.,  $\sigma(B)=\la \mathscr{B}\ra$.

\section{Other Supplementary  Results}\label{sec:supp_mat}

\subsection{Optimality of Deterministic Strategies}\label{sec: proof details sec 2} 
We now show a supplementary result that for the formulated LTC problem, it does not lose optimality to consider \emph{deterministic} strategies as introduced in  \S\ref{problem formulation}. 
For any LTC problem $\cL$, consider generic, stochastic communication and control strategy spaces:  $\forall i\in[n],~h\in[H],~\cG_{i,h}^{m,S}:=\{ g_{i,h}^{m,S}:\Omega_{i,h}^m\times\cT_{i,h^-}\rightarrow \cM_{i,h}\},~ \cG_{i,h}^{a,S}:=\{ g_{i,h}^{a,S}:\Omega_{i,h}^a\times\cT_{i,h^+}\rightarrow\cA_{i,h}\}$, where $\{\Omega_{i,h}^m\}_{i\in[n],h\in[H]},\{\Omega_{i,h}^a\}_{i\in[n],h\in[H]}$ are the sets of random seeds which could be correlated to each other across agents and timesteps. 
Note that these strategy classes include those of {the  strategies} randomized over the action sets, i.e., $\forall i\in[n],~h\in[H],~\cG_{i,h}^{m,S}:=\{ g_{i,h}^{m,S}:\cT_{i,h^-}\rightarrow \Delta(\cM_{i,h})\},~ \cG_{i,h}^{a,S}:=\{ g_{i,h}^{a,S}:\cT_{i,h^+}\rightarrow\Delta(\cA_{i,h})\}$.  
Also, we denote by $\cG_h^{a,S}, \cG_h^{m,S}$ the \emph{joint} stochastic control and communication spaces at timestep $h$, respectively.  
Similarly, we define the objective under the stochastic  strategies as
\begin{equation*}
     \forall g_{1:H}^{a,S}\in \cG_{1:H}^{a,S}, g_{1:H}^{m,S}\in \cG_{1:H}^{m,S}, \qquad\qquad  J_{\cL}(g^{a,S}_{1:H},g^{m,S}_{1:H}):=\EE_{\cL}\left[\sum_{h=1}^H (r_h-\kappa_h)\bigggiven g^{a,S}_{1:H},g^{m,S}_{1:H}\right].
\end{equation*} 
\begin{lemma}\label{lemma:no_lose_optimality}
    It does not lose optimality to consider deterministic control and communication strategies in LTC. Namely, for any LTC problem $\cL$,
    \begin{equation}
        \max_{g_{1:H}^{a,S}\in \cG_{1:H}^{a,S}, g_{1:H}^{m,S}\in \cG_{1:H}^{m,S}}J_{\cL}(g^{a,S}_{1:H},g^{m,S}_{1:H})=\max_{g_{1:H}^a\in \cG_{1:H}^a, g_{1:H}^m\in \cG_{1:H}^m}J_{\cL}(g^a_{1:H},g^m_{1:H}).
        \label{eq: sto=det}
    \end{equation}
\end{lemma}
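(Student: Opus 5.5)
The inequality ``$\ge$'' in \eqref{eq: sto=det} is immediate, since every deterministic strategy is a degenerate stochastic one: take $\Omega^m_{i,h},\Omega^a_{i,h}$ singletons. The work is in ``$\le$''. There I plan to show that any stochastic joint strategy is a mixture of deterministic joint strategies, so its value is an average of deterministic values and cannot exceed the best deterministic one.

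The first step is to fix a stochastic pair $(g^{a,S}_{1:H},g^{m,S}_{1:H})$. I collect all random seeds into one joint seed $\omega=(\omega^m_{i,h},\omega^a_{i,h})_{i\in[n],h\in[H]}$, drawn from its (possibly correlated) joint law $\nu$. The seed is drawn independently of the environment randomness $(s_1$, transitions, emissions$)$, which is the standard modeling convention for private or common randomization. For each fixed realization $\omega$, I define deterministic strategies
\[
g^{m,\omega}_{i,h}(\cdot):=g^{m,S}_{i,h}(\omega^m_{i,h},\cdot),\qquad g^{a,\omega}_{i,h}(\cdot):=g^{a,S}_{i,h}(\omega^a_{i,h},\cdot).
\]
These lie in $\cG^m_{i,h}$ and $\cG^a_{i,h}$ respectively, because their inputs $\tau_{i,h^-}$ and $\tau_{i,h^+}$ are unchanged.

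The key step is to show that, conditioned on $\omega$, the closed-loop system evolves exactly as $\cL$ under $(g^{a,\omega}_{1:H},g^{m,\omega}_{1:H})$. This gives
\[
J_\cL(g^{a,S}_{1:H},g^{m,S}_{1:H})=\int J_\cL(g^{a,\omega}_{1:H},g^{m,\omega}_{1:H})\,\nu(d\omega).
\]
I would verify this by induction over the alternating communication and control steps of Assumption~\ref{assumption:information evolution}. Given $\omega$ and the history up to a step, the next action is a deterministic function of $\tau_{i,h^\mp}$ and equals the action of the $\omega$-indexed deterministic strategy. The information updates $\chi_h,\xi_{i,h},\phi_{i,h}$ do not depend on the strategy. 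Because the seed is independent of the environment, the conditional laws of states and observations are unchanged by conditioning on $\omega$. The tower property over $\omega$ then yields the integral identity, with the rewards $r_h$ and costs $\kappa_h$ entering as bounded measurable functions of the trajectory.

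Finally, since all spaces are finite, $\cG^a_{1:H}\times\cG^m_{1:H}$ is finite, so the deterministic maximum on the right-hand side exists. Each integrand is at most that maximum, hence so is the integral. Taking the supremum over stochastic strategies gives ``$\le$'', and the supremum on the left is attained, for example by the deterministic maximizer itself.

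The main obstacle I expect is the measure-theoretic justification of the conditioning step: specifically, that the seed is independent of the environment noise and that correlation of seeds across agents and timesteps causes no problem. Correlation is in fact harmless, because I fix the entire joint seed at once rather than one agent's seed at a time. I would make the independence explicit in the intrinsic-model formulation of \S\ref{intrinsic of LTC} by enlarging $\Omega$ to a product space with the seed coordinates.
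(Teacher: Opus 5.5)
Your proposal is correct and follows essentially the same route as the paper. The paper also gets ``$\ge$'' from the inclusion of deterministic strategies, and gets ``$\le$'' from the tower property: it conditions on the full collection of seeds $\{\omega^a_{i,h}\},\{\omega^m_{i,h}\}$, so the inner expectation is the value of a seed-indexed deterministic strategy, which is bounded by the deterministic maximum. Your explicit treatment of seed--environment independence, correlated seeds, and finiteness of $\cG^a_{1:H}\times\cG^m_{1:H}$ makes precise what the paper leaves implicit.
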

\begin{proof}
    For any $i\in[n],h\in[H]$, since space $\cG_{i,h}^{m,S}$ covers space $\cG_{i,h}^m$ and 
    space $\cG_{i,h}^{a,S}$ covers space $\cG_{i,h}^a$, we have  that 
    \begin{align}
        \label{eq: sto>det}
        \max_{g_{1:H}^{a,S}\in \cG_{1:H}^{a,S}, g_{1:H}^{m,S}\in \cG_{1:H}^{m,S}}J_{\cL}(g^{a,S}_{1:H},g^{m,S}_{1:H})\ge\max_{g_{1:H}^a\in \cG_{1:H}^a, g_{1:H}^m\in \cG_{1:H}^m}J_{\cL}(g^a_{1:H},g^m_{1:H}).
    \end{align}
    In the other direction, from the tower property, for any $g_{1:H}^{a,S}\in \cG_{1:H}^{a,S}, g_{1:H}^{m,S}\in \cG_{1:H}^{m,S}$
    \small
    \begin{align*}
    &J_{\cL}(g^{a,S}_{1:H},g^{m,S}_{1:H})=\EE_{\cL}\left[\sum_{h=1}^H (r_h-\kappa_h)\bigggiven g^{a,S}_{1:H},g^{m,S}_{1:H}\right]
    =\EE\left[\EE_{\cL}\big[\sum_{h=1}^H (r_h-\kappa_h)\bigggiven \{g^{a,S}_{i,h}[\omega_{i,h}^a]\}_{i\in[n],h\in[H]},\{g^{m,S}_{i,h}[\omega_{i,h}^m]\}_{i\in[n],h\in[H]}\big]\right],
    \end{align*} 
    \normalsize 
    where $\omega_{i,h}^a\in \Omega_{i,h}^a$, $\omega_{i,h}^m\in \Omega_{i,h}^m$ are random seeds, and $g_{i,h}^{a,S}[\omega_{i,h}^a]\in \cG_{i,h}^a,g^{a,S}_{i,h}[\omega_{i,h}^m]\in\cG_{i,h}^m$ are deterministic strategies defined  as
    \begin{align*}
        &\forall \tau_{i,h^-}\in\cT_{i,h^-},g^{m,S}_{i,h}[\omega_{i,h}^m](\tau_{i,h^-}):=g^{m,S}_{i,h}(\omega_{i,h}^m,\tau_{i,h^-}),\\
        &\forall \tau_{i,h^+}\in\cT_{i,h^+},g^{a,S}_{i,h}[\omega_{i,h}^a](\tau_{i,h^+}):=g^{a,S}_{i,h}(\omega_{i,h}^a,\tau_{i,h^+}).
    \end{align*}
    Therefore, we have 
    \begin{align*}
        &J_{\cL}(g^{a,S}_{1:H},g^{m,S}_{1:H})=\EE\left[\EE_{\cL}\big[\sum_{h=1}^H (r_h-\kappa_h)\bigggiven \{g^{a,S}_{i,h}[\omega_{i,h}^a]\},\{g^{m,S}_{i,h}[\omega_{i,h}^m]\}\big]\right]\\
        &\quad\le \max_{\{\omega_{i,h}^a\}_{i\in[n],h\in[H]},\{\omega_{i,h}^m\}_{i\in[n],h\in[H]}}\EE_{\cL}\left[\sum_{h=1}^H (r_h-\kappa_h)\bigggiven \{g^{a,S}_{i,h}[\omega_{i,h}^a]\},\{g^{m,S}_{i,h}[\omega_{i,h}^m]\}\right]\\
        &\quad\le \max_{g_{1:H}^a\in \cG_{1:H}^a, g_{1:H}^m\in \cG_{1:H}^m}J_{\cL}(g^a_{1:H},g^m_{1:H})
    \end{align*}
    holds for any $g_{1:H}^{a,S}\in \cG_{1:H}^{a,S}, g_{1:H}^{m,S}\in \cG_{1:H}^{m,S}$.  Hence, we further get
    \begin{align}
        \label{eq: sto<det}
        \max_{g_{1:H}^{a,S}\in \cG_{1:H}^{a,S}, g_{1:H}^{m,S}\in \cG_{1:H}^{m,S}}J_{\cL}(g^{a,S}_{1:H},g^{m,S}_{1:H})\le\max_{g_{1:H}^a\in \cG_{1:H}^a, g_{1:H}^m\in \cG_{1:H}^m}J_{\cL}(g^a_{1:H},g^m_{1:H}). 
    \end{align}
    Combining \Cref{eq: sto>det} and \Cref{eq: sto<det}, we prove  \Cref{eq: sto=det}.
\end{proof}

\subsection{Conditions  Leading to Assumption \ref{assu: one_step_tract}}\label{sec:one_step_examples} 

As a minimal requirement for computational tractability (for both Dec-POMDPs and LTCs), Assumption \ref{assu: one_step_tract} is needed for the one-step tractability of the team-decision problem involved in the value iteration in Algorithm \ref{algorithm under AIS}. We now adapt several such structural conditions from \cite{liu2023tractable} to the LTC setting,  which lead to this assumption and have been studied in the literature. Note that since we need to do planning in the approximate model $\cM$, which is oftentimes constructed based on the original problem $\cL$ and the approximate beliefs $\{\PP_{h}^{\cM,c}(\overline{s}_h,\overline{p}_h\given \hat{c}_h)\}_{h\in[\overline{H}]}$, we necessarily need conditions on models $\cL$, $\cM$, and the beliefs $\{\PP_{h}^{\cM,c}(\overline{s}_h,\overline{p}_h\given \hat{c}_h)\}_{h\in[\overline{H}]}$ that $\cM$ is consistent with, to ensure Assumption \ref{assu: one_step_tract} holds, for which we refer to as the \textbf{Part (1)} and \textbf{Part (2)} of the conditions below, respectively.

\begin{itemize}
\label{one-step tractability condition}
    \item \textbf{Turn-based structures.} \textbf{Part (1):} At each timestep $h\in[\overline{H}]$, there is only one agent, denoted as  $ct(h)\in[n]$, that can affect the state transition. More concretely, the transition dynamics take the forms of $\TT_h:\cS\times\cA_{ct(h),h}\rightarrow \Delta(\cS)$. Additionally, we assume the reward function admits an additive structure such that $\cR_h(s_h, a_h)=\sum_{i\in[n]} \cR_{i, h}(s_h, a_{i, h})$ for some functions $\{\cR_{i,h}\}_{i\in[n]}$. Meanwhile, since only agent $ct(h)$ takes the action, we assume the increment of the common information satisfies   $z_{h+1}^b=\chi_{h+1}(p_{h^+}, a_{ct(h),h}, o_{h+1})$.  \textbf{Part (2):} No additional requirement. Such a structure has been commonly studied in (fully observable) stochastic games and multi-agent RL \cite{filar2012competitive,bai2020provable}. 
    \item \textbf{Nested private information.} \textbf{Part (1):} No additional requirement. \textbf{Part (2): } At each timestep $h=2t$ with $t\in[H]$, all the agents form a \emph{hierarchy} according to the private information they possess, in the sense that  $\forall~i,j\in[n],j<i, \overline{p}_{j,h}=Y_h^{i,j}(\overline{p}_{i,h})$ for some function $Y_h^{i,j}$. More formally, the approximate belief satisfies that $\PP_h^{\cM,c}(\overline{p}_{j,h}=Y_h^{i,j}(\overline{p}_{i,h})\given \overline{p}_{i,h},\hat{c}_h)=1$. Such a structure has been investigated in  \cite{peralez2024solving} with heuristic search, and in \cite{liu2023tractable} with finite-time complexity analysis when there is no additional sharing to decide/learn. 
    \item \textbf{Factorized structures.} \textbf{Part (1):} At each timestep $h\in[\overline{H}]$, the state $s_h$ can be partitioned into $n$ local states, i.e., $s_h=(s_{1,h},s_{2,h},\cdots,s_{n,h}).$ Meanwhile, the transition kernel takes the product form of $\TT_h(s_{h+1}\given s_h,a_h)=\prod_{i=1}^n \TT_{i,h}(s_{i,h+1}\given s_{i,h}, a_{i,h})$, the emission also takes the product form of $\OO_h(o_h\given s_h)=\prod_{i=1}^n \OO_{i,h}(o_{i,h}\given s_{i,h})$, and the communication cost and reward functions can be decoupled into $n$ terms such that $\cK_h (z_h^a)=\sum_{i=1}^n\cK_{i,h}(z_{i,h}^a), \cR_h(s_h,a_h)=\sum_{i=1}^n\cR_{i,h}(s_{i,h}, a_{i,h})$. 
    \textbf{Part (2):} At each timestep $h\in[\overline{H}]$, the approximate common information is also factorized so that $\hat{c}_h=(\hat{c}_{1,h},\hat{c}_{2,h},\cdots,\hat{c}_{n,h})$ and its evolution satisfies that $\hat{c}_{i,h}=\hat{\phi}_{i,h}(\hat{c}_{i,h-1},\overline{z}_{i,h-1})$ for some function $\hat{\phi}_{i,h}$. Correspondingly, the approximate beliefs need to satisfy that $\PP_h^{\cM,c}(\overline{s}_h,\overline{p}_h\given \hat{c}_h)=\Pi_{i=1}^n\PP_{i,h}^{\cM,c}(\overline{s}_{i,h},\overline{p}_{i,h}\given \hat{c}_{i,h})$ for some functions $\{\PP_{i,h}^{\cM,c}\}_{i\in[n],h\in[\overline{H}]}$. 
    Such a structure, under general information sharing protocols, can lead to non-classical IS. In this case, it may  be viewed as  an example of     
    {non-classical} ISs where the agents have no incentive for {signaling} \cite[\S 3.8.3]{yuksel2023stochastic}.
\end{itemize}
\begin{lemma}
    Given any LTC problem $\cL$, let  $\cD_\cL'$ be the Dec-POMDP after reformulation, strict expansion, and refinement. For any $\cM$ to be the approximate model of $\cD_\cL'$ and $\{\PP_{h}^{\cM,c}\}_{h\in[\overline{H}]}$ to be the approximate belief, if they satisfy any of the $3$ conditions above, Then,  \Cref{equ:one_step_opt} in Algorithm \ref{algorithm under AIS} can be solved with time complexity  $\max_{h\in[\overline{H}]}\text{poly}~(|\overline{\cP}_h|,|\overline{\cA}_h|,|\overline{\cS}|)$. 
    \label{lemma: one_step_tract}
\end{lemma}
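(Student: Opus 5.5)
The plan is to treat the three structural conditions separately. In each case we show that the maximization over joint prescriptions $\gamma_h\in\Gamma_h$ of $Q^{\ast,\cM}_h(\hat{c}_h,\gamma_h)=\hat{\cR}^\cM_h(\hat{c}_h,\gamma_h)+\EE^\cM[V^{\ast,\cM}_{h+1}(\hat{c}_{h+1})\given\hat{c}_h,\gamma_h]$ can be carried out without enumerating $\Gamma_h$. The enumeration is exponential, since $|\Gamma_{i,h}|=|\overline{\cA}_{i,h}|^{|\overline{\cP}_{i,h}|}$.

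First we dispose of odd steps $h=2t-1$. There $\Gamma_{i,h}=\cM_{i,t}$, because prescriptions are just actions under Assumption~\ref{limited communication strategy}. By the consistency relations \eqref{consis:t}--\eqref{consis:r}, both $\hat{\cR}^\cM_h$ and $\PP^{\cM,z}_h$ are explicit sums over $(\overline{s}_h,\overline{p}_h)$. Hence each $Q$-value costs $\text{poly}(|\overline{\cS}|,|\overline{\cP}_h|)$, and we enumerate the $|\overline{\cA}_h|$ joint actions. For the remaining even steps $h=2t$, we substitute the consistency formulas to write
\[
Q^{\ast,\cM}_h(\hat c_h,\gamma_h)=\sum_{\overline s_h,\overline p_h}\PP^{\cM,c}_h(\overline s_h,\overline p_h\given\hat c_h)\,W_h\bigl(\overline s_h,\overline p_h,(\gamma_{i,h}(\overline p_{i,h}))_{i\in[n]}\bigr),
\]
where $W_h$ collects the reward and the expected continuation value through $\overline{\TT}_h$, $\overline{\OO}_{h+1}$, $\overline{\chi}_{h+1}$ and $\hat\phi_{h+1}$. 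The table $W_h$ has size $\text{poly}(|\overline{\cS}|,|\overline{\cP}_h|,|\overline{\cA}_h|)$ and is precomputed once.

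\textbf{Turn-based.} We will show that $\overline{z}_{h+1}$ and $\overline{s}_{h+1}$ depend only on $a_{ct(h),h}$, and that the reward is additive. Then $Q$ splits into a term involving only $\gamma_{ct(h),h}$ plus terms $\sum \PP^{\cM,c}_h\,\cR_{i,h}(\overline s_h,\gamma_{i,h}(\overline p_{i,h}))$ for $i\neq ct(h)$. Each term is maximized independently and pointwise in the private information $\overline p_{i,h}$: after marginalizing the belief onto $\overline p_{i,h}$, we choose the best action for each $\overline p_{i,h}$.

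\textbf{Factorized.} Here the belief, the dynamics, the emissions, the costs and the rewards all factor over agents, and $\hat c_h$ factors as well. Inductively, $V^{\ast,\cM}_{h+1}$ is then a sum $\sum_i V_{i,h+1}(\hat c_{i,h+1})$, so $Q$ is a sum of per-agent terms. Each agent is solved pointwise as above.

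\textbf{Nested.} Because $\overline p_{j,h}=Y^{i,j}_h(\overline p_{i,h})$, agent $n$'s private information determines everyone's. We therefore view the team problem as a single decision maker who observes $\overline p_{n,h}$ and picks the joint action $a_h$. The catch is that agent $j$'s component must be measurable with respect to $\overline p_{j,h}$, which is where the nesting is used. We follow the dynamic-programming-over-the-hierarchy argument of \cite{liu2023tractable} and optimize agent $1$ first, then agent $2$ conditioned on agent $1$'s choice, and so on. Each step involves a pointwise optimization over $\overline{\cP}_{j,h}\times\overline{\cA}_{j,h}$.

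I expect the main obstacle to be checking that this adaptation from \cite{liu2023tractable} survives two new features. The first is the refined odd/even structure. The second is that $\overline{\cR}_h$ now depends on $\overline p_h$ and on communication costs $\cK$. For the factorized case the key point is that $\phi_h$ and $\cK_h$ decouple across agents. For the nested case, dependence on $\overline p_h$ is harmless, because everything is evaluated under the joint belief $\PP^{\cM,c}_h$.
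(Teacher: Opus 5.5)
Your treatment of the odd (communication) steps, the turn-based case and the factorized case is essentially the paper's. You enumerate the joint communication actions at $h=2t-1$. At even steps you split $Q^{\ast,\cM}_h(\hat{c}_h,\gamma_h)$ into per-agent terms: in the turn-based case because only $\overline{a}_{ct(h),h}$ drives $\overline{s}_{h+1}$ and $\overline{z}_{h+1}$, and in the factorized case through the inductive decomposition $Q^{\ast,\cM}_h=\sum_i F_{i,h}(\hat{c}_{i,h},\gamma_{i,h})$. You then maximize each term pointwise in $\overline{p}_{i,h}$.

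The nested case does not work as you describe it. Optimizing agent $1$ first and then agent $2$ conditioned on agent $1$'s choice is a forward pass. But agent $1$'s action can only be scored once you know how agents $2,\dots,n$ will respond to it, so a forward pass does not in general return a maximizer of $Q^{\ast,\cM}_h(\hat{c}_h,\cdot)$. For the same reason, a pointwise optimization over $\overline{\cP}_{j,h}\times\overline{\cA}_{j,h}$ is insufficient in either order: agent $j$'s best action depends on the actions $\overline{a}_{1:j-1,h}$ of the less-informed agents.

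The missing idea is the paper's enlargement of the prescription space to $u_{i,h}\in\cU_{i,h}$, i.e., maps from $(\overline{p}_{1:i,h},\overline{a}_{1:i-1,h})$ to $\overline{\cA}_{i,h}$. Over this space the problem becomes a sequential single-decision-maker problem, solved \emph{backward}:
\begin{itemize}
\item First comes agent $n$, whose $\overline{p}_{n,h}$ pins down $\overline{p}_h$. So $u_{n,h}$ is a pointwise maximizer of $\sum_{\overline{s}_h}\PP^{\cM,c}_h(\overline{s}_h,\overline{p}_h\given\hat{c}_h)W_h(\overline{s}_h,\overline{p}_h,\cdot)$, taken at every point $(\overline{p}_{n,h},\overline{a}_{1:n-1,h})$.
\item Next, agent $n-1$ maximizes the resulting value, summed over those $\overline{p}_{n,h}$ that $Y^{n,n-1}_h$ maps to $\overline{p}_{n-1,h}$.
\item This continues down to agent $1$.
\end{itemize}
Each level costs $\text{poly}(|\overline{\cP}_h|,|\overline{\cA}_h|,|\overline{\cS}|)$, since $\prod_{k<j}|\overline{\cA}_{k,h}|\le|\overline{\cA}_h|$.

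Nesting is then used, as in the paper, to return to $\Gamma_h$. Each $\Gamma_{i,h}$ embeds in $\cU_{i,h}$, so the relaxed maximum is at least $\max_{\gamma_h\in\Gamma_h}Q^{\ast,\cM}_h(\hat{c}_h,\gamma_h)$. Conversely, $\overline{p}_{1:i-1,h}$ are functions of $\overline{p}_{i,h}$. So you can compose $u^\ast_{1:n,h}$ forward through the maps $Y^{i,j}_h$: set $\gamma_{1,h}=u^\ast_{1,h}$, and define $\gamma_{i,h}(\overline{p}_{i,h})$ by feeding $u^\ast_{i,h}$ the private information $Y^{i,j}_h(\overline{p}_{i,h})$ and actions $\gamma_{j,h}(Y^{i,j}_h(\overline{p}_{i,h}))$ of its predecessors $j<i$. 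This yields a $\gamma_h\in\Gamma_h$ with the same value. Without this backward recursion over the enlarged space and the forward recovery step, the nested case of your proposal is not established.
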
 
\begin{proof} 
    For any $h\in[\overline{H}]$, if $h=2t-1,t\in[H]$, from Assumption \ref{limited communication strategy} and the construction of $\cD_\cL'$, since we need to find the optimal strategy of $\cD_\cL'$ in the spaces   $\overline{g}_h\in \overline{\cG}_h=\{g:\overline{\cC}_h\rightarrow \overline{\cA}_h\}$ for all $h\in[\overline{H}]$, where we recall that $\overline{\cA}_h=\cM_{t}$ is joint communication action space. Then, $\Gamma_h=\cM_t$ has cardinality $|\Gamma_h|=|\cM_t|$, and $\gamma_{1:n,h}^\ast$ can be computed as $\gamma_{1:n,h}^\ast\in \argmax_{\gamma_{1:n,h}\in \Gamma_h}Q_h^{\ast,\cM}(\hat{c}_h,\gamma_{1:n,h})$  
    by enumerating all possible $\gamma_{1:n,h}\in \Gamma_h$, with a complexity of $|\cM_t|$. 
    
    If $h=2t,t\in[H]$, we prove the result case by case: 
    \begin{itemize}
    \item \textbf{Nested private information:}
    We first define the $u_{i,h}\in\cU_{i,h}:=\{(\prod_{j=1}^i\overline{\cP}_{j,h})\times (\prod_{j=1}^{i-1}\overline{\cA}_{j,h})\rightarrow \overline{\cA}_{i,h}\}$ and slightly abuse the notation for $Q_h^{\ast,\cM}$ as follows
    \begin{align*}
        Q_h^{\ast,\cM}(\hat{c}_h,u_{1,h},\cdots, u_{n,h}):=&\sum_{\overline{s}_h,\overline{p}_h,\overline{a}_h,\overline{s}_{h+1},\overline{o}_{h+1}}\PP_h^{\cM,c}(\overline{s}_h,\overline{p}_h\given \hat{c}_h)\Pi_{i=1}^n\mathds{1}[\overline{a}_{i,h}=u_{i,h}(\overline{p}_{1:i,h},\overline{a}_{1:i-1,h})]\overline{\TT}_h(\overline{s}_{h+1}\given \overline{s}_h, \overline{a}_h)\\
        &\qquad\quad\overline{\OO}_{h+1}(\overline{o}_{h+1}\given \overline{s}_{h+1})[\overline{\cR}_h(\overline{s}_h,\overline{a}_h,\overline{p}_h)+V_{h+1}^{\ast,\cM}(\hat{c}_{h+1})]. 
    \end{align*}
Since the space of $\cU_{i,h}$ covers the space of $\Gamma_{i,h}$, Then, for the $u_{1:n,h}^\ast$ to be an optimal one that maximizes  the $Q_h^{\ast,\cM}$, we have
\begin{align*}
    Q_h^{\ast,\cM}(\hat{c}_h,u_{1,h}^\ast,\cdots,u_{n,h}^\ast)=\max_{\{u_{i,h}\in \cU_{i,h}\}_{i\in[n]}}Q_h^{\ast,\cM}(\hat{c}_h,u_{1,h},\cdots, u_{n,h})\ge \max_{\{\gamma_{i,h}\in \Gamma_{i,h}\}_{i\in[n]}}Q_h^{\ast,\cM}(\hat{c}_h,\gamma_{1,h},\cdots, \gamma_{n,h}).
\end{align*}
Meanwhile, due to the nested private information condition, for any $\overline{p}_{h}\in\overline{\cP}_h$, there must exist $\gamma_{1:n,h}'$ such that $\gamma_{1:n,h}'$ output the same actions as $u_{1:n,h}^\ast$ under $\overline{p}_h$. Therefore, we can conclude that 
\begin{align*}
    \max_{\{u_{i,h}\in \cU_{i,h}\}_{i\in[n]}}Q_h^{\ast,\cM}(\hat{c}_h,u_{1,h},\cdots, u_{n,h})=\max_{\{\gamma_{i,h}\in \Gamma_{i,h}\}_{i\in[n]}}Q_h^{\ast,\cM}(\hat{c}_h,\gamma_{1,h},\cdots, \gamma_{n,h}).
\end{align*}
Therefore, we can solve \Cref{equ:one_step_opt} and compute $\gamma_{1:n,h}^\ast$ from computing $u_{1:n,h}^\ast$, which  can be solved with complexity poly$(|\overline{\cP}_h|,|\overline{\cA}_h|,|\overline{\cS}|)$.
        
        \item \textbf{Turn-based structures:} For any $\gamma_{ct(h),h}\in \Gamma_{ct(h),h}, \gamma_{-ct(h),h}, \gamma_{-ct(h),h}'\in\Gamma_{-ct(h),h}$, where $ct(h)$ is the controller at timestep $h$, it holds that for any $\hat{c}_h$: 
        \begin{align*}
            &Q_h^{\ast,\cM}(\hat{c}_h, \gamma_{ct(h),h},\gamma_{-ct(h),h})\\
            =&\sum_{\overline{s}_h, \overline{p}_h, \overline{s}_{h+1},\overline{o}_{h+1}}\PP_h^{\cM,c}(\overline{s}_h,\overline{p}_h\given \hat{c}_h)\overline{\TT}_{h}(\overline{s}_{h+1}\given \overline{s}_{h}, \gamma_{ct(h),h}(\overline{p}_{ct(h),h}),\gamma_{-ct(h),h}(\overline{p}_{-ct(h),h}))\\
            &\overline{\OO}_{h+1}(\overline{o}_{h+1}\given \overline{s}_{h+1})[\overline{\cR}_h(\overline{s}_h,\gamma_{h}(\overline{p}_{h}))+V_{h+1}^{\ast, \cM}(\hat{c}_{h+1})]\\
            =&\sum_{\overline{s}_h, \overline{p}_h, \overline{s}_{h+1},\overline{o}_{h+1}}\PP_h^{\cM,c}(\overline{s}_h,\overline{p}_h\given \hat{c}_h)\overline{\TT}_{h}(\overline{s}_{h+1}\given \overline{s}_{h}, \gamma_{ct(h),h}(\overline{p}_{ct(h),h})\overline{\OO}_{h+1}(\overline{o}_{h+1}\given \overline{s}_{h+1})[\overline{\cR}_h(\overline{s}_h,\gamma_{ct(h),h}(\overline{p}_{ct(h),h}))+V_{h+1}^{\ast, \cM}(\hat{c}_{h+1})],\\
            &\qquad+\sum_{i\neq ct(h)}\sum_{\overline{s}_h, \overline{p}_h}\PP_h^{\cM,c}(\overline{s}_h,\overline{p}_h\given \hat{c}_h)\overline{\cR}_{i,h}(\overline{s}_h,\gamma_{i,h}(\overline{p}_{i,h})):=\sum_{i\in[n]}U_{i,h}(\hat{c}_h,\gamma_{i,h}),
        \end{align*}
    where the last step is due to the fact that $\hat{c}_{h+1}=\hat{\phi}_{h+1}(\hat{c}_h,\overline{z}_{h+1})$. Note that we can write $\overline{\cR}_h(\overline{s}_h,\overline{a}_h,\overline{p}_{h})$ as $\overline{\cR}_h(\overline{s}_h,\overline{a}_h)$ since $h$ is even. 
    Then,  \Cref{equ:one_step_opt} can be solved with respect to each individual $\gamma_{i,h}$ with total complexity poly$(|\overline{\cS}|,|\overline{\cP}_{h}|,|\overline{\cA}_{h}|)$.
\item \textbf{Factorized structures:} 
Note that for any $h\in[\overline{H}]$, we can write the reward function of $\cD_\cL'$ as $\overline{\cR}(\overline{s}_h,\overline{a}_h,\overline{p}_h)=\sum_{i=1}^n\overline{\cR}(\overline{s}_{i,h},\overline{a}_{i,h},\overline{p}_{i,h}),\forall \overline{s}_h,\overline{a}_h,\overline{p}_h$. Then,
for any $h\in[\overline{H}]$, $\hat{c}_h\in\hat{\cC}_h, \gamma_h\in \Gamma_h$,  we use backward induction to prove that, there exist  $n$ functions $\{F_{i,h}\}_{i\in[n]}$ such that
\begin{equation*}
    Q_h^{\ast,\cM}(\hat{c}_h,\gamma_h)=\sum_{i=1}^n F_{i,h}(\hat{c}_{i,h},\gamma_{i,h}).
\end{equation*}
It holds for $h=\overline{H}$ since 
$Q_{\overline{H}}^{\ast,\cM}(\hat{c}_{\overline{H}},\gamma_{\overline{H}})=\sum_{i=1}^n \sum_{\overline{s}_{i,\overline{H}},\overline{p}_{i,\overline{H}}}\PP_{i,\overline{H}}^{\cM,c}(\overline{s}_{i,\overline{H}},\overline{p}_{i,\overline{H}}\given \hat{c}_{i,\overline{H}})\overline{\cR}_{i,\overline{H}}(\overline{s}_{i,\overline{H}},\gamma_{i,h}(\overline{p}_{i,\overline{H}}),\overline{p}_{i,\overline{H}})$. 
For any $h\le \overline{H}-1$, it holds that
\begin{align*}
    Q_h^{\ast,\cM}(\hat{c}_h,\gamma_h)&=\sum_{\overline{s}_h,\overline{p}_h,\overline{s}_{h+1},\overline{o}_{h+1}}\PP_h^{\cM,c}(\overline{s}_h,\overline{p}_h\given \hat{c}_h)\overline{\TT}_h(\overline{s}_{h+1}\given \overline{s}_h,\gamma_h(\overline{p}_h))\overline{\OO}_{h+1}(\overline{o}_{h+1}\given \overline{s}_{h+1})\\
&\qquad\qquad\bigg[\sum_{i=1}^n\overline{\cR}_{i,h}(\overline{s}_{i,h},\gamma_{i,h}(\overline{p}_{i,h}),\overline{p}_{i,h})+F_{i,h+1}(\hat{c}_{i,h+1},\hat{g}_{i,h+1}^\ast(\hat{c}_{i,h+1}))\bigg]\\
    &\quad=\sum_{i=1}^n \sum_{\overline{s}_{i,h},\overline{p}_{i,h},\overline{s}_{i,h+1},\overline{o}_{i,h+1}}\PP_{i,h}^{\cM,c}(\overline{s}_{i,h},\overline{p}_{i,h}\given \hat{c}_{i,h})\overline{\TT}_{h}(\overline{s}_{i,h+1}\given \overline{s}_{i,h},\gamma_{i,h}(\overline{p}_{i,h}))\\
    &\qquad\qquad\overline{\OO}_{i,h+1}(\overline{o}_{i,h+1}\given \overline{s}_{i,h+1})[\overline{\cR}_{i,h}(\overline{s}_{i,h},\gamma_{i,h}(\overline{p}_{i,h}),\overline{p}_{i,h})+F_{i,h+1}(\hat{c}_{i,h+1},\hat{g}_{i,h+1}^\ast(\hat{c}_{i,h+1}))]\\
    &\quad=:\sum_{i=1}^n F_{i,h}(\hat{c}_{i,h},\gamma_{i,h}).
\end{align*}
Then, by induction, we know that it holds for any $h\in[\overline{H}]$. We can define $\hat{g}_{i,h}^\ast(\hat{c}_{h})\in \argmax_{\gamma_{i,h}\in\Gamma_{i,h}} F_{i,h}(\hat{c}_{i,h},\gamma_{i,h})$, and thus solve  \Cref{equ:one_step_opt} with complexity $\sum_{i=1}^n$ poly$(|\overline{\cS}_i|,|\overline{\cA}_{i,h}|,|\overline{\cP}_{i,h}|)$.
    \end{itemize}
    This completes the proof. 
\end{proof}

Note that, strictly speaking, the time-complexity given in \Cref{lemma: one_step_tract} does not satisfy Assumption \ref{assu: one_step_tract} yet, since $|\overline{\cP}_h|$ may not be necessarily small and polynomial in the LTC parameters $|\cS|,|\cO_h|,|\cA_h|,|\cM_h|, H$. For the examples in \S\ref{sec: examples of QC}, more specifically, one can show that $|\overline{\cP}_h|$ is indeed polynomial in these parameters (when viewing the delay $d$ of sharing, if it exists, as a constant), which led to the final quasi-polynomial complexity guarantees in \Cref{theorem: planning} and \Cref{thm:learning} (see their proofs for the formal arguments).

\section{Examples in the Venn Diagram \Cref{fig:Venn_DecPOMDP}}\label{sec:examples_venn_diag}

Here, we show some examples of the areas \ding{172}-\ding{176} in the Venn diagram in \Cref{fig:Venn_DecPOMDP}. Note that, unless otherwise noted, the diagram here concerns the common scenarios where  
the transition kernel $\{\TT_h\}_{h\in[H]}$ and observation emission $\{\OO_h\}_{h\in [H]}$ are  \emph{non-degenerate}, i.e., for any agent $i\in[n]$ and timestep $h\in[H]$, the change of state $s_{h}$ can influence the observation  $o_{i,h}$, and the action $a_{i,h}$ can influence the state $s_{h+1}$. 

\begin{itemize}
    \item \textbf{\ding{172}: Multi-agent MDP \cite{MMDP}  with historical states.} The Dec-POMDPs satisfying that for any $h\in[H],i\in[n], \cO_{i,h}=\cS, \OO_{i,h}(s\given s)=1, c_h=\{s_{1:h}\}, p_h=\emptyset$ lie in the area \ding{172}. 
    \item \textbf{\ding{173}: Uncontrolled state process without any historical information.} The Dec-POMDPs satisfying that for any $h\in[H], i\in[n], s_h,a_h,a_h', \TT_h(\cdot\given s_h,a_h)=\TT_h(\cdot\given s_h,a_h'), c_h=\emptyset, p_{i,h}=\{o_{i,h}\}$ lie in the area \ding{173}.
    \item \textbf{\ding{174}:} {\bf Dec-POMDPs with sQC information structure, perfect recall, and Assumptions \ref{useless action} and \ref{weak gamma observability}.} One-step delayed information sharing ({\bf Example 1} in \S\ref{sec: examples of QC}) lies in this area. 
    \item \textbf{\ding{175}: State controlled by one controller with no sharing and only observability of the controller.} We consider a Dec-POMDP $\cD$. The state dynamics are controlled by only one agent (for convenience, agent 1), and only agent 1 has observability, i.e., $\TT_h(\cdot\given s_h, a_{1,h}, a_{-1,h})=\TT_h(\cdot\given s_h, a_{1,h}, a_{-1,h}')$ for all $s_h, a_{1,h}, a_{-1,h}, a_{-1,h}'$, and $\cO_{-1,h}=\emptyset$. There is no information sharing, i.e., $c_h=\emptyset, p_{1,h}=\{o_{1:h},a_{1:h-1}\}, p_{j,h}=\{a_{j,1:h-1}\}, \forall j\neq 1$.  Then, $\forall j\neq 1, h_1<h_2\in [H]$, agent $(1,h_1)$ does not influence $(j,h_2)$,  since $\tau_{j,h_2}=\{a_{j,1:h_2-1}\}$ is not influenced by agent $(1,h_1)$. Therefore, $\cD$ is sQC and has perfect recall, but $\cD$ does not have SI-CIBs, since the underlying state $s_h$ is influenced by $g_{1,1:h-1}$. This is essentially  because $\cD$ does not satisfy Assumption \ref{weak gamma observability}. Thus, $\cD$ lies in the area \ding{175}.
    \item \textbf{\ding{176}: One-step delayed observation  sharing and two-step delayed action sharing.} The Dec-POMDPs satisfying that for any $h\in[H],i\in[n], c_h=\{o_{1:h-1},a_{1:h-2}\}, p_{i,h}=\{a_{i,h-1},o_{i,h}\}$ lie in the area \ding{176}. 
\end{itemize}

\clearpage
\section{Additional Figures}

The additional figures,  \Cref{fig:illustrate}  and \Cref{fig: time line},  illustrate the paradigm and the timeline of the LTC problems considered in this paper. 

\begin{figure}[!h]
    \centering
    \includegraphics[width=0.62\textwidth]{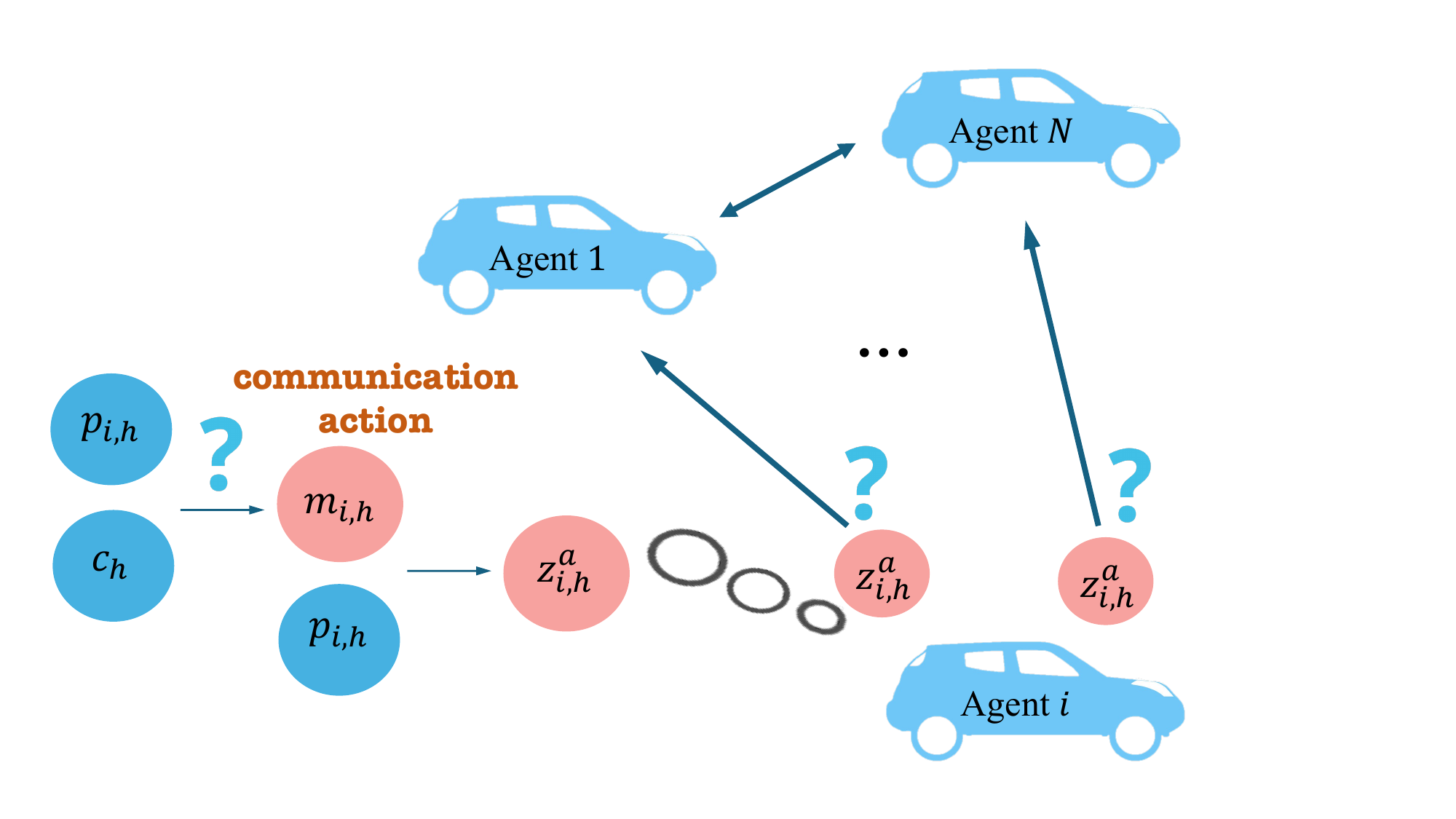}
    \caption{Illustrating the  learning-to-communicate problem considered in this  paper.}\label{fig:illustrate} 
    \vspace{50pt}
    \includegraphics[width=0.7\textwidth] {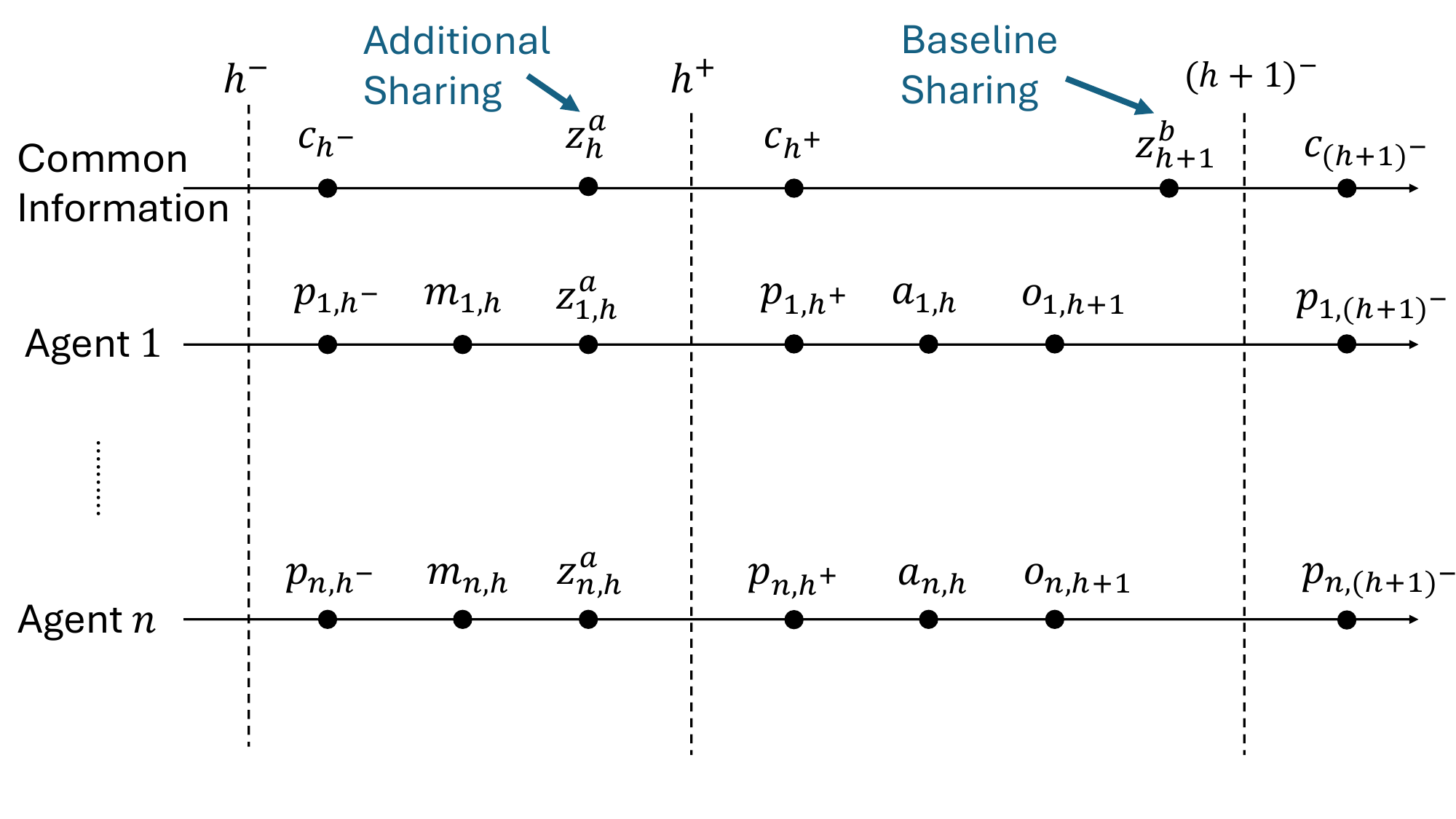}
    \caption{Timeline of the information sharing and information evolution protocols in the learning-to-communicate problem considered.}\label{fig: time line}
\end{figure}

\end{document}